\newcommand{\lect}[1]{}
\theoremstyle{plain}
\newtheorem{theorem}{Theorem}[section]
\newtheorem{lemma}[theorem]{Lemma}
\newtheorem{corollary}[theorem]{Corollary}
\newtheorem{proposition}[theorem]{Proposition}
\newtheorem{exc}{Exercise}[section]
\theoremstyle{definition}
\newtheorem{definition}[theorem]{Definition}
\newtheorem{example}[theorem]{Example}
\theoremstyle{remark}
\newtheorem{remark}[theorem]{Remark}
\newtheorem{remark*}[theorem]{Remark\textup{*}}
\numberwithin{equation}{section}
\newcommand{\keyword}{\@dblarg\@keyword}
\def\@keyword[#1]#2{\textbf{#2}\index{#1}}
\DeclareMathAlphabet{\mathpzc}{OT1}{pzc}{m}{it}
\def\eps{\varepsilon}
\def\B {\mathbb{B}}
\def\C {\mathbb{C}}
\def\F {\mathbb{F}}
\def\N {\mathbb{N}}
\def\Q {\mathbb{Q}}
\def\R {\mathbb{R}}
\def\Rplus {\mathbb{R}_\limplus}
\def\S {\mathbb{S}}
\def\Z {\mathbb{Z}}
\def\Zplus {\mathbb{Z}_\limplus}
\def\eD {\EuScript{D}}
\def\eN {\EuScript{N}}
\def\eP {\EuScript{P}}
\newcommand{\0}{\mathbf{0}}
\newcommand\1{{\ensuremath {\mathds 1} }}
\newcommand{\bA}{\mathbf{A}}
\newcommand{\bF}{\mathbf{F}}
\newcommand{\bL}{\mathbf{L}}
\newcommand{\bP}{\mathbf{P}}
\newcommand{\bR}{\mathbf{R}}
\newcommand{\bX}{\mathbf{X}}
\newcommand{\ba}{\mathbf{a}}
\newcommand{\bb}{\mathbf{b}}
\newcommand{\bk}{\mathbf{k}}
\newcommand{\bn}{\mathbf{n}}
\newcommand{\bp}{\mathbf{p}}
\newcommand{\br}{\mathbf{r}}
\newcommand{\bv}{\mathbf{v}}
\newcommand{\bx}{\mathbf{x}}
\newcommand{\by}{\mathbf{y}}
\newcommand{\bz}{\mathbf{z}}
\newcommand{\cA}{\mathcal{A}}
\newcommand{\cB}{\mathcal{B}}
\newcommand{\cD}{\mathcal{D}}
\newcommand{\cE}{\mathcal{E}}
\newcommand{\cF}{\mathcal{F}}
\newcommand{\cG}{\mathcal{G}}
\newcommand{\cH}{\mathcal{H}}
\newcommand{\cK}{\mathcal{K}}
\newcommand{\cL}{\mathcal{L}}
\newcommand{\cN}{\mathcal{N}}
\newcommand{\cM}{\mathcal{M}}
\newcommand{\cO}{\mathcal{O}}
\newcommand{\cQ}{\mathcal{Q}}
\newcommand{\gH}{\mathfrak{H}}
\newcommand{\gh}{\mathfrak{h}}
\newcommand{\ha}{\hat{a}}
\newcommand{\hb}{\hat{b}}
\newcommand{\hf}{\hat{f}}
\newcommand{\hh}{\hat{h}}
\newcommand{\hp}{\hat{p}}
\newcommand{\hu}{\hat{u}}
\newcommand{\hx}{\hat{x}}
\newcommand{\hH}{\hat{H}}
\newcommand{\hT}{\hat{T}}
\newcommand{\hV}{\hat{V}}
\newcommand{\hW}{\hat{W}}
\newcommand{\hbx}{\hat{\mathbf{x}}}
\newcommand{\hbp}{\hat{\mathbf{p}}}
\newcommand{\hsx}{\hat{\textup{x}}}
\newcommand{\hsp}{\hat{\textup{p}}}
\newcommand{\scF}{\mathscr{F}}
\newcommand{\sA}{\textup{A}} 
\newcommand{\sC}{\textup{C}}
\newcommand{\sF}{\textup{F}}
\newcommand{\sR}{\textup{R}}
\newcommand{\sP}{\textup{P}}
\newcommand{\ssp}{\textup{p}}
\newcommand{\sw}{\textup{w}}
\newcommand{\sx}{\textup{x}}
\newcommand{\sz}{\textup{z}}
\newcommand{\phP}{\mathcal{P}}
\newcommand{\phX}{\mathcal{C}}
\DeclareMathOperator{\esssup}{\mathrm{ess\,sup}}
\DeclareMathOperator{\re}{\mathrm{Re}}
\DeclareMathOperator{\sign}{\mathrm{sign}}
\DeclareMathOperator{\Span}{\mathrm{Span}}
\DeclareMathOperator{\supp}{\mathrm{supp}}
\DeclareMathOperator{\Tr}{\mathrm{Tr}}
\newcommand{\inp}[1]{\left\langle#1\right\rangle}
\newcommand{\binp}[1]{\bigl\langle#1\bigr\rangle}
\newcommand{\norm}[1]{\left\|#1\right\|}
\newcommand{\slot}{\cdot}
\newcommand{\dist}{\mathrm{dist}}
\newcommand{\sym}{\mathrm{sym}}
\newcommand{\asym}{\mathrm{asym}}
\newcommand{\loc}{\mathrm{loc}}
\newcommand{\COM}{\mathrm{cm}} 
\newcommand{\rel}{\mathrm{rel}}
\newcommand{\cl}{\mathrm{cl}}
\newcommand{\hs}{\mathrm{hs}}
\newcommand{\vphi}{\varphi}
\renewcommand{\phi}{\varphi}
\newcommand{\bDelta}{{\mbox{$\triangle$}\hspace{-8.0pt}\scalebox{0.8}{$\triangle$}}}
\newcommand{\limplus}{{\mathchoice{\vcenter{\hbox{$\scriptstyle +$}}}
  {\vcenter{\hbox{$\scriptstyle +$}}}
  {\vcenter{\hbox{$\scriptscriptstyle +$}}}
  {\vcenter{\hbox{$\scriptscriptstyle +$}}}
}}
\newcommand{\limminus}{{\mathchoice{\vcenter{\hbox{$\scriptstyle -$}}}
  {\vcenter{\hbox{$\scriptstyle -$}}}
  {\vcenter{\hbox{$\scriptscriptstyle -$}}}
  {\vcenter{\hbox{$\scriptscriptstyle -$}}}
}}
\title[Uncertainty and exclusion principles in quantum mechanics]{Methods of modern mathematical physics\\\footnotesize{Uncertainty and exclusion principles in quantum mechanics}}
\author[D. Lundholm]{Douglas LUNDHOLM}
\address{Department of Mathematics, KTH Royal Institute of Technology, SE-100 44 Stockholm, Sweden}
\begin{document}

\begin{abstract}
These are lecture notes for a master-level course given at KTH, Stockholm,
in the spring of 2017, with the primary aim of proving the stability
of matter from first principles using modern 
mathematical methods in many-body quantum mechanics.
General quantitative formulations of the uncertainty and the exclusion principles of
quantum mechanics are introduced, 
such as the Hardy, Sobolev and Poincar\'e 
functional inequalities
as well as the powerful Lieb--Thirring inequality that combines these two principles.
The notes are aimed to be both fairly self-contained and 
at the same time complementary to existing literature,
also covering recent developments to prove Lieb--Thirring inequalities and stability
from general, weaker formulations of the exclusion principle.
\end{abstract}

\maketitle
\setcounter{tocdepth}{2}
\tableofcontents

\section{Introduction\lect{ [1]}}\label{sec:intro}

	Most of us take the stability of the world around us
	--- as observed to consist of atoms, molecules, and even larger lumps of 
	matter such as rocks, biological beings and entire planets ---
	for granted every day. There is nothing strange about it.
	However, proving mathematically from first principles of 
	mechanics that this is indeed so turns out to be surprisingly 
	challenging and subtle.
	It was considered to be 
	one of the great triumphs of mathematical physics 
	when this problem was solved, first by 
	Dyson and Lenard in 1967 \cite{DysLen-67},
	then in a better understood approach by 
	Lieb and Thirring in 1975 \cite{LieThi-75},
	and subsequently further details have been worked out over several decades
	by numerous other mathematicians and physicists \cite{LieSei-09}.
	Its resolution turns out to rest fundamentally on the two basic principles of
	quantum mechanics: the \keyword{uncertainty principle} and the 
	\keyword[exclusion principle]{(Pauli) exclusion principle}.
	Namely, without 
	these two concepts, i.e.\ relying strictly on the 
	(indeed very well-founded) 
	framework for mechanics which was available at the end of the 19th century
	and nowadays called classical mechanics,
	matter turns out to be \emph{unstable} because the orbit of an electron around 
	the nucleus of an atom can be made arbitrarily small and the electron 
	may thus crash into the nucleus.
	Quantum mechanics came in to resolve this puzzle by introducing the idea
	that electron orbits are quantized 
	into a discrete set of spatial probability distributions, 
	with a smallest approximate 
	radius called the Bohr radius. 
	This prevents the electron from falling further into the attractive
	and infinitely deep potential well caused by the nucleus, 
	and seemingly leads to the stability of matter.
	Indeed, most physicists are content with this answer even 
	today,
	and the typical quantum mechanics textbook digs no further into the issue.
	However, a more careful mathematical analysis of the usual argument invoked
	(known as Heisenberg's formulation of the uncertainty principle)
	leads to the realization that it remains \emph{insufficient} 
	to rigorously prove stability. 
	Stronger formulations of the uncertainty principle, such as the functional
	inequalities known as Hardy's or Sobolev's inequality, may instead be
	used to prove that an atom is indeed stable.
	
	This is not the end of the story, however, because evidently the
	world consists of many more particles than just one single atom,
	with a mix of attractive and repulsive electromagnetic forces between them,
	and it turns out to be a very subtle issue to understand why in fact taking 
	two similar 
	lumps of matter and putting them together produces 
	just twice the amount of matter
	when it comes to volume and energy, and why not some new state forms which 
	is more favorable energetically and takes much less space.
	Here the Pauli exclusion principle comes into play, which tells us that
	particles such as electrons 
	(and generally known as fermions)
	cannot all occupy the same quantum state, but must
	rather move into different configurations, such as different atom orbitals.
	This is what gives rise to the periodic table of the elements
	along with their chemical properties, and
	effectively produces larger and larger atoms and molecules, and in general, 
	matter whose energy and volume scales \emph{linearly} with the number of particles.
	The effect makes its presence all the way up to the size of stars, 
	and explains 
	for example why certain astronomical objects known as white dwarfs 
	do not collapse under their own extreme gravitation to form black holes.
	
	The story of the problem of stability of matter, from the invention
	of quantum mechanics to present times, is told as it should
	--- in the rigorous language of mathematics ---
	in the one 
	textbook on the subject, namely \cite{LieSei-09},
	to which we refer the reader for a more complete account of its background 
	and subsequent developments in various directions.
	The aim of these lecture notes is to provide 
	an as concise as possible, and at the same time rigorous and self-contained, 
	path to stability 
	via a powerful functional inequality introduced by Lieb and Thirring 
	which elegantly combines the uncertainty and exclusion principles.
	However, we will in contrast to \cite{LieSei-09} take a recently 
	developed route to proving this inequality, which actually lies
	closer in spirit to the original Dyson--Lenard approach.
	In particular, the way we incorporate the exclusion principle will
	make transparent its role in the proof of stability
	as an effective repulsion between particles, 
	and furthermore clarifies that it also extends to other particles 
	than those obeying the usual Pauli principle, 
	as long as they experience 
	a strong enough repulsive interaction.
	For completeness and
	in order to further complement \cite{LieSei-09},
	these notes also contain some mathematical preliminaries and
	some background material on classical and quantum mechanics aimed for mathematicians,
	including a general discussion on identical particles and quantum statistics.
	In parallel with our general treatment of exclusion principles
	we also discuss a wide variety of formulations of the uncertainty principle,
	both global and local with respect to the configuration space,
	though we typically focus on their conceptual content rather than the 
	most precise formulation or the optimal constants.

	This version of the lecture notes, dated May 2018, still lacks some
	of the intended topics and corrections, however they will hopefully anyway 
	find use in a wider audience.

	A brief note concerning the notation: 
	In the many-body context we usually write
	$x \in \R$ for scalars, $\bx \in \R^d$ for one-body vectors, 
	and $\sx \in \R^n$ for general or many-body vectors,
	such as $\sx = (\bx_1,\bx_2,\ldots,\bx_N) \in (\R^d)^N = \R^{dN}$.
	The letter $C$ will generally denote a constant whose exact value is unimportant
	and which may vary from one expression to another.
	Remarks with * signify that some more background (in math or physics) 
	is required.
	
	\medskip
	\noindent\textbf{Acknowledgments.} 
	I would like to thank Ari Laptev for originally introducing me to Hardy and
	Lieb--Thirring inequalities during my PhD studies, 
	and Jens Hoppe for bridging my gap to spectral theory at that time.
	Furthermore, I thank my collaborators 
	on some of the topics briefly touched upon in these notes: 
	Michele Correggi,
	Simon Larson, 
	Phan Th\`anh Nam,
	Fabian Portmann, 
	Viktor Qvarfordt,
	Nicolas Rougerie,
	Robert Seiringer and
	Jan Philip Solovej.
	I have very much enjoyed discussing the contents of these lecture notes with
	the participants of the course held at KTH in 2017, 
	who also greatly helped to improve the quality.
	Financial support from the Swedish Research Council, 
	grant no.\ {2013-4734}, is gratefully acknowledged.


\section{Some preliminaries and notation\lect{ [1,2]}}\label{sec:prelims}

	We assume that the reader is familiar with basic notions in real analysis
	and has already encountered for example Hilbert spaces, Lebesgue integrals, 
	as well as Fourier transforms.
	However, for convenience and for setting our notation we give here a very brief
	overview of these and a few other important concepts.
	See \cite{LieLos-01,ReeSim1,ReeSim2,Teschl-14,Thirring2} for more. 

\subsection{Hilbert spaces}\label{sec:prelims-Hilbert}

	We let $V$ denote a vector space of arbitrary dimension
	over the scalars $\F = \R$ or $\C$,
	and $z \mapsto \bar{z}$ complex conjugation.
	
\begin{definition}\label{def:forms-norms}
	A 
	\keyword{sesquilinear form} on $V$ is a map
	$V \times V \to \F$, $(u,v) \mapsto \langle u, v \rangle$
	such that for all $\alpha,\beta \in \F$, $u,v,w \in V$:
	\begin{enumerate}[label=\textup{(\roman*)}]
	\item $\inp{u,\alpha v + \beta w} = \alpha \inp{u,v} + \beta \inp{u,w}$ 
		\quad (linear in the \emph{second}\footnote{Note 
		that this convention varies in the literature;
		the one here is used in almost every physics text.} 
		argument),
	\item $\inp{\alpha u + \beta v,w} = \bar\alpha \inp{u,w} + \bar\beta \inp{v,w}$
		\quad (conjugate linear in the \emph{first} argument).
	\end{enumerate}
	A \keyword{hermitian form} on $V$ is a sesquilinear form $\inp{\slot,\slot}$
	satisfying, in addition:
	\begin{enumerate}[label=\textup{(\roman*)},resume]
	\item\label{itm:symmetry} $\inp{u,v} = \overline{\inp{v,u}}$ \quad (symmetry).
	\end{enumerate}
	An \keyword{inner product} or \keyword{scalar product} on $V$ is a hermitian 
	form $\inp{\slot,\slot}$ satisfying, in addition:
	\begin{enumerate}[label=\textup{(\roman*)},resume]
	\item $\inp{v,v} > 0$ for $v \neq 0$ \quad (positive definite).
	\end{enumerate}
	A 
	(sesqui-)\keyword{quadratic form} 
	on $V$ is a map $q\colon V \to \F$ such that for $\alpha \in \F$,
	$u,v \in V$:
	\begin{enumerate}[label=\textup{(\roman*)}]
	\item $q(\alpha v)= \bar\alpha \alpha q(v)$ \quad (scaling quadratically),
	\item $\inp{u,v}_q := \frac{1}{4}\bigl(q(u+v) - q(u-v) + iq(u-iv) - iq(u+iv)\bigr)$ 
		is sesquilinear in $u,v$.
	\end{enumerate}
	A \keyword{norm} on $V$ is a map
	$V \to \Rplus := [0,\infty)$, 
	$v \mapsto \|v\|$, 
	such that for all $\alpha \in \F$, $u,v \in V$:
	\begin{enumerate}[label=\textup{(\roman*)}]
	\item $\norm{\alpha v} = |\alpha| \norm{v}$ \quad (scaling linearly),
	\item $\norm{u+v} \le \norm{u} + \norm{v}$ \quad (triangle inequality),
	\item $\norm{v} = 0$ if and only if $v=0$ \quad (positive definite).
	\end{enumerate}
	The pair $(V,\norm{\slot})$ is called a \keyword{normed linear space},
	the pair $(V,q)$ a (sesqui-)\keyword{quadratic space},
	and the pair $(V,\inp{\slot,\slot})$ an \keyword{inner product space} 
	or a \keyword{pre-Hilbert space}.
\end{definition}
	
\begin{example}\label{exmp:Cn}
	The space $\C^n$ of $n$-tuples $\sz = (z_1,\ldots,z_n)$ 
	with the standard inner product 
	$\inp{\sz,\sw} = \sum_{j=1}^n \bar z_j w_j$
	and norm $\norm{\sz} = \sqrt{\inp{z,z}}$
	is a normed, quadratic and pre-Hilbert space.
\end{example}
	
\begin{example}\label{exmp:parallelogram-space}
	Any normed linear space $(V,\norm{\slot})$ satisfying the 
	\keyword{parallelogram identity}
	\begin{equation}\label{eq:parallelogram}
		\norm{u+v}^2 + \norm{u-v}^2 = 2\norm{u}^2 + 2\norm{v}^2
	\end{equation}
	is a quadratic space with $q(v) := \norm{v}^2$,
	and an inner product space with
	\begin{equation}\label{eq:polarization}
		\inp{u,v} := \inp{u,v}_q = \frac{1}{4}\left(
			\norm{u+v}^2 - \norm{u-v}^2 + i\norm{u-iv}^2 - i\norm{u+iv}^2
			\right)
	\end{equation}
	(this implication is known as the Jordan--von Neumann theorem).
	Conversely, any inner product space is also a normed space, as follows:
\end{example}

\begin{proposition}[\keyword{Cauchy--Schwarz inequality}]\label{prop:CS}
	Let $V$ be an inner product space. Then for every $u,v \in V$ we have
	\begin{equation}\label{eq:CS}
		|\langle u,v \rangle| \le \norm{u}\norm{v}
	\end{equation}
	with equality iff $u$ and $v$ are parallel.
\end{proposition}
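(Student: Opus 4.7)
The plan is the standard ``orthogonal projection'' argument, which gives the inequality and the equality case in one stroke.

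First I would dispose of the degenerate case $v=0$: both sides of \eqref{eq:CS} equal zero, and $u$ is trivially parallel to $v$ since $v = 0 \cdot u$, so the statement holds with equality.

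Next, assuming $v \neq 0$, I would introduce the ``orthogonal component'' of $u$ with respect to $v$, namely
\begin{equation}
    w := u - \frac{\inp{v,u}}{\inp{v,v}} v,
\end{equation}
which is well-defined because $\inp{v,v} = \norm{v}^2 > 0$ by positive-definiteness. A direct expansion of $\inp{w,w} \ge 0$ using sesquilinearity (items (i)--(ii) of Definition \ref{def:forms-norms}) and hermiticity (item (iii)) should collapse the cross terms and yield
\begin{equation}
    0 \le \inp{w,w} = \inp{u,u} - \frac{|\inp{v,u}|^2}{\inp{v,v}},
\end{equation}
where the identity $\overline{\inp{v,u}} = \inp{u,v}$ is used to combine conjugate terms into $|\inp{v,u}|^2 = |\inp{u,v}|^2$. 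Rearranging and taking square roots delivers \eqref{eq:CS}.

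For the equality characterization, I would note that the estimate above is an equality precisely when $\inp{w,w}=0$, which by positive-definiteness is equivalent to $w=0$; but $w=0$ means exactly $u = \frac{\inp{v,u}}{\inp{v,v}}\, v$, i.e.\ $u$ is a scalar multiple of $v$, hence parallel. Conversely, if $u = \alpha v$ for some $\alpha \in \F$ (or if $v=0$), a one-line computation via sesquilinearity confirms $|\inp{u,v}| = |\alpha|\,\norm{v}^2 = \norm{u}\norm{v}$.

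No real obstacle is expected here; the only minor care point is the complex-conjugation bookkeeping when expanding $\inp{w,w}$ under the ``linear in the second slot'' convention fixed in Definition \ref{def:forms-norms}, so one must be careful to place the scalar $\inp{v,u}/\inp{v,v}$ (and not its conjugate) on the correct side when computing cross terms. An alternative route would be to minimize the nonnegative quadratic polynomial $t \mapsto \norm{u + t\,\inp{v,u}\,v/|\inp{v,u}|}^2$ over $t \in \R$, but the projection argument above is cleaner and makes the equality case immediate.
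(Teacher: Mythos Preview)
Your proposal is correct and follows exactly the approach the paper suggests: the proposition is not proved in the text but left as Exercise~2.1, whose hint is precisely to expand $\inp{u-\alpha v,\,u-\alpha v}$ with $\alpha = \inp{v,u}/\inp{v,v}$, i.e.\ your projection argument. Your handling of the degenerate case and the equality characterization is complete and matches what the hint would yield.
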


	Because of the Cauchy-Schwarz inequality, the square root of the induced 
	quadratic form $\norm{u} := \sqrt{\inp{u,u}}$
	satisfies the triangle inequality, by means of
	$$
		\norm{u+v}^2 = \norm{u}^2 + 2\re \inp{u,v} + \norm{v}^2 
		\le (\norm{u}+\norm{v})^2,
	$$
	and hence becomes a norm on $V$.
	The norm induces a metric $d(u,v) := \norm{u-v}$ 
	and therefore a topology on $V$,
	with the open sets generated by balls defined using the metric,
	$$
		B_r(x) := \{ u \in V : d(x,u) < r \}.
	$$
	Recall that a \keyword{Cauchy sequence} is a sequence 
	$(v_n)_{n=1}^\infty \subset V$ such that
	$$
		\forall \eps > 0 \ \exists N \in \N : n,m > N \Rightarrow \norm{v_n-v_m} < \eps,
	$$
	and a topological space is called \keyword{complete} if every Cauchy 
	sequence converges.
	Also recall that a topological space is called \keyword{separable}
	if it contains a countable dense subset.
	
\begin{definition}
	A complete normed linear space is called a \keyword{Banach space}.
	A complete inner product (i.e. pre-Hilbert) space is called a 
	\keyword{Hilbert space}.
\end{definition}

\begin{exc}
	Prove \eqref{eq:CS} and Proposition~\ref{prop:CS},
	for example by considering the expression
	$\inp{u-\alpha v,u-\alpha v}$ with 
	$\alpha = \langle v,u\rangle/\langle v,v \rangle$.
\end{exc}

\begin{exc}\label{exc:form-properties}
	Let $V$ be a vector space, $s(u,v)$ a sesquilinear form on $V$,
	and $q(v) = s(v,v)$ the associated quadratic form.
	Prove that it satisfies the parallelogram identity
	$$
		q(u+v) + q(u-v) = 2q(u) + 2q(v)
	$$
	and the \keyword{polarization identity}
	$s(u,v) = \inp{u,v}_q$
	for all $u,v \in V$. 
	Show that $s(u,v)$ is hermitian if and only if $q$ is real-valued,
	i.e. $q\colon V \to \R$,
	and that if $q$ is non-negative, 
	i.e. $q\colon V \to \Rplus$,
	then it also satisfies the
	Cauchy-Schwarz inequality
	$|s(u,v)| \le q(u)^{1/2} q(v)^{1/2}$.
\end{exc}

\begin{exc}\label{exc:Bessels-inequality}
	Let $V$ be an inner product space and $\{u_j\}_{j=1}^n$ an 
	\keyword{orthonormal} set in $V$,
	i.e. $\inp{u_j,u_k} = \delta_{jk}$.
	Prove \keyword{Bessel's inequality}
	\begin{equation}\label{eq:Bessel}
		\sum_{j=1}^n |\langle u_j,v \rangle|^2 \le \|v\|^2
	\end{equation}
	for all $v \in V$, with equality iff $v \in \Span\{u_j\}_{j=1}^n$.
\end{exc}

\subsection{Lebesgue spaces}\label{sec:prelims-Lp}

	The typical example of a Hilbert space encountered in quantum mechanics is
	either the finite-dimensional space $\C^n$, or the infinite-dimensional
	Lebesgue space of square-integrable functions $L^2(\Omega)$. 
	Recall that for $\Omega \subseteq \R^n$ 
	(which could be replaced by some measure space $(X,\mu)$ in general)
	and for a measurable function $f\colon \Omega \to \F$,
	we define the \keyword{$L^p$-norms} as
	$$
		\norm{f}_{L^p(\Omega;\F)} := \left( \int_\Omega |f(x)|^p \,dx \right)^{1/p},
		\quad 1 \le p < \infty,
	$$
	and
	$$
		\norm{f}_{L^\infty(\Omega;\F)} := \esssup_{x \in \Omega} |f(x)| 
		:= \inf \bigl\{ K \in [0,\infty] : \text{$|f(x)| \le K$ for a.e. $x \in \Omega$} \bigr\}.
	$$
	Then the \keyword{Lebesgue spaces} are defined as
	$$
		L^p(\Omega;\F) := \bigl\{ f\colon \Omega \to \F \ : \ 
			\text{$f$ is measurable and $\norm{f}_{L^p(\Omega;\F)} < \infty$} 
			\bigr\}.
	$$
	In the above $(\F,|\slot|)$ can be taken to be any finite-dimensional 
	normed space (algebra), 
	however the typical case is $\F=\C$ for which we simply write
	$L^p(\Omega) := L^p(\Omega;\C)$.
	If the domain $\Omega$ (or space $(X,\mu)$) is also understood from context
	we could write simply $\norm{f}_{L^p} := \norm{f}_{L^p(\Omega)}$.
	We follow the standard convention that we identify two functions 
	$f=g$ iff $f(x) = g(x)$ for a.e. $x \in \Omega$.
	It is a classical fact that $L^p(\Omega)$ forms a Banach space 
	for any $p \in [1,\infty]$ and that $L^2(\Omega)$ is a separable
	Hilbert space with the standard inner product
	$$
		\inp{f,g} := \int_\Omega \overline{f(x)} g(x) \,dx.
	$$
	
\begin{proposition}[\keyword{H\"older's inequality}]\label{prop:Hoelder}
	Let $1 \le p,q,r \le \infty$. For $L^p = L^p(\Omega;\F)$ we then have%
	\footnote{We use the convention $1/\infty = 0$.}
	\begin{equation}\label{eq:Hoelder}
		\norm{fg}_{L^r} \le \norm{f}_{L^p} \norm{g}_{L^q}
		\quad \text{for} \quad \frac{1}{r} = \frac{1}{p} + \frac{1}{q},
	\end{equation}
	and for all $f \in L^p$, $g \in L^q$. 
	In particular, with $p=q=2$ and $r=1$,
	\begin{equation}\label{eq:CS-L2}
		\norm{fg}_{L^1} \le \norm{f}_{L^2} \norm{g}_{L^2}
	\end{equation}
	is the Cauchy-Schwarz inequality in $L^2$.
	Moreover, if $p,q \in (1,\infty)$ then equality holds in~\eqref{eq:Hoelder} 
	if and only if $|f|^p$ and $|g|^q$ are linearly dependent in $L^1$.
\end{proposition}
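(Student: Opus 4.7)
The plan is to reduce the general statement to the case $r=1$, and to prove the $r=1$ case by the classical Young-inequality argument.

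First I would dispose of the trivial and degenerate situations. If $\norm{f}_{L^p}=0$ or $\norm{g}_{L^q}=0$ then $fg=0$ a.e.\ and \eqref{eq:Hoelder} is immediate; if either norm is $+\infty$ there is nothing to prove. The cases $p=\infty$ or $q=\infty$ are handled directly from the definition of the essential supremum: if, say, $q=\infty$, then $1/r=1/p$, so $r=p$, and $|f(x)g(x)| \le |f(x)|\,\norm{g}_{L^\infty}$ for a.e.\ $x$, which upon integrating the $p$-th power and taking $p$-th roots gives the result. So I may assume $p,q\in(1,\infty)$.

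Next I would treat the core case $r=1$, i.e.\ $1/p+1/q=1$. The key ingredient is Young's inequality: for $a,b\ge0$ and conjugate exponents $p,q$,
\begin{equation}
ab \le \frac{a^p}{p}+\frac{b^q}{q},
\end{equation}
with equality iff $a^p=b^q$. I would prove this by concavity of $\log$ applied to the convex combination $\frac{1}{p}\log(a^p)+\frac{1}{q}\log(b^q) \le \log\bigl(\tfrac{1}{p}a^p+\tfrac{1}{q}b^q\bigr)$, exponentiating both sides. After normalizing so that $\norm{f}_{L^p}=\norm{g}_{L^q}=1$ (which is legitimate once both norms are known to be nonzero and finite), I apply Young pointwise to $a=|f(x)|$, $b=|g(x)|$ and integrate:
\begin{equation}
\int_\Omega |f(x)g(x)|\,dx \le \frac{1}{p}\int_\Omega |f|^p\,dx + \frac{1}{q}\int_\Omega |g|^q\,dx = \frac{1}{p}+\frac{1}{q} = 1,
\end{equation}
which is the claim after undoing the normalization.

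To pass to general $r\in[1,\infty)$, note that $1/r=1/p+1/q$ rewrites as $1 = 1/(p/r)+1/(q/r)$, so $p/r$ and $q/r$ are conjugate exponents in $(1,\infty)$. Applying the $r=1$ case to $|f|^r\in L^{p/r}$ and $|g|^r\in L^{q/r}$ gives
\begin{equation}
\int |fg|^r \le \Bigl(\int |f|^p\Bigr)^{r/p}\Bigl(\int |g|^q\Bigr)^{r/q},
\end{equation}
and taking $r$-th roots yields \eqref{eq:Hoelder}. The $L^2$ Cauchy--Schwarz statement \eqref{eq:CS-L2} is the special case $p=q=2$, $r=1$. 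Finally, the equality statement is read off from the equality case in Young's inequality: after the normalization we must have $|f(x)|^p=|g(x)|^q$ a.e., which in the unnormalized form says that $|f|^p$ and $|g|^q$ are proportional in $L^1$. The main technical obstacle is not any single step but the bookkeeping around the normalization and the equality case, particularly verifying that the conditions for equality propagate correctly through the rescaling $f\mapsto |f|^r$ used to reduce general $r$ to $r=1$.
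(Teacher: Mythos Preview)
Your proposal is correct and follows essentially the same approach the paper outlines: the paper does not give a proof in the text but leaves it as an exercise with precisely the three-step hint of reducing to $r=1$, proving Young's inequality $ab \le a^p/p + b^q/q$ with its equality case, and then applying Young pointwise after normalizing $f$ and $g$. You have filled in all of these steps, including the boundary cases $p=\infty$ or $q=\infty$ and the propagation of the equality condition, so there is nothing to add.
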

	
	There is a trick to remember the precise form of, 
	or to check the validity of, 
	inequalities or identities of the type \eqref{eq:Hoelder}. Namely,
	first note how it depends upon rescaling $f$ or $g$ with a positive
	number $\lambda>0$, i.e. linearly on both sides of the inequality, 
	since by the property of the norms
	$\norm{\lambda f}_{L^p} = \lambda \norm{f}_{L^p}$.
	Secondly, one should note how the expressions behave upon rescaling
	the argument of the functions $f$ and $g$ by a number $\mu > 0$,
	i.e. $f_\mu(x) := f(x/\mu)$,
	$$
		\norm{f_\mu}_{L^p} = \left( \int_\Omega f(x/\mu) \,dx \right)^{1/p}
		= \mu^{n/p} \norm{f}_{L^p},
	$$
	if $\Omega = \mu\Omega = \R^n$.
	Also in the case that $\Omega \subsetneq \R^n$ one may note that if
	$f$ is dimensionless then the norm $\norm{f_\mu}_{L^p}$ has the dimension
	of a volume in $\R^n$ to the power $1/p$, i.e. $n/p$.
	We then find that the l.h.s. of \eqref{eq:Hoelder} scales as $\mu^{n/r}$,
	but also the r.h.s. scales as $\mu^{n/p} \mu^{n/q} = \mu^{n/r}$.
	These two scaling principles must always be obeyed and
	can be used to check similar expressions.
	
	An application of the H\"older inequality~\eqref{eq:Hoelder}
	proves the triangle inequality on $L^p$:

\begin{proposition}[\keyword{Minkowski's inequality}]
	Let $1 \le p \le \infty$. Then
	$$
		\norm{f + g}_{L^p} \le \norm{f}_{L^p} + \norm{g}_{L^p},
	$$
	for all $f,g \in L^p$. 
\end{proposition}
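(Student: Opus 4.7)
The plan is to handle the three cases $p=1$, $p=\infty$, and $1<p<\infty$ separately, with Hölder's inequality (already available as Proposition~\ref{prop:Hoelder}) doing all the work in the third case. The endpoint cases are essentially pointwise: for $p=1$ one integrates the pointwise inequality $|f(x)+g(x)| \le |f(x)|+|g(x)|$, and for $p=\infty$ one takes essential suprema of the same pointwise bound. So the real content lies in $1<p<\infty$.

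Before starting the estimate, I would first check that $f+g$ lies in $L^p$ at all, so that $\|f+g\|_{L^p}$ is a finite quantity one is allowed to manipulate. This follows from the crude bound $|f+g|^p \le (2\max\{|f|,|g|\})^p \le 2^p(|f|^p+|g|^p)$, integrating gives $\|f+g\|_{L^p}^p \le 2^p(\|f\|_{L^p}^p+\|g\|_{L^p}^p)<\infty$. I would then dispose of the trivial case $\|f+g\|_{L^p}=0$ immediately.

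For the main estimate in the range $1<p<\infty$, the idea is the standard splitting
\[
|f+g|^p = |f+g|\cdot|f+g|^{p-1} \le |f|\cdot|f+g|^{p-1} + |g|\cdot|f+g|^{p-1},
\]
and then to apply Hölder with the conjugate exponent $q=p/(p-1)$, so that $1/p+1/q=1$, to each of the two terms on the right. Since $(p-1)q=p$, the factor $|f+g|^{p-1}$ belongs to $L^q$ with $L^q$-norm equal to $\|f+g\|_{L^p}^{p-1}$. Hölder therefore produces
\[
\int_\Omega |f+g|^p\,dx \le \bigl(\|f\|_{L^p}+\|g\|_{L^p}\bigr)\,\|f+g\|_{L^p}^{p-1}.
\]
Dividing both sides by the finite, nonzero quantity $\|f+g\|_{L^p}^{p-1}$ gives the claim.

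The main obstacle, such as it is, is really only bookkeeping: one must (i) confirm $f+g\in L^p$ before dividing, (ii) check the arithmetic $(p-1)q=p$ so that the Hölder conjugate truly turns $|f+g|^{p-1}$ into an $L^q$-factor with the right norm, and (iii) separate out the degenerate case $\|f+g\|_{L^p}=0$. No new ideas beyond Hölder are needed, and the scaling check described in the remark after Proposition~\ref{prop:Hoelder} already confirms the dimensional consistency of the resulting inequality.
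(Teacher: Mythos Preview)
Your proof is correct and follows exactly the approach the paper indicates: the paper leaves Minkowski's inequality as an exercise with the hint to write $|f+g|^p = |f+g||f+g|^{p-1}$ and then use the triangle and H\"older inequalities, which is precisely what you do. Your treatment is in fact more complete than the hint, since you also handle the endpoint cases $p=1,\infty$, verify $f+g\in L^p$ a priori, and deal with the degenerate case $\|f+g\|_{L^p}=0$ before dividing.
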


	In the case that $\Omega$ is noncompact it is useful to define,
	given any function space $\scF(\Omega;\F)$ (such as $\scF=L^p$),
	the \emph{local} function space
	$$
		\scF_\loc(\Omega;\F) := \bigl\{f\colon \Omega \to \F \ : \ 
			\text{$\varphi f \in \scF(\Omega;\F)$ for any $\varphi \in C^\infty_c(\Omega;\F)$} 
			\bigr\},
	$$
	where $C_c^\infty(\Omega;\F)$ denotes the space of smooth functions 
	with compact support on $\Omega^\circ$.
	
\begin{exc}
	Prove H\"older's inequality, for example by
	\begin{enumerate}[label=\arabic*.]
	\item first reducing~\eqref{eq:Hoelder} to the case 
		$\|fg\|_1 \le \|f\|_p \|g\|_q$,
		$1 = 1/p + 1/q$, i.e. $r=1$,
	\item proving \keyword{Young's inequality}
		\begin{equation}\label{eq:Young}
			ab \le \frac{a^p}{p} + \frac{b^q}{q}
		\end{equation}
		for such $p,q$ and $a,b \ge 0$, with equality iff $a^p = b^q$,
	\item using this to prove H\"older's inequality with $r=1$ by first 
		normalizing $f$ and $g$.
	\end{enumerate}
\end{exc}

\begin{exc}
	Prove Minkowski's inequality, f.ex.\ by writing $|f+g|^p = |f+g||f+g|^{p-1}$
	and then using the triangle and H\"older inequalities.
\end{exc}

\subsubsection{Convergence properties}\label{sec:prelims-Lp-convergence}

	One has the following extremely useful convergence properties 
	of the Lebesgue integral:
	
\begin{theorem} 
	\label{thm:Lebconv}
	Let $(X,\mu)$ be a measure space with positive measure $\mu$,
	and let $(f_n)_{n=1}^\infty$ be a sequence of measurable functions that 
	converges pointwise a.e. on $X$ to a function $f$.
	\begin{enumerate}[label=\textup{(\roman*)}]
	\item\label{itm:Lebconv-mono} \keyword{Lebesgue monotone convergence theorem}.\\
		Suppose that $0 \le f_n(x) \le f_{n+1}(x)$ for a.e. $x \in X$.
		Then $f$ is measurable and 
		$\lim_{n \to \infty} \int_X f_n \,d\mu = \int_X f \,d\mu$.
	\item\label{itm:Lebconv-dom} \keyword{Lebesgue dominated convergence theorem}.\\
		Suppose that there exists $F \in L^1(X,d\mu)$ such that
		$|f_n(x)| \le |F(x)|$ for a.e. $x \in X$.
		Then $f \in L^1(X,d\mu)$ and 
		$\lim_{n \to \infty} \int_X |f_n-f| \,d\mu = 0$.
	\item\label{itm:Fubini} \keyword{Fubini's theorem}.\\
		Suppose that $X=X_1 \times X_2$, where $(X_j,\mu_j)$, $j=1,2$
		are two $\sigma$-finite measure spaces, 
		and that $f$ is a measurable function on $X$.
		If $f \ge 0$ then the following three integrals are equal:
		$$
			\int_{X_1 \times X_2} f(x,y) \,(\mu_1 \times \mu_2)(dx\,dy),
		$$
		$$
			\int_{X_1} \left( \int_{X_2} f(x,y) \,\mu_2(dy) \right) \mu_1(dx),
		$$
		$$
			\int_{X_2} \left( \int_{X_1} f(x,y) \,\mu_1(dx) \right) \mu_2(dy).
		$$
		If $f\colon X \to \C$ then the above holds if one assumes in addition
		that
		$$
			\int_{X_1 \times X_2} |f(x,y)| \,(\mu_1 \times \mu_2)(dx\,dy) < \infty.
		$$
	\end{enumerate}
\end{theorem}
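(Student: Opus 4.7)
The plan is to establish the three statements in sequence, each building on the previous, starting from the definition of the Lebesgue integral of a non-negative measurable $f$ as $\int_X f \, d\mu = \sup\{ \int_X s\, d\mu : 0 \le s \le f,\ s \text{ simple}\}$. For the monotone convergence theorem \ref{itm:Lebconv-mono}, monotonicity of the integral immediately gives $\lim \int f_n \, d\mu \le \int f \, d\mu$. For the reverse, fix a simple $s = \sum_k a_k \1_{A_k}$ with $0 \le s \le f$ and some $c \in (0,1)$; the sets $E_n := \{x \in X : f_n(x) \ge c\, s(x)\}$ are measurable and increase to $X$, so continuity of $\mu$ from below gives $\int f_n \, d\mu \ge c \int_{E_n} s \, d\mu \to c \int s \, d\mu$. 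Letting $c \uparrow 1$ and taking the supremum over $s$ closes the estimate.

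Next, for the dominated convergence theorem \ref{itm:Lebconv-dom}, I would pass through Fatou's lemma: for non-negative measurable $g_n$, the sequence $h_k := \inf_{n \ge k} g_n$ is non-decreasing with pointwise limit $\liminf g_n$, so applying MCT yields $\int \liminf g_n \, d\mu \le \liminf \int g_n \, d\mu$. Applying Fatou to the non-negative sequence $2|F| - |f_n - f|$, whose pointwise a.e.\ limit is $2|F|$, and cancelling the finite integral $\int 2|F|\, d\mu$ from both sides, produces $\limsup \int |f_n - f| \, d\mu \le 0$, i.e.\ the desired $L^1$-convergence. Measurability and integrability of the limit $f$ follow from writing $f = \limsup f_n$ a.e.\ with $|f| \le |F|$.

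For Fubini's theorem \ref{itm:Fubini}, the plan is the three-step ladder from measurable rectangles to simple functions to general non-negative measurable functions (the Tonelli content), followed by splitting into positive/negative (or real/imaginary) parts for the signed $L^1$ case. Concretely: (i) verify that for every measurable $E \subseteq X_1 \times X_2$ the section $E^x := \{y \in X_2 : (x,y) \in E\}$ is $X_2$-measurable, that $x \mapsto \mu_2(E^x)$ is $X_1$-measurable, and that its integral equals $(\mu_1 \times \mu_2)(E)$; (ii) extend by linearity to simple non-negative combinations $\sum_k a_k \1_{E_k}$ and then to general non-negative measurable $f$ via an increasing approximation by simple functions and MCT; (iii) for $f \in L^1$, split $f = f^+ - f^-$ (or real/imaginary parts) and use the integrability hypothesis to exclude $\infty - \infty$ in the iterated integrals.

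The hard part will be step (i). The identity is trivial on measurable rectangles $A_1 \times A_2$, but the class of sets on which section-measurability and the Fubini identity simultaneously hold is not obviously closed under countable unions when the measures are not finite; this is precisely where the $\sigma$-finiteness hypothesis enters. The standard remedy is to exhaust each $X_j$ by $X_j^{(m)} \uparrow X_j$ of finite $\mu_j$-measure, verify the identity on the restricted product $X_1^{(m)} \times X_2^{(m)}$ via a Dynkin $\pi$-$\lambda$ argument (measurable rectangles form a $\pi$-system, the ``good'' sets form a $\lambda$-system), and then pass $m \to \infty$ using MCT. Once this measurability skeleton is in place, the remaining extensions in (ii) and (iii) are routine applications of linearity together with monotone and dominated convergence.
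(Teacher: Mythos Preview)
The paper does not supply a proof of this theorem; it is stated in the preliminaries section as a classical fact with no accompanying proof environment, consistent with the author's remark that these results are recalled ``for convenience and for setting our notation.'' Your sketch is the standard textbook route (MCT via the sets $E_n = \{f_n \ge cs\}$, Fatou from MCT, DCT from Fatou applied to $2|F| - |f_n - f|$, and Tonelli--Fubini via a $\pi$-$\lambda$ argument on rectangles with a $\sigma$-finite exhaustion), and it is correct as outlined. There is nothing in the paper to compare against beyond noting that your approach is the conventional one the author presumably has in mind.
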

	
\subsubsection{The layer-cake representation}\label{sec:prelims-Lp-layer}

	Let $(X,\mu)$ be a measure space and $f\colon X \to \C$ measurable.
	Define the function $\lambda_f\colon \Rplus \to [0,\infty]$,
	$$
		\lambda_f(t) := \mu( \{x \in X : |f(x)| > t\} ),
	$$
	and its formal differential 
	(the sign because $\lambda_f$ is decreasing)
	$$
		d\lambda_f(t) := -\mu( \{x \in X : |f(x)| = t\} ).
	$$ 
	These allow to express the properties of $f$ in terms of 
	layers of its graph or its level sets,
	namely, using that 
	$$
		|f(x)|^p = \int_{t=0}^\infty \1_{\{|f(x)| > t\}} \,d(t^p),
	$$
	with $d(t^p) = pt^{p-1}dt$,
	one has the \keyword{layer-cake representation}
	\begin{equation}\label{eq:layer-cake-p}
		\norm{f}_p^p = \int_{t=0}^\infty \lambda_f(t) \,d(t^p)
		= \int_{t=\infty}^0 t^p \,d\lambda_f(t),
	\end{equation}
	where the second identity is a formal partial integration.
	Also,
	\begin{equation}\label{eq:layer-cake-infty}
		\norm{f}_\infty = \inf \{ t \ge 0 : \lambda_f(t) = 0 \}.
	\end{equation}
	An immediate application of \eqref{eq:layer-cake-p} is 
	\keyword{Chebyshev's inequality}
	\begin{equation}\label{eq:Chebyshev}
		\norm{f}_p^p \ge t^p \lambda_f(t),
		\qquad \forall t \ge 0.
	\end{equation}

\subsection{Fourier transform}\label{sec:prelims-Fourier}

	Given a measurable function $f\colon \R^n \to \C$,
	we formally define its \keyword{Fourier transform} $\hf$ by
	\begin{equation}\label{eq:Fourier-def}
		\hf(\xi) := (\cF f)(\xi) 
		:= (2\pi)^{-n/2} \int_{\R^n} f(x) e^{-i\xi \cdot x} dx.
	\end{equation}
	It can be shown that $\cF$ is a bijective (and obviously linear)
	map from $L^2(\R^n)$ into itself, with its inverse given by
	\begin{equation}\label{eq:Fourier-inverse}
		f(x) = (\cF^{-1} \hf)(x) 
		:= (2\pi)^{-n/2} \int_{\R^n} \hf(\xi) e^{i\xi \cdot x} d\xi.
	\end{equation}
	Moreover, \keyword{Plancherel's identity}
	\begin{equation}\label{eq:Plancherel}
		\norm{\cF f}_{L^2(\R^n)} = \norm{f}_{L^2(\R^n)},
		\qquad f \in L^2(\R^n),
	\end{equation}
	shows that $\cF$ is a \keyword{unitary} map on $L^2(\R^n)$, 
	i.e. its adjoint satisfies $\cF^* = \cF^{-1}$.
	
	We also have the important relations between differentiation
	and multiplication
	\begin{equation}\label{eq:Fourier-derivatives}
		\left( \frac{\partial f}{\partial x_j} \right)^\wedge(\xi) 
			= i\xi_j \hf(\xi), 
			\qquad
		\left( x_j f \right)^\wedge(\xi) 
			= i\frac{\partial \hf}{\partial \xi_j}(\xi).
	\end{equation}

\subsection{Sobolev spaces}\label{sec:prelims-Hk}

	For any $s \ge 0$, we define the \keyword{Sobolev space} 
	to be the space of square-integrable functions having square-integrable
	derivatives to order $s$, using the Fourier transform on $\R^n$ 
	and~\eqref{eq:Fourier-derivatives} as
	\begin{equation}\label{eq:Sobolev-def}
		H^s(\R^n) := \bigl\{ f \in L^2(\R^n) : |\xi|^s \hf(\xi) \in L^2(\R^n) \bigr\}.
	\end{equation}
	The most common case is that $s=k$ is a non-negative integer, 
	$H^k(\R^n)$, while otherwise the space is called a 
	\emph{fractional} Sobolev space.
	We note that $H^s(\R^n)$ is a Hilbert space with the inner product
	\begin{equation}\label{eq:Sobolev-innerprod}
		\inp{f,g}_{H^s(\R^n)} := \int_{\R^n} 
			\overline{\hf(\xi)} \hat{g}(\xi) 
			(1 + |\xi|^2)^s d\xi.
	\end{equation}
	In particular, given a subset $\Omega \subseteq \R^n$, one may consider
	the subspace $C_c^\infty(\Omega) \subseteq H^s(\R^n)$ 
	of smooth and compactly supported functions on $\Omega$ 
	and take its closure in $H^s(\R^n)$ 
	(with respect to the norm induced from the above inner product).
	We denote this space
	\begin{equation}\label{eq:Sobolev0-def}
		H_0^s(\Omega) := \bigl\{ f \in H^s(\R^n) \ : \ 
			\exists (f_n) \subset C_c^\infty(\Omega), \ 
			\norm{f_n - f}_{H^s(\R^n)} \to 0 \bigr\},
	\end{equation}
	and it has the interpretation as the space of functions which
	vanish `sufficiently fast' at the boundary $\partial\Omega$
	(and are identically zero on $\Omega^c$).
	One has $H_0^s(\R^n) = H^s(\R^n)$.
	
	It is important to know that it is possible to define Sobolev spaces
	locally, i.e. on domains $\Omega \subseteq \R^n$,
	without the use of the Fourier transform, 
	however we will not discuss this properly here since it is most naturally done
	using the theory of distributions which goes beyond the course prerequisites.
	We only mention that, given 
	$u \in L^1_\loc(\Omega) \supseteq L^{p \ge 1}_\loc(\Omega)$ 
	it is always possible to define its gradient $\nabla u$ as a generalized function, 
	and in the case that this turns out to be a
	locally integrable function, i.e. $\nabla u \in L^1_\loc(\Omega;\C^n)$,
	we say that $u$ is \keyword{weakly differentiable}
	(it has weak partial derivatives $\partial_j u \in L^1_\loc(\Omega)$, $j=1,\ldots,n$).
	It may even turn out that $\nabla u \in L^q(\Omega;\C^n)$ for some $q \ge 1$, 
	and we define the Sobolev spaces
	$$
		H^1(\Omega) := \bigl\{ u \in L^2(\Omega) :
			\text{$u$ is weakly differentiable and $\nabla u \in L^2(\Omega;\C^n)$}
			\bigr\} 
	$$
	and, by iteration of this procedure to higher orders of derivatives,
	for $k \in \N$
	\begin{multline*}
		H^k(\Omega) := \bigl\{ u \in L^2(\Omega) :
			\text{$u$ has weak partial derivatives all the way up to order $k$, and} \\
			\text{all of which are in $L^2(\Omega)$}
			\bigr\}.
	\end{multline*}
	The inner product in this space is given by
	$$
		\inp{f,g}_{H^k(\Omega)} = \int_\Omega \left( \overline{f(x)}g(x) 
			+ \sum_\alpha \overline{\partial_\alpha f(x)} \partial_\alpha g(x) \right) dx,
	$$
	where the sum runs over all partial derivatives ($\alpha$ a multi-index)
	up to order $k$.
	
	In the case $\Omega = \R^n$ these spaces turn out to coincide with the above
	definitions \eqref{eq:Sobolev-def} and \eqref{eq:Sobolev0-def}.
	When $\Omega \subsetneq \R^n$ they 
	contain but may (or may not) differ
	from $H_0^k(\Omega)$, depending on the geometry of $\Omega$
	(we will return to this question in Section~\ref{sec:uncert-Hardy}),
	and will be associated to Neumann respectively Dirichlet boundary 
	conditions for the Laplace operator.
	
\begin{exc}
	Show that the norm in $H^k(\R^n)$ 
	is equivalent to that
	defined by replacing the weight factor $w(\xi) = (1+|\xi|^2)^k$ 
	in \eqref{eq:Sobolev-innerprod}
	by $w_1(\xi) = 1 + |\xi|^{2k}$ or
	$w_2(\xi) = 1 + \sum_{j=1}^n |\xi_j|^{2k}$.
\end{exc}

\subsection{Forms and operators}\label{sec:prelims-ops}

	An \keyword{operator} on a Hilbert space $\cH$ is a linear map
	$A\colon \cD(A) \to \cH$,
	with 
	the subspace $\cD(A) \subseteq \cH$ 
	the \keyword{domain} of $A$.
	Let $\norm{A} := \sup_{u \in \cD(A) : \norm{u}=1} \norm{Au}$, 
	then $A$ is called \keyword[bounded operator]{bounded} if $\norm{A}<\infty$ 
	and \keyword[unbounded operator]{unbounded} otherwise,
	and 
	\keyword[closed operator]{closed} if its graph 
	\begin{equation}\label{eq:graph-def}
		\Gamma(A) := \{ (u,v) \in \cD(A) \times \cH : v = Au \}
	\end{equation}
	is a closed subspace of $\cH \times \cH$. 
	We will always work with densely defined operators, 
	i.e. with $\cD(A)$ dense in $\cH$, 
	and shall denote by $\cL(\cH)$ 
	the space of such operators on $\cH$.
	We have a subspace $\cB(\cH) \subseteq \cL(\cH)$ 
	of bounded operators $A$, for which one may assume $\cD(A) = \cH$.
	
	An operator $A \in \cL(\cH)$ is \keyword{hermitian} (or \keyword{symmetric})
	if its corresponding sesquilinear form
	\begin{equation}\label{eq:operator-form}
		\cD(A) \times \cD(A) \to \C, \qquad (u,v) \mapsto \inp{u, Av} 
	\end{equation}
	satisfies $\inp{u,Av} = \inp{Au,v}$ for all $u,v \in \cD(A)$
	(compare Definition~\ref{def:forms-norms}.\ref{itm:symmetry}).
	Hermitian operators are always \keyword{closable}, 
	i.e. there is an \keyword{extension}
	$\hat{A}\colon \cD(\hat{A}) \to \cH$, with $\cD(\hat{A}) \supseteq \cD(A)$
	and $\hat{A}|_{\cD(A)} = A$, which is closed.
	Every closable operator $A$ has a smallest closed extension,
	its \keyword{closure} $\bar{A}$, 
	and every hermitian operator has a largest closed
	extension, namely its \keyword{adjoint} $A^*$, which is in general defined 
	with the domain
	\begin{equation}\label{eq:adjoint-domain-def}
		\cD(A^*) := \left\{ u \in \cH \ : 
			\sup_{v \in \cD(A) : \norm{v}=1} |\langle u,Av \rangle|
			< \infty \right\}
	\end{equation}
	and (via the Riesz lemma) the formula
	$\inp{u,Av} = \inp{A^*u,v}$ for all $u \in \cD(A^*)$, $v \in \cD(A)$.
	The operator $A$ is called \keyword{self-adjoint} if $A^* = A$, i.e. if
	$\cD(A^*) = \cD(A)$ and it is hermitian, and 
	\keyword{essentially self-adjoint}
	if it is hermitian and has a unique self-adjoint extension
	$\bar{A} = A^*$.
	Any bounded hermitian operator is self-adjoint.
	
\begin{example}\label{exmp:Laplace}\index{Laplacian}
	The Laplace operator $L\colon u \mapsto -u''$ on the interval $[0,1]$ 
	is unbounded and not defined as an operator on the full Hilbert space 
	$L^2([0,1])$, 
	but its restriction $L|_{C_c^\infty([0,1])}$ to the smooth functions
	with compact support 
	(usually referred to as the \keyword{minimal domain})
	is perfectly well-defined and hermitian.
	Its closure $\overline{L|_{C_c^\infty([0,1])}} = L|_{H^2_0([0,1])}$
	is not self-adjoint, $L|_{H^2_0([0,1])}^* = L|_{H^2([0,1])}$, however
	it has several self-adjoint extensions, such as the
	\keyword{Dirichlet Laplacian} 
	$-\Delta^\eD := L|_{\cD(-\Delta^\eD)}$, with
	$$
		\cD(-\Delta^\eD) := \{ u \in H^{1}_0([0,1]) : u' \in H^{1}([0,1]) \} = H^{1}_0([0,1]) \cap H^{2}([0,1]),
	$$
	and the
	\keyword{Neumann Laplacian} 
	$-\Delta^\eN := L|_{\cD(-\Delta^\eN)}$, with
	$$
		\cD(-\Delta^\eN) := \{ u \in H^{1}([0,1]) : u' \in H^{1}_0([0,1]) \} \subsetneq H^{2}([0,1]).\phantom{ \cap H^{2}([0,1]), }
	$$
	In contrast, when considering $L$ on the full real line,
	$L|_{C_c^\infty(\mathbb{R})}$ is
	essentially self-adjoint in $L^2(\mathbb{R})$, with
	$\overline{L|_{C_c^\infty(\mathbb{R})}} = L|_{H^{2}(\mathbb{R})}$.
	This follows from the properties of Sobolev spaces,
	namely $H^2_0(\R^2) = H^2(\R^2)$.
\end{example}
	
	The form \eqref{eq:operator-form} of an operator $A$
	usually extends to a larger dense subspace 
	$\cQ(A) \subseteq \cH$ called a \keyword{form domain} of $A$.
	We extend the notion of quadratic form to the unbounded case,
	just as in Definition~\ref{def:forms-norms} but with $V$ replaced
	by a dense subspace $\cD(q) \subseteq \cH$,
	$q\colon \cD(q) \to \C$.
	In particular, it is \keyword{hermitian} if $q\colon \cD(q) \to \R$, 
	\keyword{non-negative} if $q\colon \cD(q) \to \Rplus$, 
	and \keyword{strictly positive} if $q(u) > 0$ for all $u \in \cD(q) \setminus \{0\}$.
	We can also compare two hermitian quadratic forms $q$ and $q'$ 
	if their domains intersect, 
	and say that $q \ge q'$ iff $\cD(q) \subseteq \cD(q')$
	and $q(u) \ge q'(u)$ for all $u \in \cD(q)$.
	Thus the form $q$ is \keyword{semi-bounded from below} if $q \ge c$ for some
	constant $c \in \R$, i.e. $q(u) \ge c\norm{u}^2$ for all $u \in \cD(q)$.
	Note that such a form can always be converted to a non-negative one
	by adding to it the constant $-c$.
	Furthermore, for semi-bounded forms we usually distinguish the domain by
	writing $q(u) = +\infty$ iff $u \notin \cD(q)$.
	
	A non-negative (or semi-bounded with the above trick) quadratic form 
	is said to be \keyword[closed form]{closed} iff $\cD(q)$ is complete in the form norm
	$\norm{u}_{q+1}^2 := q(u) + \norm{u}^2$.
	The following then gives a very useful correspondence between forms and
	operators.
	
\begin{theorem}[see e.g. {\cite[Theorem~2.5.18]{Thirring2}}]
	\label{thm:form-operator}
	If the quadratic form $q\colon \cD(q) \to \Rplus$ is non-negative and closed,
	then it is the form $q=q_A$, with $q_A(u) := \inp{u,Au}$,
	$\cQ(A) := \cD(q)$,
	of a unique self-adjoint, non-negative operator $A$.
\end{theorem}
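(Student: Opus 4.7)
The approach is to use the Friedrichs-type construction based on the Riesz representation theorem applied in an auxiliary Hilbert space built from the form itself.

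First, I would equip $\cD(q)$ with the form inner product $\inp{u,v}_{q+1} := q_s(u,v) + \inp{u,v}$, where $q_s$ is the sesquilinear form obtained from $q$ by polarization (Exercise~\ref{exc:form-properties}; it is hermitian since $q$ is real-valued, and the associated norm is exactly $\norm{\slot}_{q+1}$). By the assumption that $q$ is closed, the pair $\cH_q := (\cD(q), \inp{\slot,\slot}_{q+1})$ is a Hilbert space, and the inclusion $\cH_q \hookrightarrow \cH$ is continuous with norm at most $1$ since $\norm{u}^2 \le \norm{u}_{q+1}^2$.

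Next, for each fixed $f \in \cH$, the linear functional $u \mapsto \inp{u,f}$ on $\cH_q$ is bounded by Cauchy--Schwarz in $\cH$ and the above inclusion, so the Riesz lemma in $\cH_q$ yields a unique element $Bf \in \cD(q)$ with
\begin{equation}
	\inp{u,Bf}_{q+1} = q_s(u,Bf) + \inp{u,Bf} = \inp{u,f} \qquad \forall u \in \cD(q).
\end{equation}
This defines a linear map $B\colon \cH \to \cD(q) \subseteq \cH$. I would then check that $B$ is bounded with $\norm{B} \le 1$, hermitian (hence self-adjoint, since it is everywhere defined and bounded), and non-negative with $0 \le B \le I$, by plugging $u = Bf$ or $u = Bg$ into the defining identity. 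Injectivity of $B$ follows because $Bf = 0$ forces $\inp{u,f}=0$ for all $u$ in the dense subspace $\cD(q)$, so $f=0$.

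With $B$ bounded, self-adjoint, injective, and non-negative, I would define $A := B^{-1} - I$ on the domain $\cD(A) := \mathrm{Ran}(B) \subseteq \cD(q)$. Standard arguments (the inverse of an injective bounded self-adjoint operator with dense range is self-adjoint on its natural domain) then give that $A$ is self-adjoint and non-negative. The defining identity rewrites as $\inp{u, (A+I)v} = \inp{u,v}_{q+1}$ for $u \in \cD(q)$, $v \in \cD(A)$, which yields $q(v) = \inp{v,Av}$ on $\cD(A)$ and shows $\cD(A)$ is a form core. Taking the closure in the form norm, together with the fact that $\cD(A)$ is dense in $\cH_q$ (again because $B$ has dense range in $\cH_q$), identifies $\cQ(A) = \cD(q)$ with $q_A = q$ on all of $\cD(q)$.

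For uniqueness, if $A'$ is another non-negative self-adjoint operator with $\cQ(A') = \cD(q)$ and $q_{A'} = q$, then the associated bounded operator $B' := (A'+I)^{-1}$ satisfies the same Riesz identity on $\cH_q$, so $B' = B$ by uniqueness of the Riesz representative, hence $A' = A$. The step I expect to be most delicate is the identification $\cQ(A) = \cD(q)$ — verifying that the form domain produced by the functional calculus (i.e.\ $\cD(A^{1/2})$) coincides with the original $\cD(q)$ and that the two forms agree there, which ultimately rests on the density of $\cD(A)$ in the complete space $(\cD(q), \norm{\slot}_{q+1})$.
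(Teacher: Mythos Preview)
Your argument is correct and is essentially the standard Friedrichs--Riesz construction. The paper itself does not supply a proof of this theorem; it only states the result with a reference to Thirring's book, so there is nothing to compare against on the paper's side. Your identification of the delicate step is accurate: the equality $\cQ(A) = \cD(q)$ is precisely the point where one must invoke both that $\cD(A)$ is a form core for $A$ and that $\mathrm{Ran}(B)$ is dense in the complete space $\cH_q$, and you have correctly sketched why both hold.
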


	It follows that for semi-bounded hermitian operators there is always a unique
	self-adjoint extension associated to the form~\eqref{eq:operator-form},
	called the \keyword{Friedrichs extension}.

\begin{theorem}[Friedrichs extension; see e.g. {\cite[Theorem~X.23]{ReeSim2}}]
	\label{thm:Friedrichs}
	For any semi-boun\-ded from below hermitian operator $A$, the quadratic
	form $q_A(u) := \inp{u,Au}$ with $\cD(q_A) = \cD(A)$ is closable
	and its closure $\overline{q_A} = q_{\hat{A}_\sF}$ is the quadratic form
	of a unique semi-bounded self-adjoint operator $\hat{A}_\sF$.
	Furthermore, $\hat{A}_\sF$ is the only self-adjoint extension $\hat{A}$
	of $A$ s.t. $\cD(\hat{A}) \subseteq \cD(\overline{q_A})$ and the largest 
	($\hat{A}_\sF \ge \hat{A}$, i.e. $\cD(\hat{A}_F) \subseteq \cD(\hat{A})$)
	among all semi-bounded self-adjoint extensions of $A$.
\end{theorem}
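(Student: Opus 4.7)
The plan is to shift to the strictly positive case and then invoke Theorem~\ref{thm:form-operator}. By replacing $A$ with $A + (1-c)I$ when $A \geq cI$, we may assume $A \geq I$, so that $q_A(u) = \inp{u, Au}$ is strictly positive on $\cD(A)$ with $q_A(u) \geq \|u\|^2$. Once closability of $q_A$ is established, Theorem~\ref{thm:form-operator} immediately produces a unique self-adjoint operator $\hat{A}_\sF \geq I$ with $q_{\hat{A}_\sF} = \overline{q_A}$; undoing the shift gives the general semi-bounded statement.

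The main technical step is closability. Take any sequence $(u_n) \subset \cD(A)$ with $u_n \to 0$ in $\cH$ which is Cauchy in the form norm $\|u\|_A := \sqrt{q_A(u)}$; the goal is to show $q_A(u_n) \to 0$. The key identity is
\begin{equation}
q_A(u_n) = \inp{u_n, A(u_n - u_m)} + \inp{u_n, Au_m}.
\end{equation}
The first summand is bounded by $\|u_n\|_A \|u_n - u_m\|_A$, which becomes arbitrarily small for $n,m$ large since $\|u_n\|_A$ stays bounded along the Cauchy sequence; for each fixed $m$ the second summand tends to $0$ as $n \to \infty$ because $u_n \to 0$ in $\cH$ while $Au_m \in \cH$ is a fixed vector. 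Fixing $m$ large and then sending $n \to \infty$ gives $\limsup q_A(u_n) = 0$. This step is the main obstacle of the whole proof.

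Now complete $\cD(A)$ in the form norm to obtain a Hilbert space $\cH_A$. Because $q_A \geq \|\cdot\|^2$, the inclusion $\cD(A) \hookrightarrow \cH$ extends to a continuous map $\cH_A \to \cH$, and closability is precisely the statement that this map is injective; thus $\cH_A$ may be identified with a dense subspace of $\cH$ containing $\cD(A)$, on which $\overline{q_A}$ lives as a closed non-negative form. Theorem~\ref{thm:form-operator} then furnishes $\hat{A}_\sF$ with form domain $\cQ(\hat{A}_\sF) = \cH_A$ and $q_{\hat{A}_\sF} = \overline{q_A}$. To see that $\hat{A}_\sF$ extends $A$: for $v \in \cD(A)$, approximating an arbitrary $u \in \cH_A$ by elements of $\cD(A)$ in form norm (hence also in $\cH$) yields $\overline{q_A}(u,v) = \inp{u, Av}$, which is $\cH$-continuous in $u$ with norm bounded by $\|Av\|$, so by the defining property of the operator associated to a closed form, $v \in \cD(\hat{A}_\sF)$ and $\hat{A}_\sF v = Av$.

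For the maximality claims, let $\hat{A}$ be any semi-bounded self-adjoint extension of $A$. Its quadratic form $q_{\hat{A}}$ is closed and agrees with $q_A$ on $\cD(A)$, hence contains the closure $\overline{q_A}$; thus $\cH_A \subseteq \cQ(\hat{A})$ with the two forms agreeing there, which in the form order of the preceding subsection is precisely $\hat{A}_\sF \geq \hat{A}$. If in addition $\cD(\hat{A}) \subseteq \cH_A = \cD(\overline{q_A})$, then for $v \in \cD(\hat{A})$ and $u \in \cD(A)$ hermiticity gives $\inp{u, \hat{A}v} = \inp{Au,v} = \overline{q_A}(u,v)$ (extended by continuity in $u$), showing $v \in \cD(\hat{A}_\sF)$ with $\hat{A}_\sF v = \hat{A} v$, and equality of the two self-adjoint extensions follows since neither can be a proper self-adjoint extension of the other.
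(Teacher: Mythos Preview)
The paper does not supply its own proof of this theorem; it is stated with a reference to Reed--Simon \cite[Theorem~X.23]{ReeSim2} and used as background in Section~\ref{sec:prelims-ops}. There is therefore nothing to compare against.

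That said, your argument is correct and essentially the standard one. The closability step is handled cleanly via the splitting $q_A(u_n) = \inp{u_n, A(u_n - u_m)} + \inp{u_n, Au_m}$ together with the Cauchy--Schwarz inequality for the non-negative sesquilinear form $\inp{\cdot, A\cdot}$ (cf.\ Exercise~\ref{exc:form-properties}); the appeal to Theorem~\ref{thm:form-operator} to produce $\hat{A}_\sF$ is exactly how the paper's framework intends these results to fit together. The extension and maximality arguments are also sound: in the uniqueness step you correctly use that $\cD(A) \subseteq \cD(\hat{A})$ to write $\inp{u,\hat{A}v} = \inp{Au,v} = \overline{q_A}(u,v)$ for $u \in \cD(A)$, then extend in $u$ by form-norm density, and conclude by the fact that a self-adjoint operator admits no proper symmetric (hence no proper self-adjoint) extension. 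One small presentational point: when you say ``closability is precisely the statement that this map is injective,'' it would be slightly cleaner to observe that the preceding paragraph has already established exactly this injectivity (a form-Cauchy sequence with $\cH$-limit $0$ has form-limit $0$), so the abstract completion $\cH_A$ is indeed realized inside $\cH$.
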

	
	Note that we compare semi-bounded hermitian operators in terms of their
	quadratic forms:
	$A \ge B$ iff $\cD(A) \subseteq \cD(B)$ and $q_A \ge q_B$.
	
\begin{example}\label{exmp:Laplace-form}\index{Laplacian}
	Associated to the Laplace operator $L$ on $[0,1]$ in Example~\ref{exmp:Laplace} 
	is the quadratic form $q_L(u) = \int_0^1 |u'|^2 \thinspace dx$. 
	The Dirichlet Laplacian $-\Delta^\eD$ is 
	is the Friedrichs extension w.r.t. the non-negative form 
	$q_L|_{C_c^\infty([0,1])}$,
	with the resulting form domain $\cQ(-\Delta^\eD) = H_0^1([0,1])$,
	while the Neumann Laplacian $-\Delta^\eN$ 
	is the Friedrichs extension w.r.t. $q_L|_{C^\infty([0,1])}$,
	with resulting form domain $\cQ(-\Delta^\eN) = H^1([0,1])$.
	We have $-\Delta^\eD \ge -\Delta^\eN$ since 
	$C_c^\infty \subseteq C^\infty$.
\end{example}
	
	Although necessary for some parts of the course, we will for
	simplicity try to avoid operator theory as much as possible 
	and will typically be working with forms and form domains 
	instead of operators and operator domains.
	So for example if $u \in H^1(\Omega)$ and we write
	$$
		\inp{u, (-\Delta)u}_{L^2(\Omega)} 
	$$
	then since $-\Delta = \nabla^* \nabla$, 
	here with $\cQ(-\Delta) = \cD(\nabla) = H^1(\Omega)$ understood, 
	we actually mean
	$$
		\inp{\nabla u, \nabla u}_{L^2(\Omega)} 
		= \int_\Omega |\nabla u|^2 < \infty.
	$$

\begin{exc}
	Verify the statements in Example~\ref{exmp:Laplace}
	(as completely as you can with the preliminaries at hand;
	you may e.g.\ use that any $u \in H^1([0,1])$ is a continuous function
	on $(0,1)$).
\end{exc}

\subsubsection{The spectral theorem}\label{sec:prelims-ops-specthm}

	The \keyword{spectrum} $\sigma(A)$ of an operator $A \in \cL(\cH)$ 
	is the set of points $\lambda \in \C$ 
	for which there does not exist a bounded inverse
	to the operator $A - \lambda = A - \lambda\1$.
	For $\lambda \notin \sigma(A)$ we call such $(A - \lambda)^{-1}$
	the \keyword{resolvent} of $A$ at $\lambda$.
	An operator $A$ is self-adjoint if and only if it is hermitian and 
	$\sigma(A) \subseteq \R$.
	
	To fully describe a self-adjoint operator in terms of its spectrum,
	one needs to have also a notion of spectral measure.
	A \keyword{projection-valued measure} $P$ is a 
	function $\Omega \mapsto P_\Omega$ on the Borel\footnote{The
	$\sigma$-algebra of \keyword{Borel sets} on a topological space 
	is the smallest $\sigma$-algebra containing all open sets.
	A real-valued function $f$ is a \keyword{Borel function} if $f^{-1}((a,b))$
	is a Borel set for any interval $(a,b)$.}
	sets $\Omega \subseteq \mathbb{R}$
	such that each $P_\Omega$ 
	is a projection on a Hilbert space $\cH$, 
	$P_\emptyset = 0$, $P_\R = \1$,
	$P_{\Omega_1}P_{\Omega_2} = P_{\Omega_1 \cap \Omega_2}$,
	and if $\Omega = \cup_{n=1}^{\infty} \Omega_n$,
	with $\Omega_n$ and $\Omega_m$ disjoint for $n \neq m$,
	then $P_\Omega = \textup{s-lim}_{N \to \infty} \sum_{n=1}^N P_{\Omega_n}$.
	Given any $u,v \in \cH$, we then have a complex measure
	$\Omega \mapsto \inp{u,P_\Omega v}$ on the real line.
	
\begin{theorem}[\keyword{Spectral theorem}; 
		see e.g. {\cite[Theorem VIII.6]{ReeSim1}}
		or {\cite[Section~3.1]{Teschl-14}}]
	\label{thm:spectral-theorem} 
	There is a one-to-one correspondence between 
	projection-valued measures $P^A$ 
	and self-adjoint operators 
	$A = \int_{-\infty}^{\infty} \lambda \thinspace dP^A(\lambda)$, with
	$$
		\cD(A) := \left\{ u \in \cH : 
		\int_{-\infty}^{\infty} |\lambda|^2 \thinspace d(\inp{ u, P^A u })(\lambda) 
		< \infty \right\}
	$$
	and
	$$
		\inp{ u, Av } 
		= \int_{-\infty}^{\infty} \lambda \inp{ u, dP^A(\lambda) v }
		:= \int_{-\infty}^{\infty} \lambda \, d(\inp{ u, P^A v })(\lambda).
	$$
	Furthermore, 
	$$
		\cQ(A) = \left\{ u \in \cH : 
		\int_{-\infty}^{\infty} |\lambda| \thinspace d(\inp{ u, P^A u })(\lambda) 
		< \infty \right\} = \cD(\sqrt{|A|})
	$$
	and if $f\colon \R \to \R$ is a Borel function then
	we can define the operator
	$$
		f(A) 
		= \int_{-\infty}^{\infty} \lambda \, dP^{f(A)}(\lambda)
		:= \int_{-\infty}^{\infty} f(\lambda) \, dP^A(\lambda).
	$$
\end{theorem}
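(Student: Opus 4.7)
The plan is to prove the spectral theorem in two stages: first establish it for bounded self-adjoint operators via a continuous functional calculus, and then reduce the unbounded case to the bounded one using the Cayley transform (or, equivalently, the resolvent).

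First I would construct a continuous functional calculus for bounded self-adjoint $A \in \cB(\cH)$. The key estimate is $\|p(A)\| = \sup_{\lambda \in \sigma(A)} |p(\lambda)|$ for polynomials $p$, which comes from the spectral mapping theorem applied to polynomials together with the identity $\|A\| = \sup_{\lambda \in \sigma(A)} |\lambda|$ (valid for self-adjoint operators by iterating $\|A^2\| = \|A\|^2$). By the Weierstrass approximation theorem this extends $p \mapsto p(A)$ uniquely to an isometric $*$-homomorphism $\Phi \colon C(\sigma(A)) \to \cB(\cH)$. For each pair $u,v \in \cH$, the linear functional $f \mapsto \inp{u,\Phi(f)v}$ on $C(\sigma(A))$ has norm $\le \|u\|\|v\|$, so the Riesz--Markov--Kakutani representation theorem yields a unique complex Borel measure $\mu_{u,v}$ with $\inp{u,\Phi(f)v} = \int f \,d\mu_{u,v}$.

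Next I would extend the calculus from $C(\sigma(A))$ to bounded Borel functions: for such an $f$, the sesquilinear form $(u,v) \mapsto \int f\,d\mu_{u,v}$ is bounded, and the Riesz representation on $\cH$ provides an operator $f(A) \in \cB(\cH)$. Setting $P^A_\Omega := \1_\Omega(A)$ for Borel $\Omega \subseteq \R$, one verifies the projection-valued measure axioms --- each $P^A_\Omega$ is an orthogonal projection (from $\1_\Omega^2 = \1_\Omega = \overline{\1_\Omega}$ and the $*$-homomorphism property, which transfers from $C(\sigma(A))$ by monotone/dominated convergence), multiplicativity from $\1_{\Omega_1}\1_{\Omega_2} = \1_{\Omega_1 \cap \Omega_2}$, and strong countable additivity from dominated convergence applied to the scalar measures $\mu_{u,u}$.

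For unbounded self-adjoint $A$ I would use the Cayley transform $U := (A-i)(A+i)^{-1}$, which is a unitary operator (equivalently, one may apply the calculus above to the bounded normal resolvent $(A-i)^{-1}$ or to the bounded self-adjoint $B := \re(A-i)^{-1}$). Applying the bounded spectral theorem to $U$ and transferring its projection-valued measure through the homeomorphism $\lambda \mapsto (\lambda-i)/(\lambda+i)$ from $\R$ onto $\sigma(U) \setminus \{1\}$ yields a projection-valued measure $P^A$ on $\R$. The claimed domain $\cD(A) = \{u : \int |\lambda|^2 \,d\inp{u,P^A u}(\lambda) < \infty\}$ and formula $Au = \int \lambda\,dP^A(\lambda)u$ are then obtained by approximation with the bounded cut-offs $A_n := \int_{[-n,n]} \lambda\,dP^A(\lambda)$ and using that $A$ is the unique self-adjoint (hence maximally symmetric) extension of $A|_{\cup_n P^A_{[-n,n]}\cH}$. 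The form domain identity $\cQ(A) = \cD(\sqrt{|A|})$ follows by applying the construction to $|\lambda|^{1/2}$, and the general functional calculus $f(A) := \int f\,dP^A$ is defined via the quadratic form $u \mapsto \int |f|^2 \,d\inp{u,P^A u}$.

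The main obstacle is the passage to unbounded operators: one must carefully track how the Cayley (or resolvent) transform intertwines the projection-valued measures, and verify that the operator reconstructed from $P^A$ has domain and action exactly matching the original $A$ on its full maximal domain --- not merely on the bounded cut-offs $P^A_{[-n,n]}\cH$ where the identification is automatic. This essentially relies on the fact that $(A - i)$ has dense range (and $A$ is closed), which forces the two self-adjoint operators to coincide.
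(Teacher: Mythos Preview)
The paper does not actually prove this theorem: it is stated in the preliminaries (Section~\ref{sec:prelims-ops-specthm}) as a standard background result, with the proof delegated to the references \cite[Theorem~VIII.6]{ReeSim1} and \cite[Section~3.1]{Teschl-14}. So there is no ``paper's own proof'' to compare against.

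That said, your outline is essentially the classical Reed--Simon route (continuous functional calculus via Weierstrass and the spectral radius formula, extension to bounded Borel functions via Riesz--Markov, then Cayley transform for the unbounded case), which is precisely the reference the paper cites. The sketch is correct in its broad strokes. One point worth tightening: when you pass from the continuous to the Borel functional calculus, the multiplicativity $\Phi(fg) = \Phi(f)\Phi(g)$ for general bounded Borel $f,g$ (needed to show $P^A_\Omega$ is a projection and that $P^A_{\Omega_1}P^A_{\Omega_2} = P^A_{\Omega_1 \cap \Omega_2}$) does not follow immediately from the sesquilinear-form definition --- it requires a separate monotone-class or $\pi$-$\lambda$ argument to push multiplicativity from $C(\sigma(A))$ to the full Borel algebra. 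You allude to ``monotone/dominated convergence'' but this step deserves explicit mention, since it is where the $*$-homomorphism structure is actually established beyond continuous functions.
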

	
	One may decompose the spectrum of a self-adjoint operator $A$
	into either
	$$
		\sigma(A) = \sigma_\textup{disc}(A) \sqcup \sigma_\textup{ess}(A)
		\qquad \textrm{or} \qquad
		\sigma(A) = \overline{\sigma_\textup{pp}(A)} \cup \sigma_\textup{ac}(A) \cup \sigma_\textup{sc}(A)
	$$
	(the latter sets may overlap),
	where the \keyword{discrete spectrum}
	$$
		\sigma_\textup{disc}(A) := \{ \lambda \in \sigma(A) : 
		\dim P_{(\lambda - \epsilon, \lambda + \epsilon)}^A \mathcal{H} < \infty \ 
		\textrm{for some $\epsilon > 0$} \}
	$$
	is the set of isolated eigenvalues of finite multiplicity
	and the rest is the \keyword{essential spectrum}
	$$
		\sigma_\textup{ess}(A) := \{ \lambda \in \sigma(A) : 
		\dim P_{(\lambda - \epsilon, \lambda + \epsilon)}^A \mathcal{H} = \infty \ 
		\textrm{for every $\epsilon > 0$} \}
	$$
	(usually referred to as the continuous part of the spectrum),
	while the \keyword{pure point spectrum} $\sigma_\textup{pp}$ 
	is the set of all eigenvalues of $A$,
	and $\sigma_\textup{ac}$ resp. $\sigma_\textup{sc}$ are the
	supports for the \keyword[absolutely continuous spectrum]{absolutely} 
	resp. \keyword[singular continuous spectrum]{singular continuous}
	parts of the spectral measure of $A$.
	
\begin{example}\label{exmp:spectral-projection}
	An operator $A$ with eigenvalues $(\lambda_j)_{j=0}^\infty \subset \R$
	and a corresponding orthonormal basis of eigenfunctions
	$(u_j)_{j=0}^\infty \subset \cH$ can be written
	$A = \int_{-\infty}^\infty \lambda \,dP^A(\lambda)$ with
	projection-valued measure
	$P^A = \sum_{j=0}^\infty \delta_{\lambda_j} u_j\langle u_j, \slot \rangle$.
	Note that if $\lambda_1 = \lambda_2 = \ldots = \lambda_N$
	is a repeated eigenvalue then 
	$\sum_{j=1}^N \delta_{\lambda_j} u_j\langle u_j,\slot \rangle
	= \delta_{\lambda_1} P_W$,
	where $P_W$ is the orthogonal projection on the eigenspace
	$W = \Span \{u_j\}_{j=1}^N$.
\end{example}

\begin{example}\label{exmp:multiplication-operator}
	A \keyword{multiplication operator} on $L^2(\R)$ by $f\colon \R \to \R$,
	$(Au)(x) := f(x)u(x)$ has formally
	$$
		A = \int_{-\infty}^\infty f(\lambda) \,\delta_\lambda 
			\langle\delta_\lambda, \slot\rangle \,d\lambda,
		\qquad \text{i.e.} \qquad
		\inp{u,Av} = \int_{-\infty}^\infty f(\lambda) \,\overline{u(\lambda)} v(\lambda) \,d\lambda,
	$$
	and $\sigma(A) = \overline{f(\R)}$.
\end{example}

	The lowest eigenvalues 
	$\lambda_1(A) \le \lambda_2(A) \le \ldots < \inf \sigma_\textup{ess}(A)$
	(also ordered according to their multiplicity)
	of a semi-bounded from below self-adjoint\footnote{Or just hermitian,
	for which the spectrum analyzed is that of its Friedrichs extension.}
	operator $A$ can be obtained using the so-called 
	\keyword{min-max principle}:
	\begin{equation}\label{eq:min-max}
		\lambda_k(A) = 
		\inf_{ W_k } \ 
		\sup_{u \in W_k \setminus \{0\}} 
		\frac{\inp{u,Au}}{\norm{u}^2},
	\end{equation}
	where the $W_k \subseteq \cH$ are linear subspaces of $\cD(A)$
	such that $\dim W_k = k$.
	If $A \ge B$ then $\lambda_k(A) \ge \lambda_k(B)$ for each $k$,
	and $\inf \sigma_\textup{ess}(A) \ge \inf \sigma_\textup{ess}(B)$.
	The domain $\cD(A)$ can in the above 
	be replaced by the form domain $\cQ(A)$, 
	or even a dense subspace, such as typically $C_c^\infty$.
	
	Furthermore, 
	if one continues to evaluate \eqref{eq:min-max} for $k=1,2,\ldots$
	and eventually only repeated values
	$\lambda_n = \lambda_{n+1} = \lambda_{n+2} = \ldots$
	are obtained, 
	then one has reached the bottom of the essential spectrum,
	$\lambda_n = \inf \sigma_\textup{ess}(A)$.

\begin{example}\label{exmp:Laplace-eigenvalues}
	Computing \eqref{eq:min-max} with $A=L$ from Examples~\ref{exmp:Laplace}
	and \ref{exmp:Laplace-form}, i.e.
	$\inp{u,Lu} = q_L(u) = \int_0^1 |u'|^2$
	on the form domain $H_0^1([0,1])$ or $C_c^\infty([0,1])$ gives the
	eigenvalues $\lambda$ of the Dirichlet Laplacian,\index{Laplacian}
	$$
		-u''(x) = \lambda u(x), \qquad u(0) = u(1) = 0,
	$$
	i.e. $\lambda_n = \pi^2 n^2$, $n=1,2,3,\ldots$.
	On the other hand, taking the form domain $H^1([0,1])$ or $C^\infty([0,1])$
	gives those of the Neumann Laplacian,
	$$
		-u''(x) = \lambda u(x), \qquad u'(0) = u'(1) = 0,
	$$
	i.e. $\lambda_n = \pi^2 n^2$, $n=0,1,2,\ldots$.
\end{example}

\subsubsection{Stone's theorem}\label{sec:prelims-ops-Stone}

	Crucially for quantum mechanics, \keyword{Stone's theorem} tells us that
	self-adjoint operators are the generators of groups of unitary transformations.
	
\begin{theorem}[Stone's theorem; 
		see e.g. {\cite[Theorem VIII.7-8]{ReeSim1}}
		or {\cite[Section~5.1]{Teschl-14}}]
	\label{thm:Stones-theorem} 
	Let $A \in \cL(\cH)$ be a self-adjoint operator 
	and define $U(t) = e^{itA}$ (using the 
	spectral theorem).
	Then $U(t)$ is a \keyword{strongly continuous one-parameter unitary group},
	i.e.
	\begin{enumerate}[label=\textup{(\roman*)}]
	\item $U(t)$ is unitary for all $t \in \R$,
	\item $U(t+s) = U(t)U(s)$ for all $s,t \in \R$, and 
	\item if $\psi \in \cH$ and $t \to t_0$ then $(U(t)-U(t_0))\psi \to 0$.
	\end{enumerate}
	Furthermore,
	\begin{enumerate}[label=\textup{(\roman*)},resume]
	\item if $\lim_{t \to 0} (U(t)-\1)\psi/t$ exists, then $\psi \in \cD(A)$, and 
	\item for such $\psi$, $(U(t)-\1)\psi/t \to iA\psi$ as $t \to 0$.
	\end{enumerate}

	Conversely, if $U(t)$ is a strongly continuous one-parameter
	unitary group acting on $\cH$, 
	then there is a self-adjoint operator $A \in \cL(\cH)$ s.t. 
	$U(t) = e^{itA}$.
\end{theorem}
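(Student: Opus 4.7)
For the forward direction, I would define $U(t) := e^{itA}$ via the Borel functional calculus of Theorem~\ref{thm:spectral-theorem},
$$
U(t) = \int_{-\infty}^\infty e^{it\lambda}\,dP^A(\lambda).
$$
Unitarity follows from $|e^{it\lambda}|=1$, which yields $U(t)^*U(t) = U(t)U(t)^* = \1$ through the multiplicativity of the calculus; the group property $U(t+s) = U(t)U(s)$ is the same multiplicativity applied to $e^{i(t+s)\lambda} = e^{it\lambda}e^{is\lambda}$.

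Strong continuity (iii) and the differentiability statements (iv)--(v) I would handle uniformly by reducing to scalar integrals against the finite measure $d\langle\psi, P^A\psi\rangle$ (of total mass $\|\psi\|^2$). For (iii),
$$
\|(U(t)-U(t_0))\psi\|^2 = \int_{-\infty}^\infty |e^{it\lambda}-e^{it_0\lambda}|^2\,d\langle\psi,P^A\psi\rangle(\lambda)
$$
tends to $0$ as $t \to t_0$ by Lebesgue dominated convergence (Theorem~\ref{thm:Lebconv}.\ref{itm:Lebconv-dom}), since the integrand is bounded by $4$ and vanishes pointwise in the limit. For (v), the analogous formula with integrand $|(e^{it\lambda}-1)/t - i\lambda|^2 \le 4\lambda^2$ (integrable precisely because $\psi \in \cD(A)$) gives $(U(t)-\1)\psi/t \to iA\psi$ in norm. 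For (iv), if the limit exists in $\cH$ then the norms $\|(U(t_n)-\1)\psi/t_n\|^2$ stay bounded, and Fatou's lemma applied to the same integrand forces $\int \lambda^2\,d\langle\psi,P^A\psi\rangle < \infty$, placing $\psi$ in $\cD(A)$.

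For the converse, I would define a candidate generator $A$ on the domain $\cD_0$ of those $\psi$ for which $-i\lim_{t\to 0}(U(t)\psi - \psi)/t$ exists in $\cH$. To show $\cD_0$ is dense, I would use the smoothing trick: for $f \in C^\infty_c(\R)$ and $\psi \in \cH$, set
$$
\psi_f := \int_\R f(t) U(t)\psi\,dt,
$$
which is well-defined by strong continuity. The change of variables $t \mapsto t-s$ inside $(U(s)\psi_f - \psi_f)/s$ then shows $\psi_f \in \cD_0$ with $A\psi_f = i\psi_{f'}$; choosing $f$ as an approximate identity at $0$ yields $\psi_f \to \psi$. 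Symmetry of $A$ on $\cD_0$ is a short computation from differentiating $\langle U(t)\phi,\psi\rangle = \langle\phi,U(-t)\psi\rangle$ at $t=0$.

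The main obstacle is upgrading symmetry to self-adjointness, since symmetric operators need not be self-adjoint. My plan is to verify the standard criterion that $\mathrm{ran}(A \pm i\1)$ is dense in $\cH$: for $\phi$ orthogonal to, say, $\mathrm{ran}(A-i\1)$ restricted to the smooth vectors $\psi_f$, the scalar function $g(t) := \langle\phi, U(t)\psi_f\rangle$ satisfies a first-order ODE whose only bounded solution is $0$, forcing $\phi = 0$. This yields essential self-adjointness of $A$, and the forward direction then produces $V(t) := e^{it\bar A}$; a uniqueness argument (differentiating $V(-t)U(t)\psi$ on smooth vectors to see it is constant in $t$) finally gives $V(t) = U(t)$.
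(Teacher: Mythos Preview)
The paper does not actually supply a proof of this theorem; it is stated with references to Reed--Simon and Teschl, and only the remark about weak versus strong continuity follows. Your proposal is the standard argument found in those references and is correct in all essentials: the forward direction via the functional calculus and dominated convergence/Fatou, and the converse via G{\aa}rding-type smoothing vectors $\psi_f$, the deficiency-index criterion through the ODE for $g(t)=\langle\phi,U(t)\psi_f\rangle$, and the final uniqueness step. One small point worth making explicit in (iv): you use Fatou along a sequence $t_n\to 0$, which is fine since the hypothesis is that the full limit exists; and in the ODE step, the conclusion $g\equiv 0$ uses boundedness of $g$ on \emph{all} of $\R$ (one sign of $A\pm i$ gives decay in one direction and growth in the other).
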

	
In fact, 
strong continuity may be replaced by just \emph{weak} continuity,
since weak convergence implies strong convergence in this case;
cf. \cite[Theorem~5.3]{Teschl-14} and \cite[Theorem~VIII.9]{ReeSim1}.

\section{A very brief mathematical formulation of classical and quantum mechanics\lect{ [3-5]}}\label{sec:mech}

	We do not assume familiarity with classical and quantum physics in this
	course, and therefore give a very brief account of the essentials here.
	However, a deeper understanding of these concepts is of course helpful
	in the broader perspective and we refer to \cite{Thirring1,Thirring2} 
	and \cite[Part~IV]{Thiemann-07} 
	for introductory material suitable for mathematicians.
	In Section~\ref{sec:mech-CM-instability} we discuss the question of stability
	of matter in classical mechanics,
	and in Section~\ref{sec:mech-QM-matter}
	we define what is meant with stability in quantum mechanics.
	Anyone who is already familiar with many-body quantum mechanics may
	safely skip the chapter except possibly for these parts.

\subsection{Some classical mechanics}\label{sec:mech-CM}
	\index{classical mechanics}

	Although it is important to know that there are several different 
	equivalent formulations of classical mechanics, with their own 
	advantages and disadvantages, we here choose to take the shortest
	mathematical path to quantum mechanics, via Poisson algebras 
	and Hamiltonian mechanics.
	
\subsubsection{Phase space and Poisson brackets}\label{sec:mech-CM-Poisson}

\begin{definition}[Poisson algebra]
	A \keyword{Poisson algebra} $\cA$ is 
	a vector space over $\F$ ($\R$ or $\C$) 
	equipped with an $\F$-bilinear and associative product
	$$
		\cA \times \cA \to \cA, \quad (f,g) \mapsto fg, 
	$$
	and an additional product (typically non-associative, 
	called a \keyword{Poisson bracket}) 
	$$
		\cA \times \cA \to \cA, \quad (f,g) \mapsto \{f,g\},
	$$
	satisfying, for all $\alpha,\beta \in \F$, and $f,g,h \in \cA$:
	\begin{enumerate}[label=\textup{(\roman*)}]
	\item\label{itm:linearity} 
		linearity: $\{f,\alpha g + \beta h\} = \alpha \{f,g\} + \beta \{f,h\}$,
	\item\label{itm:antisymmetry} 
		antisymmetry: $\{f,g\} = -\{g,f\}$,
	\item\label{itm:Jacobi} \index{Jacobi identity}
		Jacobi identity: $\{f,\{g,h\}\} + \{g,\{h,f\}\} + \{h,\{f,g\}\} = 0$,
	\item\label{itm:Leibniz}  \index{Leibniz rule}
		Leibniz rule: $\{f,gh\} = \{f,g\}h + g\{f,h\}$.
	\end{enumerate}
\end{definition}

\begin{remark}
	$\cA$ may also be equipped with a \keyword{unit} $1 \in \cA$ 
	s.t. $f = 1f = f1$. 
	Note that \ref{itm:linearity}+\ref{itm:antisymmetry} implies bilinearity,
	\ref{itm:linearity}+\ref{itm:antisymmetry}+\ref{itm:Jacobi} 
	means that the product $\{\slot,\slot\}$ is a \keyword{Lie bracket} 
	and $(\cA,\{\slot,\slot\})$ a \keyword{Lie algebra},
	and \ref{itm:Leibniz} that it acts as a derivation of the associative product.
\end{remark}
		
	The archetypical example of a Poisson algebra is the algebra 
	$\cA = C^\infty(\phP^n)$ of smooth 
	functions on the \keyword{classical $2n$-dimensional phase space}
	\begin{equation}\label{eq:standard-phase-space}
		\phP^n := \R^{2n} \ni (\sx,\ssp) = (x_1,\ldots,x_n,p_1,\ldots,p_n),
	\end{equation}
	endowed with the Poisson bracket
	\begin{equation}\label{eq:standard-Poisson}
		\{f, g\} := \sum_{j=1}^n \left( 
			\frac{\partial f}{\partial x_j} \frac{\partial g}{\partial p_j} -
			\frac{\partial g}{\partial x_j} \frac{\partial f}{\partial p_j}
			\right).
	\end{equation}
	The first half of the phase space,
	$$
		\phX^n := \R^n \ni \sx = (x_1,\ldots,x_n),
	$$
	is called the \keyword{classical configuration space} and is parameterized 
	by the \keyword{coordinates} or \keyword{position variables} $x_j$, 
	while the second half, parameterized by \keyword{momentum variables} $p_j$,
	is considered dual or \keyword{conjugate} to $\phX^n$ via the Poisson brackets.
	Namely, note that by \eqref{eq:standard-Poisson}, 
	the coordinates and momenta satisfy the following simple relations 
	called the \keyword{canonical Poisson brackets}:
	\begin{equation}\label{eq:classical-CCR}
		\{x_j, x_k\} = 0, \quad
		\{p_j, p_k\} = 0, \quad
		\{x_j, p_k\} = \delta_{jk} 1, \quad
		\forall j,k,
	\end{equation}
	where the constant function $1$ on $\phP^n$ is the unit in $\cA$.
	Hence the $x_j$'s and $p_k$'s Poisson-commute individually, 
	while $p_j$ is Poisson-conjugate to $x_j$ and vice versa
	(though note that there is a certain choice of orientation in the bracket,
	so the coordinates $x_j$'s should come first).
	
	In the case that we allow for complex-valued functions on phase space,
	we note that $\cA$ is also (non-trivially) endowed with the structure of a 
	\keyword{$*$-algebra} 
	in the sense that there is an operation $f \mapsto f^*$, 
	here given by complex-conjugation $f^*(\sx,\ssp) := \overline{f(\sx,\ssp)}$,
	satisfying:
	\begin{enumerate}[label=\textup{(\roman*)}]
	\item $(\alpha f + \beta g)^* = \bar\alpha f^* + \bar\beta g^*$ (conjugate linear)
	\item $(f^*)^* = f$ (involution\index{involution}),
	\item $(fg)^* = g^* f^*$ (antiautomorphism\index{antiautomorphism}).
	\end{enumerate}
	
\begin{remark*}
	The proper mathematical setting for classical mechanics in general 
	is to model the 
	Poisson algebra $\cA = C^\infty(\phP^n)$
	on a geometric object called a \keyword{symplectic manifold} 
	$(\phP^n,\omega)$, where $\omega$ is a symplectic form.
	Then the Poisson bracket is
	$\{f,g\} := \chi_f(g)$, where the vector field $\chi_f \in T(\phP^n)$ 
	is defined via the relation $\omega(\chi_f,\slot) = -df$.
	Typically, $\phP^n$ is defined as the cotangent bundle of a 
	configuration space manifold 
	$\phX^n = \cM$, $\dim \cM = n$,
	i.e. $\phP^n := T^*(\cM)$, with its canonical symplectic structure
	$\omega := d\theta$, $\theta[X](p) := p(\pi X)$,
	given locally by $\theta = \sum_{j=1}^n p_j dx_j$ and \eqref{eq:standard-Poisson}.
	See e.g. \cite{Thiemann-07,Nakahara-03}.
\end{remark*}

\begin{exc}
	Check that \eqref{eq:standard-Poisson} defines a Poisson bracket
	and makes $\cA = C^\infty(\phP^n)$ a Poisson algebra.
	Discuss whether \eqref{eq:classical-CCR} also defines this Poisson algebra
	completely.
\end{exc}

\subsubsection{Hamiltonian mechanics}\label{sec:mech-CM-Hamilton}
	\index{Hamiltonian mechanics}

	Classical mechanics is about time evolution on the configuration space
	$\phX^n$, i.e.\ one considers maps
	$$
		\R \supseteq I \to \phX^n, \quad
		t \mapsto \sx(t) = (x_1(t),\ldots,x_n(t)). 
	$$
	The evolution is typically of second order in time but can instead be 
	formulated in a more advantageous first-order form on the phase space $\phP^n$ 
	if one takes as the momenta $p_j := \dot x_j$ (or similar),
	with the dot denoting the derivative with respect to time $t$,
	$\dot f := df/dt$.

	The desired evolution equation is then determined by a choice of a function $H$ 
	on the phase space called the \keyword{Hamiltonian}, 
	i.e.~an element $H \in \cA$ which depends on the particular physical
	system under consideration.
	The value of this function $H(\sx,\ssp)$
	can usually be interpreted as the energy of the system at the corresponding 
	point $(\sx,\ssp)$ in the phase space.
	The time evolution for general $f \in \cA$ is then defined to be
	governed by the equation
	\begin{equation}\label{eq:Hamilton-Poisson}
		\boxed{\dot f = \{f,H\},}
	\end{equation}
	reducing in particular, by~\eqref{eq:standard-Poisson}, 
	for the case of the coordinates and momenta to
	\keyword{Hamilton's equations of motion}:
	\begin{equation}\label{eq:Hamilton}
		\dot x_j = \frac{\partial H}{\partial p_j}, \qquad
		\dot p_j =-\frac{\partial H}{\partial x_j}.
	\end{equation}
	Note conversely that these equations and the definition~\eqref{eq:standard-Poisson}
	\emph{imply}~\eqref{eq:Hamilton-Poisson},
	\begin{equation}\label{eq:Hamilton-Poisson-recovered}
		\dot f = \frac{d}{dt} f(\sx,\ssp) 
		= \sum_{j=1}^n \left( \frac{\partial f}{\partial x_j}\frac{d x_j}{dt}
			+ \frac{\partial f}{\partial p_j}\frac{d p_j}{dt} \right)
		= \sum_{j=1}^n \left( \frac{\partial f}{\partial x_j}\frac{\partial H}{\partial p_j}
			- \frac{\partial f}{\partial p_j}\frac{\partial H}{\partial x_j} \right)
		= \{f,H\}.
	\end{equation}
	
\begin{example}[Free particle]\index{free particle}
	A particle that is free to move in three-dimensional space has the 
	configuration space $\phX^3 = \R^3$ of positions $\bx = (x_1,x_2,x_3)$
	and the phase space $\phP^3 = \R^6 \ni (\bx,\bp)$, where $\bp = (p_1,p_2,p_3)$ 
	is the canonical momentum of the particle. 
	These variables satisfy the canonical Poisson brackets \eqref{eq:classical-CCR}.
	We take the Hamiltonian to be the 
	(\keyword{non-relativistic}; see also the below remark for an explanation)
	\keyword{free kinetic energy}
	\begin{equation}\label{eq:free-Hamiltonian}
		H(\bx,\bp) = T(\bp), \qquad T(\bp) := \frac{\bp^2}{2m}, 
	\end{equation}
	with $m>0$ known as the \keyword{mass} of the particle 
	which is considered as a fixed (non-dynamical) parameter.
	Indeed, Hamilton's equations of motion \eqref{eq:Hamilton} are then
	\begin{equation}\label{eq:free-Hamiltoneq}
		\dot\bx = \bp/m, \qquad
		\dot\bp = 0,
	\end{equation}
	i.e. $\ddot\bx = 0$, giving straight trajectories $\bx(t) = \ba + t\bb$ 
	in $\phX^3$. Also, we obtain the relationship $\bp = m\bv$ between the 
	momentum and the velocity $\bv := \dot\bx$ of the particle, and therefore
	the well-known formula for its kinetic energy
	\begin{equation}\label{eq:free-kin-energy}
		T(\bp) = \frac{1}{2}m\bv^2.
	\end{equation}
\end{example}

\begin{remark*}\label{rmk:kinetic-energy}
	The special theory of relativity tells us that $(mc^2)^2 = E^2 - \bp^2c^2$,
	where $m c^2$ is the rest energy, or $m$ the rest mass, of a free particle.
	Therefore, for small $\bp/(mc)$,
	\begin{equation}\label{eq:rel-energy}
		E = \sqrt{m^2c^4 + c^2\bp^2} = mc^2 \sqrt{1 + \frac{\bp^2}{m^2c^2}}
		= mc^2 + \frac{\bp^2}{2m} + O(\bp^4),
	\end{equation}
	which after subtracting the constant $mc^2$ yields the 
	non-relativistic (first-order) approximation~\eqref{eq:free-Hamiltonian}
	to the kinetic energy.
	However, one may also study the full relativistic expression
	$T_{\textup{rel},m}(\bp) := mc^2\sqrt{1+\bp^2/(mc)^2} - mc^2$
	or the simpler massless case
	$T_{\textup{rel},0}(\bp) := c|\bp|$.
\end{remark*}
	
\begin{example}[Particle in an external potential]\label{exmp:classical-ext-pot}
	Mechanics would be rather boring if there were only free particles moving 
	in straight lines, but what we may do is to add a 
	\keyword{scalar potential} to the 
	Hamiltonian~\eqref{eq:free-Hamiltonian},
	\begin{equation}\label{eq:pot-Hamiltonian}
		H(\bx,\bp) = T(\bp) + V(\bx),
	\end{equation}
	where $V\colon \R^3 \to \R$ is a function of the coordinates only.
	Hamilton's equations are then modified to
	\begin{equation}\label{eq:pot-Hamiltoneq}
		\dot\bx = \bp/m, \qquad
		\dot\bp = -\nabla V,
	\end{equation}
	where $\bF := -\nabla V$ is the \keyword{force} acting on the particle,
	with its sign chosen to act to \emph{minimize} the potential energy.
	Hence, $m\ddot\bx = \bF$, which is \keyword{Newton's equation of motion}.
\end{example}

\begin{remark*} 
	In fact, the potential may be understood to have a geometric origin
	and is again most naturally formulated in the framework of relativity.
	Namely, one couples the spacetime momentum $\ssp=(E/c,\bp)$ of the particle
	to the \keyword{gauge potential} $\sA(\sx)=(A_0(t,\bx),\bA(t,\bx))$ 
	of the electromagnetic field $F=d\sA$
	using the replacement $\ssp \mapsto \ssp-q\sA$, 
	where $q$ is the charge of the particle.
	Then $(mc)^2 = (\ssp - q\sA)^2 = (E/c-qA_0)^2 - (\bp-q\bA)^2$ 
	implies the electromagnetically coupled version of 
	\eqref{eq:rel-energy}:
	$$
		E = mc^2 + \frac{(\bp-q\bA)^2}{2m} + V + O((\bp-q\bA)^4),
	$$
	with scalar potential $V = qcA_0$.
	See \cite{Nakahara-03} for more on the geometry of electrodynamics.
\end{remark*}

\begin{example}[Harmonic oscillator]\label{exmp:harm-osc}
	The standard example of an external potential, due to its simplicity 
	and also its widespread appearance in real physical systems, 
	is the \keyword{harmonic oscillator potential},
	\begin{equation}\label{eq:harm-osc}
		V_{\textup{osc}}(\bx) := \frac{1}{2} m\omega^2 |\bx|^2,
	\end{equation}
	where $\omega \ge 0$ is a parameter known as the angular frequency
	of the oscillator. The mass $m$ appears here scaled out of $\omega$
	in order to make the dynamics
	independent of $m$,
	namely, Newton's equations become simply
	$\ddot{\bx} = -\omega^2 \bx$,
	with well-known $2\pi\omega$-periodic solutions.
	
	What one usually does in preparation for the quantum version of the
	harmonic oscillator is to introduce the \emph{complex} 
	phase-space variables
	$$
		a_j := \sqrt{\frac{m\omega}{2}}\left(x_j + \frac{i}{m\omega}p_j\right),
		\qquad
		a_j^{*} = \sqrt{\frac{m\omega}{2}}\left(x_j - \frac{i}{m\omega}p_j\right).
	$$
	One may then observe that these satisfy the Poisson algebra
	\begin{equation}\label{eq:oscillator-CCR}
		\{a_j,a_k\} = 0, \qquad
		\{a_j^*,a_k^*\} = 0, \qquad
		\{a_j,a_k^*\} = -i\delta_{jk}1,
	\end{equation}
	and, if $\cN := \sum_j a_j^*a_j = \sum_j |a_j|^2$,
	\begin{equation}\label{eq:oscillator-CCR-N}
		\cN = \cN^* = H/\omega, \qquad
		\{\cN,a_j\} = ia_j, \qquad
		\{\cN,a_j^*\} = -ia_j^*.
	\end{equation}
	Also,
	$x_j = (2m\omega)^{-1/2}(a_j^* + a_j)$ and
	$p_j = i(m\omega/2)^{1/2}(a_j^* - a_j)$.
\end{example}

	Note that by construction of the dynamical equations~\eqref{eq:Hamilton-Poisson}
	and the antisymmetry of the Poisson bracket,
	we have that the Hamiltonian is a conserved quantity under the motion,
	\begin{equation}\label{eq:Hamiltonian-conserved}
		\frac{d}{dt}H(\bx,\bp) = \dot H = \{H,H\} = 0.
	\end{equation}
	In fact, any function $f$ on $\phP^n$ is by \eqref{eq:Hamilton-Poisson} 
	conserved in time iff it Poisson-commutes with $H$
	(the general important relationship between 
	symmetries of the Hamiltonian and conserved quantities 
	admits a more thorough formulation 
	and is known as Noether's theorem).
	
\begin{remark*}
	There are subtleties even in classical mechanics when one considers
	systems with singular behavior which need to be treated using \keyword{constraints}.
	Typical examples are field theories such as electromagnetism,
	where passing from a Lagrangian to a Hamiltonian formulation involves
	redundant degrees of freedom and results in non-invertible transformations.
	However, Dirac has invented a procedure to treat such constrained
	Hamiltonian systems and to perform reductions in the Poisson algebra.
	See \cite{Dirac-64}, \cite[Chapter~24]{Thiemann-07}, 
	and e.g. \cite{deWHopLun-11a} for a recent example in membrane theory.
	Also fully general-relativistic systems --- where there is no 
	canonical time coordinate ---
	may be considered, though in an even more general framework for mechanics, 
	as outlined e.g. in \cite{Rovelli-04,Thiemann-07}.
\end{remark*}

\begin{exc}
	Verify the brackets \eqref{eq:oscillator-CCR} and \eqref{eq:oscillator-CCR-N}.
\end{exc}

\subsection{The instability of classical matter}\label{sec:mech-CM-instability}
	\index{instability}

	Let us briefly discuss why ordinary matter formulated in terms of
	the above-outlined rules for mechanics turns out to be unstable.
	We do not need to construct a very complicated model of 
	matter in order to see the instability. In fact it arises already
	upon considering the simplest model of an atom consisting of a
	single electron moving in three-dimensional space around a fixed nucleus, 
	which we for simplicity place 
	at the origin of the electron's coordinate system $\phX^3 = \R^3$. 
	To justify this assumption, either consider the nucleus
	to be much heavier than the electron (which indeed it is by experiment) 
	so that it experiences only very slow acceleration 
	(according to Newton's equation)
	and thus can be safely 
	considered fixed during a short time frame, 
	or better consider the problem in \keyword{relative coordinates} as will be
	described in Section~\ref{sec:mech-QM-2p}.
	
	The electron and the nucleus have opposite electric charge and therefore
	experience an attractive electric force given by the 
	\keyword{Coulomb potential},
	\begin{equation}\label{eq:Coulomb-potential}
		V_\sC(\bx) := \frac{q_1q_2}{|\bx|},
	\end{equation}
	where $q_1$ and $q_2$ are the particles' respective charges,
	resulting in the Coulomb force
	$$
		\bF_\sC(\bx) = -\nabla V_C(\bx) = q_1q_2\frac{\bx}{|\bx|^3}.
	$$
	In accordance with commonly used conventions and for future simplicity, 
	we will normalize the electron charge to
	$q_1 = -1$ and call the charge of the nucleus $q_2 = Z > 0$.
	For a \emph{neutral} one-electron atom 
	the nucleus consists of a
	single proton with charge $+1$ and we thus have $Z=1$ 
	(this charge $Z$ is known in chemistry as the \keyword{atomic number},
	with $Z=1$ representing 
	the hydrogen atom),
	however we will for generality keep $Z>0$ free as a mathematical parameter.
	We therefore take as our model for the dynamics of the electron 
	in this \keyword{hydrogenic atom}
	the model considered in Example~\ref{exmp:classical-ext-pot}, 
	with the external potential%
	\footnote{%
	Though $V_\sC \notin \cA$,
	we may consider it as a limit of smooth functions (see remark)
	or extend our $\cA$ a bit.}
	$$
		V_\sC(\bx) = -\frac{Z}{|\bx|}.
	$$
	Hence the Hamiltonian defined on the electron's phase space 
	$\phP^3 = \R^3 \times \R^3 \ni (\bx,\bp)$ is
	\begin{equation}\label{eq:hydrogen-Hamiltonian}
		H(\bx,\bp) = T(\bp) + V_\sC(\bx) = \frac{\bp^2}{2m} - \frac{Z}{|\bx|}.
	\end{equation}
	We already observe an obvious problem here: 
	that $H$ is \emph{unbounded from below},
	namely fixing $\bp$ while taking $\bx \to 0$ results in 
	$H(\bx,\bp) \to -\infty$.
	However, one may object that this limit is quite artificial 
	and perhaps cannot be realized in practice, in particular
	because the energy must be conserved throughout the dynamics
	as we already observed in~\eqref{eq:Hamiltonian-conserved}.
	Let us therefore instead consider a possible trajectory: 
	say for simplicity that 
	the electron starts from rest at the point $(1,0,0) \in \R^3$, i.e. 
	$\bx(0) = (1,0,0)$ and $\bp(0) = \0$.
	Then the non-trivial equation of motion to be solved is
	\begin{equation}\label{eq:electron-motion-ex}
		m\ddot{x}_1 = -Zx_1/|x_1|^3, \qquad x_1(0) = 1, \quad \dot{x}_1(0) = 0,
	\end{equation}
	whose solution 
	(see Exercise~\ref{exc:electron-motion})
	can be seen to satisfy $x_1(t) \to 0$ in finite time.
	Therefore the electron described in the framework of classical mechanics 
	\emph{admits dynamics whereby it collapses into the nucleus}.
	
\begin{remark*}
	With a little more physics background,
	one may still object to this conclusion of instability in two ways.
	The first is that the nucleus is actually a composite particle which has some
	spatial extent and therefore it is not clear that a collapse happens
	--- maybe the electron would just bounce around in a continuous charge
	distribution. However, it is known that the size of a nucleus is about 
	$10^{-15}$~m while the typical size of a hydrogen atom is about 
	$10^{-10}$~m (the Bohr radius), 
	so from the perspective of the typical electron orbit the nucleus certainly 
	looks pointlike, and one rather needs to explain why the electron insists
	on staying so far away from the nucleus.
	This leads to the second objection, namely that in 
	analogy to the picture of a planetary system
	(which is completely justified from the model~\eqref{eq:hydrogen-Hamiltonian} 
	since the Newtonian gravitational potential looks exactly the same), 
	the electron could just move in a circular or elliptical 
	orbit with its centripetal acceleration exactly matching
	the Coulomb force. In order to object to this picture of apparent stability
	one needs to know a little more about electromagnetic interactions,
	namely that an accelerating charge necessarily emits electromagnetic radiation
	to its surroundings
	(in order to properly incorporate this --- still purely classical --- 
	effect, called \index{bremsstrahlung}bremsstrahlung, one needs 
	to modify both the above simple Hamiltonian and the phase space severely, 
	and the resulting Hamiltonian describing the electron is then \emph{not} 
	conserved in time).
	The consequence of this radiative effect is that the electron loses 
	energy and therefore transcends into lower and lower orbits, 
	in effect spiraling in towards the nucleus and
	leading to the collapse of classical matter.
\end{remark*}

\begin{exc}\label{exc:electron-motion}
	Find an implicit solution of \eqref{eq:electron-motion-ex}
	for $x_1(t)>0$ and determine the time $T$ for which $x_1(T) = 0$.
	(Hint: start by multiplying the equation by $\dot{x}_1$.)
\end{exc}

\subsection{Some quantum mechanics}\label{sec:mech-QM}
	\index{quantum mechanics}

	The above-discussed problem of instability, 
	together with other unexpected 
	discoveries 
	in the beginning of the 20th century, led to the realization that Hamiltonian
	mechanics on phase space 
	(as well as the other equivalent formulations of classical mechanics) 
	is not sufficient to describe the physical world.
	This was in the 1920's subsequently remedied by the invention
	of a \emph{quantum} representation for mechanics.
	Nowadays this is quite well understood as a 
	kind of mathematical recipe,
	referred to as \keyword{canonical quantization},
	although depending on which systems are considered
	there are still many subtleties to be dealt with, both on a formal level
	and also when it comes to the physical interpretation.
	
\subsubsection{Axioms of canonical quantization}\label{sec:mech-QM-axioms}

	In mathematical terms, the procedure of
	`canonical quantization' amounts to selecting 
	a (sufficiently interesting) Lie-subalgebra $\cO \subseteq \cA$
	and a representation of this $\cO$
	as an algebra of linear operators on a Hilbert space $\cH$,
	$\cO \to \hat{\cO} \subseteq \cL(\cH)$.
	This translates to the following set of 
	\keyword{axioms of quantum mechanics}:

	\begin{enumerate}[label=\textup{A\arabic*.},ref=\textup{A\arabic*}]
	\item\label{axm:states}(States)
		There exists a 
		complex\footnote{One may also consider purely real Hilbert spaces;
		see e.g. \cite{Lundholm-08} and references for a discussion.}
		separable\footnote{This assumption could, and should, sometimes be relaxed; 
		see e.g. \cite{Thiemann-07,AshSin-11}.} 
		Hilbert space $\cH$, 
		which we call the \keyword{quantum configuration space}.
		The non-zero elements $\psi \in \cH \setminus \{0\}$ 
		will describe the \keyword{states} of the quantum system, 
		and furthermore two vectors $\psi$ and $\phi$ in $\cH$ describe 
		the same state if and only if 
		$\phi = c\psi$,
		$c \in \C \setminus \{0\}$. 
		In other words the set of quantum states constitutes 
		a ray representation of $\cH$.
	\item\label{axm:observables}(Observables)
		One has selected a set of \keyword{observables} $a \in \cO$
		which form a closed Lie-subalgebra of $\cA$, 
		and which are real, $a^* = a$.
		To each such observable $a \in \cO$ there is associated a densely defined
		self-adjoint operator $\ha$ acting on $\cH$, 
		i.e. $\ha^* = \ha \in \cL(\cH)$.
		The spectrum of $\ha$ are the possible
		results of a measurement of the observable $a$.
	\item\label{axm:commutators}(Commutators)
		The Poisson bracket in classical mechanics is replaced by the 
		\keyword{commutator}
		\begin{equation}\label{eq:def-commutator}
			[\ha,\hb] := \ha\hb - \hb\ha
		\end{equation}
		of operators, 
		according to:
		\begin{equation}\label{eq:bracket-quantization}
			\{a,b\} \mapsto \frac{1}{i\hbar} [\ha,\hb].
		\end{equation}
		Here we have multiplied the commutator with $-i$ 
		(and the sign is just a convention) 
		in order to make the expression self-adjoint 
		(by the closedness of $\cO$, the bracket of two observables is also an observable 
		and hence should be represented by a self-adjoint operator, 
		but the commutator of
		two self-adjoint operators is \emph{anti}-self-adjoint\footnote{On
		the basis of this one may argue that the more natural thing to do is
		to replace everything by \emph{anti}-self-adjoint operators.}),
		and we furthermore introduced a new parameter $\hbar > 0$ known as 
		\keyword{Planck's constant}.
	\item\label{axm:expectations}(Expectations)
		Given a state $\psi \in \cH$, the \keyword{expectation value} 
		of an observable $a \in \cO$ in this state is given by
		\begin{equation}\label{eq:def-expectation}
			\langle \ha \rangle_\psi := \frac{\inp{\psi,\ha\,\psi}}{\inp{\psi,\psi}}.
		\end{equation}
		The interpretation is that if one prepares a large ensemble of 
		identical systems, each of which is prepared to be in the state $\psi$, 
		and then makes a measurement of the observable $a$ 
		then the result of the measurement will in general be random 
		but the expectation value of the results will be given by 
		the quantity $\inp{\ha}_\psi \in \R$.
		If $a$ has physical meaning but $\psi \notin \cQ(\ha)$ then $\psi$
		may be interpreted as an unphysical state.
	\item\label{axm:time-evolution}(Time evolution)
		The choice of dynamics depends on the choice of a Hamiltonian
		$H \in \cO$, which is represented as a self-adjoint 
		\keyword{Hamiltonian operator} $\hH$ on $\cH$.
		Operators corresponding to other observables may then evolve 
		with time according to \keyword{Heisenberg's equation of motion}
		(compare~\eqref{eq:Hamilton-Poisson}),
		\begin{equation}\label{eq:Heisenberg-eqs}
			\frac{d}{dt}\ha(t) = \frac{1}{i\hbar} [\ha(t),\hH],
		\end{equation}
		and their corresponding expectation value at time $t$ is
		\begin{equation}\label{eq:expectation-timedep}
			\langle \ha(t) \rangle_\psi = \frac{\inp{\psi,\ha(t)\psi}}{\inp{\psi,\psi}}.
		\end{equation}
	\end{enumerate}
	
\begin{remark}\label{rem:self-adjointness}
	The reasons for insisting that observables be represented by
	\emph{self-adjoint} operators are threefold:
	\begin{enumerate}[label=\arabic*.,ref=\arabic*]
	\item\label{itm:s-a-hermitian}
		An observable $a \in \cO$ should represent a real 
		measurable physical quantity, 
		and its expectation value satisfies $\langle \ha \rangle_\psi \in \R$ 
		for all states $\psi \in \cQ(\ha)$ iff $\ha$ is 
		hermitian. 
	\item\label{itm:s-a-spectral-rep}
		Self-adjoint operators have a \keyword{spectral representation}
		given by the spectral theorem, Theorem~\ref{thm:spectral-theorem},
		and the points of the spectrum $\sigma(\ha) \subseteq \R$
		represent the possible values of a measurement of the observable 
		$a \in \cO$.
		Also, if the system is in a state $\psi \in \cH$, 
		then the \keyword{probability} of measuring values of $a$ in the 
		interval $[\lambda,\lambda'] \subseteq \R$ is given by
		the expectation value
		(of the observable\footnote{Note that if we can measure $a$
		then we may also determine if $a \in [\lambda,\lambda']$ for any interval
		$[\lambda,\lambda'] \subseteq \R$,
		and similarly if we have a self-adjoint operator $\ha$ 
		then we also have access to its projection-valued measure 
		$P^{\ha}_{[\lambda,\lambda']}$.
		The commutativity of 
		$P^{\ha}_\Omega$ and $P^{\hb}_{\Omega'}$ 
		expresses what may be known simultaneously about $a$ and $b$.}
		``$a \in [\lambda,\lambda']$'')
		\begin{equation}\label{eq:projection-probability}
			\inp{ P^{\ha}_{[\lambda,\lambda']} }_\psi
			= \frac{\norm{P^{\ha}_{[\lambda,\lambda']}\psi}^2}{\norm{\psi}^2},
		\end{equation}
		where $\Omega \mapsto P^{\ha}_\Omega$ 
		is the corresponding spectral projection
		(see e.g. Example~\ref{exmp:spectral-projection}).
	\item\label{itm:s-a-Stone}
		By Stone's theorem, Theorem~\ref{thm:Stones-theorem},
		self-adjoint operators are the generators of one-parameter 
		\emph{unitary} groups, $t \mapsto U_{\ha}(t) := e^{i\ha t}$,
		which is in particular important for the time evolution by the
		Hamiltonian $\hH$ to \emph{conserve} probabilities,
		$\norm{U_{\hH}(t)\psi} = \norm{\psi}$.
		A \emph{non}-self-adjoint Hamiltonian operator would describe 
		\emph{non}-unitary time evolution, 
		which would however be appropriate when there is 
		energy or \keyword{information loss} from the system to an external
		environment.
	\end{enumerate}
	We also note that:
	\begin{enumerate}[label=\arabic*.,resume]
	\item\label{itm:s-a-measurement}
		A \keyword{measurement} necessarily exchanges information between 
		the system being measured and the observer, 
		and therefore results in non-unitary evolution. 
		In effect, after measurement the state has become 
		projected into the subspace corresponding to the information 
		obtained, $\psi \mapsto P^{\ha}_{[\lambda,\lambda']}\psi$, 
		by means of the spectral projection 
		in~\eqref{eq:projection-probability}.
		For example, if $\ha$ has an isolated simple eigenvalue 
		$\lambda_j \in \sigma(\ha)$ 
		with corresponding normalized eigenstate $u_j$, 
		then the probability~\eqref{eq:projection-probability} 
		of measuring precisely this value in the normalized state 
		$\psi$ is $\inp{P^{\ha}_{[\lambda_j-\eps,\lambda_j+\eps]}}_\psi =
		|\langle u_j,\psi \rangle|^2$, and the state of the system
		$\psi \mapsto u_j$
		after such a measurement. 
		Repeated measurement of $a$ will then produce the same value $\lambda_j$
		with certainty (unless the observable has evolved with time).
		The quantity $\langle u_j,\psi \rangle \in \C$ is called a
		\keyword{probability amplitude}.
	\end{enumerate}
\end{remark}

\begin{remark}
	Instead of evolving the operators in time according to the solution 
	of~\eqref{eq:Heisenberg-eqs},
	\begin{equation}\label{eq:Heisenberg-eqs-soln}
		\ha(t) = e^{it\hH/\hbar} \ha(0) e^{-it\hH/\hbar} \in \cL(\cH),
	\end{equation}
	one may evolve the states,
		$\psi(t) := e^{-it\hH/\hbar} \psi(0) \in \cH$,
	so that by unitarity
	\begin{equation}\label{eq:Heisenberg-Schroedinger}
		\langle \ha(t) \rangle_{\psi(0)} = \langle \ha(0) \rangle_{\psi(t)}.
	\end{equation}
	These states then satisfy the \keyword{Schr\"odinger equation}
	\begin{equation}\label{eq:Schroedinger-eqn}
		\boxed{i\hbar \frac{d}{dt}\psi(t) = \hH \psi(t).}
	\end{equation}
\end{remark}

\subsubsection{The Schr\"odinger representation}\label{sec:mech-QM-Schroedinger}
		
	Let us now implement the above quantization rules on 
	the arche\-typical Poisson algebra $\cA$, namely the
	phase space~\eqref{eq:standard-phase-space}-\eqref{eq:standard-Poisson},
	with the canonical Poisson brackets~\eqref{eq:classical-CCR}.
	We take the canonical coordinate and momentum functions $x_j$ and $p_k$
	as our fundamental observables, with for example
	\begin{equation}\label{eq:standard-observables}
		\cO = \Span_\R \{ 1, x_1, \ldots, x_n, p_1, \ldots, p_n \},
	\end{equation}
	(note that we added $1$ to make this a closed Lie-subalgebra of $\cA$, 
	also known as the Lie algebra of the \keyword{Heisenberg group}).
	However we will typically want to work with
	something slightly larger than~\eqref{eq:standard-observables}
	since by \ref{axm:time-evolution} we also need 
	a Hamiltonian observable $H \in \cO$
	which is some function of $\sx$ and $\ssp$.
	These observables should according to \ref{axm:states}-\ref{axm:observables}
	be promoted to 
	operators $\hat{1}, \hx_j, \hp_k \in \cL(\cH)$
	on some Hilbert space $\cH$, which we leave undetermined 
	for a brief moment.

	The canonical Poisson brackets~\eqref{eq:classical-CCR} 
	should then according to \ref{axm:commutators} be represented by the
	\keyword{canonical commutation relations (CCR)}:
	\begin{equation}\label{eq:quantum-CCR}
		[\hx_j, \hx_k] = 0, \qquad
		[\hp_j, \hp_k] = 0, \qquad
		[\hx_j, \hp_k] = i\hbar \,\delta_{jk} \hat{1}.
	\end{equation}
	We also note that since $\{1,a\} = 0$ for all $a \in \cA$,
	we should have $[\hat{1},\ha] = 0$, at least for all $a \in \cO$,
	so that upon considering irreducible\footnote{That is, if seen as matrices,
	not block-diagonalizable but restricted to just one full block which cannot
	be reduced further. Irreducibility comes in by axiom~\ref{axm:states}
	and the desire to be able to distinguish all states.} 
	representations of this algebra
	we may write $\hat{1} = c\1$, where $c \in \C$, 
	or actually $\bar{c} = c \in \R$ since $\hat{1}^* = \hat{1}$.
	However, as $\hat{1}$ appears only in combination with $\hbar$
	in the r.h.s. of~\eqref{eq:quantum-CCR}, and since we have not yet fixed
	the value of $\hbar$, we may absorb this freedom into $\hbar$ and take 
	$c=1$, i.e. $\hat{1} = \1$, the unit in $\cL(\cH)$.
	
	It is now time to find a Hilbert space on which to represent the 
	operator observables. 
	However, taking the simplest non-trivial choice that comes to mind,
	i.e. $\cH = \C^N$ for some finite dimension $N \ge 1$ and
	with the operators acting as hermitian $N \times N$-matrices,
	is seen not to work, simply by taking the trace on both sides of for example 
	the operator equation corresponding to the first non-trivial commutator in \eqref{eq:quantum-CCR},
	\begin{equation}\label{eq:quantum-CCR-1}
		\hx_1 \hp_1 - \hp_1 \hx_1 = i\hbar \1.
	\end{equation}
	The trace is zero on the l.h.s.~but $i\hbar N$ on the r.h.s., and 
	hence yields a contradiction unless $\hbar = 0$.
	As a result, 
	we cannot represent these relations non-trivially unless
	the space is \emph{infinite}-dimensional, which leads us to the next-most
	natural choice of $\cH = L^2(\phX^n) = L^2(\R^n)$,
	where we may for example take the standard 
	\keyword{Schr\"odinger representation}: for $\psi \in \cH$,
	\begin{equation}\label{eq:def-position-op}
		(\hx_j\psi)(\sx) := x_j \psi(\sx), 
	\end{equation}
	i.e. simply multiplication by the coordinate, and
	\begin{equation}\label{eq:def-momentum-op}
		(\hp_j\psi)(\sx) := -i\hbar \frac{\partial\psi}{\partial x_j}(\sx).
	\end{equation}
	The sign convention on $\hp_j$ here matches that of~\eqref{eq:bracket-quantization}.

	Note that both of these are unbounded operators and therefore
	care has to be taken that they are self-adjoint. 
	They are obviously hermitian when considered as forms on 
	$C_c^\infty(\R^n)$,
	\begin{equation}\label{eq:x-p-hermiticity}
		\inp{\varphi, \hx_j \psi} = \inp{\hx_j \varphi, \psi}, \qquad
		\inp{\varphi, \hp_k \psi} = \inp{\hp_k \varphi, \psi}, \qquad
		\forall \varphi, \psi \in C_c^\infty(\R^n),
	\end{equation}
	but one needs to specify corresponding domains so that 
	$\cD(\hx_j^*) = \cD(\hx_j)$ and $\cD(\hp_k^*) = \cD(\hp_k)$.
	In the case $\hx_j$, the natural (maximal) domain is
	$$
		\cD(\hx_j) := \left\{ \psi \in L^2(\R) : 
			\int_{\R^n} |x_j|^2 |\psi(\sx)|^2 \,d\sx < \infty \right\},
	$$
	while the common joint domain for all $\hx_j$ is
	$$
		\cD(\hsx) := \{ \psi \in L^2(\R^n) : 
			\hsx \psi = (\hx_1\psi,\ldots,\hx_n\psi) \in L^2(\R^n;\C^n) \},
	$$
	and for $\hp_k$,
	\begin{equation}\label{eq:hp-domain}
		\cD(\hp_k) := \left\{ \psi \in L^2(\R) : 
			\int_{\R^n} |\xi_k|^2 |\hat\psi(\xi)|^2 \,d\xi < \infty \right\},
		\qquad
		\cD(\hsp) := H^1(\R^n).
	\end{equation}
	This also corresponds to taking the closure of the minimal operators,
	i.e.\ with $\hx_j$ and $\hp_k$ initially defined on the minimal domain 
	$C_c^\infty(\R^n)$.

	Writing 	$\ssp = \hbar \xi$ (we will later set $\hbar=1$), we have
	in terms of the Fourier transform\footnote{The conventions here are unfortunate
	but standard; hats on states denote their Fourier transform, 
	and otherwise it denotes
	operator representations of phase-space functions.}
	\eqref{eq:Fourier-derivatives}
	\begin{equation}\label{eq:momentum-op-Fourier}
		(\hp_j \psi)^\wedge(\ssp) = p_j \hat\psi(\ssp), \qquad
		(\hx_k \psi)^\wedge(\ssp) = i\hbar\frac{\partial \hat\psi}{\partial p_k}(\ssp),
	\end{equation}
	so that an alternative but equivalent representation 
	(called the \keyword{momentum representation})
	is given by $\cF\cH = \cF L^2(\phX^n) = L^2(\R^n) \ni \hat\psi$ with
	operators $\check{1} = \1$, $\check{x}_k$, $\check{p}_j \in \cL(\cF\cH)$ 
	defined by
	\begin{equation}\label{eq:momentum-rep}
		(\check{p}_j \hat\psi)(\ssp) := p_j \hat\psi(\ssp), \qquad
		(\check{x}_k \hat\psi)(\ssp) := i\hbar\frac{\partial\hat\psi}{\partial p_k}(\ssp),
	\end{equation}
	i.e. $\check{x}_k = \cF\hx_k\cF^{-1}$ and $\check{p}_j = \cF\hp_j\cF^{-1}$.
	
\begin{remark}[Stone--von Neumann uniqueness theorem]\label{rem:Stone-von-Neumann}
	In fact, one may consider the abstract unitary group generated by, 
	say $\hx_1$, $\hp_1$, with the commutation relations~\eqref{eq:quantum-CCR-1},
	via
	\begin{equation}\label{eq:Weyl-generators}
		U(s) := e^{is\hx_1}, \qquad
		V(t) := e^{is\hp_1/\hbar},
	\end{equation}
	which satisfy the \keyword{Weyl algebra}
	\begin{equation}\label{eq:Weyl-algebra}
		U(s)U(s') = U(s+s'), \quad
		V(t)V(t') = V(t+t'), \quad
		V(t)U(s) = e^{ist}U(s)V(t). 
	\end{equation}
	It turns out that, with the only assumptions that the representation
	of this group is unitary, irreducible and \keyword{weakly continuous}, i.e.
	\begin{equation}\label{eq:weak-continuity}
		\lim_{t \to 0} \inp{u, U(t)v} = \inp{u,v} \qquad \forall u,v \in \cH,
	\end{equation}
	and similarly for $V(t)$, it \emph{must} be equivalent to the 
	Schr\"odinger representation: that is, up to conjugation with a unitary
	(such as $\cF$),
	we have $\cH = L^2(\R)$ with
	\begin{equation}\label{eq:Weyl-Schroedinger}
		(U(s)\psi)(x) = e^{isx}\psi(x), \qquad
		(V(t)\psi)(x) = \psi(x+t).
	\end{equation}
	This is known as the \keyword{Stone--von Neumann uniqueness theorem}.
	However, upon relaxing the assumption~\eqref{eq:weak-continuity} 
	on weak continuity other representations may be found, 
	on \emph{non}-separable Hilbert spaces; see Section~\ref{sec:prelims-ops-Stone}
	and \cite[p.~213]{Thiemann-07}, \cite{AshSin-11}.
\end{remark}

	Since the $\hx_j$ are commuting operators
	we may diagonalize them simultaneously,
	with well-defined projections on the joint spectrum of the operators
	$\hsx = (\hx_1,\ldots,\hx_n)$,
	$$
		P^{\hsx}_{I_1 \times I_2 \times \ldots \times I_n} 
		= P^{\hx_1}_{I_1} P^{\hx_2}_{I_2} \ldots P^{\hx_n}_{I_n},
		\qquad I_j \subseteq \R \ \text{intervals},
	$$
	which may be generalized to
	$P^{\hsx}_\Omega$ for any (Borel) $\Omega \subseteq \R^n$.
	This is again the same as taking the Schr\"odinger representation with 
	$P^{\hsx}_\Omega = \1_\Omega$.
	For normalized $\psi \in L^2(\R^n)$, $\|\psi\|_{L^2}=1$, we have by
	Remark~\ref{rem:self-adjointness}.\ref{itm:s-a-spectral-rep}
	also a natural interpretation for 
	$$
		\int_\Omega |\psi|^2 = \inp{ \1_\Omega }_\psi 
		= \inp{ P^{\hsx}_\Omega }_\psi,
	$$
	which is thus the probability of measuring the event $\sx \in \Omega$ 
	given the state $\psi \in \cH$. Furthermore, 
	$$
		|\psi(\sx)|^2 = \inp{\psi, \delta_\sx \psi} 
		= \lim_{\eps \to 0} |B_\eps(\sx)|^{-1} \inp{ P^{\hsx}_{B_\eps(\sx)} }_\psi
	$$
	(valid for a.e. $\sx \in \R^n$ by the Lebesgue differentiation theorem)
	may be interpreted as the \keyword{probability density of measuring 
	the coordinates} $\sx \in \R^n$.
	
	In the case of diagonalizing $\hsp$ instead, i.e. switching to
	the momentum representation where 
	$\cF P^{\hsp}_\Omega \cF^{-1} = P^{\check{\ssp}}_\Omega = \1_\Omega$, 
	one has
	$$
		|\hat\psi(\ssp)|^2 = \inp{\hat\psi, \delta_\ssp \hat\psi} 
		= \lim_{\eps \to 0} |B_\eps(\ssp)|^{-1} \inp{ P^{\check{\ssp}}_{B_\eps(\ssp)} }_{\hat\psi}
		= \lim_{\eps \to 0} |B_\eps(\ssp)|^{-1} \inp{ P^{\hsp}_{B_\eps(\ssp)} }_\psi,
	$$
	the \keyword{probability density of measuring the momenta} $\ssp \in \R^n$.
	
\begin{remark}
	The reason why we cannot just take the full Poisson algebra $\cO = \cA$
	and quantize that is that there will be problems when it comes to the 
	choice of ordering of operators.
	Namely by~\eqref{eq:quantum-CCR-1},
	$\hx_1 \hp_1$ is not the same as $\hp_1 \hx_1$,
	so it matters if we by $x_1p_1 = p_1x_1 \in \cA$ mean
	$\hx_1\hp_1$, or $\hp_1\hx_1$, or perhaps $\frac{1}{2}(\hx_1\hp_1+\hp_1\hx_1)$.
	This is known as the \keyword{factor ordering ambiguity} in quantum mechanics
	and causes many headaches when trying to quantize classical mechanical 
	systems. As a result there are often different routes to quantization
	with obstacles to be overcome and choices to be made of both ordering
	rules and representations, so that, in practice, the procedure of 
	`canonical quantization' may not seem so canonical after all.
	For a very general treatment of the quantization procedure, see e.g.
	\cite{Thiemann-07,Rovelli-04}.
\end{remark}

\begin{exc}
	Verify \eqref{eq:x-p-hermiticity} and the CCR \eqref{eq:quantum-CCR}
	for both the choice \eqref{eq:def-position-op}-\eqref{eq:def-momentum-op} 
	and the alternative \eqref{eq:momentum-rep}, 
	and note the agreement of all sign conventions.
\end{exc}

\begin{exc}
	Verify that the Schr\"odinger representation \eqref{eq:def-position-op}-\eqref{eq:def-momentum-op}
	exponentiates to \eqref{eq:Weyl-Schroedinger}
	and satisfies the Weyl algebra \eqref{eq:Weyl-algebra}.
	Conjugate with $\cF$ and compare with \eqref{eq:momentum-rep}.
	Derive the abstract Weyl algebra starting from the definitions 
	\eqref{eq:Weyl-generators} and the CCR \eqref{eq:quantum-CCR}.
\end{exc}

\subsection{The one-body problem}\label{sec:mech-QM-1p}

	Let us now consider the case of the \keyword{one-body problem},
	i.e. a single particle on a $d$-dimensional classical configuration space 
	$\phX^d = \R^d$ and phase space $\phP^d = \R^{2d}$,
	on which we may take the Hamiltonian $H(\bx,\bp) = T(\bp) + V(\bx)$ 
	from Example~\ref{exmp:classical-ext-pot}
	(in that case we had $d=3$, but let us be more general here and take 
	$d \in \N$).
	Since it generates our dynamics,
	we should promote it to an observable, i.e. add it to 
	\eqref{eq:standard-observables}
	and then construct its quantum representation $\hH \in \cL(\cH)$.
	However, before we do so, let us make sure that we are done with
	our choice of observables $\cO$. Namely, $\cO$ needs to be closed under
	Poisson brackets, and indeed
	$$
		\{x_j,T(\bp)\} = \frac{p_j}{m} \in \cO, \qquad
		\{x_j,V(\bx)\} = 0, \qquad
		\{p_j,T(\bp)\} = 0, 
	$$
	but we find that we might also need to add
	$$
		\{p_j,V(\bx)\} = -\frac{\partial V(\bx)}{\partial x_j} = F_j(\bx),
	$$
	i.e. the components of the corresponding force $\bF$, as well as
	$$
		\{p_j,F_k(\bx)\} = -\frac{\partial F_k}{\partial x_j}(\bx),
	$$
	in case this expression is non-zero, and so on.
	Hence all of these functions on $\phP^d$ 
	need to be represented as operators as well.
	Moreover, there may be other relevant observables such as
	\begin{equation}\label{eq:angular-momentum}
		L_{jk} := x_jp_k - x_kp_j, \qquad 1 \le j<k \le d,
	\end{equation}
	for which $\{L_{jk},T(\bp)\} = 0$ and $\{L_{jk},V(\bx)\} = 0$ 
	if $V(\bx)=f(|\bx|)$
	(i.e. radial potentials), and which describe \keyword{angular momentum}.
	On the other hand, it is not always the case that any of 
	the canonical variables $x_j$ and $p_k$
	ought to be considered observables, 
	and one may in such an extreme circumstance 
	therefore just take the trivial choice 
	$\cO = \Span \{1,H\}$ or $\cO = \R H$
	(but e.g.\ some non-trivial representation based on the concrete expression for $H$)
	and hence only have to worry about quantizing the Hamiltonian $H$ 
	in that case.
	
\begin{example}[Free particle]\label{exmp:free-QM} \index{free particle}
	The simplest example is again the free particle with $V=0$,
	for which we have $H = T(\bp) = \bp^2/(2m)$.
	In the usual Schr\"odinger representation 
	\eqref{eq:def-position-op}-\eqref{eq:hp-domain}
	the natural thing to do is to take
	$$
		\hH = \frac{\hbp^2}{2m} = \frac{\hbar^2}{2m}(-\Delta_{\R^d}),
	$$ 
	with domain $\cD(\hH) = \cD(\hbp^2) = H^2(\R^d)$.
	This is then a self-adjoint operator, and 
	if instead considered on the minimal domain $C_c^\infty(\R^d)$
	it is essentially self-adjoint with the above extension as its closure.
	Hence there is no other choice for the quantum dynamics in this case
	(there could however be other options if one changes $\phX^d$ a bit
	by for example removing or identifying points,
	as will be seen in Section~\ref{sec:mech-QM-statistics}).
	Moreover, taking the Fourier transform and thus the momentum 
	representation as a pure multiplication operator \eqref{eq:momentum-rep} 
	we may even determine its spectrum explicitly:
	$\sigma(\hH) = \sigma(\check{\bp}^2/(2m)) = [0,\infty)$.
\end{example}

	Even if the potential $V$ is non-zero,
	since $T(\bp)$ and $V(\bx)$ depend only on $\bp$ and $\bx$ separately,
	there is luckily no factor ordering problem in $H = T + V$. 
	But we do need to ensure self-adjointness of $\hH$, 
	which could actually be quite difficult depending on $V$.
	The typical procedure would again be to use our earlier Schr\"odinger
	representation for $\hbx$ and $\hbp = -i\hbar\nabla$
	and thus write for the Hamiltonian operator
	\begin{equation}\label{eq:one-body-Hhat}
		\hH := T(\hbp) + V(\hbx) = T(-i\nabla) + V(\bx) 
		= \frac{\hbar^2}{2m}(-\Delta) + V(\bx),
	\end{equation}
	at least on the minimal domain $C_c^\infty(\R^d)$.
	This is a hermitian expression on this domain
	as long as the potential is real-valued, $V\colon \R^d \to \R$, 
	and not too singular (as will be illustrated in Example~\ref{exmp:hard-core}),
	and one may consider it as the sum of two quadratic forms.
	In the case that $V \ge -C$ with a constant $C \ge 0$ 
	the resulting form is bounded from below, $\hH \ge -C$, 
	and therefore by Friedrichs extension, Theorem~\ref{thm:Friedrichs}, 
	there is a \emph{unique} semi-bounded self-adjoint operator
	corresponding to the closure of this form expression.
	An operator on the form \eqref{eq:one-body-Hhat} for some potential $V$
	is conventionally called a \keyword{Schr\"odinger operator}.

\begin{example}[Harmonic oscillator]\label{exmp:harm-osc-QM}
	Our main example for a system with
	non-zero potential 
	is again the harmonic
	oscillator, $V = V_{\textup{osc}} \ge 0$ from Example~\ref{exmp:harm-osc}.
	In this case the Schr\"odinger operator $\hH \ge 0$ 
	may be defined by Friedrichs extension or form closure,
	Theorem~\ref{thm:form-operator}, on $C_c^\infty(\R^d)$ and the form domain is
	$$
		\cQ(\hH) = \cQ(-\Delta) \cap \cQ(V)
		= \bigl\{ \psi \in H^1(\R^d) : {\textstyle\int_{\R^d}} |\bx|^2|\psi(\bx)|^2 \,d\bx < \infty \bigr\}.
	$$
	We note that $\{p_j,V(\bx)\} = -m\omega^2 x_j$, and hence one may take
	$\cO = \Span_\R \{1,x_j,p_k,H\}$ as a closed algebra of observables,
	with quantum representatives 
	$\hat{\cO} = \Span_\R \{\1,\hx_j,\hp_k,\hH\} \subseteq \cL(\cH)$,
	all acting on a common dense domain 
	$\cD(\hH) \subseteq \cQ(\hH) \subseteq \cH = L^2(\R^d)$.
	
	Alternatively, let us consider the closed algebra 
	\eqref{eq:oscillator-CCR}-\eqref{eq:oscillator-CCR-N} 
	spanned by $\{1,a_j,a_k^*,\cN\}$ and try to quantize that. 
	For simplicity we take $\hbar=1$, $d=1$ and drop the index $j=1$.
	Note that $a \neq a^*$ (and also $a \neq -a^*$)
	since $\{a,a^*\} \neq 0$,
	and hence $a$ or $a^*$ are not observables.
	However, any of the expressions 
	$aa^* = a^*a = \cN = (a^*a + aa^*)/2 \in \cA$ 
	may be used.
	Also, forming the real combinations 
	$a+a^*$ and $-i(a-a^*)$ of the original observables subject to 
	\eqref{eq:oscillator-CCR},
	we consider promoting $a$ and $a^*$ to non-self-adjoint operators 
	$\ha$ resp. $\ha^*$ satisfying the 
	commutation relations
	\begin{equation}\label{eq:oscillator-QCCR}
		[\ha,\ha^*] = \1, 		\qquad
		[\hat{\cN},\ha] = -\ha, 	\qquad
		[\hat{\cN},\ha^*] = \ha^*,
	\end{equation}
	where we \emph{defined} $\hat{\cN} := \ha^*\ha$.
	Also note that $\ha\ha^* = \hat{\cN} + \1$ by the first commutator above.
	Hence, the expressions that were the same on $\cA$ are now given by 
	\emph{different} operators. 
	Furthermore, if we demand that $\hat{\cN} = \hat{\cN}^*$
	is self-adjoint and has some non-trivial eigenstate $\psi \in \cH$
	with eigenvalue $\lambda \in \R$, then one may observe that
	$\ha^k \psi$ resp. $(\ha^*)^k \psi$ 
	are also eigenstates with eigenvalues $\lambda - k$ resp. $\lambda + k$.
	Therefore, if also demanding $\hat{\cN}$ to be bounded from below,
	there must exist a state $\psi_0 \in \cH$ such that $\ha \psi_0 = 0$, 
	and the remaining states of an irreducible representation of \eqref{eq:oscillator-QCCR}
	are then given by $\psi_k := (\ha^*)^k \psi_0$ 
	with $\hat{\cN}\psi_k = k\psi_k$,
	$k=0,1,2,\ldots$.
	Taking finally the \emph{symmetrized} expression 
	$\hH/\omega := (\ha^*\ha + \ha\ha^*)/2 = \hat{\cN}+1/2$,
	and $\cH := \overline{\Span\{\psi_k\}_{k=0}^\infty}$,
	this then provides the \keyword{algebraic solution} 
	to the spectrum of the quantum Harmonic oscillator
	(one may finally show that these two representations coincide).
\end{example}

	If $V$ is unbounded from below then it is not certain that Friedrichs 
	extension applies, but there are other tricks and concepts,
	such as relative form boundedness and relatively bounded perturbations,
	which may be used to define the sums of such forms and operators.
	However we will in this course rely solely on proving that the full 
	Schr\"odinger expression \eqref{eq:one-body-Hhat}
	is bounded from below as a quadratic form, 
	so that there is then an unambiguous choice of 
	an associated bounded from below quantum Hamiltonian $\hH$.
		
\begin{example}[Coulomb potential]\label{exmp:Coulomb-QM}
	As already noted in Section~\ref{sec:mech-CM-instability}, 
	the Coulomb potential $V_\sC(\bx) = -Z/|\bx|$ is unbounded from below.
	In the next section we will use the uncertainty principle to
	prove that the form
	$$
		q(\psi) := \inp{\psi, 
			\left[\frac{\hbar^2}{2m}(-\Delta_{\R^3}) - \frac{Z}{|\bx|}\right]
			\psi}
	$$
	is nevertheless bounded from below on the minimal domain $C_c^\infty(\R^3)$
	(also note here that, even though $V_\sC$ is singular at $\bx=\0$, 
	we still have $V_\sC \in L^1_\loc(\R^3)$ which makes the expression
	well defined on this domain),
	and hence it \emph{defines} a semi-bounded from below self-adjoint
	operator $\hH$ by Theorem~\ref{thm:form-operator} or \ref{thm:Friedrichs}.
\end{example}

	We remark that, as soon as we have defined a self-adjoint Hamiltonian
	operator $\hH$ which is bounded from below, $\hH \ge -C$,
	then there cannot be any problems with the physical system 
	for any future time,
	since all states $\psi \in \cH$ then evolve unitarily by Stone's theorem,
	and furthermore arbitrarily negative values of the energy cannot
	be measured at any time since the measurable energy spectrum has a 
	finite lower bound, $\sigma(\hH) \subseteq [-C,+\infty)$.
	Hence, in this precise sense there is then \keyword{stability} 
	for the corresponding quantum system.

\begin{exc}
	Verify the algebraic relations in Example~\ref{exmp:harm-osc-QM}
	and extend the solution to $d > 1$.
\end{exc}

\begin{exc}\label{exc:spin}
	Consider the angular momenta \eqref{eq:angular-momentum} in $\R^3$, with
	$L_1$, $L_2$ and $L_3 := x_1p_2 - x_2p_1$ cyclically defined.
	Verify the Poisson brackets $\{L_1,L_2\} = L_3$ (cyclic)
	and $\{\bL^2,L_k\} = 0$ for all $k$,
	where $\bL^2 := L_1^2 + L_2^2 + L_3^2$.
	By considering $L_{\pm} := L_1 \pm iL_2$ and the corresponding commutation
	relations, show that all possible finite-dimensional
	irreducible quantizations of this algebra 
	(with $\hat{L}_k$ self-adjoint)
	may be labelled by a number
	$\ell \in \Z_{\ge 0}/2$ (called \keyword{spin}), and that the corresponding
	spectrum is $\sigma(\hat{L}_3) = \hbar\{-\ell,-\ell+1,\ldots,\ell\}$ 
	and $\hat{\bL}^2 = \hbar^2\ell(\ell+1)\1$.
\end{exc}

\subsection{The two-body problem and the hydrogenic atom}\label{sec:mech-QM-2p}
	
	In the case that one considers \emph{two} particles on $\R^d$,
	the classical configuration space would be $\phX^{2d} = \R^d \times \R^d$
	and the phase space $\phP^{2d} = \R^{4d}$.
	We write the corresponding coordinates $\sx = (\bx_1,\bx_2)$
	and momenta $\ssp = (\bp_1,\bp_2)$, with $\bx_j, \bp_k \in \R^d$.
	For the Hamiltonian, one could here think of adding 
	the kinetic energies $T_j(\bp_j) = \bp_j^2/(2m_j)$ 
	for each of the two particles $j=1,2$,
	and also allow for some potential on configuration space 
	$V\colon \phX^{2d} \to \R$ 
	which depends on both particles:
	\begin{equation}\label{eq:Hamiltonian-2p}
		H(\sx,\ssp) 
		= T_1(\bp_1) + T_2(\bp_2) + 
			V(\bx_1,\bx_2)
		= \frac{\bp_1^2}{2m_1} + \frac{\bp_2^2}{2m_2} + V(\bx_1,\bx_2)
	\end{equation}
	Note that the masses $m_1$ resp. $m_2$ of the particles could be
	different, and that we may also w.l.o.g. rewrite the potential $V$
	into a sum of independent one-particle parts $V_j$ 
	and a final part $W$ describing any 
	\keyword{correlation} or \keyword{interaction}
	between the two particles,
	$$
		V(\bx_1,\bx_2) = V_1(\bx_1) + V_2(\bx_2) + W(\bx_1,\bx_2),
	$$
	thus
	$$
		H(\sx,\ssp) = \sum_{j=1,2} H_j(\bx_j,\bp_j) 
			+ W(\bx_1,\bx_2),
		\qquad
		H_j(\bx_j,\bp_j) = \frac{\bp_j^2}{2m_j} + V_j(\bx_j).
	$$
	In case there is no correlation between the particles, 
	$W=0$, 
	this hence just describes a sum of two independent 
	one-body Hamiltonians
	for which we may proceed with quantization as in Section~\ref{sec:mech-QM-1p}.
	If we can find self-adjoint representations of the corresponding
	operators $\hH_j \in \cL(\gH)$ on one-particle Hilbert spaces 
	$\gH = L^2(\phX^d) = L^2(\R^d)$ (the same $\phX^d$ for the two particles),
	then we can take the \keyword{two-body Hilbert space} 
	to be the tensor product
	$\cH = \gH \otimes \gH \cong L^2(\R^{2d})$ (see Exercise~\ref{exc:tensor-L2})
	and form a self-adjoint Hamiltonian operator
	$$
		\hH = \hH_1 \otimes \1 + \1 \otimes \hH_2 \in \cL(\cH),
		\qquad \cD(\hH) = \cD(\hH_1) \otimes \cD(\hH_2),
	$$
	with spectrum
	$\sigma(\hH) = \overline{\sigma(\hH_1) + \sigma(\hH_2)}$
	(see e.g.~\cite[Section~4.6]{Teschl-14}).
	For brevity we will usually leave out the trivial factors in the tensor 
	products if it is understood on which part of the space the operator acts.
	Also note that the expression for $\hH$ exponentiates to a unitary time 
	evolution on \keyword{two-body states}
	$\Psi = \psi \otimes \phi \in \cH$ (and linear combinations):
	$$
		e^{it\hH/\hbar}\Psi 
		= (e^{it\hH_1/\hbar} \otimes e^{it\hH_2/\hbar})(\psi \otimes \phi)
		= e^{it\hH_1/\hbar}\psi \otimes e^{it\hH_2/\hbar}\phi.
	$$

\begin{example}\label{exmp:harm-osc-split}
	The $d$-dimensional harmonic oscillator from Example~\ref{exmp:harm-osc-QM},
	$$
		H(\bx,\bp) = \frac{\bp^2}{2m} + \frac{1}{2}m\omega^2 \bx^2
		= \sum_{j=1}^d \left( \frac{p_j^2}{2m} + \frac{1}{2}m\omega^2 x_j^2 \right),
	$$
	separates into $d$ copies of a one-dimensional oscillator, with no 
	correlation between these different degrees of freedom,
	and hence it suffices to solve the one-dimensional problem
	to determine the full spectrum:
	$\sigma(\hH) = \sum_{j=1}^d \omega(\Zplus + 1/2) = \omega(\Zplus + d/2)$.
\end{example}
	
	In the case that there are correlations between the particles,
	$W \neq 0$, 
	it may be helpful to change variables.
	For the Hamiltonian \eqref{eq:Hamiltonian-2p} we define the
	\keyword{center of mass} (COM) and its conjugate momentum
	\begin{equation}\label{eq:COM-coords}
		\bX := \frac{m_1\bx_1 + m_2\bx_2}{m_1+m_2},
		\qquad
		\bP := \bp_1 + \bp_2,
	\end{equation}
	as well as the \keyword{relative coordinate}
	with its conjugate momentum
	\begin{equation}\label{eq:relative-coords}
		\br := \bx_1 - \bx_2,
		\qquad
		\bp_{\br} := \mu(\bp_1/m_1 - \bp_2/m_2).
	\end{equation}
	Here
	\begin{equation}\label{eq:reduced-mass}
		\mu := \frac{m_1m_2}{m_1+m_2}
	\end{equation}
	is the \keyword{reduced mass} of the pair of particles.
	We may then rewrite the two-body Hamiltonian \eqref{eq:Hamiltonian-2p}
	in these coordinates as (see Exercise~\ref{exc:COM-brackets})
	$$
		H(\bx_1,\bx_2;\bp_1,\bp_2) 
		= \frac{\bP^2}{2(m_1+m_2)} + \frac{\bp_{\br}^2}{2\mu}
			+ V\Bigl( \bX + \frac{\mu}{m_1}\br, \bX - \frac{\mu}{m_2}\br \Bigr)
		=: \tilde{H}(\bX,\br;\bP,\bp_\br).
	$$

\begin{example}[The hydrogenic atom]\label{exmp:hydrogenic-2p}
	Recall from Section~\ref{sec:mech-CM-instability} that the 
	hydrogenic atom consists of a nucleus with charge $Z>0$ 
	and an electron with charge $-1$,
	which interact via the Coulomb potential
	$V_\sC(\br) = -Z/|\br|$, 
	where $|\br|$ is the distance between the particles in $\R^3$.
	The proper model is therefore a two-body classical configuration space 
	$\phX^{3 \times 2} = \R^3 \times \R^3$, phase space 
	$\phP^{3 \times 2} = \R^{12}$, and Hamiltonian
	$$
		H(\bx_1,\bx_2;\bp_1,\bp_2) 
		= \frac{\bp_1^2}{2m_1} + \frac{\bp_2^2}{2m_2} - \frac{Z}{|\bx_1-\bx_2|},
	$$
	with $m_1$ and $m_2$ the masses of the nucleus and electron, respectively.
	It is here appropriate to switch to COM and relative coordinates,
	$\phX^{3 \times 2} \cong \phX_\COM \times \phX_\rel
	\ni (\bX,\br)$, 
	in which the Hamiltonian separates,
	$$
		\tilde{H}(\bX,\br;\bP,\bp_\br) 
		= \frac{\bP^2}{2(m_1+m_2)} + \frac{\bp_{\br}^2}{2\mu}
			- \frac{Z}{|\br|}.
	$$
	The first term involving the center-of-mass momentum $\bP$ is just the 
	one-body Hamiltonian $H_\COM(\bX;\bP)$
	of a free particle on $\phX_\COM = \R^3$, 
	which we may quantize uniquely along the lines of Example~\ref{exmp:free-QM},
	while the second two terms constitute the Hamiltonian 
	$H_\rel(\br;\bp_\br)$
	of the one-body Coulomb problem on $\phX_\rel = \R^3$ 
	which was discussed classically in 
	Section~\ref{sec:mech-CM-instability}
	and quantum-mechanically in Example~\ref{exmp:Coulomb-QM}
	of Section~\ref{sec:mech-QM-1p}, and which we shall return to many more times.
	Given a self-adjoint quantization 
	$\hH_\rel \in \cL(L^2(\phX_\rel))$ 
	of this Hamiltonian, we therefore have the two-particle operator
	$$
		\hH = \frac{\hbar^2(-\Delta_{\bX})}{2(m_1+m_2)} \otimes \1 
			+ \1 \otimes \hH_\rel, \qquad
		\cD(\hH) = H^2(\R^3) \otimes \cD(\hH_\rel),
	$$
	on the two-particle quantum configuration space
	$\cH = L^2(\R^3,d\bX) \otimes L^2(\R^3,d\br)$.
	Again then,
	$$
		\sigma(\hH) = \overline{ \sigma(\hH_\COM) + \sigma(\hH_\rel) }
		= \overline{[0,\infty) + \sigma(\hH_\rel)} 
		= [\inf \sigma(\hH_\rel),\infty),
	$$
	and we see that the spectrum of the free center-of-mass motion in the end
	obscures most of the information about the relative one. 
	Therefore, in practice one often removes
	the COM part from the problem altogether and then studies only the more 
	interesting relative part. 
\end{example}
		
\begin{exc}\label{exc:tensor-L2}
	Use Fubini's theorem to show that 
	$L^2(X,\mu) \otimes L^2(Y,\nu) \cong L^2(X \times Y, \mu \times \nu)$, 
	where $\mu,\nu$ are $\sigma$-finite measures 
	and, for two Hilbert spaces $\cH, \cK$, $\cH \otimes \cK$ is defined with
	$$
		\inp{u_1 \otimes u_2,v_1 \otimes v_2}_{\cH \otimes \cK} 
		:= \inp{u_1,v_1}_{\cH} \inp{u_2,v_2}_{\cK}.
	$$
\end{exc}

\begin{exc}\label{exc:COM-brackets}
	Verify that the COM and relative coordinates and momenta 
	\eqref{eq:COM-coords}-\eqref{eq:relative-coords}
	are canonically conjugate, i.e
	$$
		\{(\bX)_j,(\bP)_k\} = \delta_{jk}, \qquad
		\{(\br)_j,(\bp_\br)_k\} = \delta_{jk},
	$$
	and all other Poisson brackets in $\bX,\br,\bP$  and $\bp_\br$ are zero,
	and furthermore that
	$$
		m_1\bx_1^2 + m_2\bx_2^2 = (m_1+m_2)\bX^2 + \mu\br^2, \qquad
		\frac{\bp_1^2}{m_1} + \frac{\bp_2^2}{m_2} 
			= \frac{\bP^2}{m_1+m_2} + \frac{\bp_\br^2}{\mu}.
	$$
\end{exc}

\subsection{The $N$-body problem}\label{sec:mech-QM-Np}

	We may extend much of the above analysis to the \keyword{$N$-body problem},
	that is, we may consider $N$ particles on $\R^d$ with masses $m_j$
	and with independent one-body potentials $V_j(\bx_j)$, 
	two-body correlation potentials $W_{jk}(\bx_j,\bx_k)$ 
	describing pairwise interactions between particles $\bx_j$ and $\bx_k$
	with $j \neq k$,
	as well as three-body interactions $W_{jkl}(\bx_j,\bx_k,\bx_l)$ 
	with $j,k,l$ all distinct,
	and so on.
	Although interaction potentials involving more than two particles 
	are not uncommon in physics,
	they will not be relevant for our stability of matter problem and shall
	hence for simplicity not be considered further in this course
	(except perhaps occasionally).
	Furthermore, it is common that the one-body potentials have already incorporated
	all the dependence on absolute positions such as pairwise centers of mass,
	and hence that the pair-interactions
	$W_{jk}$ are translation-invariant, i.e., 
	depending only on the relative
	coordinate $\br_{jk} := \bx_j - \bx_k$ of each pair
	(w.l.o.g. $j<k$).
	
	With these restrictions or simplifications, 
	the \keyword{$N$-body Hamiltonian} on the 
	classical configuration space $\phX^{d \times N} = (\R^d)^N$ and
	phase space $\phP^{d \times N} = (\R^d)^N \times (\R^d)^N$
	may thus be defined
	\begin{equation}\label{eq:many-body-Hamiltonian}
		H(\sx,\ssp) := \sum_{j=1}^N \left( T_j(\bp_j) + V_j(\bx_j) \right)
			+ \sum_{1 \le j<k \le N} W_{jk}(\bx_j - \bx_k)
			= T(\ssp) + V(\sx) + W(\sx).
	\end{equation}
	Again we see that there is no ordering ambiguity here since the
	terms involve coordinates and momenta separately.
	The natural quantum version of the expression is therefore
	\begin{equation}\label{eq:many-body-Hamiltonian-op}
		\hH := \sum_{j=1}^N \left( T_j(\hbp_j) + V_j(\hbx_j) \right)
			+ \sum_{1 \le j<k \le N} W_{jk}(\hbx_j - \hbx_k)
			= \hT + \hV + \hW,
	\end{equation}
	which is to be
	acting as an operator on $\cH = L^2(\phX^{d \times N}) = L^2(\R^{dN})$,
	that is, we should try to implement these expressions as operators or
	forms on some space $\scF$ of sufficiently well-behaved functions
	$\Psi \in L^2(\phX^{d \times N})$,
	called \keyword{$N$-body quantum states} or \keyword{$N$-body wave functions},
	according to
	\begin{align}
	\label{eq:many-body-kinetic-op}
		\hT\Psi(\sx) &:= \sum_{j=1}^N \frac{\hbp_j^2}{2m_j} \Psi \,(\sx)
		= -\sum_{j=1}^N \frac{\hbar^2}{2m_j} \Delta_{\bx_j}\!\Psi(\sx), \\
	\label{eq:many-body-1p-potential-op}
		\hV\Psi(\sx) &:= \sum_{j=1}^N V_j(\bx_j) \Psi(\sx), \\
	\label{eq:many-body-2p-potential-op}
		\hW\Psi(\sx) &:= \sum_{1 \le j<k \le N} W_{jk}(\bx_j-\bx_k) \Psi(\sx).
	\end{align}
	Which space $\scF$ we may choose depends on details of the potentials $V$ and $W$,
	namely, if for example the interaction $W$ is too singular then this
	may force us to consider only those functions which vanish 
	(sufficiently fast) at the singularities.
	Hence, the usual minimal domain $C_c^\infty(\R^{dN}) \subseteq \cH$ 
	might not always be appropriate, as the following example illustrates.
	
\begin{example}[Hard-core interaction]\label{exmp:hard-core}
	Consider an interaction potential $W_R(\br)$ formally defined by
	$$
		W_R(\br) = \left\{ \begin{array}{ll}
			+\infty,	& \text{if $|\br| < R$,} \\
			0, 		& \text{if $|\br| \ge R$.}
			\end{array}\right.
	$$
	This describes \keyword{hard spheres} (or \keyword{hard cores}) 
	of radius $R/2$, 
	because as soon as the particles are within a distance $|\br| = |\bx_j-\bx_k| < R$
	the energy is infinite --- a very hard collision ---
	and otherwise they do not see each other. 
	Since the corresponding form on the relative Hilbert space $L^2(\R^d,d\br)$ 
	is formally
	$$
		\inp{\psi, W_R \psi} = \int_{\R^d} W_R |\psi|^2 
		= \left\{ \begin{array}{ll}
			+\infty,	& \text{if $|B_R(0) \cap \supp\psi| > 0$,} \\
			0, 		& \text{otherwise,}
			\end{array}\right.
	$$
	the mathematically precise way to incorporate this potential is to
	consider a \emph{new} minimal domain 
	$\cD(W_R) = C_c^\infty(B_R(0)^c) \subseteq \cH$
	and a corresponding restriction in the Hilbert space 
	$\cH = L^2(B_R(0)^c) = \overline{\cD(W_R)}$ 
	(the closure may be taken in the old Hilbert space $L^2(\R^d)$).
\end{example}
		
	Note that (again by Exercise~\ref{exc:tensor-L2})
	we may equivalently think of $\Psi \in L^2(\R^{dN}) \cong \otimes^N \gH$
	as 
	tensor products (including any finite linear combinations and limits thereof)
	of one-particle states $\psi_n \in \gH = L^2(\R^d)$,
	$$
		\Psi = \sum_{n_1=1}^{\infty} \ldots \sum_{n_N=1}^\infty 
			c_{n_1 \ldots n_N}
			\psi_{n_1} \otimes \ldots \otimes \psi_{n_N},
		\qquad c_{n_1 \ldots n_N} \in \C.
	$$
	In the case that all $W_{jk} = 0$ we again have a separation
	of the problem into independent one-body problems,
	\begin{equation}\label{eq:many-body-Hamiltonian-independent}
		\hH = \sum_{j=1}^N \hat{h}_j, \qquad
		\hat{h}_j = -\frac{\hbar^2}{2m_j}\Delta_{\bx_j} + V_j(\bx_j) 
			\ \in \cL(\gH),
	\end{equation}
	and, if these are subsequently realized as self-adjoint operators,
	then
	$
		\sigma(\hH) = \overline{\sum_{j=1}^N \sigma(\hat{h}_j)}.
	$
	
	Also, if $W_{jk} \neq 0$, 
	then one may for each pair of particles instead
	consider the problem in the corresponding relative coordinates $\br_{jk}$.
	However, because for $N>2$ 
	there are more pairs than relative degrees of freedom,
	$\binom{N}{2} > N-1$,
	this forms a redundant set of variables and
	the problem typically does not separate.
	The total center of mass, 
	$$
		\bX = \sum_{j=1}^N m_j\bx_j \bigg/ \sum_{j=1}^N m_j, \qquad
		\bP = \sum_{j=1}^N \bp_j,
	$$
	may still be separated away though if the one-body potential
	$V$ admits such a separation.
	We will not consider the appropriate change of variables in the general case,
	involving Jacobi coordinates, but only in the below special 
	case of identical masses.
	
\subsubsection{Models of matter and notions of stability}\label{sec:mech-QM-matter}

	An important special case is that all the particles are of exactly the same kind
	so that we have the same mass $m_j=m$ and one-particle interaction $V_j=V$ for all $j$,
	and also that the two-particle interaction $W_{jk}=W$ is independent of the pair
	considered and furthermore symmetric w.r.t. particle exchange
	$\br_{jk} \mapsto -\br_{jk} = \br_{kj}$, i.e.
	$W(\br_{jk}) = W(-\br_{jk})$.
	The resulting Hamiltonian operator
	\begin{equation}\label{eq:many-body-Hamiltonian-ident}
		\hH^N := \sum_{j=1}^N \left( \frac{\hbar^2}{2m} (-\Delta_{\bx_j}) + V(\bx_j) \right)
			+ \sum_{1 \le j<k \le N} W(\bx_j - \bx_k)
	\end{equation}
	then defines the \keyword{typical $N$-body quantum system} 
	involving a single type of particle.
	
	In this case it may be useful to write the momenta as a sum of pairs
	using the \keyword{generalized parallelogram identity}\index{parallelogram identity}
	(valid on $\C^d$ or general Hilbert spaces; 
	cf. Example~\ref{exmp:parallelogram-space}),
	\begin{equation}\label{eq:many-body-parallelogram}
		\sum_{j=1}^N |\bp_j|^2 = \frac{1}{N} \Biggl| \sum_{j=1}^N \bp_j \Biggr|^2
			+ \frac{1}{N} \sum_{1 \le j<k \le N} \bigl| \bp_j - \bp_k \bigr|^2,
	\end{equation}
	and thus for the Hamiltonian
	$$
		\hH^N = \frac{\hbar^2}{2mN} (-\Delta_{\bX}) 
			+ \sum_{1 \le j < k \le N} \left( 
				\frac{1}{2mN} \bigl| \hbp_j - \hbp_k \bigr|^2
				+ \frac{1}{N-1}\bigl(V(\bx_j) + V(\bx_k)\bigr)
				+ W(\br_{jk}) \right).
	$$
	Again, although it looks as if we may have separated the problem if $V=0$,
	this is indeed true for the COM variable but
	the particle pairs are actually \emph{not} independent.
	
	Another important case will in fact constitute our \keyword{model of matter} 
	in the sequel.
	Here we have \emph{two} species of particles: $N$ electrons and $M$ nuclei, 
	with positions $\bx_j \in \R^d$ respectively $\bR_k \in \R^d$.
	The quantum Hamiltonian on $L^2(\phX^{d \times N} \times \phX^{d \times M})$
	is
	\begin{equation}\label{eq:many-body-Hamiltonian-matter}
		\hH^{N,M} := \sum_{j=1}^N 
			\frac{\hbar^2}{2m_e} (-\Delta_{\bx_j})
			+ \sum_{k=1}^M 
			\frac{\hbar^2}{2m_n} (-\Delta_{\bR_k})
			+ W_\sC(\sx,\sR),
	\end{equation}
	with masses $m_e > 0$ respectively $m_n > 0$, and
	where we have taken as the interaction the
	\keyword[Coulomb potential]{$(N,M)$-body Coulomb potential}:
	\begin{equation}\label{eq:many-body-Coulomb}
		W_\sC(\sx,\sR) := \sum_{1 \le i<j \le N} \frac{1}{|\bx_i-\bx_j|}
			- \sum_{j=1}^N \sum_{k=1}^M \frac{Z}{|\bx_j-\bR_k|}
			+ \sum_{1 \le k<l \le M}  \frac{Z^2}{|\bR_k-\bR_l|}
	\end{equation}
	This implements the appropriate Coulomb interaction \eqref{eq:Coulomb-potential}
	between each pair of particles, where the charges are again $q_e = -1$
	for the electrons and $q_n = Z > 0$ for the nuclei.
	One may also add external one-body potentials $V_e$ respectively $V_n$,
	although we will not do so here but rather consider the whole system
	\eqref{eq:many-body-Hamiltonian-matter}
	of $N+M$ particles to be completely \emph{free} apart from the internal 
	interactions in $W_\sC$.
	Sometimes we may however consider the kinetic energies of the nuclei
	to be irrelevant for the problem since in reality $m_n \gg m_e$
	and thus we could consider this as a limit $m_n \to \infty$.
	We may in any case drop the non-negative terms 
	$(-\Delta_{\bR_k})/(2m_n) \ge 0$ for a lower bound to $\hH^{N,M}$.
	Upon doing so the positions $\sR = (\bR_1,\ldots,\bR_M)$
	of the nuclei remain as parameters of the 
	resulting $N$-body Hamiltonian $\hH^N(\sR)$
	and some of the terms of the interaction $W_\sC$ are then treated as
	external potentials.
	
\begin{definition}[\keyword{Ground-state energy, and stability of the first and second kind}]
	\label{def:stability}
	Given a quantum system modeled on a Hilbert space $\cH$ with a self-adjoint 
	Hamiltonian operator $\hH \in \cL(\cH)$, we define its 
	\keyword{ground-state energy}
	to be the infimum of the spectrum,
	$$
		E_0 := \inf \sigma(\hH) = \inf_{\psi \in \cQ(\hH) \setminus \{0\}} \inp{\hH}_\psi.
	$$
	We say that the system
	is \keyword{stable of the first kind} iff $\hH$ is bounded from below,
	$$
		E_0 > -\infty.
	$$
	Moreover, in the case that
	the system depends on a total of $N$ particles, with
	$\cH = \cH^N$, $\hH = \hH^N$, and
	$$
		E_0(N) := \inf \sigma(\hH^N) = \inf_{\Psi \in \cQ(\hH^N) \setminus \{0\}} \inp{\hH^N}_\Psi,
	$$
	then it is called \keyword{stable of the second kind} 
	iff $\hH^N$ admits a lower bound which is at most \emph{linearly} divergent in $N$,
	i.e. iff there exists a constant $C\ge0$ such that for all $N \in \N$
	$$
		E_0(N) \ge -CN.
	$$
\end{definition}

\begin{remark}
	A stable system does not necessarily have a \keyword{ground state}, 
	i.e. some eigenstate $\psi \in \cH \setminus \{0\}$ with energy equal to
	the ground-state energy $E_0$, 
	$\hH\psi = E_0\psi$.
	For example the free particle on $\R^d$ (Example~\ref{exmp:free-QM}) \index{free particle}
	is certainly stable with $E_0=0$ but has no ground state,
	since the only sensible candidate would be either the constant function 
	(in the usual form sense, $\inp{\psi,-\Delta\psi}=\norm{\nabla\psi}^2 = 0$)
	or perhaps a harmonic function (in the operator sense, $-\Delta\psi=0$)
	which in either case is not in $L^2(\R^d)$.
\end{remark}

\subsubsection{Density and particle probabilities}\label{sec:mech-QM-density}

	The problem with the models of matter \eqref{eq:many-body-Hamiltonian-ident} and 
	\eqref{eq:many-body-Hamiltonian-matter} is that it is in practice \emph{extremely} 
	difficult to compute their spectra $\sigma(\hH)$,
	even numerically on any foreseeable supercomputer,
	since in reality $N$ is typically extremely large. 
	In just 1 gram of matter there are $N \sim 10^{23}$ particles
	(and, in fact, even the classical many-body problem 
	is then almost impossible to understand on the individual particle level).
	The approach one takes instead is to try to reduce this problem,
	which takes place 
	on the enormous classical configuration space $\R^{dN}$ with $N \gg 1$, 
	to an approximate problem on just $\R^d$ or similar fixed small dimension.

	Recall that if $\Psi$ is normalized in $L^2(\R^{dN})$ 
	--- which we shall assume from now on for our quantum states --- then
	$|\Psi(\sx)|^2$ may be interpreted as the probability density 
	of finding the particles at positions 
	$\sx = (\bx_1,\ldots,\bx_N) \in (\R^d)^N$.
	We may however instead define a corresponding particle density on 
	the one-body configuration space $\R^d$:
	
\begin{definition}[One-body density]
	The \keyword{one-body density} associated to a \emph{normalized}
	$N$-body wave function $\Psi \in L^2((\R^d)^N)$ 
	is the function $\varrho_\Psi \in L^1(\R^d)$ given by
	\begin{equation}\label{eq:def-density}
		\varrho_\Psi(\bx) := 
		\sum_{j=1}^N \int_{\R^{d(N-1)}} |\Psi(\bx_1, \ldots, 
		\bx_{j-1}, \bx, \bx_{j+1}, \ldots, \bx_N)|^2 \prod_{k \neq j}d\bx_k.
	\end{equation}
\end{definition}

	The interpretation of this expression
	is that it is a sum of contributions to a particle
	density, where 
	each term gives the probability of finding particle $j$ at $\bx \in \R^d$
	while all the other particles are allowed to be anywhere in $\R^d$
	and hence have been integrated out.
	Because of the sum, we 
	have no information in $\varrho_\Psi(\bx)$
	which one of the particles
	was at $\bx$ but only how many were there on average.
	Indeed, $\int_\Omega \varrho_\Psi$ will be the expected number of particles 
	to be found on the set $\Omega \subseteq \R^d$, and
	we can write 
	$$
		\int_\Omega \varrho_\Psi = \inp{ \sum_{j=1}^N \1_{\{\bx_j \in \Omega\}} }_\Psi
		\qquad \text{and} \qquad
		\varrho_\Psi(\bx) = \inp{\sum_{j=1}^N \delta_{\bx_j}}_\Psi
	$$
	in accordance with the interpretations of 
	Remark~\ref{rem:self-adjointness}.\ref{itm:s-a-spectral-rep}
	and Section~\ref{sec:mech-QM-Schroedinger}.
	Note that $\int_{\R^d} \varrho_\Psi = N$ 
	since every particle has to be somewhere in $\R^d$.

	Further note that using $\varrho_\Psi$ we can now write for the expectation value 
	of the one-body potential
	$$
		\inp{\hat{V}}_\Psi = \int_{\R^d} V(\bx) \varrho_\Psi(\bx) \,d\bx,
	$$
	which is indeed a tremendous simplification of the full $N$-body form
	to only depend on the density. Unfortunately a similar straightforward
	simplification does not occur for the kinetic and interaction
	energies $\binp{\hT}_\Psi$ and $\binp{\hat{W}}_\Psi$, 
	and the task of physicists and mathematicians working in many-body quantum 
	theory is to try to find such simplifications.
	We shall come across some important instances of this later in the course.
	
	We will also find it useful to extract the 
	\keyword{local particle probability distribution}
	encoded in the full wave function $\Psi$.
	Namely, given a subset $\Omega \subseteq \R^d$ of the one-body 
	configuration space
	and a subset $A \subseteq \{1,2,\ldots,N\}$ of the particles 
	(particle labels),
	we may form the probability to find \emph{exactly} those particles on 
	$\Omega$ and the rest outside $\Omega$ 
	(i.e. on its complement $\Omega^c = \R^d \setminus \Omega$):
	$$
		p_{A,\Omega}[\Psi] := \inp{ \prod_{k \in A} \1_{\{\bx_k \in \Omega\}} 
			\prod_{k \notin A} \1_{\{\bx_k \in \Omega^c\}} }_\Psi
		= \int_{(\Omega^c)^{N-|A|}} \int_{\Omega^{|A|}} |\Psi|^2 
			\prod_{k \in A} d\bx_k \prod_{k \notin A} d\bx_k.
	$$
	The probability of finding exactly $n$ particles on $\Omega$ 
	\emph{irrespective of their labels} is then the sum of all such 
	possibilities
	$$
		p_{n,\Omega}[\Psi] := \sum_{\text{$A \subseteq \{1,\ldots,N\}$ s.t. $|A|=n$}}
			p_{A,\Omega}[\Psi].
	$$
	We then note that (exercise)
	\begin{equation}\label{eq:particle-prob-normalized}
		\sum_{n=0}^N p_{n,\Omega}[\Psi] 
		= \sum_{A \subseteq \{1,\ldots,N\}} p_{A,\Omega}[\Psi]
		= \int_{\R^{dN}} |\Psi|^2 = 1,
	\end{equation}
	in other words, some number of particles (possibly zero)
	or some subset (possibly the empty one) of the particles 
	must always be found on $\Omega$.
	Also, the \keyword{expected number of particles} to be found on 
	$\Omega$ is (exercise)
	\begin{equation}\label{eq:particle-prob-expectation}
		\sum_{n=0}^N n \,p_{n,\Omega}[\Psi] 
		= \sum_{j=1}^N \int_{\R^{dN}} \1_{\Omega}(\bx_j) |\Psi(\sx)|^2 \,d\sx
		= \int_\Omega \varrho_\Psi,
	\end{equation}
	which agrees with our earlier interpretations.

\begin{remark}\label{rem:density-matrix}
	There is also a more general concept of
	\keyword{density matrices}, which certainly is very useful 
	but will not be treated here.
	We refer instead to e.g. \cite[Chapter~3.1.4]{LieSei-09}.
\end{remark}

\begin{remark}[Fock spaces]
	For settings where the number $N$ of particles can vary with time
	it is necessary to introduce an appropriate space containing all different
	particle numbers, known as a \keyword{Fock space}:
	$$
		\cF = \bigoplus_{N=0}^\infty \cH^N = \C \oplus \gH \oplus \ldots
	$$
	where $\cH^N = \otimes^N \gH$ is the $N$-body space.
	The concept will not be applied in this course however.
\end{remark}

\begin{exc}\label{exc:particle-probabilities}
	Prove \eqref{eq:particle-prob-normalized} and \eqref{eq:particle-prob-expectation}
	by inserting and expanding the identity
	\begin{equation}\label{eq:particle-partition-of-unity}
		\1 = \prod_{k=1}^N \bigl(\1_{\Omega}(\bx_k) + \1_{\Omega^c}(\bx_k)\bigr).
	\end{equation}
\end{exc}

\subsection{Identical particles and quantum statistics}\label{sec:mech-QM-statistics}

	It will turn out that our model of matter \eqref{eq:many-body-Hamiltonian-matter}
	as presently formulated is actually \emph{unstable} with $N+M \to \infty$.
	Although it is very hard to see it in \eqref{eq:many-body-Hamiltonian-matter},
	and indeed we have not yet even settled stability for the hydrogen atom $N=M=1$,
	a picture one could keep in mind for now is that of a single atom with
	a large nucleus (or charge $Z \gg 1$) and many electrons 
	(say $N = Z$ to make the system neutral),
	which, if we may ignore their mutual Coulomb repulsion,
	would all prefer to sit in the tightest orbit with the lowest energy,
	and this turns out to diverge too fast with $N$ for stability.
	(You could at this stage think of the atom's energy levels as similar to the 
	harmonic oscillator energy levels,
	though in the attractive Coulomb potential of the nucleus they will be negative and 
	accumulating to zero, with the lowest one proportional to $-Z^2 \sim -N^2$.)
	However, this picture turns out not to be the correct one,
	not only because of the neglected Coulomb repulsion terms, but
	because of an additional \emph{fundamental} property of electrons which 
	was not visible classically.
	Namely, apart from the uncertainty principle arising as 
	a concequence of the
	non-commutativity relations of operator observables,
	an additional 
	pair of intimately related and
	fundamentally new concepts brought in by quantum mechanics 
	is that of \keyword{identical particles} and \keyword{quantum statistics}.

	We have already assumed in \eqref{eq:many-body-Hamiltonian-ident} and 
	\eqref{eq:many-body-Hamiltonian-matter} that all $N$ (or $M$) particles
	are of the same kind, for example electrons, which means that they all have the
	exact same physical properties such as mass and charge, 
	and therefore behave in the exact same way.
	In fact no measurement can ever distinguish 
	one such particle from another.
	In quantum mechanics, where the uncertainty principle sets fundamental
	(logical) limits to distinguishability,
	this becomes a very important logical distinction
	since the particles must therefore be treated as \keyword{logically identical}.
	In particular, 
	considering the probability density of an $N$-body state 
	$\Psi \in L^2((\R^d)^N)$
	of such \keyword{indistinguishable particles}, 
	we must have a symmetry upon exchanging two particles $j$ and $k$,
	\begin{equation}\label{eq:particle-exchange-amplitude}
		|\Psi(\bx_1, \ldots, \bx_j, \ldots, \bx_k, \ldots, \bx_N)|^2 
		= |\Psi(\bx_1, \ldots, \bx_k, \ldots, \bx_j, \ldots, \bx_N)|^2, 
		\quad j \neq k,
	\end{equation}
	since we cannot tell which one is which.
	However, since $\Psi$ takes values in $\C$, 
	this relation involving only the amplitude
	would still allow for a phase difference, 
	\begin{equation}\label{eq:particle-exchange-phase}
		\Psi(\bx_1, \ldots, \bx_j, \ldots, \bx_k, \ldots, \bx_N)
		= e^{i\theta_{jk}} \Psi(\bx_1, \ldots, \bx_k, \ldots, \bx_j, \ldots, \bx_N),
		\quad j \neq k.
	\end{equation}
	One may then argue 
	that a double exchange does nothing to the state
	(the square of a transposition is the identity)
	so $(e^{i\theta_{jk}})^2 = 1$, that is $\theta_{jk} = 0$ or $\pi$.
	Further, by considering expectation values of symmetric $N$-particle operator observables
	one may also realize that these phases cannot depend on $j$ and $k$,
	and the only possibility is then that $\Psi$ satisfies either
	\begin{equation}\label{eq:particle-exchange-bosons}
		\Psi(\bx_{\sigma(1)}, \bx_{\sigma(2)}, \ldots, \bx_{\sigma(N)})
		= \Psi(\bx_1, \bx_2, \ldots, \ldots, \bx_N),
	\end{equation}
	for any permutation $\sigma \in S_N$,
	in which case we refer to these $N$ identical particles as \keyword{bosons},
	or
	\begin{equation}\label{eq:particle-exchange-fermions}
		\Psi(\bx_{\sigma(1)}, \bx_{\sigma(2)}, \ldots, \bx_{\sigma(N)})
		= \sign(\sigma) \Psi(\bx_1, \bx_2, \ldots, \ldots, \bx_N),
	\end{equation}
	for which the particles are instead called \keyword{fermions}.
	Hence, this amounts to a reduction of the full Hilbert space
	$\cH = L^2(\R^{dN})$
	of (distinguishable) $N$-body states into the \emph{symmetric} subspace
	$$
		\cH_\sym = L_\sym^2((\R^d)^N) := \bigl\{ \Psi \in L^2(\R^{dN}) : 
			\text{$\Psi$ satisfies \eqref{eq:particle-exchange-bosons} $\forall \sigma \in S_N$} \bigr\}
			\cong \bigotimes\nolimits_\sym^N \gH, 
	$$
	or the \emph{antisymmetric} subspace
	$$
		\cH_\asym = L_\asym^2((\R^d)^N) := \bigl\{ \Psi \in L^2(\R^{dN}) : 
			\text{$\Psi$ satisfies \eqref{eq:particle-exchange-fermions} $\forall \sigma \in S_N$} \bigr\}
			\cong \bigwedge\nolimits^N \gH.
	$$
	
	The above argument to settle the phase ambiguity of $\Psi$ under
	particle exchange
	was the standard one in the first half of a century after the
	invention of quantum mechanics,
	and is in fact still today commonly applied in physics textbooks without 
	further discussion.
	However, in the 1970's it was clarified 
	(see \cite{LeiMyr-77}, or e.g. \cite{Myrheim-99} for review\footnote{%
	Note however 
	that there were plenty of earlier hints, and the story 
	actually goes back all the way to Gibbs' classical statistical mechanics; 
	see e.g. \cite{Froehlich-90} and references therein, as well as \cite[p.~386]{Souriau-70}.})
	that this is actually not the appropriate way to think about the
	problem for identical particles, but 
	rather that the \emph{classical} configuration space 
	$\phX^{d \times N} = \R^{dN}$
	should be replaced with 
	the \emph{symmetrized} one
	$$
		\phX_\sym^{d \times N} := \left( (\R^d)^N \setminus \bDelta \right) \Big/ S_N,
	$$
	simply because there is no way to distinguish the particles even classically.
	Here 
	we have first removed the set of particle coincidences\footnote{%
	This can be motivated by the fact that if some positions exactly coincide 
	then we cannot tell if there really are $N$ particles, which is what we want to 
	consider here. It may also be justified a posteriori \cite{BorSor-92}.},
	i.e. the \keyword{fat diagonal}\index{diagonal}\footnote{The 
	\emph{thin} diagonal would be the points $\sx \in \R^{dN}$ 
	such that $\bx_1 = \bx_2 = \ldots = \bx_N$.}
	of the configuration space
	\begin{equation}\label{eq:fat-diagonal}
		\bDelta 
		:= \{ (\bx_1,\ldots,\bx_N) \in (\R^d)^N : \text{$\exists\ j \neq k$ s.t. $\bx_j = \bx_k$} \},
	\end{equation}
	and then taken the quotient under the action of the group $S_N$ 
	of particle permutations,
	$$
		\sigma\colon (\bx_1,\ldots,\bx_N) \mapsto (\bx_{\sigma(1)}, \bx_{\sigma(2)}, \ldots, \bx_{\sigma(N)}),
	$$
	to obtain the set of \keyword{proper $N$-point subsets} of $\R^d$:
	$$
		\phX_\sym^{d \times N} = \bigl\{ A = \{\bx_1,\ldots,\bx_N\} \subseteq \R^d : |A|=N \bigr\}.
	$$
	This is the natural space of configurations for $N$ truly indistinguishable 
	\index{indistinguishable particles}
	(logically identical) particles,
	and while it changes very little on the classical side,
	e.g.\ only marginally the space where one may define potentials $V(\sx)$
	since they anyway have to be symmetric under permutations, it does affect the 
	possible quantizations of the free kinetic energy 
	$T(\ssp) = (2m)^{-1}|\ssp|^2$,
	which turn out to depend on the non-trivial \emph{topology} 
	of this configuration space.
	In particlar, particle exchange no longer makes sense as a permutation of indices
	(note that this is 
	the identity operation on the quotient 
	$\phX_\sym^{d \times N}$)
	but should instead be considered as a \emph{continuous} 
	operation which relates points in the configuration space.
	The consequences of this approach have by now been studied in detail 
	and are quite well understood even on a strict mathematical level 
	(although some important questions still remain open). 
	To give the full story would be a course in itself, however,
	and we will only state the main points here, 
	guided by the following simple example:
	
\begin{example}[Two identical particles]
	Consider just one pair of particles on $\R^d$ 
	whose configuration space in the distinguishable case is 
	$\phX^{d \times 2} = \R^d \times \R^d \ni (\bx_1,\bx_2)$,
	while the indistinguishable one is
	$$
		\phX_\sym^{d \times 2} = \Bigl( 
			\R^d \times \R^d \setminus \bDelta 
			\Bigr)\Big/_{\!\sim} \ ,
		\qquad
		\bDelta = \{(\bx,\bx) : \bx \in \R^d \},
	$$
	with the identification $(\bx_1,\bx_2) \sim (\bx_2,\bx_1)$ of the particles.
	The geometry of this space becomes more transparent upon 
	changing to COM and relative coordinates, 
	$(\bX,\br) \in \phX_\COM \times \phX_\rel$.
	The space then separates into
	$\phX_{(\sym)}^{d \times 2} \cong \phX_\COM \times \phX_\rel^{(\sym)}$
	where $\phX_\COM = \R^d$,
	and the distinguishable relative space is
	$\phX_\rel = \R^d$ while the indistinguishable one is
	$\phX_\rel^\sym = (\R^d \setminus \{0\})/_\sim$
	(note that $\bDelta \cong \phX_\COM \times \{0\}$),
	with the antipodal identification $\br \sim -\br$.
	The relative space may finally be parameterized in terms 
	of the pairwise distance $r > 0$
	and a relative angle $\omega \in \S^{d-1}/_\sim$.
	Note that the topology of $\phX_\rel^\sym$ varies 
	markedly with dimension, namely consider the
	fundamental group $\pi_1$ of this topological space:
	\begin{align*}
		\pi_1(\phX_\rel^\sym) \cong \pi_1(\S^{d-1}/_\sim)
		= \left\{ \begin{array}{ccll}
			\pi_1(\R P^{d-1}) &\cong& \Z_2, \qquad & d \ge 3, \\
			\pi_1(\S^1) &\cong& \Z, \qquad & d = 2, \\
			\pi_1(\{1\}) &\cong& 1, \qquad & d = 1.
			\end{array}\right.
	\end{align*}
	This group is by definition the group of continuous loops in the space
	modulo continuous deformations (homotopy equivalences),
	and it is exactly this group which describes the non-trivial
	\emph{continuous} particle exchanges, amounting to continuous loops 
	or particle trajectories
	$\gamma \subset \phX_\rel^\sym$ modulo any loops that are topologically trivial. 
	If we assign a complex phase $e^{i\theta}$ to a \emph{simple} 
	such non-trivial exchange loop,
	i.e.\ to the generator $\tau$ of the group $\pi_1(\phX_\rel^\sym)$, 
	then in the case $d \ge 3$, where $\tau^2 = 1$ and the group is $\Z_2$,
	we must for consistency have that $e^{i2\theta} = 1$
	and hence either $\theta = 0$ or $\theta = \pi$.
\end{example}
	
	Now, very briefly, 
	in the general $N$-particle case
	the possibilities for the free kinetic energy operator depend 
	in a similar way critically
	on the dimension $d$ of the one-particle space, and in particular on the 
	fundamental group $\pi_1$ of the $N$-particle configuration space 
	(again the non-trivial continuous particle exchanges are described by
	the group of loops in the configuration space modulo homotopy equivalences).
	This is 
	$$
		\pi_1(\phX_\sym^{d \times N}) = \left\{ \begin{array}{ll}
			S_N, \qquad & d \ge 3, \\
			B_N, \qquad & d = 2, \\
			1, \qquad & d = 1,
			\end{array}\right.
	$$
	where $B_N$ is called the \keyword{braid group} on $N$ strands, 
	and $1$ is the trivial group.
	The possible quantizations $\hT$ such that they reduce locally
	to the usual one for free distinguishable particles are then labeled by 
	(irreducible, unitary) representations of these groups as complex phases,
	i.e.\ homomorphisms
	\begin{equation}\label{eq:particle-exchange-rep}
		\rho\colon \pi_1(\phX_\sym^{d \times N}) \to U(1),
	\end{equation}
	which in the case $d \ge 3$ of the permutation group
	reduces to only two possibilities:
	\begin{equation}\label{eq:particle-exchange-rep-3D}
		\rho = 1 \ \text{(the trivial representation)}, 
		\qquad \text{or} \qquad \rho = \sign.
	\end{equation}
	These can be shown to correspond to the above-defined bosons 
	respectively fermions, namely, after choosing one of these
	representations one may in fact extend the configuration space again to obtain 
	$\R^{dN} \setminus \bDelta$
	with the corresponding $N$-body wave functions satisfying either
	\eqref{eq:particle-exchange-bosons} or \eqref{eq:particle-exchange-fermions},
	and after closing up the space $\overline{\R^{dN} \setminus \bDelta} = \R^{dN}$
	one is finally left with $\cH_\sym$ or $\cH_\asym$.
	
	On the other hand,
	in the case $d=2$ it turns out
	one has a full unit circle of possibilities:
	$$
		\rho(\tau_j) = e^{i\alpha\pi}, \qquad \alpha \in [0,2) \ \ \text{(periodic)},
	$$
	with $\alpha$ the same for each of the generators 
	$\tau_j$, $j=1,2,\ldots,N-1$, of the group $B_N$ 
	(see Exercise~\ref{exc:braid-relations}).
	The corresponding particles are called \keyword{anyons} 
	(as in `\emph{any} phase' \cite{Wilczek-82b})
	with \keyword{statistics parameter} $\alpha$,
	and in the case $\alpha=0$ one again has bosons in the above common sense 
	and for $\alpha=1$ fermions.
	We shall denote by $\hT_{\alpha}$
	the free kinetic energy operator for anyons, 
	and it turns out that there are two equivalent ways to model them rigorously:
	either by means of topological boundary conditions,
	known in the literature as the \keyword{anyon gauge picture}
	(see \cite{MunSch-95,DelFigTet-97} and below for a proper definition),
	or using ordinary bosons $\Psi \in L^2_\sym(\R^{2N})$ or fermions $\Psi \in L^2_\asym(\R^{2N})$
	but with a peculiar (topological) magnetic interaction,
	which is known as the \keyword{magnetic gauge picture}
	(see \cite[Section~2.2]{LunSol-14} and \cite[Section~1.1]{LarLun-16}
	for a proper definition).
	
	Finally,
	in the one-dimensional case it looks as if there are no non-trivial choices
	since the fundamental group is trivial 
	(the space $\phX_\sym^{1 \times N}$ is simply connected,
	geometrically having the form of a wedge-shaped subset of $\R^N$),
	however in this case there are other ambiguities leading to different
	quantizations $\hT$
	(see e.g.\ \cite{Polychronakos-99,Myrheim-99} for physical reviews,
	and \cite[Section~2.1]{LunSol-14} for mathematical details).
	Intuitively, this is because a continuous exchange of two particles on the real line
	$\R$ necessarily leads to a collision and therefore one needs to prescribe 
	what happens at the collision points, 
	while more formally it is because the removal of the diagonals
	$\bDelta$ introduces boundaries in the configuration space and thus
	demands the specification of boundary conditions.
	However, most if not all of the known quantizations 
	can in fact be modeled using bosons or fermions
	together with some choice of pair interactions $V$, 
	and thus we will in one dimension only consider the usual bosons \eqref{eq:particle-exchange-bosons}
	or fermions \eqref{eq:particle-exchange-fermions}.
	
\begin{remark}
	The observable incorporation 
	\eqref{eq:particle-exchange-amplitude}
	of the indistinguishability of particles, 
	and its lifting to the phase ambiguity
	\eqref{eq:particle-exchange-phase},
	or in general \eqref{eq:particle-exchange-rep},
	may be seen as a
	consequence of the definition of quantum states in axiom~\ref{axm:states}, 
	namely that 
	any state is only defined up to an equivalent ray in the Hilbert space $\cH$.
	Taking into account the overall normalization of the state $\Psi \in \cH$,
	the position observable $\sx \in \phX_\sym^{d \times N}$,
	and its projection operator $P^{\hsx}_\Omega$, 
	which together determine the amplitude of $\Psi$ at every point,
	the only remaining ambiguity is the pointwise phase of $\Psi$ 
	(see below for further details).
	
	Also note that, because of the symmetry \eqref{eq:particle-exchange-amplitude},
	the one-body density \eqref{eq:def-density} simplifies to
	$$
		\varrho_\Psi(\bx) = N \int_{\R^{d(N-1)}} 
		|\Psi(\bx,\bx_1, \ldots, \bx_{N-1})|^2 \,d\bx_1 \ldots d\bx_{N-1}
	$$
	for indistinguishable particles.
\end{remark}
	
\begin{remark}\label{rem:spin}
	A further complication, which we will not find room to discuss in detail here, 
	is the concept of \keyword{spin}.
	Namely, relativistic quantum mechanics
	predicts that there must be additional geometric degrees of freedom associated
	to every particle in the form of
	a representation of angular momentum, 
	labelled by its spin quantum number (cf. Exercise~\ref{exc:spin}),
	and furthermore that there is a direct connection between 
	spin and statistics; see e.g. \cite{Froehlich-90} for review.
	We will return to 
	some consequences of this theory
	in Section~\ref{sec:exclusion-weaker}.
\end{remark}

\begin{remark*}\label{rem:anyons-fiber-bundles}
	The proper geometric setting to think about the above quantization problem 
	for identical particles is in
	the language of \keyword{fiber bundles} and \keyword{connections}
	(see e.g. \cite{Nakahara-03}).
	Namely, locally 
	$\Psi\colon \Omega \subseteq \R^{dN} \to \C$ is a function,
	but globally $\Psi$ is a section of a complex line bundle over the configuration 
	space $\phX_\sym^{d \times N}$.
	If assumed to be \keyword{locally flat} 
	(which physically means that if the particles are not
	moving too much then they are certainly distinguishable 
	and should thus have the usual free kinetic energy $\hT$) 
	such bundles/connections are fully classified by the maps \eqref{eq:particle-exchange-rep}.
	Furthermore, just as we may have reason to consider 
	a larger Hilbert space
	$\cH = L^2(\R^{dN};\C^n)$
	for distinguishable particles,
	where $\C^n$ is called an \keyword{internal space},
	containing additional degrees of freedom on top of the spatial ones,
	the one-dimensional fiber $\C$ may also be changed to $\C^n$ 
	(or possibly even some infinite-dimensional Hilbert space). 
	This leads then
	for $d=2$ to the notion of \keyword{non-abelian anyons}, 
	which are classified by irreducible unitary representations
	\begin{equation}\label{eq:particle-exchange-rep-nonabelian}
		\rho\colon B_N \to U(n)
	\end{equation}
	(the ones considered in \eqref{eq:particle-exchange-rep} 
	with $n=1$ are in fact \keyword{abelian anyons}
	since phases commute;
	a similar generalization also exists in the case $d \ge 3$ 
	but one may then argue that 
	it can be incorporated into the frameworks of ordinary bosons and fermions 
	\cite{Froehlich-90,DopRob-90}).
	
	Let us consider how the above notions arise
	starting strictly from the axioms of quantum mechanics.
	We thus attempt to sketch the formal procedure here, although we are not aware 
	of it having been done in complete detail elsewhere
	(see however \cite{MueDoe-93,DoeGroHen-99,DoeStoTol-01}).
	Assume generally that we have been given a configuration space manifold $\phX$
	which contains the observable positions $\sx \in \phX$ of the system
	and whose topology describes how such positions are logically related.
	Consider the Borel subsets $\Omega \subseteq \phX$,
	and the corresponding observables $``\sx \in \Omega" \in \cO$.
	These should be represented by self-adjoint operators
	$\widehat{``\sx \in \Omega"} \in \cL(\cH)$ on some Hilbert space $\cH$,
	and must have eigenvalues $0$ (false) or $1$ (true) to represent the outcome 
	of such a measurement. In other words, these 
	$\widehat{``\sx \in \Omega"} = P^{\hsx}_\Omega \in \cB(\cH)$ 
	are in fact projection operators on $\cH$.
	Considering the ranges of such projection operators,
	$$
		\gh_\Omega := P^{\hsx}_\Omega \cH \ \subseteq \cH,
	$$
	we thus have 
	a correspondence
	$$
		\begin{array}{ccl}
			\gh_\Omega & \subseteq & \cH \\
			\updownarrow \\
			 P^{\hsx}_\Omega & \in & \cB(\cH) \\
			\updownarrow \\
			\Omega & \subseteq & \phX
		\end{array}
	$$
	between Borel subsets of $\phX$ and closed subspaces of $\cH$.
	
	Now take a smaller subset $\Omega' \subseteq \Omega$ and observe that clearly
	$$
		``\sx \in \Omega'" \Rightarrow ``\sx \in \Omega"
	$$
	so that the information brought by the former observable is finer than
	that brought by the latter,
	implying $P^{\hsx}_{\Omega'} P^{\hsx}_\Omega = P^{\hsx}_{\Omega'}$,
	and therefore $\gh_{\Omega'} \hookrightarrow \gh_\Omega$.
	Also, one may consider two subsets $\Omega_i, \Omega_j \subseteq \phX$,
	and in the case that they overlap $\Omega_i \cap \Omega_j \neq \emptyset$
	we have two diagrams
	\begin{equation}\label{eq:fiber-subsets}
		\begin{array}{c}
			\Omega_i \quad \Omega_j \\
			\rotatebox{90}{$\subseteq$} \quad \ \rotatebox{90}{$\subseteq$} \\
			\Omega_i \cap \Omega_j
		\end{array}
		\qquad \text{and} \qquad
		\begin{array}{c}
			\gh_{\Omega_i} \quad \gh_{\Omega_j} \\
			\rotatebox{90}{$\hookrightarrow$} \quad \ \rotatebox{90}{$\hookrightarrow$} \\
			\gh_{\Omega_i \cap \Omega_j}
		\end{array}.
	\end{equation}
	Thus, one may relate the space $\gh_{\Omega_i}$ to $\gh_{\Omega_j}$,
	and vice versa, via the intersecting space $\gh_{\Omega_i \cap \Omega_j}$.
	
	Consider now the particular system at hand, that is the configuration space
	$\phX = \phX_\sym^{d \times N}$ of $N$ indistinguishable particles on $\R^d$.
	We start from the classical expression for the kinetic energy
	$$
		T(\ssp) = \frac{1}{2m} \sum_{j=1}^N |\bp|^2 \quad \in \cO,
	$$
	where $\ssp \in T^*_\sx(\phX) \cong \R^{dN}$ is the cotangent vector at 
	$\sx \in \phX$, and wish to look for quantizations $\hT$ represented on a
	corresponding Hilbert space $\cH$.
	We require that any such quantization must reduce to the usual one 
	\eqref{eq:many-body-kinetic-op} for
	distinguishable particles as soon as the particles indeed are distinguishable.
	In other words, upon restricting to a small enough subset $\Omega \subseteq \phX$
	--- thereby imposing the knowledge with certainty 
	that the particles are distinguishable ---
	we may consider the corresponding subspace $\gh_\Omega = P^{\hsx}_\Omega \cH$ 
	as sitting in some Hilbert space $\cH_\textup{dist}$
	of distinguishable particles.
	By the Stone--von Neumann uniqueness theorem 
	(see Remark~\ref{rem:Stone-von-Neumann}),
	with $\sx \in \R^{dN}$ and $\ssp \in \R^{dN}$ satisfying the CCR 
	\eqref{eq:classical-CCR}/\eqref{eq:quantum-CCR}
	and represented as self-adjoint operators on $\cH_\textup{dist}$,
	we must have $\cH_\textup{dist} \cong L^2(\R^{dN}) \otimes \cF$
	for some Hilbert space $\cF$ on which any remaining observables $a \in \cO$
	of the system may be represented
	(we are here relaxing the irreducibility requirement in the Schr\"odinger
	representation in order to be as general as possible
	and allow for other observables).
	
	More precisely, and in order to also take operator domain issues into account,
	we may initially consider states $\Psi \in \cH$ that are completely localized 
	on topologically trivial subsets
	$\phX \supseteq \Omega_j \hookrightarrow \R^{dN} \setminus \bDelta$
	of configurations of distinguishable particles.
	We thus take such $\Psi \in \gh_{\Omega_j}$ and furthermore demand that 
	$0 \le \inp{ T(\hsp) }_\Psi < \infty$ in order for such states to be physical.
	The Hilbert space on which we represent $\hsx$ as a multiplication operator
	and $\hsp = -i\hbar\nabla$ as a differentiation operator, 
	in order to implement the 
	CCR and the Weyl algebra, is then 
	$L^2(\Omega_j;\cF_j) \cong L^2(\Omega_j) \otimes \cF_j \subseteq \cH_\textup{dist}$
	for some undetermined space $\cF_j \subseteq \cF$.
	Taking the minimal domain, $\Psi \in C_c^\infty(\Omega_j;\cF_j)$,
	certainly ensures that states are physical and amounts, 
	upon taking the closure of such states, to a form domain
	$\cQ(\hT) = H^1_0(\Omega_j;\cF_j)$,
	i.e. the Dirichlet realization.
	On the other hand, one may also consider a \emph{maximal} domain, namely
	any states $\Psi \in L^2(\Omega;\cF_j)$ for which one can make finite sense of
	the form $\langle\Psi,\hT\Psi\rangle = (2m)^{-1}\norm{\nabla \Psi}^2$,
	which is $\cQ(\hT) = H^1(\Omega_j;\cF_j)$,
	the Neumann realization for distinguishable particles.
	Since the former domain sits in the latter, and we should try to be as
	unrestrictive as possible in our choices, let us then 
	only require that
	$\gh_{\Omega_j} \cong L^2(\Omega_j;\cF_j) \ni \Psi$
	with $(\hsx\Psi)(\sx) = \sx\Psi(\sx)$ and 
	$(\hsp\Psi)(\sx) = -i\hbar\nabla\Psi(\sx)$
	s.t. $\int_{\Omega_j} \|\nabla\Psi\|_{\cF}^2 < \infty$
	for any physical states $\Psi$.
	
	Hence we have obtained for each topologically trivial $\Omega_j \subseteq \phX$
	an isomorphism
	$\gh_{\Omega_j} \cong L^2(\Omega_j) \otimes \cF_j$.
	Let us then cover $\phX$ by a finite collection $\{\Omega_j\}_{j \in J}$ 
	of suitable such subsets.
	In the case of $N=2$ we may for example choose as $\Omega_j$,
	in terms of relative coordinates $(\bX,\br)$,
	$$
		\Omega_j = \Bigl\{ [(\bx_1,\bx_2)] \in \phX_\sym^{d \times 2} : 
			\bX \in \R^d, \ \br = r\omega, \ r>0, \ \omega \in B_\eps(\bn_j) \cap \S^{d-1} \Bigr\}
	$$
	where $\{\bn_j\} \subseteq \S^{d-1}$ is a finite collection of unit vectors 
	and $\eps>0$ is small enough.
	Furthermore, 
	we may impose the symmetry in the system on the collection $\{\Omega_j\}_{j \in J}$
	by requiring that the subsets are related $\Omega_i = R_{ij} \Omega_j$
	via symmetry transformations $R_{ij}\colon \R^{dN} \to \R^{dN}$ 
	which extend to unitary operators
	$U_{ij}\colon \gh_{\Omega_j} \to \gh_{\Omega_i}$.
	We must therefore for all $i,j \in J$ have that
	$$
		L^2(\Omega_i) \otimes \cF_i \cong \gh_{\Omega_i}
		\cong \gh_{\Omega_j} \cong L^2(\Omega_j) \otimes \cF_j
	$$
	and hence, since $L^2(\Omega_i) \cong L^2(\Omega_j)$, 
	we find that $\cF_i \cong \cF_j$ $\forall i,j \in J$.
	Let us therefore denote this prototype \keyword{fiber Hilbert space} by $\cF$.
	
	Coming back to the general observation \eqref{eq:fiber-subsets},
	now with $\gh_{\Omega_j} \cong L^2(\Omega_j) \otimes \cF$ for each subset
	of the covering $\{\Omega_j\}$, we must on the subspace
	$$
		L^2(\Omega_i;\cF)
		\supseteq
		L^2(\Omega_i \cap \Omega_j;\cF)
		\hookleftarrow
		\gh_{\Omega_i \cap \Omega_j} 
		\hookrightarrow
		L^2(\Omega_i \cap \Omega_j;\cF)
		\subseteq
		L^2(\Omega_j;\cF)
	$$
	have an isomorphism acting locally in the fiber $\cF$,
	$$
		t_{ij}\colon L^2(\Omega_i \cap \Omega_j;\cF) \subseteq 
			L^2(\Omega_i;\cF) \to L^2(\Omega_j;\cF),
		\qquad
		t_{ij}(\sx) \in U(\cF).
	$$
	The data $(\{\Omega_j\},\{t_{ij}\},\cF)$ defines a fiber bundle
	$E \to \phX$ 
	with structure group $G = U(\cF)$ 
	whose geometry is encoded in the transition functions $\{t_{ij}\}$
	and the connection.
	Given that the connection is flat on each local piece $\Omega_j$
	(the operator $i\hsp$ is
	the usual gradient $\nabla$ on a piece of flat $\R^{dN}$) 
	and that transitions ought to be trivial whenever particles remain 
	distinguishable, we thus have a locally flat bundle, 
	whose only non-trivial geometry is classified using the non-trivial 
	topology of $\phX$, and more precisely 
	the fundamental group $\pi_1(\phX)$
	(a well-known correspondence; 
	see e.g. \cite{MueDoe-93} and \cite[Chapter~5]{Michiels-13} for details),
	by homomorphisms (representations)
	$$
		\rho\colon \pi_1(\phX) \to U(\cF).
	$$
	In the case 
	that $\cF \cong \C$ resp. $\cF \cong \C^n$ this then reduces to 
	\eqref{eq:particle-exchange-rep} resp. \eqref{eq:particle-exchange-rep-nonabelian}.
	In general the dimension $n$ of the fiber $\cF$ would depend on 
	whether there are additional observables in $\cO$ which could distinguish $n$.
	If there are no such observables, then we should, by demanding that 
	all states be distinguishable (irreducibility), simply take $n=1$.
	
	We considered above local sections 
	$\Psi \in L^2(\Omega_j;\cF) \subseteq \Gamma(\Omega_j,E)$
	which should be extended to \emph{continuous} global sections on $E$, 
	for example by taking the closure of smooth sections.
	The full Hilbert space is then the space of 
	square-integrable sections
	$$
		\cH = \left\{ \Psi \in \Gamma(\phX;E) :  \int_{\phX} \|\Psi\|_{\cF}^2 < \infty \right\},
	$$
	with the local requirement $\int_{\Omega_j} \|\nabla\Psi\|_{\cF}^2 < \infty$
	for physical states $\Psi \in \Gamma(\Omega_j,E)$ then lifting globally to
	yield the form domain
	$$
		\cQ(\hT) = \left\{ \Psi \in \cH : \int_{\phX} \|\nabla\Psi\|_{\cF}^2 < \infty \right\}.
	$$
	By its non-negativity, this form $q_{\hT}\colon \cQ(\hT) \to \R_+$
	finally defines for us a \emph{unique} 
	(given the bundle $E$, i.e.\ the representation $\rho$)
	non-negative self-adjoint 
	operator $\hT \in \cL(\cH)$, which reduces on each local domain $\Omega_j$
	to the usual free kinetic energy \eqref{eq:many-body-kinetic-op} 
	for distinguishable particles with domains 
	$\cQ(\hT|_{\gh_{\Omega_j}}) = H^1(\Omega_j;\cF)$ and
	$\cD(\hT|_{\gh_{\Omega_j}}) \subseteq H^2(\Omega_j;\cF)$.
	
	Note that the procedure of defining $\hT$ by means of the form 
	domain really does matter, namely 
	if we start for simplicity from the flat but punctured bundle 
	of distinguishable particles
	$E = (\R^{dN} \setminus \bDelta) \times \cF$
	with trivial transition functions
	and consider all possible operator extensions from the minimal domain 
	$C_c^\infty(\R^{dN} \setminus \bDelta;\cF)$,
	then those may (if $d \le 3$) include \keyword{point interactions} 
	\cite{Albeverio-etal-05,BorSor-92}.
	However, by considering instead the form domain as above one obtains
	for $d \ge 2$ uniquely the extension corresponding to free particles,
	with $\cQ(\hT) = H^1(\R^{dN};\cF)$ and $\cD(\hT) = H^2(\R^{dN};\cF)$
	\cite{LunSol-14}.
	
	We should finally remark that the choice of observables employed above 
	corresponds to the usual Schr\"odinger representation,
	however one could alternatively start from a different choice of exchange-symmetric 
	observables and arrive at a \emph{different} quantization
	(this is sometimes referred to as the \keyword{Heisenberg representation};
	see e.g. \cite{Myrheim-99}).
\end{remark*}

\begin{exc}\label{exc:braid-relations}
	The braid group $B_N$ can be defined as the abstract group generated by elements 
	$\tau_j$, $j=1,2,\ldots,N-1$, satisfying the \keyword{braid relations}
	$$
		\tau_j \tau_k = \tau_k \tau_j, \ \ \text{for} \ \ |j-k| \ge 2,
		\qquad \text{and} \qquad
		\tau_j \tau_{j+1} \tau_j = \tau_{j+1} \tau_j \tau_{j+1}, \ \ 1 \le j \le N-2,
	$$
	(the latter are also called \keyword{Yang-Baxter relations}).
	Show that these relations imply that if $\rho\colon B_N \to U(1)$
	is a representation, i.e. $\rho(xy) = \rho(x)\rho(y)$,
	then it is 
	uniquely defined by its values on the generators 
	$\rho(\tau_j) = e^{i\alpha_j\pi}$, $j=1,2,\ldots,N-1$,
	and furthermore 
	$e^{i\alpha_1\pi} = \ldots = e^{i\alpha_{N-1}\pi} =: e^{i\alpha\pi}$.
	Also, show that if the additional relations $\tau_j^2 = 1$, $j=1,\ldots,N-1$, 
	are added then the resulting group is $S_N$ and that the only options for
	$\rho$ are then \eqref{eq:particle-exchange-rep-3D}.
\end{exc}

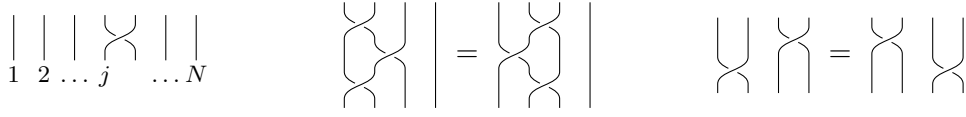
\begin{figure}[t]\label{fig:braids}
	\centering
	\begin{tikzpicture}[scale=0.4,font=\footnotesize,anchor=mid,baseline={([yshift=-.5ex]current bounding box.center)}]
		\braid[number of strands=7] s_4^{-1};
		\node at (1, -2.0) {$1$};
		\node at (2, -2.0) {$2$};
		\node at (3, -2.0) {$\ldots$};
		\node at (4, -2.0) {$j$};
		\node at (6, -2.0) {$\ldots$};
		\node at (7, -2.0) {$N$};
	\end{tikzpicture}
	\qquad
	\qquad
	\begin{tikzpicture}[scale=0.4,font=\footnotesize,anchor=mid,baseline={([yshift=-.5ex]current bounding box.center)}]
		\braid[number of strands=4] s_1^{-1} s_2^{-1} s_1^{-1};
	\end{tikzpicture}
	\ = \ 
	\begin{tikzpicture}[scale=0.4,font=\footnotesize,anchor=mid,baseline={([yshift=-.5ex]current bounding box.center)}]
		\braid[number of strands=4] s_2^{-1} s_1^{-1} s_2^{-1};
	\end{tikzpicture}
	\qquad
	\qquad
	\begin{tikzpicture}[scale=0.4,font=\footnotesize,anchor=mid,baseline={([yshift=-.5ex]current bounding box.center)}]
		\braid[number of strands=4] s_3^{-1} s_1^{-1};
	\end{tikzpicture}
	\ = \ 
	\begin{tikzpicture}[scale=0.4,font=\footnotesize,anchor=mid,baseline={([yshift=-.5ex]current bounding box.center)}]
		\braid[number of strands=4] s_1^{-1} s_3^{-1};
	\end{tikzpicture}
	\caption{Braid diagrams corresponding to the generator $\tau_j$ of $B_N$, 
		i.e. a counterclockwise exchange of particles/strands $j$ and $j+1$
		with time running upwards,
		and the relations $\tau_1 \tau_2 \tau_1 = \tau_2 \tau_1 \tau_2$
		respectively $\tau_1 \tau_3 = \tau_3 \tau_1$ of $B_4$.}
\end{figure}

\begin{exc}\label{exc:braid-enclosed-phases}
	We may represent a general particle exchange in two dimensions, 
	or an element of $B_N$, 
	using a \keyword{braid diagram}, i.e.\ an arbitrary composition of 
	\keyword{simple braids}
	of $N$ strands, with each simple braid $\tau_j$ formed by 
	taking the $j$:th strand over the $j+1$:st strand;
	see Figure~\ref{fig:braids}.
	Show using such braid diagrams
	that if the phase associated to a simple two-particle exchange 
	is $\rho(\tau_j) = e^{i\alpha\pi}$,
	then the phase that will arise as one particle encircles $p$ other
	particles in a simple loop is $e^{i2p\alpha\pi}$, 
	while if a pair of particles is exchanged once and in the exchange loop
	they enclose $p$ other particles then the phase must be $e^{i(2p+1)\alpha\pi}$.
\end{exc}

\begin{exc}\label{exc:braid-S2}
	Consider the corresponding braid group defined with the particles on the 
	surface of the sphere $S^2$ instead of $\R^2$.
	Show that in this case there is an extra topological condition on the generators,
	$$
		\tau_1 \tau_2 \ldots \tau_{N-1} \tau_{N-1} \ldots \tau_2 \tau_1 = 1,
	$$
	and determine the possible values of $\alpha$.
\end{exc}

\section{Uncertainty principles\lect{ [6-10]}}\label{sec:uncert}

	We will now investigate the most important feature of quantum mechanics
	as compared to classical mechanics, 
	namely the consequence of the non-commutativity
	of position and momentum observables
	referred to as the \keyword{uncertainty principle}.
	Namely, as made famous by Heisenberg, 
	for two non-commuting
	observables $a$ and $b$ and a state $\psi \in \cH$,
	we have that
	$$
		\inp{\bigl(\ha - \langle\ha\rangle_\psi\bigr)^2}_\psi
		\inp{\bigl(\hb - \langle\hb\rangle_\psi\bigr)^2}_\psi
		\ge \frac{1}{4}\left| \inp{[\ha,\hb]}_\psi \right|^2
		= \frac{\hbar^2}{4}\left| \inp{\widehat{\{a,b\}}}_\psi \right|^2.
	$$
	On the l.h.s.\ stands the product of the variances of a measurement of 
	$a$ and $b$,
	while the r.h.s.\ depends on the quantization of the observable
	$\{a,b\}$ and its expectation in $\psi$,
	which in the case that $\{a,b\} = \text{const.} \neq 0$ is
	strictly positive independently of $\psi$.
	Therefore this inequality has the physical interpretation 
	that it is \emph{impossible} to determine
	the value of both $a$ and $b$ simultaneously to arbitrary precision.
	The most important case is the canonical position and momentum operators,
	$\hx_j$ and $\hp_k$,
	and we will in this section formulate various versions of the 
	uncertainty principle involving these operators.
	We set $\hbar=1$ for simplicity.

\begin{exc}
	Prove the more general relation, known as the 
	\keyword{Robertson--Schr\"odinger uncertainty relation},
	(with $\psi$ in a common dense domain of all the operators)
	$$
		\inp{\bigl(\ha - \langle\ha\rangle_\psi\bigr)^2}_\psi
		\inp{\bigl(\hb - \langle\hb\rangle_\psi\bigr)^2}_\psi
		\ge \left| \frac{1}{2}\inp{\ha\hb + \hb\ha}_\psi 
				- \langle\ha\rangle_\psi \langle\hb\rangle_\psi \right|^2
			+ \left| \frac{1}{2}\inp{[\ha,\hb]}_\psi \right|^2,
	$$
	by considering $\inp{f,g}$ with 
	$f = (\ha-\langle \ha \rangle_\psi)\psi$ and
	$g = (\hb-\langle \hb \rangle_\psi)\psi$.
\end{exc}

\subsection{Heisenberg}\label{sec:uncert-Heisenberg}

	Recall the canonical commutation relations~\eqref{eq:quantum-CCR},
	$$
		[\hx_j,\hp_k] = i \delta_{jk} \1.
	$$
	In particular, 
	$$
		i (\hat{\bp} \cdot \hat{\bx} - \hat{\bx} \cdot \hat{\bp}) 
		= i \sum_{j=1}^d(\hat{p}_j \hat{x}_j - \hat{x}_j \hat{p}_j)
		= \sum_{j=1}^d \delta_{jj} \1 = d\1,
	$$
	or in the usual Schr\"odinger representation 
	($\nabla$ here acts as an operator on everything to the right)
	\begin{equation}\label{eq:nabla-x-commutator}
		\nabla \cdot \bx - \bx \cdot \nabla = d\1.
	\end{equation}
	This identity can be used together with the Cauchy-Schwarz inequality to prove
	the most famous version of the uncertainty principle of quantum mechanics:

\begin{theorem}[\keyword{Heisenberg's uncertainty principle}]
	For any $\psi \in L^2(\R^d)$ 
	with $\|\psi\|_{L^2} = 1$, we have
	\begin{equation}\label{eq:Heisenberg}
		\langle \psi, \hat{\bp}^2 \psi \rangle 
		\langle \psi, \hat{\bx}^2 \psi \rangle \ge \frac{d^2}{4}.
	\end{equation}
\end{theorem}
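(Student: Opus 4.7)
The plan is to derive the inequality directly from the commutator identity~\eqref{eq:nabla-x-commutator} combined with the Cauchy--Schwarz inequality in $L^2(\R^d)$, following the hint provided by the text. I will assume throughout that $\psi$ lies in $C_c^\infty(\R^d)$ so that all manipulations below are justified; the general case then follows by density, and the inequality is trivially satisfied whenever either factor on the left-hand side is infinite.

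First, I would pair the identity $\nabla\cdot\bx - \bx\cdot\nabla = d\,\1$ with $\psi$ and use $\norm{\psi}_{L^2}=1$ to write
\[
d = \inp{\psi,(\nabla\cdot\bx - \bx\cdot\nabla)\psi}.
\]
Rewriting in terms of $\hbp = -i\nabla$, the right-hand side becomes $i\inp{\psi,(\hbp\cdot\hbx - \hbx\cdot\hbp)\psi}$. Expanding componentwise and moving each $\hx_j$, $\hp_j$ across the inner product by self-adjointness (integration by parts, with no boundary contribution for test functions) yields
\[
d = 2\,\im \inp{\hbx\psi,\hbp\psi}_{L^2(\R^d;\C^d)}.
\]

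Next, I would apply the elementary bound $|{\im z}| \le |z|$ and then the Cauchy--Schwarz inequality~\eqref{eq:CS-L2} to obtain
\[
\frac{d}{2} \le \bigl|\inp{\hbx\psi,\hbp\psi}\bigr| \le \norm{\hbx\psi}_{L^2}\,\norm{\hbp\psi}_{L^2} = \sqrt{\inp{\psi,\hbx^2\psi}}\,\sqrt{\inp{\psi,\hbp^2\psi}},
\]
and squaring both sides produces~\eqref{eq:Heisenberg}.

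The only real obstacle is justifying the formal manipulations for general $\psi$, since $\hbx$ and $\hbp$ are unbounded. My strategy is to prove the inequality first for $\psi\in C_c^\infty(\R^d)$, where the commutator identity holds pointwise and the integration by parts is immediate, and then extend by density to the natural domain $\psi \in \cD(\hsx)\cap H^1(\R^d)$ on which both expectations $\inp{\psi,\hbx^2\psi}$ and $\inp{\psi,\hbp^2\psi}$ are finite; for $\psi$ outside this intersection at least one of the two factors on the left is $+\infty$, so the bound is vacuous.
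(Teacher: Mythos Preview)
Your proof is correct and follows exactly the approach the paper outlines: take expectation values of the commutator identity~\eqref{eq:nabla-x-commutator}, use self-adjointness to rewrite as $d = 2\,\im\inp{\hbx\psi,\hbp\psi}$, and then apply Cauchy--Schwarz, with the argument carried out first on $C_c^\infty(\R^d)$ and extended by density. This is precisely what the paper asks the reader to do in the exercise immediately following the theorem, so there is nothing to add.
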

\begin{remark}
	By means of the Fourier transform, the l.h.s. should be understood to be
	$$
		\int_{\R^d} |\nabla\psi(\bx)|^2 d\bx \int_{\R^d} |\bx \psi(\bx)|^2 d\bx 
		= \int_{\R^d} |\bp \hat\psi(\bp)|^2 d\bp \int_{\R^d} |\bx \psi(\bx)|^2 d\bx,
	$$
	which is finite if and only if both these integrals converge,
	i.e. $\psi \in H^1(\R^d) \cap \cD(\hbx)$.
	Moreover, by replacing $\psi(\bx)$ by $e^{i\bx \cdot \bp_0}\psi(\bx + \bx_0)$,
	the inequality may also be written
	\begin{equation}\label{eq:Heisenberg-shift}
		\inp{(\hbp - \bp_0)^2}_\psi \inp{(\hbx - \bx_0)^2}_\psi \ge \frac{d^2}{4}.
	\end{equation}
\end{remark}

	This formulation of the uncertainty principle is the
	original and most well-known one \cite{Heisenberg-27}, and indeed 
	it tells us that if the state 
	$\psi$ localizes around the origin so that the r.h.s. of the inequality
	\begin{equation}\label{eq:Heisenberg-inverse}
		\langle \psi, \hat{\bp}^2 \psi \rangle 
		\ge \frac{d^2}{4} \langle \psi, \hat{\bx}^2 \psi \rangle^{-1}
	\end{equation}
	tends to infinity, then this also implies a large momentum. However,
	as stressed e.g.\ in \cite{Lieb-76} (see also \cite{LieSei-09}),
	this is unfortunately not sufficient for proving stability 
	of the hydrogenic atom.
	Namely, the expectation value $\inp{\psi,\hbx^2\psi}$
	is a poor measure of how localized the state is,
	since it is possible to make this value very large
	without changing the kinetic energy much.
	Consider for example the state
	$$
		\psi(\bx) = \sqrt{1-\eps^2} u(\bx) + \eps v(\bx - \by),
	$$
	with $\eps \in (0,1)$, 
	$u,v \in C_c^\infty(B_1(0))$ normalized in $L^2$, 
	and $|\by|>2$.
	Then $\int_{\R^d} |\psi|^2 = 1$ and
	$$
		\int_{\R^d} |\bx|^2|\psi(\bx)|^2 \,d\bx
		= (1-\eps^2) \int_{B_1(0)} \!|\bx|^2|u(\bx)|^2 \,d\bx
			+ \eps^2 \int_{B_1(0)} \!|\by+\bz|^2|v(\bz)|^2 \,d\bz
		\ge \eps^2(|\by|-1)^2,
	$$
	while
	$$
		\int_{\R^d} |\nabla\psi(\bx)|^2 \,d\bx
		= (1-\eps^2) \int_{B_1(0)} |\nabla u(\bx)|^2 \,d\bx
			+ \eps^2 \int_{B_1(0)} |\nabla v(\bx)|^2 \,d\bx.
	$$
	Hence, we may take simultaneously $\eps \ll 1$ and $|\by| \gg \eps^{-1}$
	to make the r.h.s. of \eqref{eq:Heisenberg-inverse} small while the l.h.s. 
	stays essentially the same.
	
	Before considering formulations that are more useful for our stability problem, 
	we note that
	there is also the following version of the Heisenberg uncertainty principle
	which explains 
	that a state $\psi \in \cH \setminus \{0\}$ 
	and its Fourier transform
	$\hat\psi$ cannot both have compact support
	(see e.g. \cite[Theorem~7.12]{Teschl-14} for a proof,
	and e.g. 
	\cite{Benedicks-85,Hedenmalm-12}
	for various generalizations):

\begin{theorem} 
	Suppose $f \in L^2(\R^n)$. If both $f$ and $\hat{f}$ have compact
	support, then $f = 0$.
\end{theorem}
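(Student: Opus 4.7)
The plan is to use a Paley--Wiener type argument: if $\hat{f}$ has compact support, then $f$ extends to an entire function on $\C^n$, and if $f$ also has compact support, this entire function vanishes on an open set and must therefore be identically zero.

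First I would fix $R>0$ such that $\supp \hat{f} \subseteq \overline{B_R(0)}$. Since $\hat{f} \in L^2(\R^n)$ is compactly supported, it is in particular in $L^1(\R^n)$ (by Cauchy--Schwarz against $\1_{B_R(0)}$), so the Fourier inversion formula \eqref{eq:Fourier-inverse} makes sense pointwise and we may define, for $z = x + iy \in \C^n$,
\begin{equation}
F(z) := (2\pi)^{-n/2} \int_{B_R(0)} \hat{f}(\xi) e^{iz \cdot \xi}\, d\xi,
\end{equation}
where $z \cdot \xi := \sum_{j=1}^n z_j \xi_j$. The integrand is dominated in modulus by $|\hat{f}(\xi)| e^{R|y|}$, an integrable function on $B_R(0)$, so by Fubini's theorem and Morera's theorem (applied in each coordinate separately) $F$ is an entire function on $\C^n$. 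By \eqref{eq:Fourier-inverse}, $F(x) = f(x)$ for a.e.\ $x \in \R^n$, so after modifying $f$ on a null set we may assume $f = F|_{\R^n}$ is continuous.

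Now suppose in addition $\supp f$ is compact; then there exists a non-empty open ball $B \subseteq \R^n$ on which $f \equiv 0$, i.e.\ $F$ vanishes on $B$. The key step is then the identity principle for entire functions on $\C^n$: if $F$ is entire and vanishes on a non-empty open subset $B$ of $\R^n$, then $F \equiv 0$. To see this, pick $x_0 \in B$ and for each unit vector $v \in \R^n$ consider the entire function of one complex variable $g_v(\zeta) := F(x_0 + \zeta v)$. Since $g_v$ vanishes on a neighborhood of $0 \in \R$, the one-variable identity theorem gives $g_v \equiv 0$, so $F$ vanishes on the real line through $x_0$ in direction $v$; varying $v$ shows $F \equiv 0$ on $\R^n$, and iterating the argument coordinate-by-coordinate (or applying the Osgood/Hartogs-style one-variable argument in each complex coordinate) gives $F \equiv 0$ on $\C^n$. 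Hence $f = F|_{\R^n} = 0$ in $L^2$.

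The only subtle point --- the ``hard part'' --- is the identity theorem in several complex variables, but because $F$ vanishes on an open subset of $\R^n$ (not merely on some thin real-analytic set), the reduction to the one-variable identity theorem along real lines is straightforward; no deeper several-complex-variables machinery is needed. An alternative to this last step would be to expand $F$ in its power series around $x_0$, observe that all partial derivatives of $F$ vanish at $x_0$ (since $F \equiv 0$ on $B$), and conclude by the identity theorem that the power series, and hence $F$, is identically zero on $\C^n$.
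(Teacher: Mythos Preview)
Your argument is correct and is the standard Paley--Wiener approach. However, the paper does not actually supply its own proof of this statement: it simply cites \cite[Theorem~7.12]{Teschl-14} for a proof and \cite{Benedicks-85,Hedenmalm-12} for generalizations, so there is no in-paper argument to compare against. For what it is worth, the proof in Teschl proceeds along essentially the same lines as yours (entire extension via the inverse Fourier transform of the compactly supported $\hat{f}$, followed by the identity principle).

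One minor remark: once you have established $F \equiv 0$ on $\R^n$ via the one-variable identity theorem along real lines through $x_0$, you are already done; the further claim that $F \equiv 0$ on all of $\C^n$ is true but superfluous for concluding $f = 0$ in $L^2(\R^n)$.
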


	\noindent
	In other words, recalling our interpretations of 
	$|\psi(\sx)|^2$ respectively $|\hat\psi(\ssp)|^2$
	for a normalized state $\psi$ as probability densities for 
	the observables $\sx$ respectively $\ssp$,
	this theorem tells us that it is 
	impossible to know with certainty that
	both $\sx \in B_R(0)$ and $\ssp \in B_{R'}(0)$,
	regardless of the size of the radii $R,R' > 0$.

\begin{exc}
	Prove 
	\eqref{eq:Heisenberg} and \eqref{eq:Heisenberg-shift}
	for $\psi \in C_c^\infty(\R^d)$ by taking 
	expectation values of the relation~\eqref{eq:nabla-x-commutator}.
	How can this be extended to $L^2(\R^d)$?
\end{exc}

\begin{exc}
	Check that the \keyword{Gaussian wave packet}
	$$
		\psi(x) = (\lambda/\pi)^{1/4} e^{-\frac{\lambda}{2}|x-x_0|^2 + ip_0x},
	$$
	for any $\lambda > 0$, is normalized and
	realizes the minimum of \eqref{eq:Heisenberg-shift} in dimension $d=1$.
\end{exc}

\subsection{Hardy}\label{sec:uncert-Hardy}

	A more powerful version of the uncertainty principle is Hardy's inequality:

\begin{theorem}[{\keyword[Hardy inequality]{The Hardy inequality}}]\label{thm:Hardy}
	For any $u \in H^1(\R^d)$ in dimension $d \ge 2$, and for any 
	$u \in H_0^1(\R \setminus \{0\})$ in dimension $d=1$, we have that
	\begin{equation} \label{eq:Hardy}
		\int_{\R^d} |\nabla u(\bx)|^2 \,d\bx 
		\ge \frac{(d-2)^2}{4} \int_{\R^d} \frac{|u(\bx)|^2}{|\bx|^2} \,d\bx.
	\end{equation}
\end{theorem}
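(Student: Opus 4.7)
The plan is to prove the inequality first on a dense subspace of smooth, compactly supported test functions avoiding the origin, and then extend by density. Throughout I may assume $u$ is real-valued (taking real and imaginary parts separately, or expanding $|u|^2 = u\bar u$ and noting the argument goes through). The case $d=2$ is trivial since $(d-2)^2/4 = 0$, so I focus on $d \ge 3$; the one-dimensional case will be handled by the same manipulations on each half-line.

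The key identity is the non-negativity of a completed square. For $\lambda \in \R$ and $u \in C_c^\infty(\R^d \setminus \{0\})$, expanding
\begin{equation*}
    0 \le \int_{\R^d} \left| \nabla u + \lambda \frac{\bx}{|\bx|^2} u \right|^2 d\bx
\end{equation*}
gives
\begin{equation*}
    0 \le \int_{\R^d} |\nabla u|^2 \,d\bx
    + \lambda^2 \int_{\R^d} \frac{|u|^2}{|\bx|^2}\,d\bx
    + 2\lambda \int_{\R^d} \frac{\bx \cdot \nabla u}{|\bx|^2}\, u \,d\bx.
\end{equation*}
The cross term is the one requiring work: I would rewrite $2u\,\bx \cdot \nabla u = \bx \cdot \nabla(u^2)$, then integrate by parts against the vector field $\bx/|\bx|^2$. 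Since on $\R^d \setminus \{0\}$ one has $\nabla \cdot (\bx/|\bx|^2) = (d-2)/|\bx|^2$ and $u$ is compactly supported away from $0$, there are no boundary contributions, and the cross term equals $-\lambda(d-2)\int |u|^2/|\bx|^2$. Substituting back yields
\begin{equation*}
    \int_{\R^d} |\nabla u|^2 \,d\bx \ge \left[\lambda(d-2) - \lambda^2\right] \int_{\R^d} \frac{|u|^2}{|\bx|^2}\,d\bx,
\end{equation*}
and the right-hand side is maximized at $\lambda = (d-2)/2$, producing the sharp constant $(d-2)^2/4$.

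The remaining obstacle is the density/extension step, which is where the dimensional restrictions enter. For $d \ge 3$ (and in fact $d \ge 2$), the singleton $\{0\}$ has zero $H^1$-capacity, so $C_c^\infty(\R^d \setminus \{0\})$ is dense in $H^1(\R^d)$; given $u \in H^1(\R^d)$ I would approximate by $u_n \in C_c^\infty(\R^d \setminus \{0\})$ with $u_n \to u$ in $H^1$, pass to a subsequence converging a.e., apply Fatou's lemma to the right-hand side
\begin{equation*}
    \int_{\R^d} \frac{|u|^2}{|\bx|^2}\,d\bx \le \liminf_{n \to \infty} \int_{\R^d} \frac{|u_n|^2}{|\bx|^2}\,d\bx,
\end{equation*}
and use norm-convergence on the left. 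For $d=1$ the point $\{0\}$ has positive capacity, which is precisely why the hypothesis is strengthened to $u \in H^1_0(\R \setminus \{0\})$; one then runs the same completed-square argument separately on $(0,\infty)$ and $(-\infty,0)$, where the boundary terms at $0$ vanish thanks to the Dirichlet-type condition and at $\pm\infty$ thanks to compact support. The truly delicate point I expect in a careful write-up is verifying that the standard density of $C_c^\infty(\R^d \setminus \{0\})$ in $H^1(\R^d)$ for $d \ge 2$ is granted here (it follows from a capacity/cutoff argument using radial mollifiers $\chi_\eps(\bx)$ supported away from the origin with $\|\nabla \chi_\eps\|_{L^d} \to 0$), but otherwise the proof is essentially a one-line integration by parts followed by optimization.
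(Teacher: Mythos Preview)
Your proof is correct. The completed-square identity you use is in fact the same computation as the paper's ground-state representation (GSR): taking $f(\bx) = |\bx|^{-(d-2)}$ and $\alpha = 1/2$ in the GSR identity yields exactly your remainder term, since $\int |\nabla(f^{-1/2}u)|^2 f = \int \bigl|\nabla u + \tfrac{d-2}{2}\tfrac{\bx}{|\bx|^2}u\bigr|^2$. The difference is one of packaging and scope. The paper sets up GSR as a general machine --- factor out $f^\alpha$ for any positive $f$, obtain the potential $\alpha(1-\alpha)|\nabla f|^2/f^2 + \alpha(-\Delta f)/f$, then specialize by choosing $f$ harmonic --- and reuses this framework for the two-dimensional logarithmic Hardy inequality and the one-dimensional many-body Hardy inequality. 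Your direct completed-square argument is the classical route and is more self-contained for this single statement, but does not immediately suggest those generalizations. On the density step the paper does precisely what you sketch: it proves $H^1_0(\R^d\setminus\{0\}) = H^1(\R^d)$ for $d\ge 2$ by explicit radial cutoffs (logarithmic when $d=2$) rather than citing capacity, and passes from $C_c^\infty$ to $H^1_0$ by a Cauchy-sequence argument in the weighted space $L^2(|\bx|^{-2}d\bx)$ rather than Fatou; both choices are interchangeable with yours.
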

\begin{remark}
	The constant $(d-2)^2/4$ is sharp (and the inequality trivial for $d=2$),
	in the sense that for any larger constant
	there is a function $u \in C_c^\infty(\R^d \setminus \{0\})$ 
	which violates it
	(see, e.g., \cite[Appendix~B]{Lundholm-15} for a discussion continuing 
	the context outlined below).
\end{remark}
\begin{remark}
	The inequality \eqref{eq:Hardy} of quadratic forms translates to the following
	operator inequality:
	\begin{equation} \label{eq:Hardy-operator}
		-\Delta 
		\ge \frac{(d-2)^2}{4} \frac{1}{|\bx|^2},
	\end{equation}
	with both sides
	interpreted as non-negative operators on $L^2(\R^d)$ having a common form domain
	$\cQ(-\Delta_{\R^d}) = H^1(\R^d)$, $d \ge 2$, respectively
	$\cQ(-\Delta_{\R \setminus \{0\}}) = H_0^1(\R \setminus \{0\})$.
\end{remark}

	Many other types of Hardy inequalities exist, and their general defining
	characteristic is that they provide a bound for the Laplacian 
	(and hence for the kinetic energy $\hT$)
	from below in terms of a positive potential $V$ which scales in the same way, 
	i.e.\ as an inverse-square length or distance, and which is singular at some 
	point of the configuration space or its boundary.
	In other words, in case such a non-trivial inequality holds, 
	we clearly have that
	the kinetic energy is not only non-negative but it even tends to infinity
	if the state is sufficiently localized close to a singularity of $V$.
	We refer to the recent book \cite{BalEvLew-15} 
	and the more classic references given in \cite{Tidblom-05}
	for general treatments of Hardy inequalities.
	Although the basic inequality \eqref{eq:Hardy} is fairly straightforward to prove 
	directly, we will instead take a very general approach to proving Hardy inequalities,
	involving a formulation
	referred to as the `ground state representation'.
	This approach is not covered by the above standard references but has been discussed 
	in various forms in for example
	\cite{Birman-61, 
	AlbHoeStr-77,FraSei-08,FraSimWei-08,Seiringer-10,Lundholm-15}.

\begin{exc}
	Show that $\inp{\psi, |\bx|^{-2} \psi} \ge \inp{\psi, |\bx|^2 \psi}^{-1}$
	if $\norm{\psi} = 1$,
	and hence that Hardy \eqref{eq:Hardy} directly implies Heisenberg \eqref{eq:Heisenberg}
	but with a slightly weaker constant.
\end{exc}

\subsubsection{The ground state representation}

	We consider the following identity involving
	the quadratic form of the Dirichlet Laplacian on a domain in $\R^n$,
	which we refer to as the \keyword{ground state representation (GSR)}:

\begin{proposition}[GSR] \label{prop:GSR}
	Let $\Omega$ be an open set in $\mathbb{R}^n$ and
	let $f\colon \Omega \to \R^+ := (0,\infty)$ be twice differentiable. 
	Then, for any $u \in C_c^\infty(\Omega)$ and $\alpha \in \R$,
	\begin{equation} \label{eq:GSR}
		\int_\Omega |\nabla u|^2 
		= \int_\Omega \left( 
				\alpha(1 - \alpha) \frac{|\nabla f|^2}{f^2} + \alpha\frac{-\Delta f}{f}
			\right) |u|^2 
		+ \int_\Omega |\nabla v|^2 f^{2\alpha}, 
	\end{equation}
	where $v = f^{-\alpha}u$. 
\end{proposition}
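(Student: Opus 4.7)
The plan is a direct computation: substitute $u = f^\alpha v$ (which is allowed since $f>0$ on $\Omega$), expand $|\nabla u|^2$, and deal with the cross term by integration by parts. Compact support of $u$ and strict positivity of $f$ on the open set $\Omega$ ensure that $v = f^{-\alpha}u \in C_c^\infty(\Omega)$, so all boundary terms will vanish.

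First I would compute
\[
\nabla u \;=\; \alpha f^{\alpha-1}(\nabla f)\,v \,+\, f^{\alpha}\nabla v,
\]
and then
\[
|\nabla u|^2 \;=\; \alpha^2 f^{2\alpha-2}|\nabla f|^2|v|^2 \,+\, f^{2\alpha}|\nabla v|^2 \,+\, 2\alpha f^{2\alpha-1}(\nabla f)\cdot \mathrm{Re}(\bar v\,\nabla v).
\]
The second term on the right is already the $|\nabla v|^2 f^{2\alpha}$ piece appearing in \eqref{eq:GSR}, so the task reduces to expressing the remaining two terms as a pointwise multiple of $|u|^2 = f^{2\alpha}|v|^2$.

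The key step is to handle the cross term using $2\mathrm{Re}(\bar v\,\nabla v) = \nabla |v|^2$ and integrating by parts:
\[
\int_\Omega 2\alpha f^{2\alpha-1}(\nabla f)\cdot\mathrm{Re}(\bar v\,\nabla v)
\;=\; \alpha\int_\Omega f^{2\alpha-1}\nabla f \cdot \nabla|v|^2
\;=\; -\alpha \int_\Omega \nabla\cdot(f^{2\alpha-1}\nabla f)\,|v|^2,
\]
with no boundary contribution since $v$ has compact support in $\Omega$. Expanding the divergence gives $(2\alpha-1)f^{2\alpha-2}|\nabla f|^2 + f^{2\alpha-1}\Delta f$, and after multiplying by $-\alpha$ and converting $f^{2\alpha}|v|^2$ back to $|u|^2$, this yields a contribution of
\[
-\alpha(2\alpha-1)\frac{|\nabla f|^2}{f^2}|u|^2 \;-\; \alpha\frac{\Delta f}{f}|u|^2.
\]
Combining with the first term $\alpha^2 f^{2\alpha-2}|\nabla f|^2|v|^2 = \alpha^2\frac{|\nabla f|^2}{f^2}|u|^2$ gives the coefficient $[\alpha^2-\alpha(2\alpha-1)]\frac{|\nabla f|^2}{f^2} = \alpha(1-\alpha)\frac{|\nabla f|^2}{f^2}$ in front of $|u|^2$, plus the $\alpha(-\Delta f)/f$ term, exactly matching \eqref{eq:GSR}.

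There is no serious obstacle here, since $f\in C^2(\Omega)$ with $f>0$ and $u\in C_c^\infty(\Omega)$ make every quantity well-defined and the integration by parts legitimate. The only mild point worth verifying is that $v = f^{-\alpha}u$ is indeed $C^1$ with compact support contained in $\Omega$ (which follows from $\mathrm{supp}\,v \subseteq \mathrm{supp}\,u \Subset \Omega$ and the smoothness of $f^{-\alpha}$ on the open set where $f>0$), so that the boundary term in the integration by parts truly drops. Apart from this bookkeeping, the identity is purely algebraic once the cross term is rewritten.
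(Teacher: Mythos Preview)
Your proof is correct and follows essentially the same approach as the paper: substitute $u=f^\alpha v$, expand $|\nabla u|^2$, and integrate the cross term by parts using $2\mathrm{Re}(\bar v\,\nabla v)=\nabla|v|^2$. Your treatment is in fact slightly more careful than the paper's in explicitly noting the real part (allowing complex-valued $u$) and in checking that $v$ inherits compact support so that the boundary term vanishes.
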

\begin{proof}
	We have for $u = f^{\alpha}v$ that
	$
		\nabla u = \alpha f^{\alpha-1} (\nabla f) v
		+ f^{\alpha} \nabla v,
	$
	and hence
	$$
		|\nabla u|^2 = \alpha^2 f^{2(\alpha-1)} |\nabla f|^2 |v|^2
		+ \alpha f^{2\alpha-1} (\nabla f) \cdot \nabla |v|^2 
		+ f^{2\alpha} |\nabla v|^2.
	$$
	Now let us integrate this expression over $\Omega$, and note that the middle term 
	on the r.h.s. produces after a partial integration
	(note that $v$ vanishes on $\partial\Omega$)
	$$
		- \alpha \int_{\Omega} |v|^2 \nabla \cdot (f^{2\alpha-1} \nabla f). 
	$$
	Finally, using that
	$$
		\nabla \cdot (f^{2\alpha-1} \nabla f) 
		= (2\alpha-1) f^{2\alpha-2} |\nabla f|^2 + f^{2\alpha-1} \Delta f,
	$$
	and collecting the terms we arrive at \eqref{eq:GSR}.
\end{proof}

	The idea of the ground state representation is to factor out a positive function 
	$f^\alpha$ from the kinetic energy form, to the cost of a new potential
	\begin{equation} \label{eq:GSR-potential}
		\alpha(1 - \alpha) \frac{|\nabla f|^2}{f^2} + \alpha\frac{-\Delta f}{f},
	\end{equation}
	which we will call the \keyword{GSR potential}.
	Note that if $f$ is chosen to be an exact zero-eigenfunction of the Laplacian,
	$\Delta f = 0$, i.e. a harmonic function, 
	or a \keyword{generalized ground state} of the kinetic energy operator
	(not necessarily normalizable on $L^2(\Omega)$ or in the form domain),
	then the last term of \eqref{eq:GSR-potential} vanishes 
	while the first is positive and maximal for the choice $\alpha = 1/2$.
	Since the last integral in \eqref{eq:GSR} is also non-negative this then 
	yields a potentially useful estimate of the kinetic energy form in terms
	of this positive potential.
	Also in the case that $f$ is not an exact zero-eigenfunction but for example
	an approximation to the true ground state, the first term in the potential \eqref{eq:GSR-potential}
	may still be able to control the second one and produce a useful positive bound.
	The parameter $\alpha$ may then be used in a variational sense to find
	the best possible bound given the ansatz $f$, which 
	if the exact ground state is unknown instead
	may be taken 
	of a form which is convenient for computations.

\begin{exc}
	Show that a modification of Proposition~\ref{prop:GSR}
	to involve a product ground state ansatz $g^\alpha h^\beta$
	(i.e. $u = g^\alpha h^\beta v$)
	produces the corresponding GSR potential 
	\begin{equation} \label{eq:product-GSR}
		\alpha(1-\alpha) \frac{|\nabla g|^2}{g^2} + \alpha \frac{-\Delta g}{g}
		+ \beta(1-\beta) \frac{|\nabla h|^2}{h^2} + \beta \frac{-\Delta h}{h}
		- 2\alpha\beta \frac{\nabla g \cdot \nabla h}{gh}.
	\end{equation}
\end{exc}

\begin{exc}
	Apply the ground state approach to prove a Hardy inequality on $[0,1]$,
	i.e. find functions $g$ and $h$ on $[0,1]$ s.t. $g(0) = 0$ and $g''=0$,
	and $h(1)=0$ and $h''=0$,
	compute the GSR potential \eqref{eq:product-GSR} and try to optimize it 
	w.r.t.\ $\alpha$ and $\beta$.
\end{exc}

\subsubsection{The standard Hardy inequalities in $\R^d$}

	Our approach to prove the standard Hardy inequality \eqref{eq:Hardy} is to
	first prove that it holds for all $u \in C_c^\infty(\R^d \setminus \{0\})$
	(this is a minimal domain respecting the singularity of the potential)
	using the above ground state representation, and then conclude that it
	also holds on the Sobolev space $H^1_0(\R^d \setminus \{0\})$ by density.
	Finally, we prove that actually $H^1_0(\R^d \setminus \{0\}) = H^1(\R^d)$
	if the dimension $d$ is large enough, 
	a result which is also very useful in itself.

	A natural choice of 
	ground states $f$ for the Laplacian in $\R^d$
	are the \keyword{fundamental solutions}:
	\begin{align}\label{eq:fundamental-solutions}
		d \neq 2:	&& f_d(\bx) &:= |\bx|^{-(d-2)}, & \Delta_{\R^d} f_d &= c_d \delta_0, \\
		d = 2:		&& f_2(\bx) &:= \ln |\bx|, & \Delta_{\R^2} f_2 &= c_2 \delta_0, 
	\end{align}
	where $\delta_0$ are Dirac delta distributions 
	supported at the origin 
	and $c_d$ some 
	constants which only depend on $d$
	(see Exercise~\ref{exc:fundamental-solutions}).
	These functions are smooth and strictly positive for all $\bx \neq \0$ 
	if $d \neq 2$, while for $d=2$ we may cure the sign problem
	by taking absolute values,
	but still need to avoid both the origin $\bx=\0$ and the circle $|\bx|=1$
	(a different nonzero radius may be chosen by rescaling).
	Indeed we cannot expect to have a non-trivial Hardy inequality 
	on all of $\R^2$, as indicated by the vanishing constant in \eqref{eq:Hardy}.
	
	Hence, for $d \neq 2$ we consider the domain 
	$\Omega := \R^d \setminus \{0\}$, on which $f := f_d > 0$ and $\Delta f = 0$,
	while $|\nabla f|^2/f^2 = (d-2)^2/|\bx|^2$.
	The GSR potential \eqref{eq:GSR-potential} is therefore optimal for 
	$\alpha = \frac{1}{2}$, and \eqref{eq:GSR} yields the ground state 
	representation associated to the standard Hardy inequality 
	\eqref{eq:Hardy} in $\mathbb{R}^d$:
	\begin{equation} \label{eq:standard-Hardy}
		\int_\Omega |\nabla u(\bx)|^2 \,d\bx 
			- \frac{(d-2)^2}{4} \int_\Omega \frac{|u(\bx)|^2}{|\bx|^2} \,d\bx
		= \int_\Omega |\nabla v(\bx)|^2 |\bx|^{-(d-2)} \,d\bx \ \ge \ 0.
	\end{equation}
	The inequality \eqref{eq:standard-Hardy} 
	thus holds for all $u \in C_c^\infty(\Omega)$.
	Now, if we take an arbitrary function $u$ in the Sobolev space $H^1_0(\Omega)$,
	then we have by its definition a sequence $(u_n) \subset C_c^\infty(\Omega)$
	s.t.
	$$
		\int_\Omega |u-u_n|^2 \to 0, \qquad \text{and} \qquad
		\int_\Omega |\nabla(u-u_n)|^2 \to 0.
	$$
	We then find by \eqref{eq:standard-Hardy} that this sequence is also
	Cauchy in the space $L^2(\Omega,|\bx|^{-2} d\bx)$ 
	weighted by the singular GSR potential:
	$$
		\int_\Omega |u_n - u_m|^2 \,|\bx|^{-2} d\bx 
		\le \frac{4}{(d-2)^2} \int_\Omega |\nabla(u_n - u_m)|^2 \to 0,
	$$
	which implies for the limit $u \in L^2(\Omega,|\bx|^{-2} d\bx)$.
	Furthermore, taking $n \to \infty$,
	$$
		\int_\Omega |u - u_m|^2 \,|\bx|^{-2} d\bx 
		\le \frac{4}{(d-2)^2} \int_\Omega |\nabla(u - u_m)|^2 \to 0,
	$$
	and therefore, by approximating both sides of the desired
	inequality in terms of $u_m \in C^\infty_c$,
	$$
		\int_\Omega |u|^2 \,|\bx|^{-2} d\bx 
		\le \frac{4}{(d-2)^2} \int_\Omega |\nabla u|^2,
	$$
	which proves the Hardy inequality \eqref{eq:Hardy} on the space 
	$H^1_0(\R^d \setminus \{0\}) \subseteq H^1(\R^d)$.

	It remains then to prove that actually 
	$H^1_0(\R^d \setminus \{0\}) = H^1(\R^d)$,
	so that the Hardy inequality indeed holds 
	on the maximal form domain of the Laplacian. 
	
\begin{lemma}\label{lem:H10}
	We have that $H^1_0(\R^d \setminus \{0\}) = H^1(\R^d)$ for $d \ge 2$.
\end{lemma}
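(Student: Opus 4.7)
The inclusion $H^1_0(\R^d \setminus \{0\}) \subseteq H^1(\R^d)$ is immediate from the definition \eqref{eq:Sobolev0-def}, so the content is the reverse inclusion. My plan is a cutoff argument: given $u \in H^1(\R^d)$, I will construct smooth radial cutoff functions $\chi_\epsilon$ vanishing near the origin such that $u\chi_\epsilon \to u$ in $H^1(\R^d)$ as $\epsilon \to 0$, and then approximate each $u\chi_\epsilon$ by functions in $C_c^\infty(\R^d \setminus \{0\})$ via mollification.

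By the standard density of $C_c^\infty(\R^d)$ in $H^1(\R^d)$ (via mollification combined with truncation at infinity), it suffices to assume $u \in C_c^\infty(\R^d)$, so in particular $u$ is bounded, $|u| \le M$. The core technical step is then to build $\chi_\epsilon \in C^\infty(\R^d)$ with $0 \le \chi_\epsilon \le 1$, $\chi_\epsilon \equiv 0$ in some neighbourhood of $\0$, $\chi_\epsilon \equiv 1$ outside $B_1(0)$, and $\|\nabla \chi_\epsilon\|_{L^2(\R^d)} \to 0$. I will take $\chi_\epsilon$ to be a mollification of a radial function which is essentially harmonic in the transition annulus, modelled on the fundamental solutions~\eqref{eq:fundamental-solutions}. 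For $d \ge 3$, the natural choice is (a smooth approximation of) $g_\epsilon(r) = (\epsilon^{2-d} - r^{2-d})/(\epsilon^{2-d} - 1)$ on $[\epsilon,1]$, for which a direct computation in polar coordinates gives $\|\nabla \chi_\epsilon\|_{L^2}^2 = \omega_{d-1}(d-2)/(\epsilon^{2-d}-1) \to 0$. For $d = 2$ the analogous choice is $g_\epsilon(r) = \ln(r/\epsilon)/\ln(1/\epsilon)$ on $[\epsilon,1]$, which yields $\|\nabla \chi_\epsilon\|_{L^2}^2 = 2\pi/\ln(1/\epsilon) \to 0$. This reflects the fact that a single point has vanishing Newtonian capacity in $\R^d$ precisely when $d \ge 2$, and is the reason the lemma fails in one dimension.

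Convergence in $H^1$ then follows by writing
\begin{equation*}
	\nabla(u - u\chi_\epsilon) = (1 - \chi_\epsilon)\nabla u - u\,\nabla \chi_\epsilon.
\end{equation*}
The first term tends to zero in $L^2$ by dominated convergence (since $\chi_\epsilon \to 1$ pointwise a.e.\ and $|\nabla u| \in L^2$), and the second is bounded in $L^2$ by $M\|\nabla \chi_\epsilon\|_{L^2} \to 0$ using the uniform bound on $u$. Dominated convergence likewise gives $u\chi_\epsilon \to u$ in $L^2$. Finally, each $u\chi_\epsilon$ is smooth with compact support in $\R^d \setminus \{0\}$, so it already lies in $C_c^\infty(\R^d \setminus \{0\})$ (no extra mollification needed once $\chi_\epsilon$ is chosen smooth from the start), and a diagonal argument produces the required approximating sequence.

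The main obstacle is the case $d = 2$: the naive cutoff $\chi_\epsilon(x) = \eta(|x|/\epsilon)$ with $|\nabla \chi_\epsilon| \le C/\epsilon$ supported on an annulus of volume $\sim \epsilon^d$ gives $\|\nabla \chi_\epsilon\|_{L^2}^2 \sim \epsilon^{d-2}$, which decays for $d \ge 3$ but is bounded away from zero for $d = 2$. This forces the use of the logarithmically-scaled cutoff above, which is precisely tuned to the two-dimensional fundamental solution and exploits the slow logarithmic divergence $\int_\epsilon^1 dr/r = \ln(1/\epsilon)$ to make the capacity tend to zero.
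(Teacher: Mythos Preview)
Your proof is correct and follows essentially the same approach as the paper: reduce to $u \in C_c^\infty(\R^d)$, multiply by a radial cutoff vanishing near the origin, and check $H^1$-convergence via the product rule and dominated convergence. The only difference is cosmetic: for $d \ge 3$ the paper uses the naive scaled cutoff $\varphi(\bx/\eps)$ (which you yourself observe suffices in your final paragraph), reserving the logarithmic construction $\varphi(\eps\ln|\bx|)$ for $d=2$, whereas you use the harmonic/capacitary interpolant uniformly in all dimensions. Your version is more conceptually unified and makes the capacity interpretation explicit; the paper's is marginally simpler for $d \ge 3$.
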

\begin{remark}
	It is not true that $H_0^1(\R \setminus \{0\}) = H^1(\R)$,
	but one rather has a decomposition
	$$
		H_0^1(\R \setminus \{0\}) \cong H_0^1(\R_\limminus) \oplus H_0^1(\R_\limplus)
	$$
	in one dimension.
\end{remark}
\begin{proof}[Proof of Lemma~\ref{lem:H10}]
	We aim to prove that $C_c^\infty(\R^d \setminus \{0\})$ is dense in $H^1(\R^d)$.
	Since $C_c^\infty(\R^d)$ is dense in $H^1(\R^d)$
	(recall from Section~\ref{sec:prelims-Hk} that
	we have $H^1_0(\R^d) = H^1(\R^d)$) it suffices to prove that an
	arbitrary $u \in C_c^\infty(\R^d)$ can be approximated arbitrarily well
	by a sequence $(u_n) \subset C_c^\infty(\R^d \setminus \{0\})$ 
	in the $H^1(\R^d)$-norm.
	For $d \ge 3$ we take a cut-off function
	$\vphi \in C^\infty(\R^d;[0,1])$ such that
	$\vphi(\bx) = 0$ for $|\bx| \le 1$ and $\vphi(\bx) = 1$ for $|\bx| \ge 2$,
	and let
	$$
		u_\eps(\bx) := \vphi_\eps(\bx) u(\bx),
		\qquad
		\vphi_\eps(\bx) := \vphi(\bx/\eps),
	$$
	for $\eps > 0$.
	Then $u_\eps \in C_c^\infty(\R^d \setminus B_\eps(0))$ and,
	as $\eps \to 0$,
	$$
		\norm{u - u_\eps}_{L^2}^2 
		= \int_{\R^d} |u-u_\eps|^2 
		= \int_{\R^d} |u|^2 |1-\vphi_\eps|^2 
		\le \int_{B_{2\eps}(0) \setminus B_\eps(0)} C
		\to 0.
	$$
	Furthermore, by the product rule and the triangle inequality,
	$$
		\norm{\nabla(u - u_\eps)}_{L^2}
		\le \norm{(1-\vphi_\eps) \nabla u}_{L^2}
			+ \norm{(\nabla\vphi_\eps)u}_{L^2},
	$$
	with $\norm{(1-\vphi_\eps) \nabla u}_{L^2}^2 \to 0$ as above and
	$$
		\norm{(\nabla\vphi_\eps)u}_{L^2}^2
		= \int_{\R^d} |u|^2 |\nabla\vphi_\eps|^2 
		\le C \int_{B_{2\eps}(0) \setminus B_\eps(0)} |\nabla\vphi_\eps|^2 
		\le C \eps^{-2} |B_{2\eps}(0)|
		\le C \eps^{d-2}
		\to 0,
	$$
	as $\eps \to 0$.
	
	In the case $d=2$ the above choice fails but we may instead take 
	$\vphi_\eps(\bx) = \vphi(\eps\ln |\bx|)$
	with $\vphi \in C^\infty(\R;[0,1])$
	such that $\vphi = 0$ on $(-\infty,-2)$ and $\vphi = 1$ on $(-1,\infty)$.
	Then $u_\eps := \vphi_\eps u \in C_c^\infty(\R^2 \setminus B_{e^{-2/\eps}}(0))$,
	and again $\norm{u - u_\eps}_{L^2} \to 0$ and 
	$\norm{(1-\vphi_\eps) \nabla u}_{L^2} \to 0$ as above.
	Moreover, by first switching to polar coordinates and then making 
	the change of variable $r = e^s$,
	$$
		\frac{1}{2\pi} \int_{\R^2} |\nabla\vphi_\eps|^2 \,d\bx
		= \eps^2 \int_0^\infty |\vphi'(\eps \ln r)|^2 \frac{dr}{r}
		= \eps^2 \int_{-2/\eps}^{-1/\eps} |\vphi'(\eps s)|^2 \,ds
		\le C \eps \to 0,
	$$
	as $\eps \to 0$, which completes the proof.
\end{proof}

	The above proves Theorem~\ref{thm:Hardy}, and may also be straightforwardly
	generalized in numerous directions.
	For $d=2$ we can instead take the two-component domain
	$\Omega := \R^2 \setminus (\{0\} \cup \mathbb{S}^{1})$ 
	and the ground state $f:=|f_2|$, so that $f>0$ and $\Delta f = 0$ on $\Omega$.
	Also, $|\nabla f(\bx)| = 1/|\bx|$.
	Proposition~\ref{prop:GSR}
	then produces the corresponding two-dimensional GSR
	for $u \in C_c^\infty(\Omega)$
	\begin{equation}\label{eq:standard-Hardy-2d-GSR}
		\int_\Omega |\nabla u|^2 \,d\bx 
			- \frac{1}{4} \int_\Omega \frac{|u(\bx)|^2}{|\bx|^2 (\ln |\bx|)^2} \,d\bx
		= \int_\Omega |\nabla v|^2 \big| \ln |\bx| \big| \,d\bx \ \ge \ 0.
	\end{equation}
	By taking the closure of $u \in C_c^\infty(\Omega)$ as above
	and using Lemma~\ref{lem:H10} locally around the point $\0$, 
	the l.h.s.~of \eqref{eq:standard-Hardy-2d-GSR} is non-negative for all
	$u \in H^1_0(\Omega) = H^1_0(\R^2 \setminus \mathbb{S}^1)$.
	Hence we proved the following two-dimensional Hardy inequality:
	
\begin{theorem}
	For any $u \in H^1_0(\R^2 \setminus \mathbb{S}^1)$
	we have that
	\begin{equation}\label{eq:standard-Hardy-2d}
		\int_{\R^2} |\nabla u|^2 \,d\bx 
		\ge \frac{1}{4} \int_{\R^2} 
			\frac{|u(\bx)|^2}{|\bx|^2 (\ln |\bx|)^2} \,d\bx.
	\end{equation}
\end{theorem}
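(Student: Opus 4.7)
The plan is to specialize Proposition~\ref{prop:GSR} to a carefully chosen generalized ground state on a punctured domain, exactly paralleling the derivation of \eqref{eq:standard-Hardy} but with the two-dimensional fundamental solution $f_2(\bx)=\ln|\bx|$. Since $f_2$ changes sign on $\{|\bx|=1\}$ and vanishes there, the natural domain is $\Omega := \R^2\setminus(\{\0\}\cup\mathbb{S}^1)$, on which $f := |f_2| = \bigl|\ln|\bx|\bigr|$ is smooth and strictly positive. Away from $\0$ we have $\Delta \ln|\bx| = 0$, so $\Delta f = 0$ on $\Omega$ as well (on each of the two open regions $|\bx|<1$ and $|\bx|>1$, $f$ equals $\pm \ln|\bx|$). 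A direct computation gives $|\nabla f(\bx)| = 1/|\bx|$, hence
\begin{equation*}
\frac{|\nabla f|^2}{f^2} = \frac{1}{|\bx|^2(\ln|\bx|)^2}.
\end{equation*}

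Plugging $f$ into Proposition~\ref{prop:GSR} with the optimal choice $\alpha=1/2$, the term $\alpha(-\Delta f)/f$ vanishes and the remaining coefficient $\alpha(1-\alpha) = 1/4$ produces precisely the claimed GSR potential. Thus for every $u \in C_c^\infty(\Omega)$, writing $v := f^{-1/2} u$, identity \eqref{eq:GSR} reads
\begin{equation*}
\int_\Omega |\nabla u|^2\,d\bx - \frac{1}{4}\int_\Omega \frac{|u|^2}{|\bx|^2(\ln|\bx|)^2}\,d\bx \;=\; \int_\Omega |\nabla v|^2\,\bigl|\ln|\bx|\bigr|\,d\bx \;\ge\; 0,
\end{equation*}
which is exactly \eqref{eq:standard-Hardy-2d-GSR}, and in particular gives the desired inequality on the minimal domain $C_c^\infty(\Omega)$.

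The remaining task is to extend the inequality from $C_c^\infty(\Omega)$ to $H^1_0(\R^2\setminus\mathbb{S}^1)$. I would do this in two steps. First, by density of $C_c^\infty(\Omega)$ in $H^1_0(\Omega)$ together with the GSR identity, any approximating sequence $(u_n)\subset C_c^\infty(\Omega)$ with $u_n\to u$ in $H^1(\R^2)$ is automatically Cauchy in the weighted space $L^2(\Omega,|\bx|^{-2}(\ln|\bx|)^{-2}\,d\bx)$, so the inequality passes to the limit on $H^1_0(\Omega)$. Second, I would use Lemma~\ref{lem:H10}, applied locally near the origin, to upgrade $H^1_0(\Omega)=H^1_0(\R^2\setminus(\{\0\}\cup\mathbb{S}^1))$ to $H^1_0(\R^2\setminus\mathbb{S}^1)$: given $u\in H^1_0(\R^2\setminus\mathbb{S}^1)$, one approximates it by compactly supported smooth functions vanishing near $\mathbb{S}^1$, and then uses the logarithmic cut-off $\vphi_\eps(\bx)=\vphi(\eps\ln|\bx|)$ from the proof of Lemma~\ref{lem:H10} to further remove a neighborhood of $\0$ without affecting the $H^1$-norm in the limit $\eps\to 0$.

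The conceptually delicate point, and the one I expect to be the main obstacle, is ensuring that both cut-offs can be applied \emph{simultaneously} and that the singular weight $|\bx|^{-2}(\ln|\bx|)^{-2}$ is handled consistently under the origin-removal step — near $\0$ this weight is integrable against $|u|^2$ for reasonable $u$, but one must verify that $\int |u-u_\eps|^2|\bx|^{-2}(\ln|\bx|)^{-2}\,d\bx \to 0$ for the logarithmic cut-off, which follows from the inequality itself applied to $u-u_\eps\in C_c^\infty(\Omega)$ combined with $\|\nabla(u-u_\eps)\|_{L^2}\to 0$, so no new estimate is required beyond those already computed in the proof of Lemma~\ref{lem:H10}.
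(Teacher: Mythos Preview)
Your proposal is correct and follows essentially the same route as the paper: apply the GSR of Proposition~\ref{prop:GSR} with $f=|\ln|\bx||$ and $\alpha=1/2$ on $\Omega=\R^2\setminus(\{\0\}\cup\mathbb{S}^1)$ to obtain \eqref{eq:standard-Hardy-2d-GSR}, pass to $H^1_0(\Omega)$ by closure, and then invoke Lemma~\ref{lem:H10} locally at the origin to identify $H^1_0(\Omega)=H^1_0(\R^2\setminus\mathbb{S}^1)$. Your additional remarks on the compatibility of the two cut-offs and the control of the weighted norm under the logarithmic cut-off are a welcome elaboration of what the paper leaves implicit.
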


	Also note that we are free to choose the location of the singularity
	in the Hardy inequality, namely by translation invariance of the
	kinetic energy we also have for example
	\begin{equation}\label{eq:Hardy-translated}
		\int_{\R^d} |\nabla u(\bx)|^2 \,d\bx 
		\ge \frac{(d-2)^2}{4} \int_{\R^d} \frac{|u(\bx)|^2}{|\bx - \bx_0|^2} \,d\bx,
	\end{equation}
	for any $\bx_0 \in \R^d$ and $u \in H^1(\R^{d \ge 3})$ 
	respectively $u \in H^1_0(\R^1 \setminus \{\bx_0\})$.
	
\begin{exc}\label{exc:fundamental-solutions}
	Verify that \eqref{eq:fundamental-solutions} are the fundamental solutions, 
	i.e. zero-eigenfunctions of the Laplacian outside $\0$, 
	and, if you know distribution theory, compute the constants $c_d$
	(i.e. consider $\int_{\R^d} f \Delta \varphi$ for $\varphi \in C_c^\infty(\R^d)$
	and make partial integrations).
\end{exc}

\begin{exc}
	Prove that the inequality \eqref{eq:standard-Hardy-2d} 
	does not extend to $H^1(\R^2)$.
\end{exc}

\begin{exc}
	Prove a Hardy inequality outside the hard-sphere potential 
	of Example~\ref{exmp:hard-core} in $d=3$,
	by taking the ground state $f(\bx) = 1 - R/|\bx|$.
	Can it be improved by using an additional ground state $g(\bx) = 1/|\bx|$?
\end{exc}

\subsubsection{Many-body Hardy inequalities}\label{sec:uncert-Hardy-many}
	
	In \cite{Tidblom-05,HofLapTid-08} a number of interesting and useful many-body
	versions of the Hardy inequality were derived.
	Some of these were extended geometrically in \cite{Lundholm-15}.
	We only mention the simplest, one-dimensional, case here,
	and return to some other, fermionic, versions in conjunction with
	exclusion principles in Section~\ref{sec:exclusion-fermions-Hardy}.
	
	Recall the definition \eqref{eq:fat-diagonal} of the many-body configuration
	space diagonal $\bDelta$.

\begin{theorem}\label{thm:many-body-Hardy-1d}
	For any $u \in H^1_0(\R^N \setminus \bDelta)$
	we have that
	\begin{equation}\label{eq:many-body-Hardy-1d}
		\int_{\R^N} |\nabla u|^2 \,d\sx 
		\ge \frac{1}{2}\sum_{1 \le j < k \le N} \int_{\R^N} 
			\frac{|u(\sx)|^2}{|x_j-x_k|^2} \,d\sx.
	\end{equation}
\end{theorem}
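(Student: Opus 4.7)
The plan is to apply the ground state representation of Proposition~\ref{prop:GSR} on $\Omega = \R^N \setminus \bDelta$ with the product ansatz
\[
f(\sx) := \prod_{1 \le j<k \le N} |x_j - x_k|^{\beta},
\]
which is smooth and strictly positive on $\Omega$. Optimizing the parameters $\alpha$ and $\beta$ should produce exactly the constant $1/2$ appearing in \eqref{eq:many-body-Hardy-1d}. First I would work with $u \in C_c^\infty(\Omega)$ so that all the manipulations in Proposition~\ref{prop:GSR} are justified; the passage to $H^1_0(\R^N \setminus \bDelta)$ will then follow by the same density/closure argument used for the standard Hardy inequality \eqref{eq:standard-Hardy}.

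The core computation is to evaluate the GSR potential $-\alpha^2 |\nabla \ln f|^2 - \alpha \Delta \ln f$ for this $f$. A direct differentiation gives $\partial_{x_l} \ln f = \beta \sum_{k \neq l} (x_l - x_k)^{-1}$ and $\Delta \ln f = -2\beta \sum_{j<k}(x_j-x_k)^{-2}$. The expression $|\nabla \ln f|^2 = \beta^2 \sum_l \big(\sum_{k\neq l}(x_l-x_k)^{-1}\big)^2$ splits, after expansion, into a diagonal contribution equal to $2\beta^2 \sum_{j<k}(x_j-x_k)^{-2}$ and a sum of cross terms indexed by ordered triples of distinct indices $(l,k,m)$. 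The key observation is the classical three-body identity
\[
\frac{1}{(x_a-x_b)(x_a-x_c)} + \frac{1}{(x_b-x_a)(x_b-x_c)} + \frac{1}{(x_c-x_a)(x_c-x_b)} = 0,
\]
which upon summation over the three choices of the "center" index makes all cross terms cancel. This leaves
\[
|\nabla \ln f|^2 = 2\beta^2 \sum_{1 \le j<k \le N} \frac{1}{(x_j-x_k)^2},
\]
so the GSR potential simplifies to $2\alpha\beta(1-\alpha\beta) \sum_{j<k}(x_j-x_k)^{-2}$. Setting $\alpha\beta = 1/2$ maximizes this and produces the coefficient $\tfrac12$.

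Plugging into Proposition~\ref{prop:GSR} and using that the remainder term $\int_\Omega |\nabla v|^2 f^{2\alpha} \ge 0$ (with $v = f^{-\alpha} u$), I obtain
\[
\int_{\R^N} |\nabla u|^2 \,d\sx - \frac{1}{2} \sum_{1 \le j<k \le N} \int_{\R^N} \frac{|u(\sx)|^2}{(x_j-x_k)^2} \,d\sx \ \ge \ 0
\]
for every $u \in C_c^\infty(\R^N \setminus \bDelta)$. The extension to $u \in H^1_0(\R^N \setminus \bDelta)$ is then done exactly as in the one-body case: a sequence $(u_n) \subset C_c^\infty(\R^N \setminus \bDelta)$ converging to $u$ in $H^1$-norm is automatically Cauchy in $L^2$ weighted by the sum $\sum_{j<k}(x_j-x_k)^{-2}$, by the very inequality just proved for each $u_n - u_m$, so the bound passes to the limit.

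The main obstacle—the step whose success is not obvious a priori—is the cancellation of the triple cross terms in $|\nabla \ln f|^2$; without this identity one would only reach an inequality of the form $\int |\nabla u|^2 \ge c_N \sum_{j<k}\int |u|^2/(x_j-x_k)^2$ with an $N$-dependent (indeed $N$-decaying) constant $c_N$, for instance by the parallelogram identity \eqref{eq:many-body-parallelogram} combined with the one-dimensional Hardy inequality in each relative coordinate $x_j-x_k$. Once the three-body identity is invoked, the remainder of the argument is a direct application of the machinery already established in Sections~\ref{sec:uncert-Hardy} and~\ref{sec:prelims-Hk}.
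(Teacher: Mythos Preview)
Your proposal is correct and follows essentially the same route as the paper: the ground state representation on $\Omega = \R^N \setminus \bDelta$ with the Vandermonde-type ansatz, the three-body identity to kill the cross terms, and then the density argument to pass from $C_c^\infty(\Omega)$ to $H^1_0(\Omega)$. The only cosmetic difference is that the paper fixes $\beta = 1$ from the outset (taking $f(\sx) = \prod_{j<k}|x_j-x_k|$ and $\alpha = 1/2$), since your two parameters enter only through the product $\alpha\beta$ and are therefore redundant.
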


	This inequality is useful for the analysis of 
	\keyword[Calogero--Sutherland model]{Calogero--Sutherland}
	\cite{Calogero-71,Sutherland-71}
	and similar models in many-body quantum mechanics 
	involving inverse-square interactions.
	It proves immediately that an interacting many-body Hamiltonian of
	the form 
	$$
		\hH^N = \hT + \beta \hW 
		= -\sum_{j=1}^N \frac{\partial^2}{\partial x_j^2} 
			+ \beta \sum_{1 \le j<k \le N} \frac{1}{|x_j-x_k|^2},
	$$
	with form domain $\cQ(\hH^N) = H^1_0(\R^N \setminus \bDelta)$,
	is trivially \keyword[stability]{stable} (of both first and second kind) 
	if the interaction coupling strength parameter satisfies $\beta \ge -1/2$,
	i.e.\ if it is not too attractive.

\begin{exc}
	Prove that for any three distinct points $x_1,x_2,x_3 \in \R$,
	$$
		(x_1-x_2)^{-1}(x_1-x_3)^{-1} + 
		(x_2-x_3)^{-1}(x_2-x_1)^{-1} +
		(x_3-x_1)^{-1}(x_3-x_2)^{-1} = 0.
	$$
\end{exc}
\begin{exc}
	Use this identity and the ansatz
	$f(\sx) := \prod_{j<k} |x_j-x_k|$
	on $\Omega := \R^N \setminus \bDelta$
	to prove the one-dimensional many-body GSR
	$$
		\int_{\Omega} |\nabla u|^2 \,d\sx 
		- \frac{1}{2}\sum_{j < k} \int_{\Omega} 
			\frac{|u(\sx)|^2}{|x_j-x_k|^2} \,d\sx
		= \int_\Omega |\nabla(f^{-1/2}u)|^2 f
		\ge 0,
	$$
	for $u \in C^\infty_c(\Omega)$,
	and hence Theorem~\ref{thm:many-body-Hardy-1d}.
\end{exc}

\subsection{Sobolev}\label{sec:uncert-Sobolev}

Another very powerful formulation of the uncertainty principle is given
by Sobolev's inequality:

\begin{theorem}[{\keyword[Sobolev inequality]{The Sobolev inequality}}]\label{thm:Sobolev}
	For $d \ge 3$ and all $u \in H^1(\R^d)$, it holds that
	\begin{equation} \label{eq:Sobolev}
		\int_{\R^d} |\nabla u|^2 
		\ge S_d \norm{u}_{2d/(d-2)}^2,
	\end{equation}
	with $S_d = d(d-2)|\S^d|^{2/d}/4$.
	For $d=2$ and every $2 < p < \infty$
	there exists a constant $S_{2,p}>0$ such that for any $u \in H^1(\R^2)$,
	\begin{equation} \label{eq:Sobolev-2d}
		\int_{\R^2} |\nabla u|^2 
		\ge S_{2,p} \norm{u}_2^{-4/(p-2)} \norm{u}_p^{2p/(p-2)}.
	\end{equation}
	For $d=1$ one has for $u \in H^1(\R)$,
	\begin{equation} \label{eq:Sobolev-1d}
		\int_{\R} |u'|^2 
		\ge \norm{u}_2^{-2} \norm{u}_{\infty}^4.
	\end{equation}
\end{theorem}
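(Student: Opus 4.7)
The plan is to unify the three cases via the Gagliardo--Nirenberg--Sobolev inequality in the $L^1$ endpoint, $\|f\|_{d/(d-1)} \le C_d \|\nabla f\|_1$ for $d\ge 2$, and then bootstrap to the $L^2$-based statements by substituting $f=|u|^\gamma$ for an appropriate $\gamma>1$; the case $d=1$ is handled directly by the fundamental theorem of calculus.

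For $d=1$ and $u \in C_c^\infty(\R)$, I would write $|u(x)|^2 = 2\re\int_{-\infty}^x \bar u u'$ as well as $|u(x)|^2 = -2\re\int_x^\infty \bar u u'$, then bound each by $2\int|u||u'|$; summing the two estimates gives $|u(x)|^2 \le \int_\R |u||u'| \le \|u\|_2\|u'\|_2$ by Cauchy--Schwarz, which upon squaring and taking the supremum yields \eqref{eq:Sobolev-1d}. A density argument extends it to $H^1(\R)$.

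For $d\ge 2$, I would first establish Gagliardo's lemma: for $f \in C_c^\infty(\R^d)$, the FTC in each coordinate yields $|f(x)| \le \int_\R |\partial_j f|\,dx_j$ for every $j$, so the geometric mean produces
\begin{equation*}
|f(x)|^{d/(d-1)} \le \prod_{j=1}^d \left(\int_\R |\partial_j f|\,dx_j\right)^{1/(d-1)}.
\end{equation*}
Integrating successively in $x_1,\ldots,x_d$ and applying the generalized H\"older inequality (i.e.\ the Loomis--Whitney inequality) gives $\|f\|_{d/(d-1)} \le \prod_j \|\partial_j f\|_1^{1/d}$, from which AM--GM produces the bound $\|f\|_{d/(d-1)} \le C_d \|\nabla f\|_1$. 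Then, for $d\ge 3$, I would apply this to $f = |u|^\gamma$ with the conformally matched choice $\gamma := 2(d-1)/(d-2)$, so that $\gamma\cdot d/(d-1) = 2d/(d-2) = 2(\gamma-1)$. Using $|\nabla f| \le \gamma |u|^{\gamma-1}|\nabla u|$ and Cauchy--Schwarz, one gets $\|u\|_{2d/(d-2)}^\gamma \le C_d\gamma\,\|u\|_{2d/(d-2)}^{\gamma-1}\|\nabla u\|_2$; dividing and squaring yields \eqref{eq:Sobolev}.

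For $d=2$, the same substitution $f=|u|^\gamma$ with $\gamma = p/2 > 1$ gives $\|u\|_p^{p/2} \le C_2(p/2)\,\|u\|_{p-2}^{(p-2)/2}\|\nabla u\|_2$, but now $\|u\|_{p-2}$ is no longer a conformal match for $\|u\|_p$. I would resolve this by H\"older interpolation of $\|u\|_{p-2}$ between $\|u\|_2$ and $\|u\|_p$ when $p\ge 4$ (so that $p-2\in[2,p]$); rearranging isolates the stated powers of $\|u\|_2$ and $\|u\|_p$ and yields \eqref{eq:Sobolev-2d}. For $2<p<4$ one instead chooses $\gamma$ so that $2(\gamma-1)=2$, i.e.\ $\gamma=2$, obtaining $\|u\|_4^2 \le C\|u\|_2\|\nabla u\|_2$, and then interpolates $\|u\|_p$ between $\|u\|_2$ and $\|u\|_4$.

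The main obstacle will be the bookkeeping of exponents in $d=2$ and handling the subrange $2<p<4$ by a separate interpolation; in contrast, the $d\ge 3$ step is clean because the critical exponent $2d/(d-2)$ is uniquely matched. A secondary technical point is that $|u|^\gamma$ is not smooth even when $u$ is, so one first applies Gagliardo's lemma to the regularized substitute $(|u|^2+\varepsilon)^{\gamma/2} - \varepsilon^{\gamma/2}$ (which is in $C_c^\infty$ when $u$ is, and for which the chain rule is straightforward) and passes to the limit $\varepsilon\to 0$ by dominated convergence, after which a density argument covers all of $H^1(\R^d)$.
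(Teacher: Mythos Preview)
Your argument is correct, and for $d=1$ it coincides with the paper's proof. For $d\ge 2$, however, you take a genuinely different route. The paper proves the case $d\ge 3$ by a Fourier-analytic layer-cake decomposition: it splits $u = u_P^- + u_P^+$ at a frequency cutoff $P$, bounds $\|u_P^-\|_\infty$ via Cauchy--Schwarz on the Fourier side, chooses $P=P(t)$ to kill the low-frequency contribution to the level set $\{|u|>t\}$, and then controls $\|u_P^+\|_2$ via Chebyshev and Fubini. Your approach is instead the classical Gagliardo--Nirenberg one: the $L^1$ endpoint $\|f\|_{d/(d-1)}\le C_d\|\nabla f\|_1$ via Loomis--Whitney, followed by the substitution $f=|u|^\gamma$ and Cauchy--Schwarz. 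Both yield non-sharp constants. Your route is more elementary and avoids the Fourier transform entirely; the paper's route, on the other hand, is chosen deliberately because it extends to fractional Sobolev spaces and because the same high/low energy splitting is reused verbatim later to prove the GNS inequality (Theorem~\ref{thm:GNS}), its local version (Theorem~\ref{thm:GNS-local}), and ultimately the Lieb--Thirring kinetic energy inequality (Theorem~\ref{thm:LT-kinetic-1p}) via Rumin's method. For $d=2$ the paper does not give a proof for general $p$ but refers to \cite{LieLos-01}; your interpolation argument (first $p\ge 4$ directly, then $2<p<4$ via the $p=4$ case) fills that gap self-containedly.
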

\begin{remark}
	The assumption $u \in H^1(\R^d)$ may be weakened slightly in the case $d \ge 3$;
	see \cite[Section~8.2-8.3]{LieLos-01}.
\end{remark}
	
	The constants $S_d$ respectively $S_1=1$ in the case $d \ge 3$ and $d=1$ are
	sharp (see \cite{Aubin-75,Talenti-76}, and e.g. \cite{LieLos-01}), 
	with $S_3 = 3(\pi/2)^{4/3} \approx 5.478$,
	while the value of the optimal constant $S_{2,p}$ for $d=2$ is presently unknown
	(a very rough but explicit estimate 
	may be obtained from \cite[Theorem~8.5]{LieLos-01}).
	Also note as usual the necessary match of dimensions in these inequalities 
	(cf. the remark after Proposition~\ref{prop:Hoelder})
	which not only helps to remember them 
	but also clarifies why an inequality of the simpler form \eqref{eq:Sobolev}
	cannot extend to $d \le 2$.
	
\begin{proof}[Proof for $d = 1$]
	By the fundamental theorem of calculus applied to an approximating sequence
	$u_n \in C_c^\infty(\R)$
	(see \cite[Theorem~8.5]{LieLos-01} for details),
	one has for any $u \in H^1(\R)$ and a.e. $x \in \R$
	$$
		u(x)^2 = \int_{-\infty}^x u(y)u'(y) \,dy - \int_x^\infty u(y)u'(y) \,dy.
	$$
	Therefore, by the triangle inequality,
	$$
		|u(x)|^2 \le \int_{-\infty}^x |u||u'| + \int_x^\infty |u||u'|
		= \int_{-\infty}^\infty |u||u'|,
	$$
	and thus by the Cauchy--Schwarz inequality,
	$$
		\|u\|_\infty^2 \le \|u\|_2 \|u'\|_2,
	$$
	which is \eqref{eq:Sobolev-1d}.
\end{proof}

	We shall not give a proof of Theorem~\ref{thm:Sobolev} 
	for $d=2$ for general $p>2$ here, 
	but refer instead to e.g. \cite[Theorem~8.5]{LieLos-01}. 
	A proof for the important special case $p=4$ will be given below.
	For $d \ge 3$ we follow a proof which is closer in spirit to those of
	upcoming specializations of the uncertainty principle,
	and which was given in \cite{Lenzmann-13} based on \cite{CheXu-97}.
	It also generalizes straightforwardly to fractional Sobolev spaces but
	does not yield the optimal constant $S_d$ however.

\begin{proof}[Proof for $d \ge 3$]
	Let $u \in H^1(\R^d)$ and set $q = 2d/(d-2)$.
	Using the unitary Fourier transform $\hu = \cF u$ we may decompose 
	$u$ into low- and high-frequency parts,
	$u = u^\limminus_P + u^\limplus_P$, with
	$$
		u^\limminus_P := \cF^{-1}\left[ \1_{B_P(0)} \,\hu \right]
		\qquad \text{and} \qquad
		u^\limplus_P  := \cF^{-1}\left[ \1_{B_P(0)^c} \,\hu \right],
	$$
	for an arbitrary momentum/frequency $P > 0$ to be chosen below.
	We then use that
	\begin{equation} \label{eq:Sobolev-layercake}
		\|u\|_q^q = \int_{t=0}^\infty \bigl|\{ |u| > t \}\bigr| \,d(t^q),
	\end{equation}
	by the layer-cake representation \eqref{eq:layer-cake-p},
	and that by the triangle inequality $|u| \le |u^\limminus_P| + |u^\limplus_P|$,
	\begin{equation} \label{eq:Sobolev-triangle}
		\{ |u| > t \} \subseteq \{ |u^\limminus_P| > t/2 \} \cup 
			\{ |u^\limplus_P| > t/2 \}.
	\end{equation}
	Now, note that by the Fourier inversion formula,
	$\|f\|_\infty \le (2\pi)^{-d/2} \|\hf\|_1$, 
	so that 
	\begin{align*}
		(2\pi)^{d/2} \|u^\limminus_P\|_\infty 
		&\le \norm{\cF u^\limminus_P}_1
		= \int_{B_P(0)} \frac{1}{|\bp|} |\bp \hu(\bp)| \,d\bp
		\le \left( \int_{B_P(0)} \frac{d\bp}{|\bp|^2} \right)^{1/2}
			\norm{\cF(\nabla u)}_2 \\
		&= \left( \frac{|\S^{d-1}|}{d-2} \right)^{1/2} P^{\frac{d-2}{2}} \norm{\nabla u}_2,
	\end{align*}
	by Cauchy--Schwarz.
	Hence, if we choose 
	$$
		P = P(t) := \left( \frac{(d-2)(2\pi)^d}{|\S^{d-1}| \|\nabla u\|_2^2} 
			\frac{t^2}{4}
			\right)^{\frac{1}{d-2}}
		=: C_d (t/\|\nabla u\|_2)^{\frac{2}{d-2}}
	$$
	then $|\{ |u^\limminus_P| > t/2 \}| = 0$, and we obtain in 
	\eqref{eq:Sobolev-layercake}-\eqref{eq:Sobolev-triangle}
	$$
		\|u\|_q^q \le \int_{t=0}^\infty \bigl|\{ |u^\limplus_{P(t)}| > t/2 \}\bigr| \,d(t^q)
		\le \int_{t=0}^\infty 4\|u^\limplus_{P(t)}\|_2^2/t^2 \,d(t^q),
	$$
	by Chebyshev's inequality \eqref{eq:Chebyshev}. Thus,
	$$
		\|u\|_q^q \le 4q \int_0^\infty \|\cF u^\limplus_{P(t)}\|_2^2 \,t^{q-3} dt
		= 4q \int_0^\infty \int_{B_{P(t)}(0)^c} |\hu(\bp)|^2 d\bp \,t^{q-3} dt,
	$$
	and by Fubini's theorem and inverting the relation
	$$
		|\bp| \ge P(t) \ \Leftrightarrow \ 
		t \le \|\nabla u\|_2 (|\bp|/C_d)^{\frac{d-2}{2}} =: \Lambda(\bp),
	$$
	we have
	\begin{align*}
		\|u\|_q^q &\le 4q \int_{\R^d} |\hu(\bp)|^2 
			\int_0^{\Lambda(\bp)} t^{q-3} dt \,d\bp 
		= \frac{4q}{q-2} C_d^{-2} \|\nabla u\|^{q-2} \int_{\R^d} |\hu(\bp)|^2 
			|\bp|^2 \,d\bp \\
		&= 2d C_d^{-2} \|\nabla u\|^q.
	\end{align*}
	This proves the Sobolev inequality \eqref{eq:Sobolev} with the 
	constant
	$$
		S_d' = (2dC_d^{-2})^{-2/q} = \frac{(2\pi)^2}{(2d)^{\frac{d-2}{d}}} 
			\left( \frac{d-2}{4} \right)^{\frac{2}{d}} |\S^{d-1}|^{-\frac{2}{d}},
	$$
	which for $d=3$ is $S_3' = \pi^{4/3}/(2 \cdot 3^{1/3}) \approx 1.595$.
\end{proof}
		
\subsubsection{Sobolev from Hardy}
	
	Alternatively, the Sobolev inequality for $d \ge 3$ actually also 
	follows from the Hardy inequality,
	by the method of rearrangements; see \cite{FraSei-08,Seiringer-10}.
	Namely, for any radial, non-negative decreasing function 
	$u\colon \R^d \to \R_+$ one has the inequality
	\begin{equation}\label{eq:radial-symm-bound}
		\norm{u}_p^p = \int_{\R^d} u(\by)^p \,d\by
		\ge u(\bx)^p |\bx|^d |\B^d|
	\end{equation}
	for any $\bx \in \R^d$ and $p > 2$,
	where $\B^d = B_1(0)$ denotes the unit ball in $\R^d$.
	Taking both sides to the power $1-2/p$, 
	multiplying by $u(\bx)^2 |\bx|^{-d(1-2/p)}$
	and integrating over $\bx$, one obtains
	\begin{equation}\label{eq:radial-symm-Hardy-Sobolev}
		\int_{\R^d} \frac{u(\bx)^2}{|\bx|^{d(1-2/p)}} \,d\bx
		\ge |\B^d|^{1-2/p} \norm{u}_p^2.
	\end{equation}
	Taking $p=2d/(d-2)$, the l.h.s.~reduces to the r.h.s.~of the Hardy inequality
	\eqref{eq:Hardy}
	and thus Hardy implies Sobolev for such $u \in H^1(\R^d)$.
	Finally, one may use a symmetric-decreasing rearrangement 
	(we refer to e.g. \cite[Chapter~3]{LieLos-01} for details)
	to reduce an arbitrary $u \in H^1(\R^d)$ to such a 
	non-negative decreasing radial function $u^*$,
	with the properties
	$\|u\|_p = \|u^*\|_p$ and 
	$\|\nabla u\|_2 \ge \|\nabla u^*\|_2$.
	The first property follows by the layer-cake representation
	while last property is the non-trivial one, and
	it would be too much of a detour to try to
	cover this approach here.

\subsection{Gagliardo--Nirenberg--Sobolev}

	Note that by an application of H\"older's inequality, 
	for $d \ge 3$,
	\begin{equation}\label{eq:Hoelder-GNS}
		\int_{\R^d} |u|^{2(1+2/d)} 
		\le \|u\|_2^{4/d} \|u\|_{2d/(d-2)}^2,
	\end{equation}
	and thus by the Sobolev inequality \eqref{eq:Sobolev}, 
	\begin{equation}\label{eq:Sobolev-GNS}
		\left( \int_{\R^d} |\nabla u|^2 \right)
			\left( \int_{\R^d} |u|^2 \right)^{2/d}
		\ge S_d \int_{\R^d} |u|^{2(1+2/d)}.
	\end{equation}
	This is another formulation of the uncertainty principle
	known as a \keyword{Gagliardo--Nirenberg--Sobolev (GNS) inequality}.
	Note that such an inequality also follows in $d=2$ directly from 
	\eqref{eq:Sobolev-2d} with $p=4$, 
	and that in $d=1$ one has from \eqref{eq:Sobolev-1d} that
	$$
		\int_{\R} |u|^6 \le \|u\|_\infty^4 \|u\|_2^2 \le \|u\|_2^4 \|u'\|_2^2.
	$$
	Hence \eqref{eq:Sobolev-GNS} takes the same form in all dimensions $d \ge 1$,
	and we shall here present an independent and simple proof for it 
	which also allows for many useful generalizations
	(this is the one-body version of a proof 
	due to Rumin of a more general kinetic energy inequality,
	given later in Theorem~\ref{thm:LT-kinetic-1p}).
	
\begin{theorem}[Gagliardo--Nirenberg--Sobolev inequality --- one-body version]
	\label{thm:GNS}
	For any $d \ge 1$ there exists a constant $G_d > 0$ such that for all
	$u \in H^1(\R^d)$
	\begin{equation}\label{eq:GNS}
		\int_{\R^d} |\nabla u|^2
		\ge G_d \left( \int_{\R^d} |u|^2 \right)^{-2/d} \int_{\R^d} |u|^{2(1+2/d)}.
	\end{equation}
\end{theorem}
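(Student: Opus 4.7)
The plan is to adapt the Fourier-truncation / layer-cake argument used in the proof of Theorem~\ref{thm:Sobolev} for $d\ge 3$, but now bounding the low-frequency piece in $L^\infty$ via $\|u\|_{2}$ instead of $\|\nabla u\|_{2}$. This single change is what makes the argument work uniformly in all dimensions $d\ge 1$, since it bypasses the failure of the endpoint Sobolev embedding in $d=1,2$.

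Concretely, fix $u \in H^1(\R^d)$ (WLOG smooth), write $q := 2(1+2/d) = 2+4/d$, and for each momentum scale $P>0$ split
$u = u_P^{\limminus} + u_P^{\limplus}$ with $u_P^{\limminus} := \cF^{-1}[\1_{B_P(0)}\hu]$. By Fourier inversion and Cauchy--Schwarz on the Fourier side,
\begin{equation*}
	(2\pi)^{d/2}\,\|u_P^{\limminus}\|_\infty \le \|\1_{B_P(0)}\hu\|_1 \le |B_P(0)|^{1/2}\,\|\hu\|_2 = c_d\, P^{d/2}\,\|u\|_2,
\end{equation*}
with $c_d = |\B^d|^{1/2}$. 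This is the crucial bound: it trades one power of frequency for the $L^2$ mass instead of the kinetic energy.

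Next I would use the layer-cake representation together with the triangle-inequality inclusion $\{|u|>t\}\subseteq\{|u_P^{\limminus}|>t/2\}\cup\{|u_P^{\limplus}|>t/2\}$. Choose $P=P(t)$ so that the first set is empty, i.e.\
\begin{equation*}
	P(t) := C_d\,\bigl(t/\|u\|_2\bigr)^{2/d}, \qquad C_d := \bigl((2\pi)^d/(4c_d^2)\bigr)^{1/d},
\end{equation*}
which forces $\|u_{P(t)}^{\limminus}\|_\infty \le t/2$. Chebyshev's inequality \eqref{eq:Chebyshev} then gives $\bigl|\{|u_{P(t)}^{\limplus}|>t/2\}\bigr|\le 4t^{-2}\|u_{P(t)}^{\limplus}\|_2^2$, so
\begin{equation*}
	\|u\|_q^q = \int_0^\infty \bigl|\{|u|>t\}\bigr|\, d(t^q) \le 4q\int_0^\infty \|u_{P(t)}^{\limplus}\|_2^2\, t^{q-3}\,dt.
\end{equation*}
By Plancherel, $\|u_{P(t)}^{\limplus}\|_2^2=\int_{|\bp|>P(t)}|\hu(\bp)|^2 d\bp$, and inverting the relation $|\bp|>P(t)\Leftrightarrow t<\Lambda(\bp):=\|u\|_2(|\bp|/C_d)^{d/2}$ I would apply Fubini to swap the integrations.

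The resulting inner $t$-integral is $\int_0^{\Lambda(\bp)} t^{q-3}\,dt = \Lambda(\bp)^{q-2}/(q-2) = (d/4)\,\|u\|_2^{4/d}(|\bp|/C_d)^2$, which is finite since $q-2 = 4/d >0$ for every $d\ge 1$. Plugging back gives
\begin{equation*}
	\|u\|_q^q \le 4q\cdot\frac{d}{4}\,C_d^{-2}\,\|u\|_2^{4/d}\int_{\R^d}|\bp|^2|\hu(\bp)|^2\,d\bp = dq\,C_d^{-2}\,\|u\|_2^{4/d}\,\|\nabla u\|_2^2,
\end{equation*}
which after rearrangement yields \eqref{eq:GNS} with an explicit constant $G_d = (dq\,C_d^{-2})^{-2/q}>0$. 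The only non-routine point is the choice of $P(t)$, which must be dictated by the elementary Fourier bound on $\|u_P^{\limminus}\|_\infty$; the rest is bookkeeping of exponents. I do not expect any serious obstacle, but the cleanest way to handle the endpoint $d=1,2$ is to verify that the integrability exponent $q-2 = 4/d$ is positive and that the bound uses no dimension-dependent Sobolev embedding—precisely the feature that distinguishes this approach from reducing GNS to \eqref{eq:Sobolev} via \eqref{eq:Hoelder-GNS}.
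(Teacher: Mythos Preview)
Your argument is correct and yields the theorem, but it is a genuinely different route from the paper's proof. Both proofs share the Fourier split $u=u_P^{\limminus}+u_P^{\limplus}$ and the same Cauchy--Schwarz bound $\|u_P^{\limminus}\|_\infty\lesssim P^{d/2}\|u\|_2$, but they process it differently. You run the layer-cake / Chebyshev machinery of the Sobolev proof (Theorem~\ref{thm:Sobolev}) on $\|u\|_q^q$, choosing $P=P(t)$ to kill the low-frequency level set. The paper instead uses Rumin's identity $\int_0^\infty\|u_{E,+}\|_2^2\,dE=\|\nabla u\|_2^2$ (with $E=P^2$), applies the pointwise triangle inequality $|u_{E,+}|\ge[|u|-|u_{E,-}|]_+$, and then evaluates $\int_0^\infty[A-BE^{d/4}]_+^2\,dE$ in closed form. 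The paper's approach is the one that extends cleanly to the many-body and local versions (Theorems~\ref{thm:GNS-many-body}, \ref{thm:GNS-local}, \ref{thm:LT-kinetic-1p}), because the triangle inequality step can be taken in $L^2(\R^{d(N-1)};\C^N)$ and the low-energy bound becomes a Bessel inequality; your layer-cake route does not adapt as directly to those settings.

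One minor slip: from $\|u\|_q^q\le dq\,C_d^{-2}\,\|u\|_2^{4/d}\,\|\nabla u\|_2^2$ the constant in \eqref{eq:GNS} is simply $G_d=(dq\,C_d^{-2})^{-1}$, not $(dq\,C_d^{-2})^{-2/q}$ --- you seem to have carried over the exponent from the Sobolev case where the right-hand side was $\|\nabla u\|_2^q$. This does not affect the validity of the proof since the theorem only asserts existence of some $G_d>0$.
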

\begin{remark}\label{rem:GNS-constant}
	The optimal constant satisfies 
	$G_1 = \pi^2/4 > 1$, $G_2 = S_{2,4}$, respectively $G_d \ge S_d$ for $d \ge 3$,
	and for all $d \ge 1$ we also have that
	\begin{equation}\label{eq:GNS-constant}
		G_d \ge G_d' := \frac{(2\pi)^2 d^{2+2/d} |\S^{d-1}|^{-2/d}}{(d+2)(d+4)}.
	\end{equation}
	The exact value of the optimal constant $G_{d \ge 2}$ is presently unknown 
	but numerical work suggests
	$G_3 \approx 9.578$, to be contrasted with $S_3 \approx 5.478$
	and $G_3' \approx 3.907$ 
	(see \cite{LieThi-76,Lieb-76}, and also 
	\cite{Levitt-14} for more recent related numerical work).
\end{remark}
\begin{proof}
	We decompose an arbitrary $u \in H^1(\R^d)$ into parts corresponding
	to low respectively high kinetic energy according to
	$u = u_{E,-} + u_{E,+}$, 
	with an energy cut-off $E > 0$, and
	$$
		u_{E,-} := \cF^{-1}\left[ \1_{\{|\bp|^2 \le E\}} \,\hu \right]
		\qquad \text{and} \qquad
		u_{E,+}  := \cF^{-1}\left[ \1_{\{|\bp|^2 > E\}} \,\hu \right].
	$$
	Then by the unitarity of the Fourier transform, and Fubini,
	\begin{align}\label{eq:GNS-high-energy}
		\int_0^\infty &\int_{\R^d} |u_{E,+}(\bx)|^2 \,d\bx \,dE
		= \int_0^\infty \int_{\R^d} |\widehat{u_{E,+}}(\bp)|^2 \,d\bp \,dE
		= \int_{\R^d} \int_0^{|\bp|^2} |\hu(\bp)|^2 \,dE \,d\bp \ \\
		&= \int_{\R^d} |\bp|^2|\hu(\bp)|^2 \,d\bp
		= \int_{\R^d} |\nabla u(\bx)|^2 \,d\bx.
	\end{align}
	For the low-energy part we use that by Fourier inversion and Cauchy--Schwarz
	\begin{align}\label{eq:GNS-low-energy}
		|u_{E,-}(\bx)| &= \left| (2\pi)^{-d/2} \int_{\R^d} 
			\1_{\{|\bp|^2 \le E\}} \hu(\bp) e^{i\bp\cdot\bx} \,d\bp \right| \\
		&\le (2\pi)^{-d/2} |B_{E^{1/2}}(0)|^{1/2} \|\hu\|_2
		= (2\pi)^{-d/2} d^{-1/2} |\S^{d-1}|^{1/2} E^{d/4} \|u\|_2.
	\end{align}
	Now, combining \eqref{eq:GNS-high-energy} and \eqref{eq:GNS-low-energy}
	with the pointwise triangle inequality
	\begin{equation}\label{eq:GNS-triangle-ineq}
		|u_{E,+}(\bx)| \ge \Bigl[ |u(\bx)| - |u_{E,-}(\bx)| \Bigr]_+,
	\end{equation}
	yields the bound
	\begin{align*}
		\int_{\R^d} |\nabla u(\bx)|^2 \,d\bx
		\ge \int_0^\infty &\int_{\R^d} \left[
			|u(\bx)| - (2\pi)^{-d/2} d^{-1/2} |\S^{d-1}|^{1/2} \|u\|_2 E^{d/4}
			\right]_+^2 \,d\bx \,dE.
	\end{align*}
	Again changing the order of integration and then carrying 
	out the integral over $E$, with
	\begin{equation}\label{eq:GNS-integral-identity}
		\int_0^\infty \left[ A - B t^{d/4} \right]_+^2 dt
		= \frac{d^2 A^{2+4/d} B^{-4/d}}{(d+2)(d+4)},
	\end{equation}
	one finally obtains
	\begin{align*}
		\int_{\R^d} |\nabla u(\bx)|^2 \,d\bx
		\ge \frac{(2\pi)^2 d^{2+2/d} |\S^{d-1}|^{-2/d}}{(d+2)(d+4)} 
			\|u\|_2^{-4/d}
			\int_{\R^d} |u|^{2(1+2/d)}.
	\end{align*}
	This also produces the bound \eqref{eq:GNS-constant}
	for the optimal constant $G_d$ while
	for $d \ge 3$ this may be improved by \eqref{eq:Sobolev-GNS}.
\end{proof}

	We have also the following many-body version of the GNS inequality:
	
\begin{theorem}[Gagliardo--Nirenberg--Sobolev inequality --- many-body version]
	\label{thm:GNS-many-body}
	For any $d \ge 1$, $N \ge 1$, and every $L^2$-normalized $N$-body state 
	$\Psi \in H^1(\R^{dN})$,
	$$
		\sum_{j=1}^N \int_{\R^{dN}} |\nabla_j \Psi|^2
		\ge G_d \,N^{-2/d} \int_{\R^d} \varrho_\Psi^{1+2/d}.
	$$
\end{theorem}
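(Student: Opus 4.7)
The strategy is to reduce the many-body inequality to the one-body inequality of Theorem~\ref{thm:GNS} applied slicewise, via a Cauchy--Schwarz (``Hoffmann--Ostenhof type'') bound that controls the gradient of a square-root density by the corresponding kinetic energy, and then to assemble the $N$ partial densities into the full one-body density $\varrho_\Psi$ by a convexity argument that produces exactly the factor $N^{-2/d}$.

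First I would introduce, for each $j = 1, \ldots, N$, the \emph{$j$-th partial density}
$$
\rho_j(\bx) := \int_{\R^{d(N-1)}} |\Psi(\bx_1, \ldots, \bx_{j-1}, \bx, \bx_{j+1}, \ldots, \bx_N)|^2 \prod_{k \neq j} d\bx_k,
$$
so that by definition $\varrho_\Psi = \sum_{j=1}^N \rho_j$, and by normalization $\int_{\R^d} \rho_j = 1$. Set $\phi_j := \sqrt{\rho_j}$, which is an $L^2$-normalized function on $\R^d$. Computing $\nabla \rho_j = 2 \,\mathrm{Re} \int \overline{\Psi} \nabla_j \Psi \prod_{k\ne j} d\bx_k$ and applying the Cauchy--Schwarz inequality pointwise in $\bx$ yields
$$
|\nabla \rho_j(\bx)|^2 \le 4 \rho_j(\bx) \int_{\R^{d(N-1)}} |\nabla_j \Psi|^2 \prod_{k\ne j} d\bx_k,
$$
which (away from the zero set of $\rho_j$) becomes $|\nabla \phi_j|^2 \le \int |\nabla_j \Psi|^2 \prod_{k\ne j} d\bx_k$. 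Integrating in $\bx$ gives $\|\nabla \phi_j\|_{L^2(\R^d)}^2 \le \int_{\R^{dN}} |\nabla_j \Psi|^2\,d\sx$, confirming in particular that $\phi_j \in H^1(\R^d)$.

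Next I would apply the one-body GNS inequality of Theorem~\ref{thm:GNS} to $\phi_j$, which, since $\|\phi_j\|_{L^2} = 1$ and $|\phi_j|^{2(1+2/d)} = \rho_j^{1+2/d}$, gives
$$
\int_{\R^{dN}} |\nabla_j \Psi|^2 \,d\sx \ \ge\ \int_{\R^d} |\nabla \phi_j|^2 \,d\bx \ \ge\ G_d \int_{\R^d} \rho_j(\bx)^{1+2/d} \,d\bx.
$$
Summing over $j$ and invoking the pointwise convexity bound
$$
\varrho_\Psi(\bx)^{1+2/d} = \Bigl(\sum_{j=1}^N \rho_j(\bx)\Bigr)^{1+2/d} \le N^{2/d} \sum_{j=1}^N \rho_j(\bx)^{1+2/d},
$$
(a direct consequence of Jensen's inequality for the convex function $t \mapsto t^{1+2/d}$ on $\Rplus$) then produces the desired estimate.

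The only genuine technical issue is making sense of $\nabla \phi_j = \nabla \sqrt{\rho_j}$ on the (possibly large) set where $\rho_j$ vanishes. I would handle this by the standard regularization $\phi_j^{\eps} := \sqrt{\rho_j + \eps}$ for $\eps > 0$, for which the computation above goes through without division issues, and then pass to the limit $\eps \to 0$ using monotone/dominated convergence (together with the fact that $\nabla \rho_j$ vanishes a.e.\ on $\{\rho_j = 0\}$ for $\Psi \in H^1$, a standard consequence of the chain rule for Sobolev functions). This is the only step where care beyond the structural identities is required; the rest of the argument is the transparent combination of one-body GNS with Jensen's inequality described above.
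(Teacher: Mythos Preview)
Your proof is correct. It is closely related to, but organized differently from, the approach the paper indicates. The paper suggests two routes: either (i) directly adapt Rumin's proof of the one-body GNS (Exercise~\ref{exc:GNS-many-body}), or (ii) apply the full Hoffmann--Ostenhof inequality $\sum_j \int |\nabla_j\Psi|^2 \ge \int |\nabla\sqrt{\varrho_\Psi}|^2$ (Lemma after the theorem) and then the one-body GNS to $u=\sqrt{\varrho_\Psi}$, which has $\|u\|_2^2=N$ and therefore produces the factor $N^{-2/d}$ directly from the GNS. You instead apply the diamagnetic/Cauchy--Schwarz estimate to each \emph{partial} density $\rho_j$ (each of unit mass), then GNS to each $\sqrt{\rho_j}$ with $\|\sqrt{\rho_j}\|_2=1$, sum, and recover the $N^{-2/d}$ at the end via Jensen. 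Your route avoids quoting the full Hoffmann--Ostenhof inequality for the total density and is in that sense slightly more elementary; the paper's route (ii) is shorter once HO is granted. Both yield the same constant $G_d$.
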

	
	This can be proved either by directly generalizing the above proof (exercise) 
	or by using the following inequality followed by an application of
	Theorem~\ref{thm:GNS} to $u = \sqrt{\varrho_\Psi}$.
	
\begin{lemma}[\keyword{Hoffmann-Ostenhof inequality}]
	For any $d \ge 1$, $N \ge 1$, and every $L^2$-normalized $N$-body state 
	$\Psi \in H^1(\R^{dN})$,
	\begin{equation}\label{eq:HO}
		\sum_{j=1}^N \int_{\R^{dN}} |\nabla_j \Psi|^2
		\ge \int_{\R^d} |\nabla \sqrt{\varrho_\Psi}|^2.
	\end{equation}
\end{lemma}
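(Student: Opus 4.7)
The plan is to reduce to a single-particle Cauchy--Schwarz estimate applied to each marginal density, then combine the marginals using pointwise convexity of $\rho \mapsto |\nabla\sqrt{\rho}|^2$. Introduce the $j$-th marginal density
\[
\rho_j(\bx) := \int_{\R^{d(N-1)}} |\Psi(\bx_1,\ldots,\bx_{j-1},\bx,\bx_{j+1},\ldots,\bx_N)|^2 \prod_{k \neq j} d\bx_k,
\]
so that $\varrho_\Psi = \sum_{j=1}^N \rho_j$. Then I would break the proof into the two ingredients described below.

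First, the single-particle estimate: for fixed $j$, applying the product rule gives $\nabla \rho_j(\bx) = 2\Re \int \bar{\Psi}\,\nabla_j \Psi \prod_{k\neq j} d\bx_k$, and the Cauchy--Schwarz inequality (in the variables $\{\bx_k\}_{k \neq j}$) yields the pointwise bound
\[
|\nabla \rho_j(\bx)|^2 \le 4\, \rho_j(\bx) \int |\nabla_j \Psi|^2 \prod_{k\neq j} d\bx_k.
\]
Wherever $\rho_j(\bx)>0$ we may divide by $4\rho_j(\bx)$ and recognize $|\nabla\sqrt{\rho_j}|^2 = |\nabla\rho_j|^2/(4\rho_j)$; integrating over $\bx \in \R^d$ then gives
\[
\int_{\R^d} |\nabla \sqrt{\rho_j}|^2\, d\bx \le \int_{\R^{dN}} |\nabla_j \Psi|^2.
\]

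Second, a pointwise convexity-type inequality combining the marginals: setting $R := \sqrt{\sum_j \rho_j} = \sqrt{\varrho_\Psi}$, the identity $2R\nabla R = \sum_j \nabla \rho_j = \sum_j 2\sqrt{\rho_j}\,\nabla\sqrt{\rho_j}$ gives
\[
\nabla R = \sum_{j=1}^N \frac{\sqrt{\rho_j}}{R}\, \nabla\sqrt{\rho_j}
\]
on $\{R>0\}$, and the vector-valued Cauchy--Schwarz inequality applied to this convex combination yields pointwise
\[
|\nabla \sqrt{\varrho_\Psi}|^2 \le \Biggl(\sum_{j=1}^N \frac{\rho_j}{R^2}\Biggr)\Biggl(\sum_{j=1}^N |\nabla\sqrt{\rho_j}|^2\Biggr) = \sum_{j=1}^N |\nabla\sqrt{\rho_j}|^2.
\]
Integrating and chaining with the first step finishes the proof.

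The main obstacle is rigor at the set where the densities vanish, since $\sqrt{\rho_j}$ is not a priori weakly differentiable and divisions by $\rho_j$ or $R$ must be justified. I would handle this in the standard way by regularizing, replacing $\sqrt{\rho_j}$ with $\sqrt{\rho_j + \eps}$ (so $\nabla\sqrt{\rho_j+\eps} = \nabla\rho_j/(2\sqrt{\rho_j+\eps})$ is a bona fide $L^2$ function controlled by $\int|\nabla_j\Psi|^2$), carrying out both steps above for the regularized quantities, and then passing to the limit $\eps \to 0^+$ via monotone/dominated convergence; density of $C_c^\infty(\R^{dN})$ in $H^1(\R^{dN})$ lets us assume $\Psi$ is smooth throughout the argument.
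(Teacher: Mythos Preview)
Your proof is correct and is essentially the standard argument. The paper does not actually give a proof of this lemma; it only states it, remarks that it is equivalent to the one-body diamagnetic inequality $\int|\nabla u|^2 \ge \int\bigl|\nabla|u|\bigr|^2$, and refers to the original paper of Hoffmann-Ostenhof and to Lewin's notes for a proof. Your first step (the Cauchy--Schwarz bound $|\nabla\rho_j|^2 \le 4\rho_j \int|\nabla_j\Psi|^2\prod_{k\neq j}d\bx_k$) is precisely the Hilbert-space-valued version of that diamagnetic inequality, applied to $\Psi$ viewed as a map $\bx_j \mapsto \Psi(\ldots,\bx_j,\ldots) \in L^2(\R^{d(N-1)})$, so your approach matches what the paper points to. The second step, combining the marginals via the pointwise Cauchy--Schwarz estimate $|\nabla\sqrt{\sum_j\rho_j}|^2 \le \sum_j|\nabla\sqrt{\rho_j}|^2$, is exactly the convexity of the map $\rho\mapsto|\nabla\rho|^2/(4\rho)$ and is the standard way to finish. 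Your handling of the regularization at the zero set is appropriate.
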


	The inequality \eqref{eq:HO} is actually equivalent to its 
	one-body version, i.e.\ the simple inequality
	\begin{equation}\label{eq:diamagnetic}
		\int_{\R^d} |\nabla u|^2 \ge \int_{\R^d} \bigl|\nabla |u|\bigr|^2,
	\end{equation}
	which is known as a \keyword{diamagnetic inequality}
	(because it holds in greater generality also involving magnetic fields;
	see e.g. \cite[Theorem~7.21]{LieLos-01}).
	The many-body version \eqref{eq:HO} was first proved in \cite{Hof-77} 
	(see also e.g. \cite[Lemma~3.2]{Lewin-15} for a simple generalization and proof).

\begin{exc}
	Prove the inequality \eqref{eq:Hoelder-GNS}.
\end{exc}

\begin{exc}\label{exc:GNS-many-body}
	Prove Theorem~\ref{thm:GNS-many-body} by defining for each
	$\sx = (\bx_1,\ldots,\bx_{j-1},\bx_{j+1},\ldots,\bx_N) \in \R^{d(N-1)}$
	a collection of functions
	$$
		u_j(\bx,\sx') := \Psi(\bx_1,\ldots,\bx_{j-1},\bx,\bx_{j+1},\ldots,\bx_N)
	$$
	and using instead of \eqref{eq:GNS-triangle-ineq}
	the triangle inequality on $L^2(\R^{d(N-1)};\C^N)$,
	\begin{align*}
		&\left( \int_{\R^{d(N-1)}} \sum_{j=1}^N |u_j^{E,+}(\bx,\sx')|^2 \,d\sx' \right)^{1/2} \\
		&\ge \left[
		 \left( \int_{\R^{d(N-1)}} \sum_{j=1}^N |u_j(\bx,\sx')|^2 \,d\sx' \right)^{1/2}
		-\left( \int_{\R^{d(N-1)}} \sum_{j=1}^N |u_j^{E,-}(\bx,\sx')|^2 \,d\sx' \right)^{1/2}
		\right]_+.
	\end{align*}
\end{exc}

\subsection{Applications to stability}\label{sec:uncert-stability}

	In this subsection we follow mainly \cite{Lieb-76,LieSei-09,Seiringer-10}.
	\index{stability}

\subsubsection{The stability of the hydrogenic atom}

	We now return to the hydrogenic atom of Example~\ref{exmp:hydrogenic-2p},
	which after factoring out the free center-of-mass kinetic energy
	leaves the more relevant relative Hamiltonian operator
	on $\cH = L^2(\phX_\rel) = L^2(\R^d)$:
	\begin{equation}\label{eq:hydrogenic-Hamiltonian}
		\hH_\rel = -\Delta - \frac{Z}{|\bx|}.
	\end{equation}
	Here we have put for simplicity $2\mu=1$ for the reduced mass or, 
	equivalently, rescaled the operator and the value of $Z$,
	which is no loss in generality.
	This operator should be understood to be defined via the energy form
	$$
		\cE[\psi] := q_{\hH_\rel}(\psi)
		= \inp{\psi, \left[ -\Delta - \frac{Z}{|\bx|} \right] \psi}_{L^2(\R^3)}
		= \int_{\R^3} |\nabla \psi|^2 - Z\int_{\R^3} \frac{|\psi(\bx)|^2}{|\bx|} d\bx,
	$$
	where we may take $\psi$ in the minimal form domain $C_c^\infty(\R^d)$ 
	or a larger closed domain $\cQ(\hH_\rel) \subseteq H^1(\R^3)$. 
	Any self-adjoint realization 
	of $\hH_\rel$ associated to this form
	then has the ground-state energy
	$$
		E_0 = \inf \bigl\{ \cE[\psi] : \psi \in H^1(\R^3), \ \|\psi\|_2 = 1 \bigr\}.
	$$

	Using Heisenberg's uncertainty principle \eqref{eq:Heisenberg-inverse}, 
	one has the bound
	\begin{equation}\label{eq:Heisenberg-fail}
		\cE[\psi] 
		\ge \frac{9}{4}\left( \int_{\R^3} |\bx|^2|\psi(\bx)|^2 \,d\bx \right)^{-1} 
		- Z\int_{\R^3} \frac{|\psi(\bx)|^2}{|\bx|} d\bx,
	\end{equation}
	whose r.h.s.~can be made arbitrarily negative (Exercise~\ref{exc:Heisenberg-fail}),
	therefore not being able to settle the stability question.
	However, using instead the Hardy inequality \eqref{eq:Hardy}, 
	we obtain the lower bound
	$$
		\cE[\psi] 
		\ge \int_{\R^3} \left[ \frac{1}{4|\bx|^2} - \frac{Z}{|\bx|} \right] 
			|\psi(\bx)|^2 \,d\bx.
	$$
	We may then proceed
	by minimizing the expression in brackets pointwise:
	$$
		\min_{\bx \in \R^3} \left[ \frac{1}{4|\bx|^2} - \frac{Z}{|\bx|} \right]
		= -Z^2,
		\qquad \text{for} \ |\bx| = \frac{1}{2Z},
	$$
	and by the normalization of $\psi$ we therefore obtain the finite lower bound
	\begin{equation}\label{eq:hydrogenic-bound-Hardy}
		E_0 \ge -Z^2,
	\end{equation}
	and thus stability for the hydrogenic atom for any finite charge $Z > 0$
	(and it is trivially stable for $Z \le 0$ according to our definitions).
	In summary, we have thus found that the uncertainty principle 
	introduces an effective repulsion
	around the origin which overcomes the attraction from the nucleus
	by its stronger scaling property,
	scaling quadratically in the inverse distance as opposed to linearly,
	here resulting in an equilibrium around $|\bx| = 1/(2Z)$.

	Another approach is to use the Gagliardo--Nirenberg--Sobolev inequality 
	\eqref{eq:GNS}, that is
	\begin{equation}\label{eq:GNS-hydrogenic}
		\cE[\psi] 
		\ge G_3 \int_{\R^3} |\psi|^{10/3}
		- Z\int_{\R^3} \frac{|\psi(\bx)|^2}{|\bx|} d\bx,
	\end{equation}
	with the constant $G_3 \ge S_3 \approx 5.478$.
	In this case we are led to a constrained optimization problem for the density
	$\varrho := |\psi|^2$,
	\begin{equation}\label{eq:GNS-hydrogenic-minimization}
		E_0 \ge
		\inf \left\{ \int_{\R^3} \left( G_3 \varrho(\bx)^{5/3}
			- Z \frac{\varrho(\bx)}{|\bx|} \right) d\bx
			\ : \ \varrho\colon \R^3 \to \R_+, \int_{\R^3} \varrho = 1 \right\},
	\end{equation}
	whose minimum can be shown (Exercise~\ref{exc:GNS-stability}) to be 
	\begin{equation}\label{eq:GNS-hydrogenic-minimum}
		-9(\pi/2)^{4/3} Z^2/(5G_3) \ge -3Z^2/5
	\end{equation}
	for 
	$$
		\varrho(\bx) = \left( \frac{3}{5} \frac{Z}{G_3} (|\bx|^{-1} - R^{-1})_+ \right)^{3/2},
	$$ 
	with $R = \frac{3}{5}(2/\pi)^{4/3} G_3/Z$.
	Therefore the 
	GNS inequality, which arose 
	as a weaker implication of the Sobolev inequality
	(and thus a yet weaker implication of the Hardy inequality),
	is still strong enough to enforce stability.
	One may even note that formally replacing the exponent $10/3$ 
	in the kinetic term 
	in \eqref{eq:GNS-hydrogenic} by anything strictly greater than $3$, 
	i.e. the $5/3$ in \eqref{eq:GNS-hydrogenic-minimization}
	by any exponent $p > 3/2$
	(which would require the constant $G_3$ to be dimensionful however),
	would have been sufficient for stability (exercise).

	It is actually possible to solve for the complete spectrum 
	$\sigma(\hH_\rel)$ of the hydrogenic atom, 
	which was indeed worked out shortly after the birth of quantum mechanics.
	In particular, the exact ground state may be seen to be
	(with a normalization constant $C>0$)
	$$
		\psi_0(\bx) = C e^{-Z|\bx|/2},
	$$
	since this function is positive, square-integrable, and solves the Schr\"odinger 
	eigenvalue equation
	\begin{equation}\label{eq:hydrogenic-groundstate}
		\hH_\rel \psi_0 = \left( -\Delta - \frac{Z}{|\bx|} \right) \psi_0 
		= -\frac{Z^2}{4} \psi_0
	\end{equation}
	(see e.g. \cite[Section~2.2.2]{LieSei-09} 
	and \cite[Section~11.10]{LieLos-01} for details).
	Thus, by the min-max principle we have
	$$
		\cE[\psi] =
		\int_{\R^3} |\nabla \psi|^2 - Z\int_{\R^3} \frac{|\psi(\bx)|^2}{|\bx|} d\bx
		\ge -\frac{Z^2}{4} \int_{\R^3} |\psi(\bx)|^2 d\bx
	$$
	for all $\psi \in \cQ(\hH_\rel)$.
	We thus see that the above-obtained bounds using the Hardy and Sobolev/GNS
	uncertainty principles are quite close to the
	actual ground-state energy 
	$E_0 = -Z^2/4$.
	Moreover, the entire spectrum of the operator \eqref{eq:hydrogenic-Hamiltonian} 
	turns out to be
	(see e.g. \cite[Chapter~10]{Teschl-14})
	$$
		\sigma(\hH_\rel) 
		= \left\{ -\frac{Z^2}{4(n+1)^2} \right\}_{n=0,1,2,\ldots} 
			\cup [0,\infty).
	$$
	The infinite sequence of negative eigenvalues of finite multiplicity
	are the energy levels of the bound electron\index{bound states},
	with eigenstates corresponding to the ground state and the excited orbitals of the atom,
	while the non-negative essential spectrum describes states where the
	electron is not bound to the nucleus but rather scatters off of it,
	i.e.\ \keyword{scattering states}.

\begin{exc}\label{exc:Heisenberg-fail}
	Prove that the r.h.s.\ of \eqref{eq:Heisenberg-fail} 
	tends to $-\infty$ for some sequence of $L^2$-normalized states 
	$\psi \in H^1(\R^d)$.
\end{exc}
\begin{exc}\label{exc:GNS-stability}
	Compute the minimizer for the variational problem 
	\eqref{eq:GNS-hydrogenic-minimization} in the generalized case with exponent
	$p > 3/2$ (why is this bound necessary?),
	and the minimum \eqref{eq:GNS-hydrogenic-minimum} in the case $p=5/3$.
\end{exc}
\begin{exc}
	Verify the Schr\"odinger equation \eqref{eq:hydrogenic-groundstate}.
	How can we be sure that $\psi_0$ is the ground state?
\end{exc}

\subsubsection{General criteria for stability of the first kind}

	In the case of a one-body Schr\"odinger Hamiltonian operator
	$\hH = -\Delta + V$ on $\cH = L^2(\R^d)$ 
	with a general potential $V\colon \R^d \to \R$, $d \ge 3$,
	we have using Sobolev that for all $\psi \in H^1(\R^d)$
	$$
		\cE[\psi] := q_{\hH}(\psi)
		= \int_{\R^d} |\nabla \psi|^2 + \int_{\R^d} V|\psi|^2
		\ge S_d \|\psi\|_{2d/(d-2)}^2 + \int_{\R^d} (V_+ - |V_-|)|\psi|^2,
	$$
	with $V_\pm := (V \pm |V|)/2$.
	Furthermore, if $V_- \in L^{d/2}(\R^d)$
	then we have using H\"older that
	$$
		\int_{\R^d} |V_-||\psi|^2 \le \|V_-\|_{d/2} \|\psi\|_{2d/(d-2)}^2,
	$$
	and therefore
	$$
		\cE[\psi] \ge \left( S_d - \|V_-\|_{d/2} \right) \|\psi\|_{2d/(d-2)}^2.
	$$
	Assuming $\|V_-\|_{d/2} \le S_d$
	then implies $\cE[\psi] \ge 0$, 
	and hence clearly stability for such potentials.
	However, one may also extract an arbitrary negative constant from the 
	potential without changing this conclusion.
	In general, if
	$$
		V(\bx) = U(\bx) + v(\bx),
	$$
	where $U \ge -C$, i.e. $U_- \in L^\infty(\R^d)$, and $v \in L^{d/2}(\R^d)$,
	then by the layer-cake principle there exists for any $\eps \in (0,1)$
	some constant $A_\eps \ge 0$
	such that $\|(A_\eps+v)_-\|_{d/2} \le \eps S_d$,
	and thus
	\begin{align*}
		\cE[\psi] &= T[\psi] + V[\psi] 
		= (1-\eps)T[\psi] + \eps T[\psi] 
			+ \int \bigl(U - A_\eps + (A_\eps+v) \bigr)|\psi|^2 \\
		&\ge (1-\eps)T[\psi] + \eps T[\psi] - \|U_-\|_\infty - A_\eps 
			- \int \bigl|(A_\eps+v)_-\bigr||\psi|^2 \\
		&\ge (1-\eps)T[\psi] - \|U_-\|_\infty - A_\eps 
			+ \bigl( \eps S_d - \|(A_\eps + v)_-\|_{d/2} \bigr) \|\psi\|_{2d/(d-2)}^2 \\
		&\ge (1-\eps)T[\psi] - \|U_-\|_\infty - A_\eps.
	\end{align*}
	This is summarized in the following theorem, where the case $d \le 2$
	is left as an exercise:

\begin{theorem}\label{thm:stability-by-Sobolev}
	Given a Schr\"odinger Hamiltonian \eqref{eq:one-body-Hhat}
	on $\R^d$ with quadratic form
	$$
		\cE[\psi]
		= \int_{\R^3} \left( |\nabla \psi|^2 + V|\psi|^2 \right),
	$$
	for some potential $V\colon \R^d \to \R$
	and finite kinetic energy, $\psi \in H^1(\R^d)$,
	there is \keyword{stability} for the corresponding quantum system, i.e.
	$$
		E_0 = \inf \bigl\{ \cE[\psi] : \psi \in H^1(\R^3), \ \|\psi\|_2 = 1 \bigr\}
		\quad > \ -\infty,
	$$
	if
	$$
		V_- \in \left\{\begin{array}{ll}
			L^{d/2}(\R^d) + L^\infty(\R^d), & d \ge 3, \\
			L^{1+\eps}(\R^2) + L^\infty(\R^2), & d = 2, \\
			L^1(\R^1) + L^\infty(\R^1), & d = 1.
		\end{array}\right.
	$$
\end{theorem}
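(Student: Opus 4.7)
The case $d \ge 3$ has essentially been carried out in the discussion preceding the theorem, so my plan is to treat only $d = 1, 2$. In both cases I will follow the same three-step template as in dimension $\ge 3$: (i) split $V = U + v$ with $U \ge -\|U_-\|_\infty$ and $v$ in the appropriate Lebesgue space; (ii) use H\"older's inequality to dominate $\int v_- |\psi|^2$ by a norm of $v$ times a power of $\|\psi\|_q$ for a suitable $q$; (iii) bound that $\|\psi\|_q$ by a sublinear power of the kinetic energy $T[\psi] := \|\nabla\psi\|_2^2$ via the dimension-appropriate Sobolev inequality from Theorem~\ref{thm:Sobolev}, and conclude by elementary analysis of the one-variable function $t \mapsto t - Ct^{\alpha}$ with $\alpha < 1$.

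For $d = 2$, write $V_- \le |v| + A$ with $v \in L^{1+\eps}(\R^2)$ and $A \ge 0$ a constant (always possible modulo the $L^\infty$ component and a layer-cake truncation argument analogous to the one used in the $d \ge 3$ proof above). Set $p := 2(1+\eps)/\eps$, so that its H\"older conjugate satisfies $2 \cdot \frac{1+\eps}{\eps} = p$. Then for $L^2$-normalized $\psi \in H^1(\R^2)$,
\begin{equation*}
	\int_{\R^2} |v||\psi|^2 \le \|v\|_{1+\eps} \, \|\psi\|_p^2.
\end{equation*}
By~\eqref{eq:Sobolev-2d} we have $\|\psi\|_p^{2p/(p-2)} \le S_{2,p}^{-1} T[\psi]$, i.e.\ $\|\psi\|_p^2 \le S_{2,p}^{-1/(1+\eps)} T[\psi]^{1/(1+\eps)}$. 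Combining,
\begin{equation*}
	\cE[\psi] \ge T[\psi] - C \, T[\psi]^{1/(1+\eps)} - A,
	\qquad C := \|v\|_{1+\eps} S_{2,p}^{-1/(1+\eps)},
\end{equation*}
and since $1/(1+\eps) < 1$ the right-hand side is bounded below uniformly in $\psi$ (either by direct calculus or by Young's inequality $Ct^\alpha \le \tfrac12 t + C'$).

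For $d = 1$, write analogously $V_- \le |v| + A$ with $v \in L^1(\R)$, and estimate
\begin{equation*}
	\int_{\R} |v||\psi|^2 \le \|v\|_1 \, \|\psi\|_\infty^2
	\le \|v\|_1 \, \|\psi\|_2 \, \|\psi'\|_2
	= \|v\|_1 \, T[\psi]^{1/2},
\end{equation*}
using \eqref{eq:Sobolev-1d} and $\|\psi\|_2 = 1$. This gives $\cE[\psi] \ge T[\psi] - \|v\|_1 T[\psi]^{1/2} - A \ge -\|v\|_1^2/4 - A$, and we are done.

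The only genuine subtlety I foresee is the splitting step (i) in dimensions $1$ and $2$: namely, showing that an arbitrary $V_- \in L^{p_d}(\R^d) + L^\infty(\R^d)$ (with $p_1 = 1$, $p_2 = 1 + \eps$) can be decomposed so that the $L^{p_d}$ part has arbitrarily small norm at the expense of a large additive constant, which is what the Sobolev estimate only marginally absorbs. For $d \ge 3$ this was done via the layer-cake truncation $(A_\eps + v)_-$ and the key point was that we had a \emph{strict} inequality $\|(A_\eps+v)_-\|_{d/2} < \eps S_d$; in $d=2$ the same truncation works because the exponent $1+\eps > 1$ makes the tail of $v$ integrable with arbitrarily small norm, while in $d=1$ even the pure $L^1$ bound requires no smallness since the estimate is already sublinear in $T[\psi]$. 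So in fact the $d = 1$ case needs no smallness trick at all, and only $d = 2$ requires a brief layer-cake argument parallel to the $d \ge 3$ case.
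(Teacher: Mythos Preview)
Your argument is correct and is exactly the intended one: the paper only treats $d \ge 3$ in the text and explicitly leaves $d \le 2$ as an exercise, so there is no paper proof to compare against, but your use of \eqref{eq:Sobolev-2d} and \eqref{eq:Sobolev-1d} together with H\"older is precisely what is being asked for.

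One small clarification regarding your final paragraph: in $d=2$ you do \emph{not} need any smallness or layer-cake truncation either. Your own computation shows that the potential term is controlled by $C\,T[\psi]^{1/(1+\eps)}$ with exponent $1/(1+\eps) < 1$, which is strictly sublinear in $T[\psi]$ just as in the $d=1$ case. Hence $t \mapsto t - Ct^{1/(1+\eps)} - A$ is bounded below for any $C$, and the full $\|v\|_{1+\eps}$ is absorbed without ever having to make it small. The smallness trick is specific to $d \ge 3$, where Sobolev gives $\|\psi\|_{2d/(d-2)}^2 \lesssim T[\psi]$ with a \emph{linear} power, so that a large $\|v\|_{d/2}$ could genuinely overwhelm the kinetic term. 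So your step~(i) in $d=1,2$ only needs the trivial decomposition $V_- \le |v| + \|U_-\|_\infty$ coming directly from the hypothesis $V_- \in L^{p_d} + L^\infty$, with no further refinement.
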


	The Hardy inequality can in fact be even stronger than the above theorem, 
	namely we have immediately by \eqref{eq:Hardy} 
	that $E_0 > -\infty$ if
	$$
		V(\bx) \ge -\frac{(d-2)^2}{4|\bx|^2} -C,
	$$
	(note that the r.h.s.\ is not in $L^{d/2}_\loc(\R^d)$ for $d \ge 3$
	so Theorem~\ref{thm:stability-by-Sobolev} does not apply),
	or even if (exercise)
	\begin{equation}\label{eq:many-points-Hardy-pot}
		V(\bx) \ge -\frac{(d-2)^2}{4} \sum_{k=1}^M |\bx - \bR_k|^{-2} - C,
	\end{equation}
	for finitely many distinct points $\bR_j \neq \bR_k$ in $\R^d$.

	If $V_-$ is not too singular then
	one may also obtain an explicit bound for $E_0$ directly 
	from the GNS inequality \eqref{eq:GNS},
	which even turns out to be equivalent to such a bound:

\begin{theorem}[GNS---Schr\"odinger equivalence]\label{thm:stability-by-GNS}
	The ground-state energy $E_0$ of the Schr\"o\-dinger form $\cE[\psi]$
	in Theorem~\ref{thm:stability-by-Sobolev} 
	is bounded from below by
	\begin{equation}\label{eq:one-body-LT}
		E_0 \ge -L^1_d \int_{\R^d} |V_-|^{1+d/2},
	\end{equation}
	with the positive constant 
	$L^1_d = \frac{2}{d+2} \bigl(\frac{d}{d+2}\bigr)^{d/2} G_d^{-d/2}$.
	
	Conversely, if a bound of the form \eqref{eq:one-body-LT} holds for
	arbitrary potentials $V$ and some constant $L^1_d > 0$, then 
	the GNS inequality \eqref{eq:GNS} holds for all $u \in H^1(\R^d)$ with
	the positive constant
	$G_d = \frac{d}{d+2} \bigl(\frac{2}{d+2}\bigr)^{2/d} (L^1_d)^{-2/d}$.
\end{theorem}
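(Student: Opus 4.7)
The plan is to prove the two implications separately, both via a straightforward but well-chosen variational argument. In each direction the key is to recognize that the conjugate exponents $p = 1+d/2$ and $q = (d+2)/d$ are matched to the GNS exponent $2(1+2/d)$ via $2q = 2(1+2/d)$, so that H\"older's inequality couples $\int |V_-||\psi|^2$ cleanly with $\|\psi\|_{2(1+2/d)}^2$. I expect no serious obstacle in either direction; the main care is to track constants and the normalization carefully so that the stated expressions for $L_d^1$ and $G_d$ match up.

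For the forward direction ($\text{GNS} \Rightarrow$ bound), I would start from an $L^2$-normalized $\psi \in H^1(\R^d)$ and combine the GNS inequality of Theorem~\ref{thm:GNS} with a single application of H\"older with exponents $(1+d/2)$ and $(d+2)/d$ to write
\begin{equation*}
\cE[\psi] \ \ge\ G_d \|\psi\|_{2(1+2/d)}^{2(1+2/d)} - \|V_-\|_{1+d/2}\,\|\psi\|_{2(1+2/d)}^2.
\end{equation*}
Setting $t := \|\psi\|_{2(1+2/d)}^2 \ge 0$, this reduces the problem to minimizing $G_d t^{1+2/d} - \|V_-\|_{1+d/2}\,t$ over $t \ge 0$. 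The minimum is attained at $t^{2/d} = \|V_-\|_{1+d/2}/((1+2/d)G_d)$, and a direct computation gives the bound \eqref{eq:one-body-LT} with precisely $L_d^1 = \frac{2}{d+2}\bigl(\frac{d}{d+2}\bigr)^{d/2} G_d^{-d/2}$.

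For the converse (bound $\Rightarrow$ GNS), the idea is to test the stability bound against the specially chosen potential $V_c := -c\,|u|^{4/d}$ for a free parameter $c > 0$; the exponent $4/d$ is forced because $(4/d)(1+d/2) = 2(1+2/d)$ makes $\int |V_c|^{1+d/2} = c^{1+d/2}\int |u|^{2(1+2/d)}$ and $\int V_c |u|^2 = -c\int |u|^{2(1+2/d)}$ coincide up to the factor $c^{d/2}$. Taking $u$ normalized in $L^2$, the assumed bound then yields
\begin{equation*}
\int_{\R^d} |\nabla u|^2 \ \ge\ \bigl(c - L_d^1 c^{1+d/2}\bigr) \int_{\R^d} |u|^{2(1+2/d)}.
\end{equation*}
Optimizing the prefactor in $c$ gives the best choice $c = (2/(L_d^1(d+2)))^{2/d}$ and produces the constant $\frac{d}{d+2}(2/((d+2)L_d^1))^{2/d}$. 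Finally, I would remove the normalization assumption by the standard homogeneity argument: applying the inequality to $\tilde u = u/\|u\|_2$ and using the scalings $\int |\nabla \tilde u|^2 = \|u\|_2^{-2}\int |\nabla u|^2$ and $\int |\tilde u|^{2(1+2/d)} = \|u\|_2^{-2(1+2/d)}\int |u|^{2(1+2/d)}$ reintroduces the factor $\|u\|_2^{-4/d}$ and yields \eqref{eq:GNS} with the claimed $G_d$.

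The trickiest bookkeeping step will be verifying that the two constants are genuine inverses of one another, i.e.\ that substituting $L_d^1 = \frac{2}{d+2}\bigl(\frac{d}{d+2}\bigr)^{d/2} G_d^{-d/2}$ into $G_d = \frac{d}{d+2}\bigl(\frac{2}{d+2}\bigr)^{2/d}(L_d^1)^{-2/d}$ returns the same $G_d$; this is a direct algebraic check once the exponents $d/2$ and $2/d$ are tracked carefully. No density argument beyond that built into Theorem~\ref{thm:GNS} and into the definition of the form domain of $\hH$ should be required, since H\"older and the variational principle are applied on $H^1(\R^d)$ throughout.
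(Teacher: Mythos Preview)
Your proposal is correct and matches the paper's proof essentially step for step: both directions use the same H\"older pairing with exponents $1+d/2$ and $(d+2)/d$, the same one-variable optimization for the forward bound, and the same test potential $V=-c|u|^{4/d}$ followed by optimization in $c$ (plus the homogeneity rescaling) for the converse. The only cosmetic difference is that the paper parameterizes the forward optimization by $t=\int|\psi|^{2(1+2/d)}$ rather than $t=\|\psi\|_{2(1+2/d)}^2$, which amounts to the same computation.
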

\begin{proof}
	To obtain \eqref{eq:one-body-LT}, note that
	by GNS \eqref{eq:GNS} and H\"older we have that
	for any $L^2$-normalized $\psi \in H^1(\R^d)$
	\begin{align*}
		\cE[\psi] &\ge G_d \int_{\R^d} |\psi|^{2(1+2/d)} 
			- \left( \int_{\R^d} |V_-|^{(d+2)/2} \right)^{2/(d+2)}
				\left( \int_{\R^d} |\psi|^{2(d+2)/d} \right)^{d/(d+2)} \\
		&\ge - L^1_d \int_{\R^d} |V_-|^{(d+2)/2},
	\end{align*}
	where we used the fact that the function
	$\Rplus \ni t \mapsto At - Bt^{d/(d+2)}$ for $A,B > 0$ has the minimal value
	$-\frac{2}{d+2} \bigl(\frac{d}{d+2}\bigr)^{d/2} A^{-d/2} B^{(d+2)/2}$.
	
	On the other hand, if \eqref{eq:one-body-LT} holds, then 
	first assume that $\psi \in H^1(\R^d)$ with $\|\psi\|_2 = 1$
	and let us write
	for an arbitrary potential $V$:
	\begin{align*}
		T[\psi] &= T[\psi] + V[\psi] - V[\psi]
		\ge E_0 - \int_{\R^d} V|\psi|^2
		\ge -L^1_d \int_{\R^d} |V_-|^{1+d/2} - \int_{\R^d} V|\psi|^2.
	\end{align*}
	Now, take $V(\bx) := -c|\psi(\bx)|^\alpha$ 
	and demand that the above integrals involving $|\psi|$ match modulo constants,
	i.e. $\alpha(1+d/2) = \alpha + 2$, or equivalently, $\alpha=4/d$. Thus,
	$$
		T[\psi] \ge \left( c - c^{1+d/2} L^1_d \right) \int_{\R^d} |\psi|^{2(1+2/d)},
	$$
	and we may finally optimize in $c > 0$ to obtain \eqref{eq:GNS}
	with the claimed relationship between $G_d$ and $L^1_d$.
	In the case that $\lambda := \|\psi\|_2 \neq 1$ 
	the homogeneous GNS inequality \eqref{eq:GNS} is obtained by simple rescaling 
	$\psi = \lambda \tilde\psi$.
\end{proof}

\begin{exc}
	Prove Theorem~\ref{thm:stability-by-Sobolev} in the case $d=1$ and $d=2$.
\end{exc}

\begin{exc}
	Prove that $E_0 > -\infty$ for \eqref{eq:many-points-Hardy-pot}.
\end{exc}

\subsection{Poincar\'e}\label{sec:uncert-Poincare}

	The Heisenberg, Hardy and Sobolev inequalities were all \emph{global} 
	in the sense that they involved the full configuration space $\R^n$
	(or Dirichlet restrictions of it, by simple restriction of the domain
	to $H^1_0(\Omega) \subseteq H^1(\R^n)$).
	We shall now consider some \emph{local} formulations of uncertainty principles
	(amounting to Neumann restrictions which potentially \emph{increase} the domain),
	the prime example being the Poincar\'e inequality.
	
\begin{definition}[Poincar\'e inequality]
	A \keyword{Poincar\'e inequality} on a domain $\Omega \subseteq \R^n$ 
	with finite measure $|\Omega|$
	is a lower bound of the form
	\begin{equation}\label{eq:Poincare}
		\int_\Omega |\nabla u|^2
		\ge C_\sP \int_\Omega |u - u_\Omega|^2
	\end{equation}
	for some constant $C_\sP = C_\sP(\Omega) > 0$
	and for all $u \in H^1(\Omega)$, 
	where in the r.h.s.~we have subtracted the average of $u$ on $\Omega$,
	\begin{equation}\label{eq:u_Omega}
		u_\Omega := |\Omega|^{-1} \int_\Omega u.
	\end{equation}
\end{definition}

\begin{remark}
	Note that if $C_\sP > 0$ then $\Omega \subseteq \R^d$ has to be a connected set,
	for otherwise we may choose $u$ to be a non-zero constant on each connected 
	component and such that $u_\Omega = 0$, for example
	$u = |\Omega_1|^{-1}\1_{\Omega_1} - |\Omega_2|^{-1}\1_{\Omega_2}$,
	so that the l.h.s.~of \eqref{eq:Poincare} is zero but the r.h.s.~non-zero.
	Also note that by dimensional scaling, 
	$C_\sP(\Omega) = |\Omega|^{-2/d} C_\sP(\Omega/|\Omega|)$,
	where $C_\sP(\Omega/|\Omega|)$ only depends on the shape of $\Omega$.
\end{remark}

	It is useful to reformulate the inequality \eqref{eq:Poincare} 
	as an operator relation for the Laplacian on $\Omega$.
	We note that, since $\nabla u_\Omega = 0$, the l.h.s. of \eqref{eq:Poincare}
	can also be written
	\begin{equation}\label{eq:Poincare-LHS}
		\int_\Omega |\nabla u|^2 = \int_\Omega |\nabla(u - u_\Omega)|^2
		= \langle (u - u_\Omega), -\Delta^\eN (u - u_\Omega) \rangle
	\end{equation}
	and, with $u_0 := |\Omega|^{-1/2}$ the $L^2$-normalized zero-eigenfunction
	of the Neumann Laplacian on $\Omega$, we can write
	$u_\Omega = u_0 \langle u_0, u \rangle$, and thus
	\begin{equation}\label{eq:Poincare-RHS}
		\langle u_0, u - u_\Omega \rangle 
		= \langle u_0,u \rangle - \langle u_0,u_0 \rangle \langle u_0,u \rangle = 0.
	\end{equation}
	In other words, 
	if we introduce $P_0 := u_0 \langle u_0, \cdot \rangle$ 
	the orthogonal projection operator on the ground-state eigenspace $W_0 = \C u_0$
	and $P_0^\perp = \1-P_0$ the projection on the orthogonal subspace $W_0^\perp$,
	we have $u_\Omega = P_0u$ and $u - u_\Omega = P_0^\perp u$.
	Hence, the Poincar\'e inequality \eqref{eq:Poincare} equivalently says
	\begin{equation}\label{eq:Poincare-Hilbert}
		\langle P_0^\perp u, (-\Delta_\Omega^\eN) P_0^\perp u \rangle 
		\ge C_\sP \langle P_0^\perp u, P_0^\perp u\rangle
	\end{equation}
	or, as an operator inequality,
	\begin{equation}\label{eq:Poincare-Hilbert-ops}
		(-\Delta_\Omega^\eN) P_0^\perp \ge C_\sP P_0^\perp.
	\end{equation}
	Hence, we see that finding the best possible constant $C_\sP$ 
	for a given domain $\Omega$
	is the same as finding the second lowest eigenvalue 
	$\lambda_1 \ge \lambda_0=0$ for the Laplace operator
	$-\Delta_\Omega^\eN$ (with Neumann boundary conditions) on $\Omega$,
	$$
		-\Delta_\Omega^\eN = \sum_{k=0}^\infty \lambda_k P_k,
		\qquad P_0 = u_0 \inp{u_0, \slot},
		\qquad P_0^\perp = \sum_{k \ge 1} P_k = \sum_{k \ge 1} u_k \inp{u_k, \slot}.
	$$
	This is actually just the content of the min-max theorem
	of Section~\ref{sec:prelims-ops-specthm}, 
	applied to the form \eqref{eq:Poincare}.
	Also, we see that $C_\sP = \lambda_1-\lambda_0>0$ 
	if and only if there is a gap in the spectrum
	between the lowest eigenvalue $\lambda_0=0$ (the ground-state energy)
	and the second-lowest one 
	(the first excited energy level) $\lambda_1$.
	
\begin{example}\index{Laplacian}
	As a prototype case one may consider
	the Laplacian on the unit interval $[0,1]$.
	The eigenfunctions of the Neumann problem are
	$u_k(x) = C\cos (\pi k x)$ with eigenvalues $\lambda = \pi^2 k^2$, 
	$k=0,1,2,\ldots$. Hence we have a Poincar\'e inequality
	\begin{equation}\label{eq:Poincare-interval}
		\int_0^1 |u'|^2
		\ge \pi^2 \int_0^1 \left| u - {\textstyle\int_0^1} u \right|^2,
	\end{equation}
	for $u \in H^1([0,1])$, with the optimal Poincar\'e constant $C_\sP = \pi^2$.
	In the case of the Dirichlet problem, with $u_k(x) = C\sin (\pi k x)$,
	$\lambda = \pi^2k^2$, $k=1,2,\ldots$, one has an inequality
	\begin{equation}\label{eq:Poincare-interval-Dir}
		\int_0^1 |u'|^2
		\ge \pi^2 \int_0^1 |u|^2,
	\end{equation}
	for any $u \in H_0^1([0,1])$, without any projection in this case.
\end{example}
		
\begin{example}
	A Poincar\'e inequality of the form \eqref{eq:Poincare} cannot hold on 
	the unbounded interval $\R$ or $\Rplus$, not only because of the lack of an
	integrable ground state $u_0$ to project out as in \eqref{eq:u_Omega},
	but more crucially because of the lack of a spectral gap in this case.
	We have already seen and used that 
	$\sigma(-\Delta_\R) = \sigma(\hp_1^2) = [0,\infty)$,
	but also on the half-line $\Rplus$ one may consider a sequence of trial states
	such as
	$$
		u_L(x) = \sqrt{2/L} \sin (\pi x/L) \1_{[0,L]} 
			\ \in H_0^1(\Rplus) \subseteq H^1(\Rplus).
	$$
	By taking $L \to \infty$, such states have an arbitrarily low energy, 
	and one may furthermore pick an orthogonal sequence such as 
	$\{u_L(\slot + nL)\}_{n=0,1,2,\ldots}$ and use the min-max principle to find
	that the essential spectrum $\sigma_\textup{ess}(-\Delta_{\Rplus}^{\eN/\eD})$ 
	starts at zero
	(such a sequence is known as a \keyword{Weyl sequence}).
	Furthermore, by e.g. multiplying $u_L$ with a phase $e^{i\kappa x}$,
	any $\lambda = \kappa^2 \ge 0$ may be seen to be a point of the essential 
	spectrum as well, so
	$\sigma(-\Delta_{\Rplus}^{\eN/\eD}) = [0,\infty)$.
\end{example}

	Note that the existence of a gap in the spectrum
	\index{spectral gap}
	of the Laplacian on a domain $\Omega$
	can also be interpreted as a form of the uncertainty principle
	of $\hx_j$ and $\hp_j$, 
	since the corresponding Hamiltonian describing 
	the free kinetic energy of a particle on $\Omega$
	of mass $m=1/2$ is actually 
	$\hH = \hbp_\Omega^2 = \hbar^2(-\Delta_\Omega)$, 
	which then has a gap of size proportional to $\hbar^2$.
	If $\hx_j$ and $\hp_j$ would be made to commute, as they do classically, by
	formally taking $\hbar \to 0$ in \eqref{eq:quantum-CCR}, then 
	$$
		\sigma(\hH) = \hbar^2\{\lambda_0,\lambda_1,\ldots\} \to [0,\infty)
	$$
	and the gap would therefore close\footnote{In fact any gap in the spectrum will 
	close in this limit because the eigenvalues $\lambda_k$ 
	are distributed rather uniformly;
	they can on bounded regular domains $\Omega$ be shown to satisfy
	$(\lambda_{k+1}-\lambda_{k})/\lambda_{k+1} \to 0$ as $k \to \infty$ 
	(cf. Exercise~\ref{exc:Weyl-cube}).}.
	Another way to think about this limit is that $\hbar$ may be compensated
	for by rescaling the domain, 
	$\Omega \mapsto \Omega/\hbar$, and $\Omega/\hbar \to \R^d$ as $\hbar \to 0$,
	so that
	$$
		\sigma(\hH) = \sigma(-\Delta_{\Omega/\hbar})
		\xrightarrow{\hbar \to 0} \sigma(-\Delta_{\R^d}) 
		= \sigma(\hbp_{\R^d}^2)
		= \sigma(\check{\bp}_{\R^d}^2)
		= [0,\infty),
	$$
	by Fourier transform.
	
\begin{example}
	Poincar\'e inequalities also extend to other contexts where there is
	a gap in the spectrum, such as on compact, connected manifolds.
	One has for example the following Poincar\'e inequality 
	on the unit sphere $\S^{d-1}$ in $\R^d$:
	\begin{equation}\label{eq:Poincare-sphere}
		\int_{\S^{d-1}} |\nabla u|^2 \ge (d-1) \int_{\S^{d-1}} 
			\left|u - |\S^{d-1}|^{-1} {\textstyle\int_{\S^{d-1}}} u \right|^2,
	\end{equation}
	for $u \in H^1(\S^{d-1})$.
	This 
	follows from the following theorem concerning the spectrum of
	the \keyword{Laplace-Beltrami operator} on $\S^{d-1}$,
	i.e. 
	the operator $-\Delta_{\S^{d-1}}$
	associated to the non-negative quadratic form
	of the l.h.s.~of \eqref{eq:Poincare-sphere}
	(see e.g. \cite[Chapter~22]{Shubin-01} for further details and proofs).
\end{example}
	
	\index{Laplacian}
\begin{theorem}[Spectrum of the Laplacian on the sphere; 
	see e.g. {\cite[Theorem~22.1 and Corollary~22.2]{Shubin-01}}]
	The spectrum of the operator $-\Delta_{\S^{d-1}}$ is discrete
	and its eigenvalues are given by $\lambda = k(k+d-2)$, $k=0,1,2,\ldots$,
	with multiplicity given by the dimension of the space of homogeneous,
	harmonic polynomials on $\R^d$ of degree $k$,
	which is $\binom{k+d-1}{d-1} - \binom{k+d-3}{d-1}$
	for $k \ge 2$.
\end{theorem}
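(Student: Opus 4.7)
The plan is to explicitly construct eigenfunctions of $-\Delta_{\S^{d-1}}$ by restricting harmonic polynomials on $\R^d$, count their dimensions by a surjectivity argument, and then use density of polynomials in $L^2(\S^{d-1})$ to conclude completeness. First I would write the Euclidean Laplacian in polar coordinates $(r,\omega) \in \Rplus \times \S^{d-1}$, namely
\begin{equation}
\Delta_{\R^d} = \frac{\partial^2}{\partial r^2} + \frac{d-1}{r}\frac{\partial}{\partial r} + \frac{1}{r^2}\Delta_{\S^{d-1}}.
\end{equation}
For a homogeneous polynomial $P$ of degree $k$ on $\R^d$, writing $P(r\omega) = r^k Y(\omega)$ with $Y := P|_{\S^{d-1}}$, a direct computation shows $\Delta P = r^{k-2}\bigl[k(k+d-2)Y + \Delta_{\S^{d-1}}Y\bigr]$. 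Thus $P$ is harmonic on $\R^d$ if and only if $-\Delta_{\S^{d-1}}Y = k(k+d-2)Y$, producing an entire family of eigenfunctions with the claimed eigenvalues.

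Next, to count multiplicities, let $\eP_k$ denote the space of homogeneous polynomials of degree $k$ in $d$ variables and $\cH_k \subseteq \eP_k$ those which are harmonic. A standard stars-and-bars argument gives $\dim \eP_k = \binom{k+d-1}{d-1}$, and the Laplacian defines a linear map $\Delta\colon \eP_k \to \eP_{k-2}$ whose kernel is exactly $\cH_k$. To obtain the stated dimension I would prove this map is surjective for $k \ge 2$ via a Fischer-type duality: equip $\eP_*$ with the inner product $\inp{P,Q} := \overline{P(\partial)}Q(0)$, under which multiplication by $|\bx|^2$ and the Laplacian are mutually adjoint. Since multiplication by $|\bx|^2\colon \eP_{k-2} \to \eP_k$ is manifestly injective, its adjoint $\Delta\colon \eP_k \to \eP_{k-2}$ is surjective, yielding $\dim \cH_k = \binom{k+d-1}{d-1} - \binom{k+d-3}{d-1}$ as well as the orthogonal decomposition $\eP_k = \cH_k \oplus |\bx|^2 \eP_{k-2}$.

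For completeness in $L^2(\S^{d-1})$, I would first note that the restrictions $\cH_j|_{\S^{d-1}}$ and $\cH_k|_{\S^{d-1}}$ for $j \neq k$ are orthogonal: they are eigenspaces of the symmetric operator $-\Delta_{\S^{d-1}}$ (verified by the divergence theorem or formally from self-adjointness of the Friedrichs extension) with distinct eigenvalues. Iterating the decomposition $\eP_k = \cH_k \oplus |\bx|^2 \eP_{k-2}$ then shows that on $\S^{d-1}$ (where $|\bx|^2 = 1$) every polynomial restricts to a finite sum of spherical harmonics from $\bigoplus_{j \le k} \cH_j|_{\S^{d-1}}$. Since polynomials on $\S^{d-1}$ separate points and contain the constants, Stone--Weierstrass yields density in $C(\S^{d-1})$, and therefore in $L^2(\S^{d-1})$, completing the orthogonal system. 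The discreteness of the spectrum and the absence of any other eigenvalue then follow from the spectral theorem applied to the Friedrichs extension together with this completeness.

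The main obstacle I anticipate is the surjectivity step $\Delta\colon \eP_k \twoheadrightarrow \eP_{k-2}$, since without it the dimension count collapses and the decomposition used for completeness fails; the Fischer inner product trick is the cleanest route but requires care to verify that $|\bx|^2$ and $\Delta$ are indeed adjoints in this pairing. A secondary subtlety is rigorously justifying that the eigenfunctions obtained from harmonic polynomials lie in the form domain of the Friedrichs extension of $-\Delta_{\S^{d-1}}$ and are genuine eigenfunctions of that self-adjoint operator — this is a smoothness statement that should follow from $Y \in C^\infty(\S^{d-1})$ and standard elliptic regularity on the compact manifold $\S^{d-1}$.
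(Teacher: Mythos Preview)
The paper does not actually prove this theorem; it is stated as a known result with a reference to \cite[Chapter~22]{Shubin-01} for ``further details and proofs''. Your proposal is the standard proof and is correct: the polar decomposition of $\Delta_{\R^d}$, the identification of restrictions of harmonic homogeneous polynomials as spherical harmonics, the Fischer inner product argument for surjectivity of $\Delta\colon \eP_k \to \eP_{k-2}$, and Stone--Weierstrass for completeness together constitute exactly the classical approach (and indeed essentially what one finds in Shubin). Your anticipated obstacles are real but minor --- the adjointness of $|\bx|^2$ and $\Delta$ under the Fischer pairing is a one-line monomial computation, and the smoothness of polynomial restrictions makes the domain issue trivial.
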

	
\begin{exc}
	Prove the Poincar\'e inequality \eqref{eq:Poincare-sphere}
	on the unit circle $\S^1$, by treating it as an
	interval $[0,2\pi]$ with periodic boundary conditions,
	$u(0) = u(2\pi)$, $u'(0) = u'(2\pi)$.
\end{exc}
\begin{exc}
	Prove a Poincar\'e inequality on an annulus 
	$\Omega_{R_1,R_2} = B_{R_2}(0) \setminus \bar{B}_{R_1}(0)$
	and give some explicit non-zero lower bound for the constant $C_\sP$
	depending on $R_2 > R_1 > 0$.
\end{exc}

\subsection{Local Sobolev-type inequalities}\label{sec:uncert-local-Sobolev}

	There is a family of important inequalities which combine the properties 
	of the global Sobolev inequality of Section~\ref{sec:uncert-Sobolev} 
	with the local properties
	of the Poincar\'e inequality of Section~\ref{sec:uncert-Poincare},
	and which are thus called \keyword{Poincar\'e-Sobolev inequalities}.
	However, these are typically a little more involved to prove and we shall 
	therefore instead take a more direct route to obtain the inequalities that we
	will need, of the form of the Gagliardo-Nirenberg-Sobolev inequalities
	of Theorems~\ref{thm:GNS} and \ref{thm:GNS-many-body}, 
	and which only relies on 
	knowledge of the eigenvalues for the Laplacian on a cube $Q$.
	They could be considered variants of the above-mentioned 
	Poincar\'e-Sobolev inequalities though.
	See \cite{BenValVan-18} for very recent generalizations 
	and improvements of the bounds given below.
	
\subsubsection{Laplacian eigenvalues on the cube}\label{sec:uncert-local-cube}

	\index{Laplacian}
	Consider the Neumann Laplacian on a cube $Q = [0,L]^d \subseteq \R^d$
	and the number $N(E)$ of its eigenvalues $\lambda_k$ below an energy $E>0$
	(note that since $\lambda_0 = 0$ we always have $N(E) \ge 1$).
	In the case $d=1$,
	$$
		\lambda_k = \frac{\pi^2}{|Q|^2} k^2,
		\qquad k = 0,1,2,\ldots,
	$$
	and
	$$
		N(E)-1 = \#\{ k : 0 < \lambda_k < E \} 
		= \#\{ k \in \Z : 0 < k < E^{1/2}|Q|/\pi \}
		\le E^{1/2}|Q|/\pi.
	$$
	In the case $d \ge 2$ we have
	$$
		\lambda_\bk = \frac{\pi^2}{|Q|^{2/d}} |\bk|^2,
		\qquad \bk \in \Z_{\ge 0}^d,
	$$
	and
	\begin{align*}
		N(E)-1 &= \#\{ \bk : 0 < \lambda_\bk < E \} 
		= \#\bigl\{ \bk \in \Z_{\ge 0}^d : 0 < |\bk| < E^{1/2}|Q|^{1/d}/\pi \bigr\} \\
		&\le 2^d(E^{1/2}|Q|^{1/d}/\pi)^d,
	\end{align*}
	where we for $E^{1/2}|Q|^{1/d}/\pi \ge 1$
	roughly bounded the number of integer points of the first quadrant
	inside a sphere of radius $R$ by the volume of an enclosing cube of 
	side length $R+1$.
	Hence,
	\begin{equation}\label{eq:cube-bound-eigenvalues}
		N(E) \le 1 + 2^d|Q|/\pi^d \cdot E^{d/2}
	\end{equation}
	for all $d \ge 1$.
	Also note that the orthonormal eigenfunctions are given explicitly by
	$$
		u_\bk(\bx) = |Q|^{-1/2} \prod_{j=1}^d c_{k_j} \cos \frac{\pi k_j x_j}{|Q|^{1/d}},
	$$
	with $c_0 = 1$ and $c_{k \ge 1} = \sqrt{2}$, so that
	\begin{equation}\label{eq:cube-bound-eigenfunctions}
		\|u_\bk\|_\infty \le |Q|^{-1/2} \prod_{j=1}^d c_{k_j} \le 2^{d/2}|Q|^{-1/2}.
	\end{equation}

\subsubsection{A Poincar\'e-Sobolev-type bound}

	The following is a local version of the Gagliardo--Nirenberg--Sobolev
	inequality of Theorem~\ref{thm:GNS}
	(it was given in this form as Theorem~13 in \cite{LunSol-13a}):

\begin{theorem} 
	\label{thm:GNS-local}
	For any $d \ge 1$ there exists a constant $C_d > 0$ such that for 
	any $d$-cube $Q \subseteq \R^d$ and every $u \in H^1(Q)$
	\begin{equation}\label{eq:GNS-local}
		\int_Q |\nabla u|^2
		\ge C_d \left( \int_Q |u|^2 \right)^{-2/d} 
			\int_Q \left[ |u| - \left(\frac{{\textstyle\int_Q} |u|^2}{|Q|}\right)^{1/2} \right]_+^{2(1+2/d)}.
	\end{equation}
\end{theorem}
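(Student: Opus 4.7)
The plan is to mimic the Fourier-analytic proof of Theorem~\ref{thm:GNS}, replacing the Fourier transform on $\R^d$ with the spectral decomposition of the Neumann Laplacian $-\Delta^\eN_Q$ on the cube $Q$, exploiting the two explicit estimates from Section~\ref{sec:uncert-local-cube}: the counting bound $N(E) \le 1 + 2^d|Q|/\pi^d \cdot E^{d/2}$ and the pointwise bound $\|u_\bk\|_\infty \le 2^{d/2}|Q|^{-1/2}$ for the orthonormal eigenfunctions. Write $u = \sum_\bk c_\bk u_\bk$, set $\bar u := |Q|^{-1} \int_Q u = \langle u_0, u\rangle \, u_0$ (the constant eigenmode), and introduce $v := u - \bar u$, so that $v$ involves only the $\bk \neq 0$ modes.

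For each energy cutoff $E > 0$ define the low/high parts
\begin{equation*}
u_{E,-} := \sum_{\lambda_\bk \le E} c_\bk u_\bk, \qquad u_{E,+} := u - u_{E,-},
\end{equation*}
and note $u_{E,+} = v_{E,+}$ since the zero mode sits in $u_{E,-}$. By Fubini and Parseval,
\begin{equation*}
\int_0^\infty \!\!\int_Q |u_{E,+}|^2 \,d\bx\,dE = \sum_\bk \lambda_\bk |c_\bk|^2 = \int_Q |\nabla u|^2.
\end{equation*}
For the low-energy part, Cauchy--Schwarz in the eigenindex together with the two cube estimates yields
\begin{equation*}
|v_{E,-}(\bx)| \le \Bigl(\sum_{0 < \lambda_\bk \le E} |c_\bk|^2\Bigr)^{1/2} \Bigl(\sum_{0 < \lambda_\bk \le E} \|u_\bk\|_\infty^2\Bigr)^{1/2} \le \|v\|_2 \cdot \frac{2^d}{\pi^{d/2}} E^{d/4}.
\end{equation*}

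Now combine the two sides via the triangle inequality. Since $|u| \le |\bar u| + |v|$ we have $|v| \ge [|u| - |\bar u|]_+$, and hence
\begin{equation*}
|u_{E,+}(\bx)| = |v_{E,+}(\bx)| \ge \left[[|u(\bx)| - |\bar u|]_+ - \|v\|_2 \tfrac{2^d}{\pi^{d/2}} E^{d/4}\right]_+.
\end{equation*}
Substituting into the Dirichlet integral and applying the elementary identity \eqref{eq:GNS-integral-identity} in the $E$-variable gives
\begin{equation*}
\int_Q |\nabla u|^2 \ge \frac{d^2}{(d+2)(d+4)} \cdot \frac{\pi^2}{16}\cdot \|v\|_2^{-4/d} \int_Q [|u| - |\bar u|]_+^{2(1+2/d)} \,d\bx.
\end{equation*}
Finally, since $\|v\|_2^2 = \|u\|_2^2 - |\bar u|^2 |Q| \le \|u\|_2^2$ and (by Cauchy--Schwarz) $|\bar u| \le (\int_Q |u|^2/|Q|)^{1/2} =: M$, which makes the integrand only smaller, we obtain \eqref{eq:GNS-local} with the explicit constant $C_d = d^2 \pi^2 / [16(d+2)(d+4)]$.

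The routine steps are Fubini, Parseval, and the integral identity; the only real point to get right is the decomposition: one must peel off the \emph{constant} eigenmode separately so that the Cauchy--Schwarz estimate on the remaining $0 < \lambda_\bk \le E$ modes produces a clean $E^{d/4}$ factor with no spurious $E$-independent piece. After that, replacing $|\bar u|$ by the (larger) $L^2$-average $M$ and $\|v\|_2$ by $\|u\|_2$ are free comparisons that bring the estimate into the form stated in the theorem. I expect no substantial obstacle beyond bookkeeping of these comparisons.
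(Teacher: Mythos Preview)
Your proof is correct and follows essentially the same Rumin-type argument as the paper: spectral decomposition with respect to $-\Delta_Q^\eN$, the identity $\int_0^\infty \|u_{E,+}\|_2^2\,dE = \int_Q |\nabla u|^2$, Cauchy--Schwarz on the low-energy part using the eigenvalue count \eqref{eq:cube-bound-eigenvalues} and the $L^\infty$ bound \eqref{eq:cube-bound-eigenfunctions}, and then the integral identity \eqref{eq:GNS-integral-identity}. The only difference is cosmetic: you peel off the zero mode $\bar u$ explicitly so that the low-energy remainder is bounded by a clean $E^{d/4}$ term, and then at the end you weaken $|\bar u|\le (\int_Q|u|^2/|Q|)^{1/2}$ and $\|v\|_2\le\|u\|_2$; the paper instead keeps the zero mode inside $u_{E,-}$, bounds $|u_{E,-}|\le(|Q|^{-1}+2^{2d}\pi^{-d}E^{d/2})^{1/2}\|u\|_2$, and then uses $\sqrt{a+b}\le\sqrt a+\sqrt b$ to split off the $|Q|^{-1/2}\|u\|_2$ piece. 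Both routes land on the same constant $C_d = d^2\pi^2/[16(d+2)(d+4)]$.
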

\begin{proof}
	We make a decomposition of $u$ similar to the one in the proof of 
	Theorem~\ref{thm:GNS},
	but this time w.r.t.~the Neumann kinetic energy on $Q$.
	Namely, we define
	$u = u_{E,-} + u_{E,+}$
	with an energy cut-off $E > 0$
	and the spectral projections
	$$
		u_{E,-} := P^{-\Delta_Q^\eN}_{[0,E)} u
		\qquad \text{and} \qquad
		u_{E,+} := P^{-\Delta_Q^\eN}_{[E,\infty)} u.
	$$
	In this case we have by the spectral theorem and the
	properties of the projection-valued measures that
	$$
		\int_0^\infty \|u_{E,+}\|_2^2 \,dE
		= \int_0^\infty \inp{u, P^{-\Delta_Q^\eN}_{[E,\infty)} u} \,dE
		= \inp{u, -\Delta_Q^\eN u},
	$$
	since $P^A_{[E,\infty)} = \int_\R \1_{\{\lambda \ge E\}} dP^A(\lambda)$
	and thus $\int_0^\infty P^A_{[E,\infty)} dE = \int_\R \lambda \,dP^A(\lambda) = A$
	for any self-adjoint operator $A$
	(a reader worried about such formal manipulation may note that it is applied
	here in the form sense with non-negative integrands).
	
	Let us denote the eigenvalues and orthonormal eigenfunctions of $-\Delta_Q^\eN$,
	ordered according to their multiplicity,
	as usual by $\{\lambda_k\}_{k=0}^\infty$ and $\{u_k\}_{k=0}^\infty$
	For the low-energy part we have then for each $\bx \in Q$
	\begin{align*}
		|u_{E,-}(\bx)|^2 &= \left| P^{-\Delta_Q^\eN}_{[0,E)}u (\bx) \right|^2
		= \left| \sum_{\lambda_k < E} \langle u_k,u \rangle u_k (\bx) \right|^2
		= \left|\inp{ \sum_{\lambda_k < E} \overline{u_k(\bx)} u_k, u }\right|^2 \\
		&\le \left( \sum_{\lambda_k < E} |u_k(\bx)|^2 \right) \|u\|_2^2,
	\end{align*}
	by Cauchy--Schwarz and the orthonormality of $\{u_k\}$.
	Furthermore, by \eqref{eq:cube-bound-eigenvalues} and \eqref{eq:cube-bound-eigenfunctions}, 
	we have that
	$$
		\sum_{\lambda_k < E} |u_k(\bx)|^2 
		\le \frac{1}{|Q|} + \sum_{0 < \lambda_k < E} \frac{2^d}{|Q|}
		\le \frac{1}{|Q|} + \frac{2^{2d}}{\pi^d} E^{d/2}.
	$$
	
	After these preparations we may finally use the triangle inequality 
	\eqref{eq:GNS-triangle-ineq}
	and the integral identity \eqref{eq:GNS-integral-identity} to obtain
	\begin{align*}
		\int_Q |\nabla u|^2 &= \int_Q \int_0^\infty |u_{E,+}(\bx)|^2 \,dE \,d\bx \\
		&\ge \int_Q \int_0^\infty \left[
			|u(\bx)| - \left( |Q|^{-1} + 2^{2d} \pi^{-d} E^{d/2} \right)^{1/2} \|u\|_2
			\right]_+^{2} dE \,d\bx \\
		&\ge \int_Q \int_0^\infty \left[
			|u(\bx)| - |Q|^{-1/2}\|u\|_2 - 2^{d} \pi^{-d/2} \|u\|_2 E^{d/4}
			\right]_+^{2} dE \,d\bx \\
		&= C_d \|u\|_2^{-4/d} \int_Q \left[
			|u(\bx)| - |Q|^{-1/2}\|u\|_2
			\right]_+^{2(1+2/d)} d\bx,
	\end{align*}
	with $C_d = d^2(2^{-4}\pi^2)/((d+2)(d+4))$.
\end{proof}

\subsection{Local uncertainty and density formulations}\label{sec:uncert-local}

	We finish this chapter on uncertainty principles with some local many-body
	formulations involving the one-body density $\varrho_\Psi$.
	These will in later chapters be supplemented with local formulations of the 
	exclusion principle to prove powerful global kinetic energy inequalities 
	of wide applicability.

	The following is a local version of the many-body GNS inequality of
	Theorem~\ref{thm:GNS-many-body}
	(it was given in this form as Theorem~14 in \cite{LunSol-13a}):

\begin{theorem} 
	\label{thm:GNS-local-many-body}
	For any $d \ge 1$ there exists a constant $C_d > 0$ 
	(same as in Theorem~\ref{thm:GNS-local}) such that for 
	any $d$-cube $Q \subseteq \R^d$, all $N \ge 1$, 
	and $L^2$-normalized $\Psi \in H^1(\R^{dN})$
	\begin{equation}\label{eq:GNS-local-many-body}
		\sum_{j=1}^N \int_{\R^{dN}} |\nabla_j \Psi|^2 \1_Q(\bx_j) \,d\sx
		\ge C_d \left( \int_Q \varrho_\Psi \right)^{-2/d} 
			\int_Q \left[ \varrho_\Psi^{1/2} - \left(\frac{{\textstyle\int_Q} \varrho_\Psi}{|Q|}\right)^{1/2} \right]_+^{2(1+2/d)}.
	\end{equation}
\end{theorem}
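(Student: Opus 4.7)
The plan is to adapt the proof of Theorem~\ref{thm:GNS-local} to the many-body setting along the lines suggested by Exercise~\ref{exc:GNS-many-body}, with the spectral decomposition of the free Laplacian on $\R^d$ replaced by that of the Neumann Laplacian $-\Delta_Q^\eN$ on the cube $Q$. First I would, for each particle index $j \in \{1,\ldots,N\}$, relabel the $j$-th variable as $\bx$ and the rest as $\sx'_j = (\bx_1,\ldots,\bx_{j-1},\bx_{j+1},\ldots,\bx_N) \in \R^{d(N-1)}$, and view
$$
  u_j(\bx;\sx'_j) := \Psi(\bx_1,\ldots,\bx_{j-1},\bx,\bx_{j+1},\ldots,\bx_N)
$$
as an element of $H^1(Q)$ in its first argument. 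Then, fixing an energy cut-off $E > 0$, split $u_j(\slot;\sx'_j) = u_j^{E,-}(\slot;\sx'_j) + u_j^{E,+}(\slot;\sx'_j)$ using the spectral projections of $-\Delta_Q^\eN$, and use the spectral theorem together with Fubini to write the kinetic term as
$$
  \sum_{j=1}^N \int_{\R^{dN}} |\nabla_j \Psi|^2 \1_Q(\bx_j)\,d\sx
  = \sum_{j=1}^N \int_{\R^{d(N-1)}} \int_0^\infty \|u_j^{E,+}(\slot;\sx'_j)\|_{L^2(Q)}^2 \,dE \,d\sx'_j.
$$

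Next I would carry out the pointwise bound on the low-energy part exactly as in Theorem~\ref{thm:GNS-local}: Cauchy--Schwarz on the spectral expansion together with \eqref{eq:cube-bound-eigenvalues}--\eqref{eq:cube-bound-eigenfunctions} gives, uniformly in $\bx \in Q$,
$$
  |u_j^{E,-}(\bx;\sx'_j)|^2 \le \left( |Q|^{-1} + 2^{2d}\pi^{-d} E^{d/2} \right) \|u_j(\slot;\sx'_j)\|_{L^2(Q)}^2.
$$
Summing over $j$ and integrating over $\sx'_j$, the key observation is that by the definition \eqref{eq:def-density} of the one-body density we have both
$$
  \sum_{j=1}^N \int_{\R^{d(N-1)}} |u_j(\bx;\sx'_j)|^2 \,d\sx'_j = \varrho_\Psi(\bx)
$$
and
$$
  \sum_{j=1}^N \int_{\R^{d(N-1)}} \|u_j(\slot;\sx'_j)\|_{L^2(Q)}^2 \,d\sx'_j = \int_Q \varrho_\Psi.
$$

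The crucial step, which I expect to be the main subtlety, is then the right triangle inequality in the vector-valued space $L^2(\R^{d(N-1)};\C^N)$ applied pointwise in $\bx \in Q$, as in Exercise~\ref{exc:GNS-many-body}:
$$
  \Biggl( \sum_{j=1}^N \int_{\R^{d(N-1)}} |u_j^{E,+}(\bx;\sx'_j)|^2 d\sx'_j \Biggr)^{1/2}
  \ge \Biggl[ \varrho_\Psi(\bx)^{1/2} - \bigl( |Q|^{-1} + 2^{2d}\pi^{-d} E^{d/2} \bigr)^{1/2} \Bigl(\int_Q \varrho_\Psi\Bigr)^{1/2} \Biggr]_+.
$$
Here one must be careful that the upper bound on the low-energy part is uniform in $\bx$ (which it is) so that the subtraction inside $[\slot]_+$ makes sense pointwise, and that the vector norm on the right is bounded below by the difference of vector norms (the one-body density on the one hand, and the spectral bound on the other).

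Finally, using $\sqrt{a+b} \le \sqrt{a}+\sqrt{b}$ to separate the two $E$-dependences, squaring, and integrating over $\bx \in Q$ and then over $E \in (0,\infty)$ via the identity \eqref{eq:GNS-integral-identity} with the exponent $d/4$, yields
$$
  \sum_{j=1}^N \int_{\R^{dN}} |\nabla_j \Psi|^2 \1_Q(\bx_j)\,d\sx
  \ge C_d \Bigl(\int_Q \varrho_\Psi\Bigr)^{-2/d} \int_Q \Biggl[ \varrho_\Psi^{1/2} - \Bigl(\tfrac{\int_Q \varrho_\Psi}{|Q|}\Bigr)^{1/2} \Biggr]_+^{2(1+2/d)},
$$
with $C_d = d^2(2^{-4}\pi^2)/((d+2)(d+4))$, which is exactly the constant from Theorem~\ref{thm:GNS-local}.
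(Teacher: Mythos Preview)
Your proposal is correct and follows essentially the same route as the paper's own proof: define the slice functions $u_j$, decompose via the spectral projections of $-\Delta_Q^\eN$, bound the low-energy part pointwise using \eqref{eq:cube-bound-eigenvalues}--\eqref{eq:cube-bound-eigenfunctions}, apply the triangle inequality in $L^2(\R^{d(N-1)};\C^N)$ as in Exercise~\ref{exc:GNS-many-body}, and integrate in $E$ via \eqref{eq:GNS-integral-identity}. The constant you obtain matches the paper's.
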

	The proof is a straightforward modification of the one-body 
	Theorem~\ref{thm:GNS-local}, using
	$$
		\sum_{j=1}^N \int_{\R^{dN}} |\nabla_j \Psi|^2 \1_Q(\bx_j) \,d\sx
		= \sum_{j=1}^N \int_{\R^{d(N-1)}} \int_Q |\nabla_j \Psi|^2 \,d\bx_j \,d\sx'
		= \sum_{j=1}^N \int_{\R^{d(N-1)}} \int_0^\infty \norm{u^{E,+}_j}^2 dE
	$$
	as in Exercise~\ref{exc:GNS-many-body}, with 
	$\int_{\R^{d(N-1)}} \sum_{j=1}^N |u_j(\bx,\sx')|^2 \,d\sx' = \varrho_\Psi(\bx)$
	and
	$$
		\int_{\R^{d(N-1)}} \sum_{j=1}^N |u_j^{E,-}(\bx,\sx')|^2 \,d\sx'
		\le \left( \frac{1}{|Q|} + \frac{2^{2d}}{\pi^d} E^{d/2} \right)
			\int_{\R^{d(N-1)}} \sum_{j=1}^N \|u_j(\slot,\sx')\|_{L^2(Q)}^2 \,d\sx'.
	$$

	Now, let us write for the total expected kinetic energy 
	of an $N$-body wave function $\Psi \in H^1(\R^{dN})$
	$$
		T[\Psi] = \inp{\hT}_\Psi = \int_{\R^{dN}} |\nabla \Psi|^2.
	$$
	We also introduce the \keyword{local expected kinetic energy} on the cube 
	$Q \subseteq \R^d$ 
	(which may of course also be replaced by a general subdomain $\Omega$)
	$$
		T^Q[\Psi] := \sum_{j=1}^N \int_{\R^{dN}} |\nabla_j \Psi|^2 \,\1_Q(\bx_j) \,d\sx
		= \sum_{j=1}^N \norm{ \1_{\bx_j \in Q} \,\hbp_j \Psi}^2.
	$$
	Using the above inequality we may obtain a bound for this quantity 
	of the particularly convenient form
	$$
		T^Q[\Psi] \ge 
			C_1 \frac{\int_Q \varrho_\Psi^{1 + 2/d}}{(\int_Q \varrho_\Psi)^{2/d}}
			- C_2 \frac{\int_Q \varrho_\Psi}{|Q|^{2/d}},
	$$
	which we shall refer to as a \keyword{local uncertainty principle}.

\begin{lemma}[Local uncertainty principle]
	\label{lem:local-uncertainty}
	For any $d$-cube $Q$ and any $\eps \in (0,1)$ we have
	\begin{equation}\label{eq:local-uncertainty}
		T^Q[\Psi] \ \ge \ 
		C_d \eps^{1+4/d} \ \frac{\int_Q \varrho_\Psi^{1 + 2/d}}{(\int_Q \varrho_\Psi)^{2/d}}
			- C_d \left(1 + \left(\frac{\eps}{1-\eps}\right)^{1+4/d} \right)
				\frac{\int_Q \varrho_\Psi}{|Q|^{2/d}},
	\end{equation}
	with $C_d$ as in Theorems~\ref{thm:GNS-local} and \ref{thm:GNS-local-many-body}.
\end{lemma}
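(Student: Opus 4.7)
The plan is to apply Theorem~\ref{thm:GNS-local-many-body} and then convert its right-hand side, which involves $[\varrho_\Psi^{1/2} - m^{1/2}]_+^{2(1+2/d)}$ with $M := \int_Q \varrho_\Psi$ and $m := M/|Q|$, into a clean $\varrho_\Psi^{1+2/d}$ term minus an $m$-controlled error, by means of a pointwise Young/Jensen-type inequality. Writing $p := 1+2/d$, so that the GNS exponent is $2p$ and the target exponent $1+4/d$ equals $2p-1$, the key ingredient I would prove is
\begin{equation*}
	(r-a)_+^{2p} \ \ge \ \eps^{2p-1}\, r^{2p} \ - \ \left(\frac{\eps}{1-\eps}\right)^{\!2p-1} a^{2p},
	\qquad r,a \ge 0, \ \eps \in (0,1).
\end{equation*}

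To establish this, in the nontrivial case $r \ge a$ I would use the decomposition $r = \eps \cdot \frac{r-a}{\eps} + (1-\eps) \cdot \frac{a}{1-\eps}$ and apply Jensen's inequality to the convex function $t \mapsto t^{2p}$, yielding $r^{2p} \le \eps^{-(2p-1)}(r-a)^{2p} + (1-\eps)^{-(2p-1)} a^{2p}$, which rearranges to the claimed bound. When $r \le a$, the left side vanishes while the right side is bounded above by $[\eps^{2p-1} - (\eps/(1-\eps))^{2p-1}]\, a^{2p} \le 0$, since $\eps \le \eps/(1-\eps)$ and $2p-1 > 0$.

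With the pointwise inequality in hand, I would set $r = \varrho_\Psi(\bx)^{1/2}$ and $a = m^{1/2}$ and integrate over $Q$. Using the constancy of $m$ on $Q$ together with $\int_Q a^{2p} = m^p|Q| = M^p/|Q|^{p-1}$, and then multiplying by the prefactor $C_d M^{-2/d}$ from Theorem~\ref{thm:GNS-local-many-body}, the error term simplifies via $p-1 = 2/d$ and $p - 2/d = 1$ to $C_d (\eps/(1-\eps))^{1+4/d}\, M/|Q|^{2/d}$. This yields the desired inequality \eqref{eq:local-uncertainty}; in fact it yields a slightly sharper version without the ``$1+$'' appearing inside the second coefficient, the extra unit being harmless slack.

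The main obstacle is the choice of exponent in the pointwise inequality: the naive split on the set $\{r \ge a/(1-\eps)\}$ (where one has $r-a \ge \eps r$ and hence $(r-a)^{2p} \ge \eps^{2p} r^{2p}$) produces only the weaker exponent $\eps^{2p} = \eps^{2+4/d}$, whereas the Jensen-based convex splitting above is what extracts the sharper $\eps^{2p-1} = \eps^{1+4/d}$ demanded by the statement. Once this is identified, everything else is routine bookkeeping of the $M$- and $|Q|$-dependent prefactors.
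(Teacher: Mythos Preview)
Your proof is correct and uses essentially the same convexity/Jensen splitting as the paper, but you apply it \emph{pointwise} to $(r(\bx)-a)_+^{2p}$, whereas the paper first passes from $[\,\cdot\,]_+^{2p}$ to $|\cdot|^{2p}$ (incurring an extra $-a^{2p}|Q|$), then invokes the triangle inequality in $L^{2p}(Q)$ to reduce to scalar norms, and only then applies the convexity inequality $(A-B)^{2p} \ge \eps^{2p-1}A^{2p} - (\eps/(1-\eps))^{2p-1}B^{2p}$ to $A=\|\varrho_\Psi^{1/2}\|_{2p}$, $B=\|m^{1/2}\|_{2p}$. Your direct pointwise route is shorter and, as you note, accounts for the ``$1+$'' in the paper's error coefficient being unnecessary slack: the paper's extra $-a^{2p}|Q|$ term is precisely what produces that $1$, while your argument never generates it.
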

\begin{proof}
	By Theorem~\ref{thm:GNS-local-many-body} we have that
	$$
		T^Q[\Psi]
		\ \ge \ \frac{C_d}{(\int_Q \varrho_\Psi)^{2/d}} \int_Q \left[
			\varrho_\Psi(\bx)^{\frac{1}{2}} - \left( \frac{\int_Q \varrho_\Psi}{|Q|} \right)^{\frac{1}{2}}
			\right]_+^{2+4/d} d\bx,
	$$
	with 
	\begin{align*}
		\int_Q &\left[
			\varrho_\Psi(\bx)^{\frac{1}{2}} - \left( \frac{\int_Q \varrho_\Psi}{|Q|} \right)^{\frac{1}{2}}
			\right]_+^{2+4/d} d\bx \\
		&\ge \int_Q \left|
			\varrho_\Psi(\bx)^{\frac{1}{2}} - \left( \frac{\int_Q \varrho_\Psi}{|Q|} \right)^{\frac{1}{2}}
			\right|^{2+4/d} d\bx \ 
			- \int_Q \left( \frac{\int_Q \varrho_\Psi}{|Q|} \right)^{1+2/d} \\
		&= \left\| \varrho_\Psi^{\frac{1}{2}} - \left( \frac{\int_Q \varrho_\Psi}{|Q|} \right)^{\frac{1}{2}} \right\|^{2+4/d}_{2+4/d} 
			- \frac{(\int_Q \varrho_\Psi)^{1+2/d}}{|Q|^{2/d}}.
	\end{align*}
	The first term is bounded below by 
	$$
		\left( \| \varrho_\Psi^{\frac{1}{2}} \|_{2+4/d} 
		- \| ({\textstyle\int_Q} \varrho_\Psi / |Q|)^{\frac{1}{2}} \|_{2+4/d} \right)^{2+4/d}
	$$
	using the triangle inequality on $L^p(Q)$. 
	Furthermore, by convexity of the function $x \mapsto x^p$ for $p \ge 1$ 
	we have for any $a,b \in \R$ and
	$\eps \in (0,1)$ that
	$$
		\left( \eps a + (1-\eps) b \right)^p \le \eps a^p + (1-\eps) b^p,
	$$
	and hence with $a = A-B$ and $b = \frac{\eps}{1-\eps}B$,
	$$
		(A-B)^p \ge \eps^{p-1} A^p - \left( \frac{\eps}{1-\eps} \right)^{p-1} B^p.
	$$
	Applying this inequality to the norms
	above with $p=2+4/d$, we finally arrive at 
	\eqref{eq:local-uncertainty}.
\end{proof}

	Finally, given a partition $\eP$ of the one-particle configuration space $\R^d$
	into disjoint cubes,
	$$
		\R^d = \bigcup_{Q \in \eP} \bar{Q},
		\qquad Q \cap Q' = \varnothing 
		\ \ \forall Q,Q' \in \eP \ \text{s.t.} \ Q \neq Q',
	$$
	we have, with $\1 = \sum_{Q \in \eP} \1_Q$, that
	\begin{equation}\label{eq:local-uncertainty-partition}
		T[\Psi] = \sum_{Q \in \eP} T^Q[\Psi]
		\ge \sum_{Q \in \eP} \left( 
			C_1 \frac{\int_Q \varrho_\Psi^{1 + 2/d}}{(\int_Q \varrho_\Psi)^{2/d}}
			- C_2 \frac{\int_Q \varrho_\Psi}{|Q|^{2/d}}
			\right).
	\end{equation}
	This global 
	bound for the expected kinetic energy of $\Psi$ 
	can only be useful 
	if the positive terms 
	are stronger than the negative ones,
	i.e.\ if the expected number of particles
	$\sum_{j=1}^N \inp{\1_{\bx_j \in Q}}_\Psi = \int_Q \varrho_\Psi$
	on each cube $Q$ is not too large, 
	\emph{and} if the density is sufficiently localized on $Q$,
	$$
		\frac{1}{|Q|} \int_Q \varrho_\Psi^{1+2/d} 
		\gg \left( \frac{1}{|Q|} \int_Q \varrho_\Psi \right)^{1+2/d}.
	$$
	For the case of rather homogeneous density distributions
	the above inequality fails,
	and the local uncertainty principle \eqref{eq:local-uncertainty-partition}
	will therefore have to be supplemented with
	for example an exclusion principle in order to yield a non-trivial global bound,
	and this will be the topic of the next section.

\section{Exclusion principles\lect{ [11,12]}}\label{sec:exclusion}

	In this section we consider consequences of the theory for identical particles
	and exchange phases that was outlined in Section~\ref{sec:mech-QM-statistics}, 
	as well as related concepts, both locally and globally on the configuration space.
	
	Recall that in general we have a division of the full $N$-particle Hilbert space
	$\cH \cong \bigotimes^N \gH$ of distinguishable particles
	into subspaces of symmetric (\keyword[bosonic state]{bosonic}) 
	respectively antisymmetric (\keyword[fermionic state]{fermionic}) states
	of \emph{in}distinguishable particles,
	$$
		\cH_\sym \cong \bigotimes\nolimits_\sym^N \gH, \qquad
		\cH_\asym \cong \bigwedge\nolimits^N \gH,
	$$
	where $\gH$ denotes the one-particle Hilbert space.
	Let us consider here for illustration 
	the usual space $\gH = L^2(\R^d)$ of a particle in $\R^d$, 
	but other spaces will be of interest as well.
	When acting with the non-interacting 
	$N$-body Hamiltonian operator
	(as given e.g. in \eqref{eq:many-body-Hamiltonian-independent})
	$$
		\hH = \sum_{j=1}^N \hh_j, \qquad 
		\hh_j = 
		\hh(\hbx_j,\hbp_j) = -\Delta_{\bx_j} + V(\bx_j) \ \in \cL(\gH),
	$$ 
	on $\cH_\sym$ respectively $\cH_\asym$,
	we may observe a crucial difference in the resulting spectra.
	Namely,
	let us assume for simplicity that we can diagonalize the one-body operator
	$\hh$ into a complete discrete set of 
	eigenvalues 
	$\sigma(\hh) = \{\lambda_n\}_{n=0}^\infty \subset \R$ 
	and a corresponding basis of orthonormal one-body eigenstates
	$\{u_n\}_{n=0}^\infty \subset \gH$,
	with an ordering $\lambda_0 \le \lambda_1 \le \ldots$ 
	according to multiplicity.
	Then the corresponding $N$-body eigenstates 
	$\Psi \in \cH$ of $\hH$ are simply
	\begin{equation}\label{eq:many-body-basis}
		\Psi = \Psi_{\{n_j\}} 
		:= u_{n_1} \otimes u_{n_2} \otimes \ldots \otimes u_{n_N},
		\qquad n_j \in \N_0,
	\end{equation}
	with
	$$
		\hH \Psi_{\{n_j\}} 
		= \sum_{j=1}^N u_{n_1} \otimes \ldots \otimes \hat{h} u_{n_j} \otimes \ldots \otimes u_{n_N}
		= \sum_{j=1}^N \lambda_{n_j} \Psi_{\{n_j\}}.
	$$
	In other words, 
	the $N$-body energy eigenvalues in the case of distinguishable particles are
	$$
		E_{\{n_j\}} = \sum_{j=1}^N \lambda_{n_j},
		\qquad \{n_j\} \in \N_0^N.
	$$
	However, with the symmetry restriction in $\cH_\asym$,
	the basis \eqref{eq:many-body-basis} reduces
	to the antisymmetric product (also known as a \keyword{Slater determinant})
	$$
		\Psi_\asym = u_{n_1} \wedge u_{n_2} \wedge \ldots \wedge u_{n_N} 
		:= \frac{1}{\sqrt{N!}}\sum_{\sigma \in S_N} \sign(\sigma) \,
			u_{\sigma(n_1)} \otimes u_{\sigma(n_2)}
			 \otimes \ldots \otimes u_{\sigma(n_N)},
	$$
	with $n_1 < n_2 < \ldots < n_N$.
	Because of the antisymmetry we cannot use the same one-body state $u_n$ 
	more than once in the expression, namely
	$u_n \wedge u_n = 0$, and this symmetry restriction is known in physics as the 
	\keyword{Pauli principle}\index{exclusion principle} 
	and is thus obeyed by fermions, 
	such as the electrons of an atom \cite{Pauli-47}.
	
	On the other hand, in a basis of the bosonic space
	$\cH_\sym$ we must take symmetric tensor products, 
	and for example the state
	$$
		\Psi_\sym = \otimes^N u_0 := u_0 \otimes u_0 \otimes \ldots \otimes u_0 
		\quad \text{($N$ factors)}
	$$
	with $u_0$ corresponding to the lowest eigenvalue $\lambda_0$ of $\hat{h}$,
	is allowed and will in fact be the ground state of $\hH$ 
	(both when considered as an operator on $\cH$ and on $\cH_\sym$, 
	but \emph{not} on $\cH_\asym \reflectbox{$\notin$} \Psi_\sym$).
	Namely, note that the 
	energy eigenvalue of this state $\Psi_\sym$ is 
	$$
		E_{0,\sym}(N) = N \lambda_0 \ \le \ \sum_{j=1}^N \lambda_{n_j} = E_{\{n_j\}},
	$$
	for any multi-index $\{n_j\}$, and hence this is the ground-state energy.
	In contrast,
	on states $\Psi_\asym$ we necessarily obtain a sum of higher and higher energies,
	and 
	the smallest possible value is
	\begin{equation}\label{eq:fermionic-energy-bound}
		E_{0,\asym}(N) = \sum_{k=0}^{N-1} \lambda_k \ \le \ \sum_{j=1}^N \lambda_{n_j} = E_{\{n_j\}},
	\end{equation}
	for any admissible $\{n_j\}$, i.e. $n_1 < n_2 < \ldots < n_N$.
	Because of this symmetry restriction the fermionic g.s.\ energy must
	(unless the lowest eigenvalue $\lambda_0$ is infinitely degenerate)
	be strictly larger than the bosonic (or the distinguishable) one for 
	large enough $N$,
	and in realistic systems with finite degeneracies it will actually be
	\emph{significantly} larger as $N \to \infty$.
	
	These differences in the rules for distributing bosons and fermions
	into one-body states, enforced by the Pauli principle, has remarkable
	macroscopic consequences when one considers large ensembles of particles,
	which is the aim 
	of \keyword[quantum statistics]{quantum statistical mechanics}.
	Bosons are then said to obey \keyword{Bose--Einstein statistics}
	while fermions are subject to \keyword{Fermi--Dirac statistics}.
	Distinguishable particles on the other hand obey \keyword{Maxwell--Boltzmann statistics} 
	and are sometimes called \keyword{boltzons}.
	
	We shall in this chapter also consider 
	exclusion principles in a more general context.
	In particular, we allow for a weakening of the above Pauli principle
	--- which is actually relevant for real fermions appearing in nature 
	such as electrons with spin ---
	and we also extend the notion of exclusion to encompass other important 
	cases with similar features such as bosons with repulsive pair interactions, 
	as well as anyons in two dimensions.
	The generalization of quantum statistics to allow for several particles in each
	one-body state (beyond spin) has a 
	long history, going back at least to
	Gentile \cite{Gentile-40,Gentile-42} and is thus known as 
	\keyword{Gentile statistics} 
	or \keyword{intermediate (exclusion) statistics}.
	More recent, further generalizations of such concepts have been reviewed in 
	\cite{Haldane-91,Isakov-94,Wu-94,Myrheim-99,Polychronakos-99,Khare-05}.

\begin{exc}\label{exc:fermions-harm-osc}
	Compute the ground-state energy $E_{0,\asym}$ for $N$ fermions in a harmonic trap 
	$V_{\textup{osc}}(\bx) = \frac{1}{2}m\omega^2|\bx|^2$ 
	in $d=1$ and in $d=2$ (see Example~\ref{exmp:harm-osc-QM}).
	Note that in the latter case there are certain ``magic numbers''\index{magic number}
	$N = n(n+1)/2$, $n \in \N$, such that
	\begin{equation}\label{eq:fermion-gs-harm-osc}
		E_{0,\asym}(N) 
		= (1 + 2^2 + \ldots + n^2) \hbar\omega
		= \frac{1}{3}N\sqrt{8N+1} \hbar\omega.
	\end{equation}
\end{exc}

\begin{exc}\label{exc:Weyl-cube}
	Consider $N$ fermions in a cube $Q \subseteq \R^d$.
	Use the values in Section~\ref{sec:uncert-local-cube}
	and an integral approximation of a Riemann sum 
	to show that, as a leading-order approximation,
	\begin{equation}\label{eq:Weyl-asymptotics}
		E_{0,\asym}(N) = 
		\sum_{k=0}^{N-1} \lambda_k(-\Delta_Q^{\eN/\eD}) 
		\approx K_d^\cl \frac{N^{1+2/d}}{|Q|^{2/d}}, 
		\qquad K_d^\cl := 4\pi \frac{d}{d+2}\left( \frac{2}{d+2} \right)^{\frac{2}{d}} 
			\Gamma\left(2+\frac{d}{2}\right)^{\frac{2}{d}}.
	\end{equation}
	This is known as \keyword{Weyl's asymptotic 
	formula} for the sum of eigenvalues,
	and it thus determines the g.s.\ energy of a \keyword{free Fermi gas} 
	confined to a box, with the constants
	\begin{equation}\label{eq:Kcl-low-dims}
		K_1^\cl = \frac{\pi^2}{3}, \qquad
		K_2^\cl = 2\pi, \qquad
		K_3^\cl = \frac{3}{5}(6\pi^2)^{\frac{2}{3}}.
	\end{equation}
\end{exc}

\begin{exc}\label{exc:fermions-localization}
	Compute an upper bound to $E_{0,\asym}(N)$ of \eqref{eq:Weyl-asymptotics} in $d=1,2,3$
	by constructing a trial state $\Psi_\asym$ by localizing each particle
	to a separate cube,
	e.g.\ taking $u_n \in \gH$ to be Dirichlet ground states on the respective cubes,
	and then antisymmetrizing the full expression.
	Does one gain anything by localizing on balls instead? \\
	Hint: the optimal packing density of circles is $\pi/(2\sqrt{3})$ and of spheres is $\pi/(3\sqrt{2})$.
\end{exc}

\subsection{Fermions}\label{sec:exclusion-fermions}
	\index{fermions}

	Let us first consider some local consequences of the Pauli principle 
	for fermions that will turn out to be particularly usful 
	in our context of stability.
	We denote $H^1_\asym = H^1 \cap \cH_\asym$, and so on.

\subsubsection{The Pauli principle}\label{sec:exclusion-fermions-Pauli}
	
	One has the following simple consequence of the Pauli principle for 
	fermions on a cube $Q$:

\begin{proposition}\label{prop:DL-bound}
	Let $Q \subseteq \R^d$ be a $d$-cube, $d \ge 1$.
	For any $N \ge 1$ and $\Psi \in H^1_\asym(Q^N)$, we have
	\begin{equation}\label{eq:DL-bound-cube}
		\int_{Q^N} |\nabla \Psi|^2 \ge (N-1) \frac{\pi^2}{|Q|^{2/d}} 
			\int_{Q^N} |\Psi|^2.
	\end{equation}
\end{proposition}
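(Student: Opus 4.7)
The strategy is to diagonalize the many-body Neumann Laplacian on $Q^N$ by separation of variables and then invoke antisymmetry to exclude all configurations in which more than one particle sits in the one-body ground state. In effect this is exactly the Pauli bound \eqref{eq:fermionic-energy-bound} applied to the one-body operator $-\Delta_Q^\eN$, combined with the fact, recalled in Section~\ref{sec:uncert-local-cube}, that the first nonzero Neumann eigenvalue of $-\Delta$ on a $d$-cube of side $|Q|^{1/d}$ is $\pi^2/|Q|^{2/d}$.

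Concretely, I would let $\{u_{\bk}\}_{\bk \in \N_0^d}$ be the orthonormal basis of Neumann eigenfunctions of $-\Delta_Q$ described in Section~\ref{sec:uncert-local-cube}, with eigenvalues $\mu_{\bk} = \pi^2|\bk|^2/|Q|^{2/d}$. The tensor products $u_{\bk_1} \otimes \cdots \otimes u_{\bk_N}$ then form an orthonormal basis of $L^2(Q^N)$ which diagonalizes $\sum_{j=1}^N(-\Delta_{\bx_j})$ with eigenvalues $\sum_{j=1}^N \mu_{\bk_j}$, and $H^1(Q^N) = \cQ(\sum_j(-\Delta_{\bx_j})^\eN)$ is the associated form domain. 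Projecting onto the antisymmetric subspace, the Slater determinants $u_{\bk_1} \wedge \cdots \wedge u_{\bk_N}$, indexed by strictly ordered $N$-tuples of distinct multi-indices $\bk_1 < \bk_2 < \cdots < \bk_N$ (with respect to any fixed total order on $\N_0^d$), give an orthonormal basis of $L^2_\asym(Q^N)$, and each remains an eigenfunction of the free kinetic energy with the same eigenvalue $\sum_{j=1}^N \mu_{\bk_j}$.

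The core observation is now elementary: in any such admissible tuple the $\bk_j$'s are pairwise distinct, so at most one of them can equal $\0$, and consequently at least $N-1$ of them satisfy $|\bk_j| \ge 1$, contributing $\mu_{\bk_j} \ge \pi^2/|Q|^{2/d}$ each. Hence every antisymmetric basis element $\Phi$ obeys $\inp{\Phi,-\Delta\Phi} \ge (N-1)\pi^2/|Q|^{2/d} \norm{\Phi}^2$. Expanding an arbitrary $\Psi \in H^1_\asym(Q^N)$ in this basis and applying Parseval (in the form sense) yields the claimed inequality \eqref{eq:DL-bound-cube}.

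The only point needing mild justification is the passage from the operator/eigenbasis argument to the form statement on all of $H^1_\asym(Q^N)$; this follows because the antisymmetrization projector commutes with each $-\Delta_{\bx_j}^\eN$ and preserves $H^1$, so the min–max principle of Section~\ref{sec:prelims-ops-specthm} can be applied directly on the antisymmetric subspace. I do not expect any real obstacle here — the content of the statement is really just the Pauli-principle spectral bound for the free Neumann Laplacian on a cube, and the whole proof should fit in a few lines.
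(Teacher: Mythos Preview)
Your proposal is correct and essentially identical to the paper's proof: the paper likewise invokes the Pauli bound \eqref{eq:fermionic-energy-bound} for the one-body Neumann Laplacian on $Q$ to get $\int_{Q^N}|\nabla\Psi|^2/\int_{Q^N}|\Psi|^2 \ge \sum_{k=0}^{N-1}\lambda_k(-\Delta_Q^\eN)$, and then uses $\lambda_0=0$, $\lambda_{k\ge 1}\ge \pi^2/|Q|^{2/d}$ from Section~\ref{sec:uncert-local-cube}. Your explicit Slater-basis expansion is just a slightly more hands-on rendering of the same argument.
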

	
\begin{proof}
	Using the same argument of simple eigenvalue estimation as in~\eqref{eq:fermionic-energy-bound}, 
	but now locally with the one-particle space $\gH = L^2(Q)$ and
	$0 \neq \Psi \in \bigwedge^N \gH$,
	we find in this case
	\begin{equation}\label{eq:DL-bound-proof}
		\frac{\int_{Q^N} |\nabla \Psi|^2}{\int_{Q^N} |\Psi|^2} 
		\ge E_0\left( -\Delta^\eN_{Q^N} \Big|_{\bigwedge^N \gH} \right)
		= \sum_{k=0}^{N-1} \lambda_k(-\Delta^\eN_Q).
	\end{equation}
	Now recall the energy levels $\lambda_k$ of the Neumann Laplacian $-\Delta^\eN_Q$
	given in Section~\ref{sec:uncert-local-cube}:
	$$
		\lambda_0 = 0  \ \ \text{(here $\bk = \0$)}, \qquad 
		\lambda_{k \ge 1} \ge \lambda_1 = \frac{\pi^2}{|Q|^{2/d}} \ \ 
			\text{(here $|\bk| \ge 1$)},
	$$
	which proves the proposition.
\end{proof}

	The above bound gives a concrete local measure of the exclusion principle, 
	as it tells us that, 
	while a single particle may have zero energy on a finite domain 
	(as is possible here due to the Neumann b.c.), 
	as soon as there are two or more particles on the same domain, 
	the energy must be strictly positive.
	However, if the particles were bosons or distinguishable 
	then one could have chosen 
	all particles to be in the ground state,
	i.e. the constant function, 
	and thus obtained zero energy.
	The domain considered above was a cube but similar bounds are naturally valid 
	on other domains as well, and in fact the
	corresponding bound on a ball (see \eqref{eq:DL-bound-ball} below)
	was used by Dyson and Lenard in their original 
	proof of stability of fermionic matter \cite[Lemma~5]{DysLen-67}.
	Indeed the only property of fermionic systems
	(compared to 
	bosonic or boltzonic) 
	used in their proof was
	this remarkably weak implication of the exclusion principle, with an energy
	which only grows \emph{linearly} with $N$, as opposed to the true energy which 
	grows much faster with $N$ according to the Weyl asymptotics as we saw above
	(Exercise~\ref{exc:Weyl-cube}). 
	This curious fact, 
	that matter turns out to be stabilized sufficiently by the exclusion principle
	acting effectively between pairs and triplets of neighboring electrons,
	was discussed briefly in \cite{Dyson-68} and \cite{Lenard-73},
	and will be further clarified in the coming sections.
	
	One may alternatively prove \eqref{eq:DL-bound-cube} resp. \eqref{eq:DL-bound-ball} 
	directly via the Poincar\'e inequality on the domain; 
	see \cite[Theorem~8]{Lenard-73}.
	See also \cite[Lemma~11]{LunNamPor-16} for an extension of this type of exclusion 
	bound to kinetic energy operators involving arbitrary powers of the momentum
	(even fractional, which is relevant for relativistic stability).

\begin{exc}
	Prove a corresponding Pauli bound on a disk or a ball $B = B_R(0)$ of radius $R$,
	$\Psi \in H^1_\asym(B^N)$,
	\begin{equation}\label{eq:DL-bound-ball}
		\int_{B^N} |\nabla \Psi|^2 \ge (N-1) \frac{\xi^2}{R^2} 
			\int_{B^N} |\Psi|^2,
	\end{equation}
	where for the two-dimensional disk, $\xi \approx 1.841$ denotes the first
	non-trivial zero of the derivative of the Bessel function $J_1$, 
	and for the three-dimensional ball, $\xi \approx 2.082$ denotes the
	smallest positive root $x$ of the equation
	$\frac{d^2}{dx^2} \frac{\sin x}{x} = 0$.
\end{exc}

\subsubsection{Fermionic uncertainty and statistical repulsion}\label{sec:exclusion-fermions-Hardy}
	\index{fermionic uncertainty}\index{statistical repulsion}

	Another useful and concrete 
	measure of the exclusion principle betweeen
	fermions comes 
	in the form of a strengthened uncertainty principle.
	Namely, as we shall see below,
	fermions turn out to satisfy an effective pairwise repulsion,
	and the corresponding 
	mathematical statement may be referred to as a 
	\keyword{fermionic many-body Hardy inequality}.
	Inequalities of this form were introduced in \cite{HofLapTid-08} 
	in the global case,
	and were subsequently generalized to anyons in two dimensions and to 
	local formulations in 
	\cite{LunSol-13a,LarLun-16}.
	The optimal constant in the inequality for fermions in the global case for 
	$d \ge 3$ was also discussed in \cite{FraHofLapSol-06}.
	We will here only discuss the simpler global case, 
	although it is important to stress
	that the repulsion persists also locally, which is not the case with the usual
	Hardy inequality without antisymmetry.
	
	Let us start with the following simple one-body version of the inequality
	(which has been pointed out already in \cite{Birman-61}):
	
\begin{lemma}[\keyword{One-body Hardy with antisymmetry}]\label{lem:fermionic-Hardy-1p}
	If $u \in H^1(\R^d)$ is antipodal-antisym\-metric, i.e. $u(-\bx) = -u(\bx)$, then
	\begin{equation} \label{eq:fermionic-Hardy-1p}
		\int_{\R^d} |\nabla u(\bx)|^2 \,d\bx
		\ge \frac{d^2}{4} \int_{\R^d} \frac{|u(\bx)|^2}{|\bx|^2} \,d\bx.
	\end{equation}
\end{lemma}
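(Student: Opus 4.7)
The plan is to combine the standard Hardy inequality in the radial direction with the spectral gap of the Laplace--Beltrami operator on $\S^{d-1}$ restricted to odd functions, exploiting that antipodal antisymmetry removes all even spherical harmonics, in particular the constant (zero) mode.

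First I would dispose of the dimension $d=1$ separately. Since $u \in H^1(\R) \subset C(\R)$ by Sobolev embedding and $u(-x) = -u(x)$, one has $u(0) = 0$, so $u \in H_0^1(\R \setminus \{0\})$; the standard Hardy inequality \eqref{eq:Hardy} then gives the constant $(1-2)^2/4 = 1/4$, which coincides with $d^2/4$. For $d \ge 2$ I would pass to polar coordinates $\bx = r\omega$ and split the Dirichlet form as
\begin{equation*}
\int_{\R^d} |\nabla u|^2 \,d\bx = \int_0^\infty \!\!\int_{\S^{d-1}} \left( |\partial_r u|^2 + \frac{1}{r^2}|\nabla_{\S^{d-1}} u|^2 \right) r^{d-1} \,d\omega\, dr,
\end{equation*}
and similarly express $\int |u|^2/|\bx|^2$ in polar form.

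For each fixed $r$, the function $\omega \mapsto u(r\omega)$ on $\S^{d-1}$ inherits the antipodal antisymmetry, so only spherical harmonics $Y_{\ell,m}$ with $\ell$ odd contribute to its expansion (since $Y_{\ell,m}(-\omega) = (-1)^\ell Y_{\ell,m}(\omega)$). Invoking the eigenvalue formula $\lambda_\ell = \ell(\ell+d-2)$ for $-\Delta_{\S^{d-1}}$ recalled just before Exercise~\ref{exc:Weyl-cube}, the smallest eigenvalue on odd functions is attained at $\ell = 1$ with value $d-1$. This yields the sharpened Poincar\'e-type estimate
\begin{equation*}
\int_{\S^{d-1}} |\nabla_{\S^{d-1}} u(r\omega)|^2 \,d\omega \ \ge \ (d-1) \int_{\S^{d-1}} |u(r\omega)|^2 \,d\omega,
\end{equation*}
valid for every $r > 0$, which upon multiplication by $r^{d-3}$ and integration gives $\int_{\R^d} |\nabla_\omega u|^2/|\bx|^2 \ge (d-1) \int_{\R^d} |u|^2/|\bx|^2$. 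For the radial part I would apply the weighted one-dimensional Hardy inequality
\begin{equation*}
\int_0^\infty |\partial_r u(r\omega)|^2 \, r^{d-1} \,dr \ \ge \ \left(\frac{d-2}{2}\right)^2 \int_0^\infty |u(r\omega)|^2 \, r^{d-3} \,dr,
\end{equation*}
which is just \eqref{eq:Hardy} restricted to radial test functions (trivial for $d = 2$). Adding the angular and radial contributions gives the total constant
\begin{equation*}
\left(\frac{d-2}{2}\right)^2 + (d-1) = \frac{(d-2)^2 + 4(d-1)}{4} = \frac{d^2}{4},
\end{equation*}
which is precisely \eqref{eq:fermionic-Hardy-1p}.

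The one technical point to attend to is approximation at the origin: the manipulations above are cleanest for $u$ in the antisymmetric part of $C_c^\infty(\R^d \setminus \{0\})$, and one must transfer the inequality to all antisymmetric $u \in H^1(\R^d)$. For $d \ge 3$ this follows immediately from Lemma~\ref{lem:H10} (since $C_c^\infty(\R^d \setminus \{0\})$ is dense in $H^1(\R^d)$, and antisymmetrization preserves density). For $d = 2$ the same density argument works, but one must note that the radial Hardy constant $(d-2)^2/4$ vanishes and the full constant $d^2/4 = 1$ comes entirely from the angular Poincar\'e inequality on the circle, which is still an honest gain. I expect this density/approximation step to be the only real obstacle; everything else reduces to orthogonality of spherical harmonics and the radial Hardy bound already proved in Section~\ref{sec:uncert-Hardy}.
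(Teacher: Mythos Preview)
Your proof is correct and follows essentially the same route as the paper: polar decomposition, radial Hardy for the $\partial_r$ part, and the spherical Poincar\'e gap $d-1$ for the angular part, summing to $(d-2)^2/4 + (d-1) = d^2/4$. The only cosmetic difference is that the paper invokes the Poincar\'e inequality \eqref{eq:Poincare-sphere} directly from $\int_{\S^{d-1}} u(r,\cdot)\,d\omega = 0$ (an immediate consequence of antisymmetry) rather than via the parity of spherical harmonics, and it relegates the $d=1$ case and the density issue to a remark; your more explicit handling of both is fine.
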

\begin{remark}
	Note that this improves upon the constant of the standard Hardy inequality
	of Theorem~\ref{thm:Hardy} for $d \ge 2$, 
	and that for $d=1$ antisymmetry and continuity implies $u(0) = 0$ and hence
	$u \in H^1_0(\R \setminus \{0\})$, reducing therefore to the standard inequality.
\end{remark}
\begin{proof}
	We may write in terms of spherical coordinates $\bx = \bx(r,\omega)$,
	$r \ge 0$, $\omega \in \S^{d-1}$,
	\begin{equation} \label{eq:kinetic-u-spherical}
		\int_{\R^d} |\nabla u(\bx)|^2 \,d\bx
		= \int_{r=0}^\infty \int_{\S^{d-1}} \left(
			|\partial_r u|^2 + \frac{|\nabla_\omega u|^2}{r^2}
			\right) r^{d-1} dr d\omega
	\end{equation}
	and change the order of integration by Fubini.
	The usual Hardy inequality \eqref{eq:Hardy}
	actually concerns the radial part here (exercise),
	namely for any $\omega \in \S^{d-1}$,
	\begin{equation} \label{eq:Hardy-radial}
		\int_0^\infty |\partial_r u(r,\omega)|^2 \,r^{d-1} dr 
		\ge \frac{(d-2)^2}{4} \int_0^\infty \frac{|u(r,\omega)|^2}{r^2} \,r^{d-1} dr.
	\end{equation}
	For the angular part of the derivative we may use the Poincar\'e inequality 
	on $\S^{d-1}$ given in \eqref{eq:Poincare-sphere},
	\begin{equation} \label{eq:Poincare-antisymmetry}
		\int_{\S^{d-1}} |\nabla_\omega u(r,\omega)|^2 \,d\omega 
		\ge (d-1) \int_{\S^{d-1}} |u(r,\omega)|^2 \,d\omega,
	\end{equation}
	since, by antipodal antisymmetry, 
	$\int_{\S^{d-1}} u(r,\omega) \,d\omega = 0$ for all $r \ge 0$.
	The lemma then follows by combining these two inequalities.
\end{proof}

	By applying this lemma in pairwise relative coordinates, one obtains the
	following manybody Hardy inequality
	\cite[Theorem~2.8]{HofLapTid-08}: 

\begin{theorem}[\keyword{Many-body Hardy with antisymmetry}]\label{thm:fermionic-Hardy}
	If $\Psi \in H^1_\asym(\R^{dN})$ then
	\begin{equation} \label{eq:fermionic-Hardy}
		\int_{\R^{dN}} \sum_{j=1}^N |\nabla_j \Psi|^2 \,d\sx 
		\ \ge \ \frac{d^2}{N} \int_{\R^{dN}} \sum_{1 \le j<k \le N} 
			\frac{|\Psi(\sx)|^2}{|\bx_j - \bx_k|^2} \,d\sx
		+ \frac{1}{N} \int_{\R^{dN}} \left| \sum_{j=1}^N \nabla_j \Psi \right|^2 d\sx.
	\end{equation}
\end{theorem}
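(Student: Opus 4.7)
\medskip
\noindent\textbf{Proof proposal.}
The plan is to reduce the many-body statement to the one-body antisymmetric Hardy inequality of Lemma~\ref{lem:fermionic-Hardy-1p} by pairing up particles via the generalized parallelogram identity and then passing to pairwise relative coordinates, where the antisymmetry of $\Psi$ becomes an antipodal antisymmetry in the relative variable.

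First I would apply the generalized parallelogram identity \eqref{eq:many-body-parallelogram} component-wise to the gradient vectors $\nabla_j\Psi \in \C^d$, yielding the pointwise identity
\begin{equation*}
	\sum_{j=1}^N |\nabla_j \Psi|^2
	= \frac{1}{N}\Biggl|\sum_{j=1}^N \nabla_j \Psi\Biggr|^2
		+ \frac{1}{N} \sum_{1 \le j<k \le N} |\nabla_j \Psi - \nabla_k \Psi|^2.
\end{equation*}
Integrating over $\R^{dN}$ reproduces the center-of-mass term on the right-hand side of \eqref{eq:fermionic-Hardy} and reduces the task to proving, for each pair $j<k$,
\begin{equation*}
	\int_{\R^{dN}} |\nabla_j \Psi - \nabla_k \Psi|^2 \,d\sx
	\ge d^2 \int_{\R^{dN}} \frac{|\Psi|^2}{|\bx_j-\bx_k|^2} \,d\sx.
\end{equation*}

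Next, for a fixed pair $(j,k)$ I would change variables from $(\bx_j,\bx_k)$ to the pair COM and relative coordinates $\bR = (\bx_j+\bx_k)/2$, $\br = \bx_j - \bx_k$, which is a linear transformation of unit Jacobian (cf.\ Section~\ref{sec:mech-QM-2p}), while leaving all remaining $\bx_l$, $l \notin \{j,k\}$, untouched. Under this change one has $\nabla_j - \nabla_k = 2\nabla_{\br}$, so the integrand becomes $4|\nabla_\br \Psi|^2$. The key observation is that the swap $\bx_j \leftrightarrow \bx_k$ corresponds precisely to $\br \mapsto -\br$ (with $\bR$ and all other variables fixed); hence the full antisymmetry of $\Psi \in H^1_\asym(\R^{dN})$ implies, for a.e.\ choice of the frozen variables, antipodal antisymmetry of the slice $\br \mapsto \Psi(\ldots)$ in $\br \in \R^d$. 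Since $\Psi \in H^1(\R^{dN})$, Fubini guarantees that this slice lies in $H^1(\R^d)$ for a.e.\ choice of frozen variables, so Lemma~\ref{lem:fermionic-Hardy-1p} applies and gives
\begin{equation*}
	\int_{\R^d} |\nabla_\br \Psi|^2 \,d\br
	\ge \frac{d^2}{4} \int_{\R^d} \frac{|\Psi|^2}{|\br|^2} \,d\br
\end{equation*}
slicewise. Integrating over $\bR$ and the remaining $N-2$ particle coordinates and undoing the change of variables yields the desired pair bound with the constant $4 \cdot d^2/4 = d^2$. Summing the pair contributions and dividing by $N$ completes the proof.

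The main obstacle is really just the bookkeeping in the second step: ensuring that the one-body Hardy inequality with antipodal antisymmetry can be applied on almost every slice and that the antisymmetry of $\Psi$ under a single transposition $(j\,k) \in S_N$ (which is all we use per pair) indeed translates, in the pairwise relative coordinate, into the hypothesis $u(-\br) = -u(\br)$ of Lemma~\ref{lem:fermionic-Hardy-1p}. Once this is in place the constants match exactly because the factor $4$ from $|\nabla_j - \nabla_k|^2 = 4|\nabla_\br|^2$ cancels the $1/4$ in the one-body constant $d^2/4$, and the prefactor $1/N$ from the parallelogram identity produces the claimed $d^2/N$.
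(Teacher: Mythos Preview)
Your proof is correct and follows essentially the same route as the paper: parallelogram identity, then pairwise relative coordinates, then Lemma~\ref{lem:fermionic-Hardy-1p} on the antipodal-antisymmetric slice. The only cosmetic difference is that the paper takes $\br = (\bx_j-\bx_k)/2$ (so the Jacobian carries a factor $2^d$ and the $4$ appears via $|\br|^{-2} = 4|\bx_j-\bx_k|^{-2}$) whereas you take $\br = \bx_j-\bx_k$ with unit Jacobian and pick up the $4$ from $|\nabla_j-\nabla_k|^2 = 4|\nabla_\br|^2$; either way the constants match.
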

\begin{proof}
	We first use the many-body parallelogram identity \eqref{eq:many-body-parallelogram}
	$$
		\sum_{j=1}^N |\nabla_j\Psi|^2 
		= \frac{1}{N} \sum_{1 \le j<k \le N} \bigl| \nabla_j\Psi - \nabla_k\Psi \bigr|^2
			+ \frac{1}{N} \Biggl| \sum_{j=1}^N \nabla_j\Psi \Biggr|^2,
	$$
	and then for each pair $(j,k)$ of particles we introduce relative coordinates,
	$$
		\br_{jk} := (\bx_j - \bx_k)/2, \qquad
		\bX_{jk} := (\bx_j + \bx_k)/2, \qquad
		\nabla_{\br_{jk}} = \nabla_j - \nabla_k, \qquad
		\nabla_{\bX_{jk}} = \nabla_j + \nabla_k.
	$$
	Thus, splitting the coordinates according to $\sx = (\bx_j,\bx_k;\sx')$, 
	with $\sx' \in \R^{d(N-2)}$ and $j<k$ fixed,
	we may define the relative function
	$$
		u(\br,\bX) := \Psi(\bx_1,\bx_2,\ldots, \bx_j = \bX+\br, \ldots, \bx_k = \bX-\br, \ldots, \bx_N),
	$$
	for which we have that $u(-\br,\bX) = -u(\br,\bX)$ for any $\bX \in \R^d$
	by the antisymmetry of $\Psi$.
	Then, by the 1-to-1 change of variables, with $d\bx_j d\bx_k = 2^d d\br d\bX$, 
	and Lemma~\ref{lem:fermionic-Hardy-1p},
	\begin{align*}
		\int_{\R^{d(N-2)}} &\int_{\R^d \times \R^d} |(\nabla_j-\nabla_k)\Psi|^2 
			\,d\bx_j d\bx_k \,d\sx'
		= \int_{\R^{d(N-2)}} \int_{\R^d \times \R^d} |\nabla_{\br} u|^2 
			\,2^d d\br d\bX \,d\sx' \\
		&\ge \frac{d^2}{4} \int_{\R^{d(N-2)}} \int_{\R^d \times \R^d} \frac{|u|^2}{|\br|^2}
			\,2^d d\br d\bX \,d\sx'
		= d^2 \int_{\R^{d(N-2)}} \int_{\R^d \times \R^d} \frac{|\Psi|^2}{|\bx_j-\bx_k|^2}
			\,d\bx_j d\bx_k \,d\sx',
	\end{align*}
	which proves the theorem.
\end{proof}

	The above theorem shows that fermionic particles always feel an 
	inverse-square pairwise repulsion,
	which is not just due to the energy cost of localization as encoded in the 
	usual uncertainty principle,
	but which is \emph{strictly stronger} 
	(and in two dimensions therefore non-trivial in contrast 
	to the standard Hardy inequality).
	Its origin is the relative antipodal antisymmetry 
	and thereby the Poincar\'e inequality 
	\eqref{eq:Poincare-antisymmetry}
	which comes weighted by the inverse-square of the distance 
	$r_{jk} = |\br_{jk}|$ between each pair of particles. 
	The repulsion persists also locally, 
	i.e. in tubular domains around the diagonals $\bDelta$ 
	of the configuration space
	and independently of the considered boundary conditions.
	Indeed, an alternative version of \eqref{eq:fermionic-Hardy}, 
	without having applied the $\R^d$-global Hardy inequality in
	\eqref{eq:kinetic-u-spherical}
	but instead the local (diamagnetic) inequality\index{diamagnetic inequality}
	$|\partial_r \Psi| \ge \bigl|\partial_r|\Psi|\bigr|$, 
	is
	\begin{align} \label{eq:fermionic-Hardy-split}
		\int_{\R^{dN}} \sum_{j=1}^N |\nabla_j \Psi|^2 \,d\sx 
		\ \ge &\ \frac{1}{N} \int_{\R^{dN}} \sum_{1 \le j<k \le N} 
			\left( \big|\partial_{r_{jk}}|\Psi|\big|^2
				+ 4(d-1) \frac{|\Psi(\sx)|^2}{|\bx_j - \bx_k|^2} \right)d\sx \\
		&+ \frac{1}{N} \int_{\R^{dN}} \left| \sum_{j=1}^N \nabla_j \Psi \right|^2 d\sx.
	\end{align}
	Although the first integral term of the
	r.h.s. involves the \emph{bosonic} function $|\Psi| \in L^2_\sym(\R^{dN})$,
	the singular repulsive term forces the probability amplitude $|\Psi|^2$ to 
	be smaller (or even to vanish for $d=2$) where particles meet,
	and we will therefore refer to this effect as a \keyword{statistical repulsion}%
	\footnote{Though the term should perhaps be used with some caution \cite{MulBla-03}.}.
	One may note however that the constant in \eqref{eq:fermionic-Hardy} resp. \eqref{eq:fermionic-Hardy-split}
	and the number of terms in the sum combine to yield an overall \emph{linear} 
	growth in $N$, 
	which matches the number of terms of the kinetic energy, 
	but differs from the case of a usual pair interaction of fixed strength,
	such as in \eqref{eq:many-body-Hamiltonian-ident}.
	Also, the last integral term in \eqref{eq:fermionic-Hardy} resp. \eqref{eq:fermionic-Hardy-split}
	involves the total center-of-mass motion and will typically only contribute
	to a lower order and may thus be discarded 
	in many applications.
	
	Local, but necessarily more complicated, 
	versions of these inequalities were given for $d=2$ 
	in \cite{LunSol-13a,LarLun-16}.
	These may be applied to eventually 
	give rise to the same type of exclusion bounds as 
	\eqref{eq:DL-bound-cube} and \eqref{eq:DL-bound-ball}, 
	with a slightly weaker constant, but 
	with the advantage of opening up for generalizations of the statistical repulsion
	such as to anyons as discussed in Section~\ref{sec:exclusion-anyons} below.
	Also note that for dimension $d=1$ 
	where $H^1_\asym(\R^N) \subseteq H^1_0(\R^N \setminus \bDelta)$
	we actually have a much better inequality from 
	before, namely the many-body Hardy inequality of Theorem~\ref{thm:many-body-Hardy-1d},
	for which the overall dependence of the r.h.s.\ is rather \emph{quadratic} in $N$.
	Furthermore, in the one-dimensional case with bosons, a vanishing condition 
	on the diagonals $\bDelta$ is actually sufficient
	to impose the usual Pauli principle, 
	since there is an equivalence between symmetric functions in 
	$H^1_0(\R^N \setminus \bDelta)$
	and antisymmetric functions in $H^1(\R^N)$ \cite{Girardeau-60},
	however this is not the case in $d \ge 2$.

\begin{exc}
	Prove the radial Hardy inequality \eqref{eq:Hardy-radial} in two ways:
	by reducing the usual Hardy inequality to radial functions, 
	and, by modifying the GSR approach of Proposition~\ref{prop:GSR}.
\end{exc}

\subsection{Weaker exclusion}\label{sec:exclusion-weaker}

	In the case that one would need to weaken the Pauli principle a bit to allow 
	for $q$ particles in each one-body state,
	this could be modeled using a modified $N$-particle Hilbert space
	$$
		\cH_{\asym(q)} := \bigotimes\nolimits^q \left( \bigwedge\nolimits^K \gH \right),
		\qquad N = qK,
	$$
	which may be thought of as having $q$ different (distinguishable) 
	flavors (or species) of fermions,
	with each such flavor being subject to the Pauli principle.
	In the context of the non-interacting Hamiltonian 
	$\hH = \sum_j \hh_j$ 
	discussed in the beginning of this chapter, 
	the typical ground state would then be
	$$
		\Psi = \left( u_0 \wedge \ldots \wedge u_K \right) \otimes \ldots \otimes \left( u_0 \wedge \ldots \wedge u_K \right)
	$$
	with energy $E_0 = q\sum_{k=0}^{K-1} \lambda_k$.
	We consider here $N$ to be a multiple of $q$ for simplicity, 
	but this assumption may be relaxed with 
	a slightly more 
	involved 
	framework to specify 
	which flavors are being added.
	An equivalent and more flexible way to characterize the elements of $\cH_{\asym(q)}$
	is as functions $\Psi(\bx_1,\ldots,\bx_N)$ in $L^2(\R^{dN})$
	for which there is a partition of the variables $\bx_j$ into $q$ groups,
	with the variables in each such group being antisymmetric under 
	permutations of the labels.
	
	Because there is also a notion of \emph{spin} in quantum mechanics 
	(recall Exercise~\ref{exc:spin} and Remark~\ref{rem:spin})
	it is in fact a realistic assumption that each particle comes equipped 
	with such an additional flavor degree of freedom,
	modeled using an \keyword{internal space} $\C^q$, where $q \ge 1$ is the 
	\keyword{spin dimension} of the particle
	(while the number $s = (d-1)/2 \in \Z_{\ge 0}/2$ is called its \keyword{spin}),
	and the full Hilbert space may then be modeled correctly using $\cH_{\asym(q)}$.
	Although the introduction of spin appears a bit artificial 
	here in our context of non-relativistic quantum mechanics,
	it turns out to be a non-trivial consequence of 
	\emph{relativistic} quantum theory that in nature,
	i.e.\ for elementary particles moving in $\R^3$,
	bosons always have $q$ odd (integer spin) while fermions have $q$ even
	(half-integer spin).
	In the case of electrons we have $q=2$, 
	with a basis of $\C^2$ modeling spin-up respectively spin-down states.
	The fermions that were considered previously and modeled by 
	the simpler antisymmetric space $\cH_\asym$ 
	with $q=1$ are therefore known as \keyword{spinless fermions}
	and may seem a bit artificial, 
	however such particles may become manifest in certain spin-polarized systems.
	
	A simple modification of the proof of Proposition~\ref{prop:DL-bound} 
	to the case $\cH_{\asym(q)}$ yields:

\begin{proposition}\label{prop:DL-bound-q}
	Let $Q \subseteq \R^d$ be a $d$-cube, $d \ge 1$.
	For any $N = qK$ with $q,K \ge 1$, and $\Psi \in H^1_{\asym(q)}(Q^N)$, we have
	\begin{equation}\label{eq:DL-bound-cube-q}
		\int_{Q^N} |\nabla \Psi|^2 \ge (N-q)_+ \frac{\pi^2}{|Q|^{2/d}} 
			\int_{Q^N} |\Psi|^2.
	\end{equation}
\end{proposition}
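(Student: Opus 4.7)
The plan is to adapt the proof of Proposition~\ref{prop:DL-bound} by exploiting the tensor-product structure of $\cH_{\asym(q)} = \bigotimes^q \bigl(\bigwedge^K \gH\bigr)$, with $\gH = L^2(Q)$. As in \eqref{eq:DL-bound-proof}, by the variational (min-max) characterization, it suffices to show that
\[
E_0 := \inf \sigma\!\left( -\Delta^\eN_{Q^N} \Big|_{\bigotimes^q \bigwedge^K \gH} \right) \ \ge \ (N-q)_+ \,\frac{\pi^2}{|Q|^{2/d}}.
\]
Here the Neumann Laplacian splits as $-\Delta^\eN_{Q^N} = \sum_{i=1}^q A_i$, where $A_i$ is $-\Delta^\eN$ acting on the $K$ variables of the $i$-th antisymmetric group. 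Since the $A_i$ act on different factors of the tensor product, they commute and the spectrum of the sum is the sum of spectra, so
\[
E_0 = q \cdot \inf \sigma\!\left( -\Delta^\eN_{Q^K} \Big|_{\bigwedge^K \gH} \right).
\]

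For each fixed species, Proposition~\ref{prop:DL-bound} (and in particular \eqref{eq:DL-bound-proof}) identifies the bottom of the spectrum on $\bigwedge^K \gH$ as
\[
\sum_{k=0}^{K-1} \lambda_k(-\Delta^\eN_Q),
\]
i.e.\ the Slater determinant of the first $K$ Neumann eigenfunctions. Using $\lambda_0=0$ and $\lambda_{k\ge 1} \ge \pi^2/|Q|^{2/d}$ from Section~\ref{sec:uncert-local-cube} gives the bound $(K-1)\pi^2/|Q|^{2/d}$ per species. Multiplying by $q$ and using $qK = N$ yields $q(K-1) = N-q$, which is exactly the claimed constant (and non-negative by the hypothesis $K\ge 1$, making the $(\,\cdot\,)_+$ automatic).

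There is no real obstacle: the only subtlety is to justify the spectral splitting of $-\Delta^\eN_{Q^N}$ across the $q$-fold tensor factorization, which is a standard consequence of the fact that for commuting self-adjoint operators $A_1,\dots,A_q$ with lower bounds $c_i$ on disjoint tensor factors, $\inf\sigma(A_1+\cdots+A_q) = \sum_i c_i$ on the product of the associated subspaces. One can make this fully rigorous by testing on product states $\Psi_1\otimes\cdots\otimes\Psi_q$ with $\Psi_i \in \bigwedge^K \gH$ normalized, for which $\langle \Psi, -\Delta^\eN_{Q^N} \Psi\rangle = \sum_i \langle \Psi_i, -\Delta^\eN_{Q^K}\Psi_i\rangle \ge q(K-1)\pi^2/|Q|^{2/d}$, and then extending to arbitrary $\Psi \in \cH_{\asym(q)}$ by expanding in the orthonormal basis of product Slater determinants built from the Neumann eigenbasis $\{u_\bk\}$ and noting that each basis vector uses at least $q(K-1)$ nontrivial modes $u_\bk$ with $|\bk|\ge 1$.
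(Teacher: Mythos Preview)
Your proof is correct and is exactly the ``simple modification'' the paper has in mind (the paper does not spell out a proof for this proposition). One small presentational quibble: the sentence about ``testing on product states'' is not by itself enough---showing that product states have energy at least $(N-q)\pi^2/|Q|^{2/d}$ does not immediately bound general linear combinations---but your final clause (expanding in the eigenbasis of product Slater determinants, where each basis vector is a genuine eigenstate with eigenvalue $\ge (N-q)\pi^2/|Q|^{2/d}$) is the actual argument and is complete.
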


	Again, $N$ being a multiple of $q$ is not important for the proof and may thus
	be relaxed, as illustrated by the following version
	(see also \cite[Lemma~3]{FraSei-12} for generalizations):

\begin{proposition}\label{prop:DL-bound-q-subset}
	Let $\Psi \in H^1_{\asym(q)}(\R^{dN})$ be an $N$-body wave function with
	$N = qK$, $q,K \ge 1$,
	and let $Q \subseteq \R^d$ be a $d$-cube, $d \ge 1$.
	For any subset $A \subseteq \{1,\ldots,N\}$ of the particles, 
	$\sx = (\sx_A;\sx_{A^c})$, $n=|A| \ge 0$, we have
	\begin{equation}\label{eq:DL-bound-cube-q-subset}
		\int_{Q^n} \sum_{j \in A} |\nabla_j \Psi(\sx_A;\sx_{A^c})|^2 \,d\sx_A
		\ge (n-q)_+ \frac{\pi^2}{|Q|^{2/d}} 
			\int_{Q^n} |\Psi(\sx_A;\sx_{A^c})|^2 \,d\sx_A.
	\end{equation}
\end{proposition}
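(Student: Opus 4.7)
The plan is to reduce to the single-flavor case of Proposition~\ref{prop:DL-bound} by exploiting the block-antisymmetric structure of $\Psi \in H^1_{\asym(q)}(\R^{dN})$ and then settling a short combinatorial comparison. Write $\{1,\ldots,N\} = G_1 \sqcup \cdots \sqcup G_q$ for the partition of particle labels into the $q$ flavor groups (each of size $K$) with respect to which $\Psi$ is antisymmetric within each $G_i$. Set $A_i := A \cap G_i$ and $n_i := |A_i|$, so that $\sum_{i=1}^q n_i = n$ and $\sum_{j \in A} = \sum_{i=1}^q \sum_{j \in A_i}$.

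The key observation is that a permutation of the labels in $A_i$ leaves the labels in $G_i \setminus A_i$ fixed and is therefore (when extended by the identity elsewhere) an element of the permutation group of $G_i$ under which $\Psi$ is antisymmetric. Consequently, for a.e.\ fixed configuration of the remaining variables $\sx_{A^c}$ and $\sx_{A \setminus A_i}$, the restricted function $\sx_{A_i} \mapsto \Psi(\sx)$ lies in $H^1_\asym(Q^{n_i})$ (for those variables valued in $Q$). Proposition~\ref{prop:DL-bound} applied on this slice yields
\begin{equation*}
	\int_{Q^{n_i}} \sum_{j \in A_i} |\nabla_j \Psi|^2 \,d\sx_{A_i}
	\ \ge\ (n_i - 1)_+ \frac{\pi^2}{|Q|^{2/d}} \int_{Q^{n_i}} |\Psi|^2 \,d\sx_{A_i}.
\end{equation*}
Integrating the remaining $\sx_{A \setminus A_i} \in Q^{n - n_i}$ variables by Fubini (the integrand on both sides being non-negative) and summing over $i$ gives
\begin{equation*}
	\int_{Q^n} \sum_{j \in A} |\nabla_j \Psi|^2 \,d\sx_A
	\ \ge\ \Biggl(\sum_{i=1}^q (n_i - 1)_+\Biggr) \frac{\pi^2}{|Q|^{2/d}} \int_{Q^n} |\Psi|^2 \,d\sx_A.
\end{equation*}

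It then remains to verify the combinatorial inequality $\sum_{i=1}^q (n_i - 1)_+ \ge (n-q)_+$. Let $r := \#\{i : n_i \ge 1\}$, so that $r \le q$ and trivially $r \le n$. Then $\sum_i (n_i - 1)_+ = \sum_{i : n_i \ge 1}(n_i - 1) = n - r \ge n - \min(q,n) = (n-q)_+$, which closes the argument.

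The main delicate point I anticipate is the Fubini/slicing step: one must check that for a.e.\ fixing of the complementary variables, the partial function in $\sx_{A_i}$ actually belongs to $H^1(Q^{n_i})$ and inherits the antisymmetry, so that Proposition~\ref{prop:DL-bound} applies pointwise before integrating. This is a routine density/trace argument for $H^1$ functions, but it is the only place where one has to be careful about regularity; once past this, the spectral bound on each slice and the combinatorial accounting are entirely straightforward.
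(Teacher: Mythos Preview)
Your proof is correct. The paper does not spell out a proof for this proposition; it merely remarks that it follows as in the preceding cases (Propositions~\ref{prop:DL-bound} and \ref{prop:DL-bound-q}), so your flavor-by-flavor decomposition together with the combinatorial bound $\sum_{i=1}^q (n_i-1)_+ \ge (n-q)_+$ is precisely the natural way to fill in what the paper leaves implicit.
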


\begin{exc}\label{exc:Weyl-cube-weak}
	The expression \eqref{eq:Weyl-asymptotics} approximates 
	the ground-state energy for the \keyword{free Fermi gas} 
	(non-interacting and homogeneous) on a box $Q$, 
	i.e. the infimum 
	$$
		E_{0,\asym}(N) := \inf \left\{ T[\Psi] : \Psi \in H^1_{\asym}(Q^N),\ \|\Psi\|_2 = 1 \right\}
		\approx K_d^\cl \rho^{2/d} N,
	$$
	where $\rho = N/|Q|$ is the density.
	Obtain the approximation for 
	the g.s. energy for the 
	Fermi gas with a fixed number $q$ species of fermions, or $q$-dimensional spin
	\begin{equation}\label{eq:q-fermion-gs-energy}
		E_{0,\asym(q)}(N) := \inf \left\{ T[\Psi] : \Psi \in H^1_{\asym(q)}(Q^N),\ \|\Psi\|_2 = 1 \right\}
		\approx q^{-2/d} K_d^\cl \rho^{2/d} N.
	\end{equation}
	Show also that boundary conditions on $\partial Q$ are unimportant for the leading-order
	approximation of the energy.
\end{exc}

\subsection{Local exclusion and density formulations}\label{sec:exclusion-local}

	We may define the \keyword{local $n$-particle kinetic energy} 
	on cubes $Q$ for fermions,
	i.e. particles obeying the Pauli principle,
	$$
		e_n(|Q|;\asym) 
		:= 
			\inf_{\substack{\psi \in H^1_\asym(Q^n) \\ \int_{Q^n} |\psi|^2 = 1}}
			\int_{Q^n} \sum_{j=1}^n |\nabla_j\psi|^2,
	$$
	and in general, 
	allowing for $q$ particles to occupy the same state
	and taking $n$ an arbitrary subset of $N$ total particles on $\R^d$,
	$$
		e_n(|Q|;\asym(q)) 
		:= 
			\inf_{\substack{\Psi \in H^1_{\asym(q)}(\R^{dN}) \\ \int_{Q^n} |\Psi(\slot;\sx')|^2 = 1 \\ \sx' \in \R^{d(N-n)}}}
			\int_{Q^n} \sum_{j=1}^n |\nabla_j\Psi(\slot;\sx')|^2.
	$$
	We found from Proposition~\ref{prop:DL-bound} respectively \ref{prop:DL-bound-q-subset} 
	that for these energies
	\begin{equation}\label{eq:local-energy-bound}
		e_n(|Q|;\asym(q)) \ge \frac{\pi^2}{|Q|^{2/d}} (n-q)_+.
	\end{equation}
	
	Now, consider an $N$-body wave function of particles on $\R^d$,
	$\Psi \in H^1_{\asym(q)}(\R^{dN})$, 
	and a partition $\eP$ of the configuration space into $d$-cubes $Q$ with a
	corresponding kinetic energy
	$$
		\inp{\Psi, \hT \Psi} = T[\Psi] = \sum_{Q \in \eP} T^Q[\Psi],
	$$
	where we recall the local expected kinetic energy 
	(see Section~\ref{sec:uncert-local})
	\begin{equation}\label{eq:local-exp-kin-en-excl}
		T^Q[\Psi] := \sum_{j=1}^N \int_{\R^{dN}} |\nabla_j \Psi|^2 \,\1_Q(\bx_j) \,d\sx.
	\end{equation}

\begin{lemma}[\keyword{Local exclusion principle}]
	\label{lem:local-exclusion}
	For any $d$-cube $Q$ and $N$-body state $\Psi \in H^1_{\asym(q)}(\R^{dN})$ 
	we have
	\begin{equation}\label{eq:local-exclusion}
		T^Q[\Psi] \ \ge \ 
		\frac{\pi^2}{|Q|^{2/d}} \left( \int_Q \varrho_\Psi(\bx) \,d\bx \ - q \right)_+.
	\end{equation}
\end{lemma}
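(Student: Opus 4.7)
The plan is to decompose the local kinetic energy $T^Q[\Psi]$ according to which particles are currently found on $Q$ and which are on the complement $Q^c$, apply the subset exclusion bound of Proposition~\ref{prop:DL-bound-q-subset} on each such sector, and finally pass from an average of the convex function $n \mapsto (n-q)_+$ to a bound on the expected value.

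First, I would insert the partition of unity from Exercise~\ref{exc:particle-probabilities},
$$
\1 = \prod_{k=1}^N \bigl( \1_Q(\bx_k) + \1_{Q^c}(\bx_k) \bigr)
= \sum_{A \subseteq \{1,\ldots,N\}} \prod_{k \in A} \1_Q(\bx_k) \prod_{k \notin A} \1_{Q^c}(\bx_k),
$$
into the definition \eqref{eq:local-exp-kin-en-excl} of $T^Q[\Psi]$. The key observation is that the factor $\1_Q(\bx_j)$ forces the sum over $j$ to be restricted to those $j \in A$, so after interchanging sums
$$
T^Q[\Psi] = \sum_{A \subseteq \{1,\ldots,N\}} \int_{(Q^c)^{N-|A|}} \int_{Q^{|A|}} \sum_{j \in A} |\nabla_j \Psi|^2 \, d\sx_A \, d\sx_{A^c}.
$$
Since the indicators $\1_Q, \1_{Q^c}$ are multiplication operators and hence commute through the gradients $\nabla_j$ with $j \in A$, this rewriting is straightforward; the only point to verify carefully is that no boundary terms are produced, which is immediate since we are in the form sense and the indicators are simply restricting the domain of integration.

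Second, for fixed $A$ and fixed $\sx_{A^c} \in (Q^c)^{N-|A|}$, the map $\sx_A \mapsto \Psi(\sx_A;\sx_{A^c})$ inherits the $\asym(q)$ symmetry of $\Psi$ under permutations within each of the $q$ groups of labels intersected with $A$. I may therefore apply Proposition~\ref{prop:DL-bound-q-subset} in the variables $\sx_A \in Q^{|A|}$ to obtain
$$
\int_{Q^{|A|}} \sum_{j \in A} |\nabla_j \Psi|^2 \, d\sx_A
\ge (|A|-q)_+ \frac{\pi^2}{|Q|^{2/d}} \int_{Q^{|A|}} |\Psi|^2 \, d\sx_A.
$$
Integrating over $\sx_{A^c} \in (Q^c)^{N-|A|}$ and summing over $A$, I recognize the resulting factor as $p_{A,\Omega}[\Psi]$ from Section~\ref{sec:mech-QM-density} with $\Omega = Q$; grouping terms by $n = |A|$ yields
$$
T^Q[\Psi] \ge \frac{\pi^2}{|Q|^{2/d}} \sum_{n=0}^N (n-q)_+ \, p_{n,Q}[\Psi].
$$

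Third and finally, since $\{p_{n,Q}[\Psi]\}_{n=0}^N$ is by \eqref{eq:particle-prob-normalized} a probability distribution on $\{0,1,\ldots,N\}$ and the function $t \mapsto (t-q)_+$ is convex, Jensen's inequality gives
$$
\sum_{n=0}^N (n-q)_+ \, p_{n,Q}[\Psi] \ge \left( \sum_{n=0}^N n \, p_{n,Q}[\Psi] - q \right)_+ = \left( \int_Q \varrho_\Psi - q \right)_+,
$$
where the last equality is \eqref{eq:particle-prob-expectation}. Combining with the previous bound proves \eqref{eq:local-exclusion}. The only genuine subtlety in this plan is the bookkeeping in the first step, verifying that the $\asym(q)$ symmetry survives restriction in order to legitimately invoke Proposition~\ref{prop:DL-bound-q-subset}; everything else is a clean application of the exclusion bound, the definitions of the particle probabilities, and convexity.
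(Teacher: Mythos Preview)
Your proof is correct and follows essentially the same route as the paper: insert the partition of unity \eqref{eq:particle-partition-of-unity}, apply the subset exclusion bound of Proposition~\ref{prop:DL-bound-q-subset} on each sector, group by $n=|A|$ to obtain $\sum_n (n-q)_+\, p_{n,Q}[\Psi]$, and then use convexity together with \eqref{eq:particle-prob-normalized}--\eqref{eq:particle-prob-expectation}. The paper merely phrases the middle step via the local energy $e_n(|Q|;\asym(q))$ and its bound \eqref{eq:local-energy-bound}, which is exactly your direct invocation of Proposition~\ref{prop:DL-bound-q-subset}.
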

\begin{proof}
	We insert the partition of unity \eqref{eq:particle-partition-of-unity}
	into the definition \eqref{eq:local-exp-kin-en-excl},
	producing
	\begin{align*}
		T^Q[\Psi] &= \sum_{A \subseteq \{1,\ldots,N\}} 
			\int_{(Q^c)^{N-|A|}} \int_{Q^{|A|}} \sum_{j \in A} 
			|\nabla_j \Psi|^2 \prod_{k \in A} d\bx_k \prod_{k \notin A} d\bx_k \\
		&\ge \sum_{A \subseteq \{1,\ldots,N\}} 
			\int_{(Q^c)^{N-|A|}} e_{|A|}(|Q|;\asym(q)) \int_{Q^{|A|}} 
			|\Psi|^2 \prod_{k \in A} d\bx_k \prod_{k \notin A} d\bx_k \\
		&= \sum_{n=0}^N e_n(|Q|;\asym(q)) \,p_{n,Q}[\Psi].
	\end{align*}
	Now, we apply the bound \eqref{eq:local-energy-bound} and use convexity
	of the function $x \mapsto (x-q)_+$,
	\begin{align*}
		T^Q[\Psi] &\ge \sum_{n=0}^N \frac{\pi^2}{|Q|^{2/d}} (n-q)_+ \,p_{n,Q}[\Psi]
		\ge \frac{\pi^2}{|Q|^{2/d}} \left( \sum_{n=0}^N np_{n,Q}[\Psi] - q \right)_+,
	\end{align*}
	which by \eqref{eq:particle-prob-expectation} 
	is exactly the r.h.s. of \eqref{eq:local-exclusion}.
\end{proof}

\begin{example}[{\keyword[free Fermi gas]{The free Fermi gas}}]\label{exmp:exclusion-gas}
	As a simple application of this local bound we may prove a global lower bound 
	for the ground-state energy of the ideal Fermi gas which matches the approximation
	\eqref{eq:q-fermion-gs-energy} 
	apart from the explicit value of the constant,
	i.e.\ proving that the fermionic energy is extensive in the number of particles.
	Namely, consider an arbitrary wave function of $N$ fermions 
	confined to a large cube $Q_0 = [0,L]^d$,
	$$
		\Psi \in H^1_{\asym(q)}(Q_0^N) 
		\quad \Rightarrow \quad \varrho_\Psi \in L^1(Q_0;\R_+).
	$$
	Taking a partition of $Q_0$ into exactly $M^d$ smaller cubes $Q \in \eP$,
	$M \in \N$, of equal size $|Q| = |Q_0|/M^d$,
	the energy is by Lemma~\ref{lem:local-exclusion} 
	bounded as
	\begin{align*}
		T[\Psi] &= \sum_{Q \in \eP} T^Q[\Psi]
		\ge \sum_{Q \in \eP} \frac{\pi^2}{|Q|^{2/d}} \left( \int_Q \varrho_\Psi(\bx) \,d\bx \ - q \right)
		= \frac{\pi^2}{|Q|^{2/d}} \left( \int_{Q_0} \varrho_\Psi(\bx) \,d\bx \ - q M^d \right) \\
		&= \frac{\pi^2}{|Q_0|^{2/d}} \left( N M^2 - q M^{d+2} \right).
	\end{align*}
	Optimizing this expression in $M$ gives 
	$M \sim \left( \frac{2N}{(d+2)q} \right)^{1/d}$ (rounded to the nearest integer)
	and thus
	$$
		E_{0,\asym(q)}(N) \gtrsim \frac{d}{d+2}\left( \frac{2}{d+2} \right)^{2/d} \frac{\pi^2}{q^{2/d}} \frac{N^{1+2/d}}{|Q_0|^{2/d}}.
	$$
	Hence, with $q$ fixed and taking both $N \to \infty$ and $|Q_0| \to \infty$ 
	at fixed density $\rho := N/|Q_0|$,
	what is commonly referred to as the \keyword{thermodynamic limit},
	the energy per particle is
	$$
		\liminf_{N,|Q_0| \to \infty} \frac{E_{0,\asym(q)}(N)}{N} \geq \frac{d}{d+2}\left( \frac{2}{d+2} \right)^{2/d} \pi^2\frac{\rho^{2/d}}{q^{2/d}},
	$$
	to be compared with \eqref{eq:q-fermion-gs-energy} and \eqref{eq:Weyl-asymptotics}.
	Note that when $q = N$, i.e. for bosons, $M$ cannot be chosen large
	and in fact the local bound \eqref{eq:local-exclusion} is trivial for any $Q$.
	This reflects the fact that the energy for the ideal Bose gas on $Q_0$ is
	$E_{0,\sym}(N) = N\lambda_0(-\Delta_{Q_0})$ which is linear in $N$ and also
	depends crucially on the boundary conditions chosen on $\partial Q_0$.
	In the thermodynamic limit one obtains in any case $E_{0,\sym}(N)/N \to 0$ 
	for noninteracting bosons.
\end{example}

\subsection{Repulsive bosons}\label{sec:exclusion-bosons}
	\index{bosons}

	In the case that there is a given pair-interaction $W$ between particles,
	we define the corresponding $n$-particle energy on the box $Q$,
	$$
		e_n(|Q|;W) := 
			\inf_{\int_{Q^n} |\psi|^2 = 1}
			\int_{Q^n} \left( \sum_{j=1}^n |\nabla_j\psi|^2 + \sum_{1\le j<k \le n} W(\bx_j-\bx_k) |\psi|^2
				\right)
	$$
	Note that $e_n$ with $n \ge 2$ can be reduced to a bound in terms of only $e_2$, 
	namely:

\begin{lemma}
	\label{lem:en-from-e2}
	For any pair-interaction potential $W$ and any $d$-cube $Q$, we have
	\begin{equation}\label{eq:en-from-e2}
		e_n(|Q|;W) \ge \frac{n}{2} e_2(|Q|;(n-1)W).
	\end{equation}
\end{lemma}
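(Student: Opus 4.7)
The plan is to decompose the $n$-body operator into a symmetric sum of two-body operators, each of which is bounded below by $e_2(|Q|; (n-1)W)$ after integrating out the other variables. Concretely, writing
\begin{equation*}
	H_n = \sum_{j=1}^n (-\Delta_j) + \sum_{1 \le j<k \le n} W(\bx_j - \bx_k),
\end{equation*}
I would use the combinatorial identity $\sum_{1\le j<k \le n}(-\Delta_j - \Delta_k) = (n-1)\sum_{j=1}^n(-\Delta_j)$, which follows because each index $j$ appears in exactly $n-1$ pairs, to rewrite
\begin{equation*}
	H_n = \frac{1}{n-1} \sum_{1 \le j<k \le n} H_{jk}, \qquad
	H_{jk} := -\Delta_j - \Delta_k + (n-1) W(\bx_j - \bx_k).
\end{equation*}
This is the same trick used in \eqref{eq:many-body-parallelogram} but adapted to make the interaction strength match the definition of $e_2(|Q|;(n-1)W)$.

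Next, for any $\psi \in C_c^\infty(Q^n)$ (or a suitable minimizing sequence), I would fix a pair $(j,k)$ and view $\psi$ as a map $\sx' \mapsto \psi(\slot,\slot;\sx')$ from $Q^{n-2}$ into $L^2(Q^2)$, where $\sx'$ denotes the remaining $n-2$ coordinates. Since $H_{jk}$ only differentiates in $\bx_j,\bx_k$ and only involves the $(j,k)$ pair interaction, Fubini gives
\begin{equation*}
	\inp{\psi, H_{jk}\psi}_{L^2(Q^n)}
	= \int_{Q^{n-2}} \inp{\psi(\slot,\slot;\sx'), H_{jk}^{(2)} \psi(\slot,\slot;\sx')}_{L^2(Q^2)} \,d\sx',
\end{equation*}
where $H_{jk}^{(2)}$ is precisely the two-body operator whose ground-state energy is $e_2(|Q|;(n-1)W)$. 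Applying this definition to the inner integrand slicewise and integrating then yields
\begin{equation*}
	\inp{\psi, H_{jk}\psi} \ge e_2(|Q|;(n-1)W) \int_{Q^n} |\psi|^2.
\end{equation*}

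Finally, summing the $\binom{n}{2}$ such pair inequalities and dividing by $n-1$, I obtain
\begin{equation*}
	\inp{\psi, H_n \psi} \ge \frac{1}{n-1} \binom{n}{2} e_2(|Q|;(n-1)W) \int_{Q^n} |\psi|^2
	= \frac{n}{2} e_2(|Q|;(n-1)W) \int_{Q^n} |\psi|^2,
\end{equation*}
and taking the infimum over $L^2$-normalized $\psi$ gives \eqref{eq:en-from-e2}. The main point to be careful about is purely a book-keeping issue: ensuring the coupling $(n-1)W$ appearing in each $H_{jk}$ matches the second argument of $e_2$ in the conclusion, and checking that no antisymmetry or permutation symmetry is needed (the minimizer in the definitions of $e_n$ and $e_2$ is over all of $H^1(Q^n)$ and $H^1(Q^2)$, without statistics constraints). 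There is no genuine obstacle; the only subtle step is recognizing the right normalization $1/(n-1)$ in the decomposition, which is dictated precisely by the combinatorial count.
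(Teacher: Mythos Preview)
Your proof is correct and follows exactly the same approach as the paper: the key identity $(n-1)\sum_{j=1}^n |\nabla_j\psi|^2 = \sum_{1\le j<k\le n}\bigl(|\nabla_j\psi|^2+|\nabla_k\psi|^2\bigr)$ is precisely what the paper invokes, and your subsequent reduction of each pair term to $e_2(|Q|;(n-1)W)$ via Fubini and summation is the intended completion. You have simply spelled out the details that the paper leaves implicit.
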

\begin{proof}
	Use the simple identity
	$$
		(n-1)\sum_{j=1}^n |\nabla_j \psi|^2 = \sum_{1 \le j < k \le n} 
			\left( |\nabla_j\psi|^2 + |\nabla_k\psi|^2 \right)
	$$
	to bound the $n$-body energy in terms of a sum of two-body energies.
\end{proof}

	This tells us that, as soon as the two-particle energy is 
	strictly positive, such as for repulsive interactions,
	there will be positive energy also for $n \ge 2$ particles
	analogously to the Pauli principle.
	We also note that for any $n$ and non-negative pair potential $W$
	the function $f(\mu) :=  e_n(|Q|;\mu W)$ 
	is monotone increasing and concave in $\mu \ge 0$, and $f(0) = 0$.

\subsubsection{The stupid bound}\label{sec:exclusion-bosons-stupid}

	In the case that $W(\bx) = W_\beta(\bx) := \beta|\bx|^{-2}$,
	we may use the following very crude bound for $e_n$:
	$$
		e_n(|Q|;W_\beta) 
		\ge \sum_{j<k} \inf_{\int_{Q^n} |\psi|^2 = 1} \int_{Q^n} W_\beta(\bx_j-\bx_k) |\psi|^2
		\ge \binom{n}{2} \inf_{\bx_1,\bx_2 \in Q} W(\bx_1-\bx_2)
		\ge \frac{\beta n(n-1)}{2d|Q|^{2/d}} 
	$$
	In particular,
	\begin{equation}\label{eq:stupid-bound}
		e_n(|Q|;W_\beta) \ge \frac{\beta}{d|Q|^{2/d}} (n-1)_+.
	\end{equation}

\subsubsection{Hard-core bosons}\label{sec:exclusion-bosons-hardcore}

	For the case of a hard-sphere interaction $W^\hs_R$ in $d=3$
	(see Example~\ref{exmp:hard-core}), one has the rough bound
	\cite[Proposition~10]{LunPorSol-15} 
	$$
		e_2(|Q|;W^\hs_R) \ge \frac{2}{\sqrt{3}} \frac{R}{|Q|}(2 - R/|Q|^{1/3})_+^{-2}.
	$$

\subsubsection{Local exclusion for bosons}\label{sec:exclusion-bosons-local}
	
	Define the corresponding \keyword{expected interaction energy} on $Q$,
	$$
		W^Q[\Psi] := \frac{1}{2} \sum_{j=1}^N \sum_{(j \neq) k=1}^N
			\int_{\R^{dN}} W(\bx_j - \bx_k) |\Psi|^2 \1_Q(\bx_j) \,d\sx,
	$$
	as well as the combined energies
	$$
		(T+W)^Q[\Psi] := T^Q[\Psi] + W^Q[\Psi].
	$$
	Then, for a partition $\eP$ of $\R^d$,
	$$
		T[\Psi] + W[\Psi] = \sum_{Q \in \eP} (T+W)^Q[\Psi].
	$$
	
\begin{lemma}[\keyword{Local exclusion principle for repulsive bosons}]
	\label{lem:local-exclusion-bosons}
	Let $W \ge 0$ be a repulsive pair interaction.
	For any $d$-cube $Q$ and $N$-body wave function $\Psi \in H^1(\R^{dN})$ we have
	\begin{equation}\label{eq:local-exclusion-bosons}
		(T+W)^Q[\Psi] \ \ge \ 
		\frac{1}{2} e_2(|Q|;W) \left( \int_Q \varrho_\Psi(\bx) \,d\bx \ - 1 \right)_+.
	\end{equation}
\end{lemma}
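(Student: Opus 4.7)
The plan is to follow the proof of the fermionic Lemma~\ref{lem:local-exclusion} very closely, replacing the Pauli-type lower bound~\eqref{eq:local-energy-bound} on the local $n$-particle ground-state energy by an interaction-based bound coming from Lemma~\ref{lem:en-from-e2}. The only genuinely new feature is that one has to handle the pair-interaction term $W^Q$ in addition to the kinetic term $T^Q$.

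First I would expand $(T+W)^Q[\Psi]$ using the partition of unity~\eqref{eq:particle-partition-of-unity}. Writing $\chi_A(\sx) := \prod_{k\in A}\1_Q(\bx_k)\prod_{k\notin A}\1_{Q^c}(\bx_k)$ for $A \subseteq \{1,\ldots,N\}$, the factor $\1_Q(\bx_j)$ in the definition of $T^Q$ forces $j \in A$, giving as in the fermionic case $T^Q[\Psi] = \sum_A \int \chi_A \sum_{j\in A}|\nabla_j \Psi|^2 \,d\sx$. The same manipulation on $W^Q$ produces $W^Q[\Psi] = \sum_A \int \chi_A \,\tfrac{1}{2}\sum_{j \in A}\sum_{k \neq j} W(\bx_j-\bx_k)|\Psi|^2 \,d\sx$, but here the partner index $k$ may lie in either $A$ or $A^c$. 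Since $W \ge 0$, I would throw away all pair terms with $k \in A^c$ to obtain the clean lower bound
$$
W^Q[\Psi] \ \ge \ \sum_A \int \chi_A \sum_{\substack{j<k\\ j,k \in A}} W(\bx_j-\bx_k)|\Psi|^2 \,d\sx.
$$

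Next, for each subset $A$ with $|A|=n$ and each fixed external configuration $\sx_{A^c} \in (Q^c)^{N-n}$, the restricted map $\sx_A \mapsto \Psi(\sx_A;\sx_{A^c})$ lies in $H^1(Q^n)$ and the inner integral
$$
\int_{Q^n}\!\Biggl[\sum_{j\in A}|\nabla_j\Psi|^2 + \sum_{\substack{j<k\\j,k\in A}}W(\bx_j-\bx_k)|\Psi|^2\Biggr]d\sx_A
$$
is bounded below by $e_n(|Q|;W)$ times its squared $L^2(Q^n)$-norm, directly from the variational definition of $e_n(|Q|;W)$. Integrating over $\sx_{A^c}$ and summing over $A$ then yields $(T+W)^Q[\Psi] \ge \sum_{n=0}^N e_n(|Q|;W)\,p_{n,Q}[\Psi]$, using the grouping $p_{n,Q}=\sum_{|A|=n}p_{A,Q}$.

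To finish, I would use Lemma~\ref{lem:en-from-e2} together with monotonicity $e_2(|Q|;\mu W) \ge e_2(|Q|;W)$ for $\mu \ge 1$ (which holds since $W \ge 0$) to obtain $e_n(|Q|;W) \ge \tfrac{n}{2}\,e_2(|Q|;W) \ge \tfrac{1}{2}\,e_2(|Q|;W)(n-1)_+$ for every $n \ge 0$. Combining this with convexity of $x \mapsto (x-1)_+$, Jensen's inequality applied to the probability distribution $\{p_{n,Q}[\Psi]\}_n$ (a probability measure by~\eqref{eq:particle-prob-normalized}), and the identity $\sum_n n\, p_{n,Q}[\Psi] = \int_Q \varrho_\Psi$ from~\eqref{eq:particle-prob-expectation}, delivers~\eqref{eq:local-exclusion-bosons}. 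The delicate point will be the discarding of mixed pair terms in $W^Q[\Psi]$, which is where the positivity of $W$ is used essentially; one also has to be careful not to double-count when switching between the unordered sum $\tfrac12\sum_{j\neq k}$ and the ordered sum $\sum_{j<k}$.
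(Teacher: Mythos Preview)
Your proposal is correct and follows essentially the same argument as the paper's proof: partition-of-unity expansion, dropping the mixed inside--outside pair terms using $W\ge 0$, the variational bound $\ge e_n(|Q|;W)\,p_{n,Q}$, then Lemma~\ref{lem:en-from-e2} plus monotonicity of $e_2$ in the coupling to get $e_n \ge \tfrac12 e_2(|Q|;W)(n-1)_+$, and finally convexity/Jensen with~\eqref{eq:particle-prob-expectation}. The paper's proof is stated only as a one-line sketch invoking precisely these ingredients.
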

\begin{proof}
	The proof is a straightforward extension of the proof of Lemma~\ref{lem:local-exclusion}
	where we use Lemma~\ref{lem:en-from-e2}, 
	monotonicity $e_2(|Q|;(n-1)W) \ge e_2(|Q|;W)$,
	and that $|A|/2 \ge (|A|-1)_+/2$ for $|A| \ge 2$.
	Non-negativity of $W$ is used in order to estimate interactions between
	particles inside and outside $Q$ trivially.
\end{proof}

	Application: just as a local application of Lemma~\ref{lem:local-exclusion} 
	gave rise to bounds for the homogeneous Fermi gas as in Example~\ref{exmp:exclusion-gas},
	Lemma~\ref{lem:local-exclusion-bosons} 
	can be used to prove lower bounds for homogeneous interacting Bose gases; 
	see \cite{LunPorSol-15}.

\subsection{Anyons}\label{sec:exclusion-anyons}
	\index{anyons}

	We end this chapter with a short discussion on the exclusion properties
	of anyons in two dimensions (recall their definition in Section~\ref{sec:mech-QM-statistics}).
	Two-particle energies and other pairwise statistics-dependent 
	properties for anyons have been known since the
	original works \cite{LeiMyr-77,Wilczek-82b,AroSchWilZee-85} in the abelian case, 
	and at least since \cite{Verlinde-91,LeeOh-94} for certain non-abelian anyons, 
	however the method outlined below to account for statistical repulsion 
	in the full many-body context is fairly recent and developed in
	\cite{LunSol-13a,LunSol-13b,LunSol-14,LarLun-16,Lundholm-16,Qvarfordt-17,LunSei-17}.

	For (ideal abelian) anyons one has the following many-body Hardy inequality, 
	which was
	generalized from the fermionic one 
	\eqref{eq:fermionic-Hardy-split}
	in \cite[Theorem~4]{LunSol-13a} and \cite[Theorem~1.3]{LarLun-16}:

\begin{theorem}[\keyword{Many-anyon Hardy}]\label{thm:anyonic-Hardy}
	For any $\alpha \in \R$, $N \ge 1$, and $\Psi \in \cQ(\hT_\alpha)$,
	one has the many-body Hardy inequality
	\begin{equation} \label{eq:anyonic-Hardy}
		\inp{\Psi, \hT_\alpha \Psi}
		\ge \frac{1}{N} \int_{\R^{dN}} \sum_{1 \le j<k \le N} 
			\left( \big|\partial_{r_{jk}}|\Psi|\big|^2
				+ \alpha_N^2 \frac{|\Psi(\sx)|^2}{r_{jk}^2} \right) \,d\sx
		+ \frac{1}{N} \int_{\R^{dN}} \left| \sum_{j=1}^N \nabla_j \Psi \right|^2 d\sx,
	\end{equation}
	where $r_{jk} = |\bx_j-\bx_k|/2$, and the strength of the statistical repulsion
	term depends on the anyonic statistics parameter $\alpha$ via
	\begin{equation} \label{eq:alpha_N}
		\alpha_{N} 
		:= \min\limits_{p \in \{0, 1, \ldots, N-2\}} \min\limits_{q \in \Z} |(2p+1)\alpha - 2q|.
	\end{equation}
\end{theorem}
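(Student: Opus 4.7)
The plan is to follow the structure of the proof of Theorem~\ref{thm:fermionic-Hardy} (the fermionic many-body Hardy inequality), replacing the antisymmetry-driven angular Poincaré inequality \eqref{eq:Poincare-antisymmetry} with an angular inequality appropriate for anyonic monodromy. First I would apply the generalized parallelogram identity \eqref{eq:many-body-parallelogram} to $\inp{\Psi, \hT_\alpha \Psi}$, which splits the total kinetic form into $1/N$ times a sum over pairs of relative kinetic energies, plus the center-of-mass term $\tfrac{1}{N}\|\sum_j \nabla_j \Psi\|_2^2$ that already appears on the right-hand side of \eqref{eq:anyonic-Hardy}. It therefore suffices to establish, for each pair $j<k$, the pointwise pair inequality
\begin{equation*}
	\tfrac{1}{2}|(\nabla_j - \nabla_k)\Psi|^2
	\ \text{(after integration)} \
	\ge \bigl|\partial_{r_{jk}}|\Psi|\bigr|^2 + \alpha_N^2\,\frac{|\Psi|^2}{r_{jk}^2}.
\end{equation*}

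Next I would fix $j<k$, introduce relative and sum coordinates $\br_{jk} = (\bx_j - \bx_k)/2$, $\bX_{jk} = (\bx_j + \bx_k)/2$, and freeze the remaining variables $\sx' = (\bx_l)_{l \neq j,k}$. Passing to polar coordinates $\br_{jk} = r(\cos\theta,\sin\theta)$ and applying Fubini, the relative kinetic form decomposes as an integral over $r$ and $\theta$ of $|\partial_r\Psi|^2 + r^{-2}|\partial_\theta\Psi|^2$. For the radial piece I would invoke the diamagnetic inequality \eqref{eq:diamagnetic} to bound it from below by $|\partial_r|\Psi||^2$, producing the first term of the desired integrand. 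The remaining step is the angular one: to prove that for a.e.\ $(r,\bX_{jk},\sx')$ with no particle coincidences,
\begin{equation*}
	\int_0^{2\pi}|\partial_\theta \Psi(r,\theta,\bX_{jk},\sx')|^2 \,d\theta
	\ \ge\ \alpha_N^2 \int_0^{2\pi} |\Psi(r,\theta,\bX_{jk},\sx')|^2\,d\theta.
\end{equation*}

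The angular inequality is where the anyonic statistics enter decisively. Working in either the anyon gauge (multivalued $\Psi$ with prescribed braid monodromy) or the equivalent magnetic gauge (single-valued $\Psi$ with Aharonov--Bohm vector potentials attached to every other particle), a continuous counterclockwise loop of the pair $(j,k)$ around its midpoint $\bX_{jk}$ in $\br_{jk}$-space is a $\theta \mapsto \theta + 2\pi$ rotation which is, by Exercise~\ref{exc:braid-enclosed-phases}, the identity composed with $2p+1$ simple exchanges, where $p \in \{0,1,\ldots,N-2\}$ is the number of other particles whose positions lie inside the loop swept by $\bx_j$ and $\bx_k$ (i.e.\ inside the disc of radius $r$ about $\bX_{jk}$, for generic $\sx'$). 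Hence the Fourier modes $e^{im\theta}$ in $\Psi(r,\theta,\cdot)$ are restricted by $e^{i2\pi m} = e^{i(2p+1)\alpha\pi}$, i.e.\ $2m = (2p+1)\alpha - 2q$ for some $q \in \Z$. The Rayleigh quotient $\int|\partial_\theta\Psi|^2/\int|\Psi|^2$ is therefore bounded below by $\min_q |(2p+1)\alpha - 2q|^2$, and taking the worst $p$ among the finitely many admissible values yields exactly $\alpha_N^2$. Combining the radial and angular pieces, unfreezing $\sx'$, and summing over pairs produces \eqref{eq:anyonic-Hardy}.

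The main obstacle is to make the angular argument rigorous globally. The number $p$ of enclosed particles is piecewise constant in $(r,\sx')$ but jumps as $r$ crosses $|\bX_{jk} - \bx_l|$ for some $l$, so the angular Fourier analysis must be performed separately on each ``shell'' where $p$ is locally constant; taking the minimum of $\alpha_N(p)$ over all admissible $p$ is exactly what restores a single uniform inequality valid pointwise in $(r,\bX_{jk},\sx')$. A further subtlety, handled cleanly in the magnetic gauge, is that the statistical vector potential has curl equal to a sum of Dirac masses at the other particle positions, so the relevant angular inequality is an Aharonov--Bohm Poincaré/Hardy inequality whose optimal constant depends only on the total flux through the loop mod $2\Z$; this is precisely what the double minimum in the definition \eqref{eq:alpha_N} of $\alpha_N$ encodes. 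Once that translation is in place, the estimate reduces to the one-dimensional inequality $\int_0^{2\pi}|(\partial_\theta - i\phi)v|^2 \ge (\min_{q \in \Z}|\phi - q|)^2\int_0^{2\pi}|v|^2$ for $2\pi$-periodic $v$, which is elementary by Fourier series.
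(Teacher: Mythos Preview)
Your overall strategy---parallelogram identity \eqref{eq:many-body-parallelogram}, relative/center-of-mass split, diamagnetic bound on the radial part, and an angular Poincar\'e inequality driven by the braid monodromy, followed by a minimization over the number $p$ of enclosed particles---is exactly the route taken in the paper's Remark following the theorem. The radial and center-of-mass pieces are handled correctly.

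There is, however, a genuine gap in your angular argument. A full rotation $\theta \mapsto \theta + 2\pi$ in $\br_{jk}$-space is \emph{two} exchanges of the pair (the square of the braid in Exercise~\ref{exc:braid-enclosed-phases}), not $2p+1$ simple exchanges; its phase is $e^{i2(2p+1)\alpha\pi}$, not $e^{i(2p+1)\alpha\pi}$. More importantly, imposing only a $2\pi$-monodromy condition on $[0,2\pi]$ is too weak: it forgets that the anyons are indistinguishable, i.e.\ that already $\theta \mapsto \theta + \pi$ (equivalently $\br \mapsto -\br$) returns to the same point of the configuration space, up to the single-exchange phase $e^{i(2p+1)\alpha\pi}$. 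Without this half-period constraint your admissible Fourier frequencies are $m \in (2p+1)\alpha + \Z$ (after correcting the monodromy), giving only $\min_{q\in\Z}|(2p+1)\alpha - q|^2$, which is strictly smaller than the claimed $\min_{q\in\Z}|(2p+1)\alpha - 2q|^2$ (e.g.\ it vanishes at $\alpha=1$). Indeed, your stated bound $\min_q|(2p+1)\alpha - 2q|^2$ does not follow from your own mode condition $2m = (2p+1)\alpha - 2q$, which yields $m^2 = \tfrac{1}{4}|(2p+1)\alpha - 2q|^2$.

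The paper fixes this cleanly by working directly on the physical relative angle $\varphi \in [0,\pi]$ (the true configuration-space circle after the identification $\br \sim -\br$) with the semi-periodic boundary condition $u(\pi) = e^{i\pi\beta_0}u(0)$, $\beta_0 = \pm\min_q|(2p+1)\alpha - 2q|$; see \eqref{eq:semiperiodic-bc}--\eqref{eq:semiperiodic-Poincare}. The elementary Poincar\'e inequality on $[0,\pi]$ with this boundary condition then gives the correct constant $\beta_0^2$ directly. Equivalently, in the magnetic gauge one must use that the bosonic $\Psi$ is $\pi$-periodic in $\theta$, which restricts the Fourier modes to $(2p+1)\alpha + 2\Z$ and restores the right answer.
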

	
	The expression \eqref{eq:alpha_N} is a piecewise linear and 
	$2$-periodic function of $\alpha$ 
	(in accordance with the periodicity of the phase)
	and for $N=2$ it reduces 
	to the simple form of a saw-tooth wave with maxima at $\alpha \in 2\Z+1$
	and minima at $\alpha \in 2\Z$,
	$\alpha_2 = \alpha$ for $\alpha \in [0,1]$.
	However, in the limit as $N \to \infty$ the expression depends non-trivially
	on arithmetic properties of $\alpha$
	(see \cite[Proposition~5]{LunSol-13a}):
	\begin{equation} \label{eq:alpha_*}
		\alpha_* := \lim_{N \to \infty} \alpha_N 
			= \inf_{N \ge 2} \alpha_N
			= \left\{ \begin{array}{ll}
			\frac{1}{\nu}, & \text{if $\alpha = \frac{\mu}{\nu} \in \Q$ reduced, $\mu$ \emph{odd} and $\nu \ge 1$,} \\
			0, & \text{otherwise.}
			\end{array}\right.
	\end{equation}
	In other words, 
	it is supported only on the rationals with odd numerator,
	with a magnitude inversely proportional to the denominator,
	and may thus be considered a variant 
	of a function known as the \keyword[Thomae function]{Thomae}, 
	or \keyword{popcorn function}.

\begin{remark*}
	In order to understand the origin of the above expressions
	and their peculiar dependence on $\alpha$,
	recall that anyons may be modeled correctly using connections on fiber bundles;
	cf. Remark~\ref{rem:anyons-fiber-bundles}.
	The kinetic energy for $N$ anyons may in fact be written as 
	$$
		\hT_\alpha = \sum_{j=1}^N \bigl(-i\nabla_j^{\bA_\alpha}\bigr)^2,
	$$
	where $\bA_\alpha$ denotes a connection one-form on the bundle which implements
	the statistics, i.e. which is such that the holonomies produced under continuous
	exchanges of the particles yield the corresponding representation of the
	braid group, $\rho\colon B_N \to U(\cF)$, of the 
	anyon model. 
	For abelian anyons, with fiber $\cF = \C$ and statistics parameter 
	$\alpha \in [0,2)$, 
	this is thus the phase 
	$\rho(\tau_{n_1} \ldots \tau_{n_k}) = e^{i\alpha\pi k}$ 
	of the exchange as
	discussed in Exercises~\ref{exc:braid-relations}-\ref{exc:braid-enclosed-phases}.
	
	The Poincar\'e inequality \eqref{eq:Poincare-antisymmetry} 
	that was used for the statistical repulsion\index{statistical repulsion} 
	of fermions is here replaced by the inequality
	\begin{equation}\label{eq:anyonic-Poincare}
		\int_{r\S^1} \bigl|\nabla_{\br}^{\bA_\alpha} \Psi\bigr|^2
		\ge \int_{r\S^1} \left( \bigl|\partial_{r}|\Psi|\bigr|^2
			+ \min_{q \in \Z} |\Phi(r) - 2q|^2 \frac{1}{r^2} |\Psi|^2 \right),
	\end{equation}
	where the integration is performed over the circle $|\br| = r$ 
	in the relative coordinates
	$\br = \br_{jk}$ of a fixed pair $\bx_j$, $\bx_k$ of particles,
	and $2\pi\Phi(r)$ is defined as the statistics phase 
	(i.e.\ the holonomy on the bundle)
	obtained under exchange of this pair
	as $\br \to -\br \to \br$ continuously along the full circle $r\S^1$.
	Note that already after half of the circle has been traversed
	one has actually completed a full particle exchange,
	$\br \to -\br \sim \br$, 
	and thus the corresponding phase factor in this case must be 
	(see Exercise~\ref{exc:braid-enclosed-phases})
	$$
		e^{i\pi\Phi(r)}, \qquad \Phi(r) = \bigl(1+2p(r)\bigr)\alpha,
	$$
	where $p(r) \in \{0,1,\ldots,N-2\}$ 
	denotes the number of other particles that happen to become 
	enclosed under such an exchange.
	This depends both on the positions of the other $N-2$ particles 
	$\sx' = (\bx_l)_{l \neq j,k}$,
	on the center of mass $\bX$ of the particle pair, and on the radius $r$ of the circle.
	However, recall that the phase is only determined uniquely up to multiples of 
	$2\pi$,
	i.e. $e^{i\pi\Phi(r)} = e^{i\pi(\Phi(r) - 2q)}$ for any $q \in \Z$.
	Let us for definiteness take the representative closest to the identity,
	$$
		e^{i\pi\beta_0}, \qquad 
		\beta_0 = \beta_0(r) := \pm \min_{q \in \Z} |\Phi(r) - 2q|,
	$$
	where one of the signs apply.
	This phase factor may be considered as a non-trivial boundary condition 
	that the function 
	$u(\varphi) := \Psi\bigl(\br(r;\varphi), \bX; \sx'\bigr)$
	of the relative angle $\varphi$ (with $r$, $\bX$ and $\sx'$ fixed)
	must satisfy:
	\begin{equation}\label{eq:semiperiodic-bc}
		u(\pi) = e^{i\pi\beta_0} u(0).
	\end{equation}
	It is a straightforward exercise (see below)
	to show that the Poincar\'e inequality
	\index{Poincar\'e inequality}
	\begin{equation}\label{eq:semiperiodic-Poincare}
		\int_0^\pi |u'(\varphi)|^2 \,d\varphi 
		\ge \beta_0^2 \int_0^\pi |u(\varphi)|^2 \,d\varphi
	\end{equation}
	holds for such semi-periodic functions on the (half) circle, 
	by expanding in the basis of energy eigenstates 
	$u_q(\varphi) = e^{i(\beta_0 + 2q)\varphi}$, $q \in \Z$.
	Proceeding as in \eqref{eq:kinetic-u-spherical} 
	one then obtains \eqref{eq:anyonic-Poincare}, 
	and finally \eqref{eq:anyonic-Hardy} after
	minimizing over all possibilities for $p(r)$, 
	i.e. $\beta_0(r) \ge \alpha_N$ for all $r$.
	Note that in the case $\alpha=0$ one has $\beta_0 \equiv 0$ and thus bosons
	and no Poincar\'e inequality, 
	while for $\alpha=1$ one has $\beta_0 = \min_{q \in \Z} |1+2p(r)-2q| \equiv 1$
	and thus the fermionic Poincar\'e inequality~\eqref{eq:Poincare-antisymmetry}.
	The above procedure may even be extended to certain families of non-abelian anyons 
	\cite{Qvarfordt-17,LunQva-18}.
\end{remark*}

	\bigskip

	Let us denote by (which needs to be interpreted in the correct form sense \cite{LunSol-14,LarLun-16})
	$$
		e_n(\alpha) 
		:= \inf_{\int_{Q_0^n} |\Psi|^2 = 1}
			\inp{\Psi, \hT_\alpha \Psi}_{L^2(Q_0^n)}
	$$
	the local (Neumann) $n$-particle kinetic energy for anyons on the unit square 
	$Q_0 = [0,1]^2$
	(the corresponding energy on a general square $Q \subseteq \R^2$ 
	is obtained by simple scaling due to homogeneity of the kinetic energy).
	A local version of 
	Theorem~\ref{thm:anyonic-Hardy}
	may be used to prove that $e_n$ satisfies a lower bound 
	for all $n$ of the form \cite{LarLun-16}
	$$
		e_n(\alpha) \ge 
		f\bigl((j_{\alpha_n}')^2\bigr) (n-1)_+,
	$$
	where $j_\nu'$ denotes the 
	first positive zero of the derivative of the Bessel function $J_\nu$
	of the first kind,
	$$
		\sqrt{2\nu} \le j_\nu' \le \sqrt{2\nu(1+\nu)}
		\qquad \text{(and $j_0' := 0$)},
	$$
	and $f\colon [0,(j_1')^2] \to \R_+$ is a function satisfying
	\begin{equation}\label{eq:anyon-f-props}
		t/6 \le f(t) \le 2\pi t 
		\qquad \text{and} \qquad
		f(t) = 2\pi t \bigl(1 - O(t^{1/3})\bigr) \quad \text{as $t\to 0$.}
	\end{equation}
	Proceeding as in Example~\ref{exmp:exclusion-gas},
	one may then prove that the ground-state energy per particle 
	and unit density of the ideal anyon gas in the thermodynamic limit
	at fixed density $\rho = N/|Q|$, 
	$$
		e(\alpha) :=
		\liminf_{N \to \infty, |Q| \to \infty} \frac{|Q|^{-1} e_N(\alpha)}{N\rho}
		= \liminf_{N \to \infty} \frac{e_N(\alpha)}{N^2},
	$$
	is bounded from below by
	$$
		e(\alpha)
		\ge {\textstyle\frac{1}{24}} (j_{\alpha_*}')^2
		\ge {\textstyle\frac{1}{12}} \alpha_*,
	$$
	and moreover, as $\alpha_* \to 0$ 
	the bound improves to
	\begin{equation}\label{eq:anyon-energy-small-alpha*}
		e(\alpha)
		\ge \pi\alpha_* \bigl(1 - O(\alpha_*^{1/3})\bigr).
	\end{equation}
	
	The dependence of the above expressions on $\alpha_*$ 
	comes about by 
	assuming (as a lower bound) that the measure
	of relative radii $r = |\bx_j-\bx_k|/2$ such that the factor $\beta_0(r)^2$ 
	of the potential in \eqref{eq:anyonic-Poincare}
	differs from its absolute minimum $\alpha_*^2$ can be vanishingly small.
	This requires the gas to be dilute and with its particles arranged in tiny clusters
	\cite{LarLun-16,Lundholm-16}, 
	and it is not clear that such configurations will be beneficial with
	respect to the uncertainty principle.
	Indeed, very recently the above bounds have been improved 
	by using the scale invariance of ideal anyons and the uncertainty principle, 
	to yield a dependence only on the two-particle energy
	\cite{LunSei-17}:
	
\begin{lemma}[\keyword{Local exclusion principle for anyons} {\cite{LunSei-17}}]
	\label{lem:local-exclusion-anyons}
	For any $\alpha \in \R$ and $n \ge 2$, it holds
	$$
		e_n(\alpha) \ge c(\alpha)n,
	$$
	where
	$$
		c(\alpha) := \frac{1}{4} \min\{ e_2(\alpha), e_3(\alpha), e_4(\alpha) \} 
		\ge \frac{1}{4} \min\{ e_2(\alpha), 0.147 \}.
	$$
	Furthermore, for any $N$-anyon wave function $\Psi \in \cQ(\hT_\alpha)$ 
	and any square $Q$
	we have the local exclusion principle
	\begin{equation}\label{eq:local-exclusion-anyons}
		T_\alpha^{Q}[\Psi] \ \ge \ 
		\frac{c(\alpha)}{|Q|} \left( \int_Q \varrho_\Psi(\bx) \,d\bx \ - 1 \right)_+.
	\end{equation}
\end{lemma}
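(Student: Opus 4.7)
The plan is to first establish the bulk inequality $e_n(\alpha) \ge c(\alpha)\,n$ for all $n \ge 2$, and then to derive \eqref{eq:local-exclusion-anyons} from it by exactly the partition-of-unity plus convexity argument of Lemma~\ref{lem:local-exclusion}. The bulk bound is proved by induction on $n$. The base cases $n=2,3,4$ are immediate from the definition of $c(\alpha)$, since $e_n(\alpha) \ge 4c(\alpha) \ge nc(\alpha)$ for such $n$.

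For the inductive step with $n \ge 5$, I would subdivide the unit square $Q_0$ into four congruent subsquares $Q_1,\ldots,Q_4$ of area $1/4$ and perform Neumann bracketing of the anyon kinetic form $\hT_\alpha$ across this partition. Combined with the scale invariance of ideal anyons ($e_k^{Q_i}(\alpha)=4e_k(\alpha)$), this yields for every normalized $\Psi$
\begin{equation}\label{eq:plan-bracket}
\inp{\Psi,\hT_\alpha\Psi} \ \ge \ 4\sum_{n_1+\cdots+n_4=n} p_{(n_1,\ldots,n_4)}[\Psi]\ \sum_{i=1}^{4} e_{n_i}(\alpha),
\end{equation}
where $p_{(n_1,\ldots,n_4)}[\Psi]$ is the probability that exactly $n_i$ of the $n$ particles lie in $Q_i$. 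Let $p_*(\Psi)$ denote the total probability of the degenerate event that all $n$ particles are found in a single subsquare (summed over the four choices). A direct case analysis over the non-degenerate partitions of $n$ (using the inductive hypothesis $e_k(\alpha) \ge c(\alpha)k$ for $2 \le k \le n-1$ together with the refined base-case bound $e_k(\alpha) \ge 4c(\alpha)$ for $k\in\{2,3,4\}$) shows that $\sum_i e_{n_i}(\alpha) \ge c(\alpha)n/4$ for every such partition; the critical case is $(n-3,1,1,1)$, handled by $e_{n-3} \ge 4c \ge cn/4$ for $n\in\{5,6,7\}$ and by $e_{n-3}\ge c(n-3) \ge cn/4$ for $n \ge 8$. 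Thus \eqref{eq:plan-bracket} reduces to
\[
\inp{\Psi,\hT_\alpha\Psi} \ \ge \ 4p_*(\Psi)\,e_n(\alpha) + c(\alpha)n\,(1-p_*(\Psi)).
\]
Infimizing over $\Psi$ and exploiting that $p_*$ attains every value in $[0,1]$ (through spread-out versus concentrated configurations), the linear-in-$p_*$ right-hand side gives $e_n(\alpha) \ge \min\{c(\alpha)n,\,4e_n(\alpha)\}$, which forces $e_n(\alpha)\ge c(\alpha)n$ unless $e_n(\alpha)\le 0$; the latter case is trivial since then necessarily $c(\alpha)=0$.

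The main obstacle is to make the Neumann bracketing \eqref{eq:plan-bracket} rigorous for the anyon form, since the statistical connection $\bA_\alpha$ a priori couples particles in distinct subsquares. The crucial observation is that its holonomy around any loop confined to a distinct-subsquare region is trivial (no particle exchange can occur), so that after a local gauge transformation the covariant kinetic form decomposes as a direct sum of anyon forms on each $Q_i$ with its own interior braiding. This localization is the content of the local form of Theorem~\ref{thm:anyonic-Hardy} developed in \cite{LarLun-16} and may be carried over essentially verbatim. Once $e_n(\alpha)\ge c(\alpha)n$ (equivalently $\ge c(\alpha)(n-1)_+$ for all $n\ge 0$, using $e_0=e_1=0$) is available, passage to a general square $Q$ is by scaling $e_n^Q(\alpha)=e_n(\alpha)/|Q|$, and \eqref{eq:local-exclusion-anyons} then follows by inserting the partition of unity \eqref{eq:particle-partition-of-unity} into $T_\alpha^Q[\Psi]$, bounding each conditional $n$-particle piece by $e_n^Q(\alpha)$, and applying the convexity of $x\mapsto(x-1)_+$ together with \eqref{eq:particle-prob-expectation}, exactly as in the proof of Lemma~\ref{lem:local-exclusion}. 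The explicit bound $c(\alpha)\ge\tfrac14\min\{e_2(\alpha),\,0.147\}$ finally comes from an $\alpha$-uniform numerical lower estimate on $\min\{e_3(\alpha),e_4(\alpha)\}$ derived from that same local Hardy inequality.
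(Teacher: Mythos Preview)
The paper does not give its own proof of this lemma; it is quoted from \cite{LunSei-17}, so no direct comparison is possible. Your inductive strategy via subdivision into four congruent subsquares together with scale invariance is natural and broadly along the right lines. The passage from the bulk bound $e_n(\alpha)\ge c(\alpha)n$ to \eqref{eq:local-exclusion-anyons} is indeed standard and follows Lemma~\ref{lem:local-exclusion} as you indicate, and your identification of the gauge-localization issue in the Neumann bracketing step, together with the sketched resolution via a local gauge transformation removing the influence of particles in other subsquares, is essentially correct.

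There is, however, a genuine gap in your handling of the degenerate configuration. From $e_n \ge \min\{4e_n,\,c(\alpha)n\}$ you correctly conclude $e_n \ge c(\alpha)n$ \emph{provided} $e_n > 0$, but the remaining case $e_n(\alpha)=0$ is not disposed of by the bare assertion ``then necessarily $c(\alpha)=0$.'' That implication is equivalent to: if $e_2(\alpha),e_3(\alpha),e_4(\alpha)>0$ then $e_n(\alpha)>0$ for all $n\ge 5$. This is not obvious from what precedes; the earlier bounds in the paper give $e_n(\alpha)>0$ only when $\alpha_n>0$, and for even-numerator rational $\alpha$ one has $\alpha_n=0$ for infinitely many $n$ while $\alpha_2>0$. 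One can close the gap by iterating your own subdivision: a zero-energy ground state (which exists by compactness of the Neumann problem on $Q_0^n$) would have $p_*=1$, hence be supported on $\cup_i Q_i^n$; repeating forces the support into a set of measure $4^{k(1-n)}\to 0$, a contradiction. But this requires existence of a minimizer and the support argument, neither of which is stated. The paper's remark that the proof in \cite{LunSei-17} uses ``the scale invariance of ideal anyons and the uncertainty principle'' suggests that the published argument handles this concentration scenario quantitatively rather than via the qualitative $e_n>0$ you invoke.
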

	
	In fact, these lower bounds in terms of
	$e_2(\alpha) \ge \frac{1}{3}\alpha_2$
	may be complemented with 
	upper bounds of the same form,
	and one has the following
	leading behavior for the ground state energy of the ideal anyon gas,
	showing that it has a similar extensitivity as the Fermi gas for
	all types of anyons except for bosons:
	
\begin{theorem}[\keyword{Extensivity of the ideal anyon gas energy} {\cite{LunSei-17}}]\label{thm:anyon-gas}
	There exist constants $0 < C_1 \le C_2 < \infty$ such that for any $\alpha \in \R$
	$$
		C_1 \alpha_2 \le e(\alpha) \le C_2 \alpha_2,
	$$
	and moreover, in the limit $\alpha_2 \to 0$,
	$$
		e(\alpha) \ge \frac{\pi}{4} \alpha_2 \bigl(1-O(\alpha_2^{1/3})\bigr).
	$$
\end{theorem}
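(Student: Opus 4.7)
The proof splits naturally into three parts: the upper bound $e(\alpha) \le C_2 \alpha_2$, the lower bound $e(\alpha) \ge C_1 \alpha_2$, and the refined asymptotic lower bound $e(\alpha) \ge \frac{\pi}{4} \alpha_2(1 - O(\alpha_2^{1/3}))$ valid as $\alpha_2 \to 0$. The overall strategy exploits the scale invariance of the ideal anyon gas in two dimensions, which makes the thermodynamic quantity $e(\alpha) = \liminf_{N \to \infty} e_N(\alpha)/N^2$ well-defined and reduces matters to extracting the correct dependence on $\alpha_2$.

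For the lower bound $e(\alpha) \ge C_1 \alpha_2$, I would follow exactly the scheme of Example~\ref{exmp:exclusion-gas}, but now combining Lemma~\ref{lem:local-exclusion-anyons} with the local uncertainty principle Lemma~\ref{lem:local-uncertainty}, both of which are available in $d=2$. Given $\Psi \in H^1_{\alpha}(Q_0^N)$ on $Q_0 = [0,L]^2$, partition $Q_0$ into $M^2$ congruent sub-squares $Q$ of area $|Q| = |Q_0|/M^2$. With a convex parameter $\lambda \in (0,1)$, apply $\lambda$ times Lemma~\ref{lem:local-uncertainty} and $(1-\lambda)$ times \eqref{eq:local-exclusion-anyons} on each $Q$, then sum:
\begin{equation*}
T_\alpha[\Psi] \ge \sum_{Q \in \eP}\!\left( \lambda C_2 \epsilon^3 \frac{\int_Q \varrho_\Psi^2}{\int_Q \varrho_\Psi} - \lambda C_2' \frac{\int_Q \varrho_\Psi}{|Q|} + (1-\lambda) \frac{c(\alpha)}{|Q|}\left( \int_Q \varrho_\Psi - 1 \right)_{\!+}\right).
\end{equation*}
Using Cauchy--Schwarz on each sub-square to bound $\int_Q \varrho_\Psi^2 \ge (\int_Q \varrho_\Psi)^2/|Q|$ and the conservation $\sum_Q \int_Q \varrho_\Psi = N$, the first and third positive terms combine to control the $N/|Q|$ loss of the second. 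Choosing $M^2 \sim N$ so that $|Q| \sim 1/\rho$ and optimizing $\lambda$ and $\epsilon$ produces a bound proportional to $c(\alpha) \rho N = c(\alpha) N^2/|Q_0|$. Since Lemma~\ref{lem:local-exclusion-anyons} combined with $e_2(\alpha) \ge \frac{1}{3}\alpha_2$ gives $c(\alpha) \ge C \alpha_2$, dividing by $N^2$ and passing to the thermodynamic limit yields $e(\alpha) \ge C_1 \alpha_2$.

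For the upper bound $e(\alpha) \le C_2 \alpha_2$, I would construct an explicit trial state. The natural candidate, inspired by Exercise~\ref{exc:fermions-localization}, is to arrange the $N$ particles into $N/2$ well-separated two-anyon clusters: subdivide $Q_0$ into $N/2$ disjoint cells of area $|Q_{\text{cell}}| = 2/\rho$ and on each cell place the corresponding two-anyon Neumann ground state, whose energy is $e_2(\alpha)/|Q_{\text{cell}}| = \rho \, e_2(\alpha)/2$. The product across clusters must be promoted to an admissible section of the anyonic bundle, which in the magnetic gauge picture of \cite{LunSol-14} amounts to dressing it with singular gauge factors $\prod_{j < k}|\bx_j - \bx_k|^{i\alpha\arg(\bx_j - \bx_k)}$; the inter-cluster contributions to the kinetic energy are of lower order when the cells are kept at fixed diameter $\sim \rho^{-1/2}$ and cluster centers are large distance apart, because the relevant Aharonov--Bohm fluxes enclosed by one cluster around another contribute uniformly and can be gauged away up to boundary corrections vanishing as $N \to \infty$. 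The total energy per particle is then $\frac{1}{2}\rho \, e_2(\alpha)(1+o(1))$, giving $e(\alpha) \le \frac{1}{2} e_2(\alpha) \le C_2 \alpha_2$ by a direct computation on the two-anyon Hamiltonian.

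The refined asymptotic $e(\alpha) \ge \frac{\pi}{4}\alpha_2(1 - O(\alpha_2^{1/3}))$ as $\alpha_2 \to 0$ is the main obstacle. The factor $\pi$ traces back to the sharp two-particle asymptotic $e_2(\alpha) \ge f((j_{\alpha_2}')^2)$ with $f(t) = 2\pi t (1 - O(t^{1/3}))$ from \eqref{eq:anyon-f-props} and $(j_\nu')^2 \le 2\nu(1+\nu)$, which together yield $e_2(\alpha) \ge \pi \alpha_2(1 - O(\alpha_2^{1/3}))$ for small $\alpha_2$. The additional factor $1/4$ arises from the optimization of the convex combination in the lower bound argument above, executed now with the sharp constant $c(\alpha) = \frac{1}{4} e_2(\alpha)(1+o(1))$ instead of the rough $\alpha_2/12$, so that the balance between the positive exclusion term $c(\alpha) N/|Q|$ and the negative uncertainty remainder $C_2' N/|Q|$ can be tuned to preserve almost the full prefactor of $e_2(\alpha)/2$. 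The technical difficulty lies in showing that, in the cube-partition argument, the density fluctuations that one loses to Cauchy--Schwarz are of lower order in $\alpha_2$ — which is not automatic, since for small $\alpha_2$ the repulsion is weak and configurations with highly inhomogeneous density are energetically close to homogeneous ones. Controlling these fluctuations likely requires either a more refined local uncertainty inequality (respecting the sharp Weyl-type constant on the cube) or a separate covering argument distinguishing cubes with near-average density from those with anomalously low occupancy, and keeping the error in the latter under $O(\alpha_2^{1+1/3})$.
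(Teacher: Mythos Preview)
The paper does not give a self-contained proof of this theorem; it is cited from \cite{LunSei-17}, and the surrounding text only indicates that the lower bounds follow from Lemma~\ref{lem:local-exclusion-anyons} via the partitioning scheme of Example~\ref{exmp:exclusion-gas}, while the upper bounds are merely asserted. So there is not much of a ``paper's own proof'' to compare against in detail --- but the approach the paper sketches is markedly simpler than yours, and your refined-asymptotic argument contains concrete errors.

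\textbf{Lower bound.} You invoke both the local uncertainty principle (Lemma~\ref{lem:local-uncertainty}) and the local exclusion (Lemma~\ref{lem:local-exclusion-anyons}) in a convex combination. This is the machinery of Section~\ref{sec:LT-local}, designed for Lieb--Thirring inequalities where one must control \emph{inhomogeneous} densities. For the ground-state energy of the gas confined to a box, the simpler argument of Example~\ref{exmp:exclusion-gas} suffices: partition $Q_0$ into $M^2$ equal squares, apply only \eqref{eq:local-exclusion-anyons}, drop the positive part to get
\[
	T_\alpha[\Psi] \ge \sum_Q \frac{c(\alpha)}{|Q|}\left(\int_Q \varrho_\Psi - 1\right)
	= \frac{c(\alpha)M^2}{|Q_0|}\bigl(N - M^2\bigr),
\]
and optimize at $M^2 = N/2$ to obtain $e_N(\alpha) \ge \tfrac{1}{4}c(\alpha)N^2/|Q_0|$, hence $e(\alpha) \ge c(\alpha)/4$. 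No uncertainty term, no Cauchy--Schwarz, no density-fluctuation analysis is needed. You say you ``follow exactly the scheme of Example~\ref{exmp:exclusion-gas}'' but then do something different and more complicated.

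\textbf{Refined asymptotic.} Your numerical tracking is off. You write $(j_\nu')^2 \le 2\nu(1+\nu)$ and conclude $e_2(\alpha) \ge \pi\alpha_2(1-O(\alpha_2^{1/3}))$; but for a \emph{lower} bound on $e_2$ one uses $(j_\nu')^2 \ge 2\nu$, and combined with $f(t) = 2\pi t(1-O(t^{1/3}))$ this gives $e_2(\alpha) \ge f\bigl((j'_{\alpha_2})^2\bigr) \ge 4\pi\alpha_2(1-O(\alpha_2^{1/3}))$, not $\pi\alpha_2$. The constant $\pi/4$ then falls out cleanly from the simple chain
\[
	e(\alpha) \ \ge\ \frac{c(\alpha)}{4} \ \ge\ \frac{1}{4}\cdot\frac{e_2(\alpha)}{4} \ \ge\ \frac{1}{16}\cdot 4\pi\alpha_2\bigl(1-O(\alpha_2^{1/3})\bigr) \ =\ \frac{\pi}{4}\alpha_2\bigl(1-O(\alpha_2^{1/3})\bigr),
\]
where the first $1/4$ is the partition optimization and the second is built into the definition of $c(\alpha)$ in Lemma~\ref{lem:local-exclusion-anyons}. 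There is no ``technical difficulty'' with density fluctuations here; the worry you raise about Cauchy--Schwarz losses is an artifact of having introduced the uncertainty term unnecessarily.

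\textbf{Upper bound.} Your trial-state idea is reasonable in spirit, but the claim that inter-cluster statistical phases ``can be gauged away up to boundary corrections'' is the entire content of the upper bound and cannot be dismissed in a sentence. The paper does not supply details either, so this is not a discrepancy with the paper --- but it is the one genuine gap in your proposal.
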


	However, 
	the exact energy is not known, and in fact a recent conjecture \cite{CorLunRou-16}
	in the context of a common approximation known as average-field theory
	could imply for the full energy that the simple linear interpolation 
	$e(\alpha) = 2\pi\alpha_2$ (i.e. $C_1 = C_2 = 2\pi$)
	cannot hold.
	Furthermore, the picture might change with an additional attraction 
	between the anyons which promotes clustering.
	Whether the true energy $e(\alpha)$ could be lower for even-numerator rational $\alpha$
	than for odd numerators due to the above form of statistical repulsion 
	is an interesting possibility, discussed in more detail in \cite{Lundholm-16}.

\begin{exc}
	Prove the Poincar\'e inequality \eqref{eq:semiperiodic-Poincare}
	for functions $u \in H^1([0,\pi])$ subject to the semi-periodic 
	boundary condition \eqref{eq:semiperiodic-bc}.
	
	Note that the self-adjoint operator corresponding to the form
	\eqref{eq:semiperiodic-Poincare} is $D^2 = -\partial_\varphi^2$,
	with $D = -i\partial_\varphi$ defined as a self-adjoint operator
	on $L^2([0,\pi])$ with the b.c. \eqref{eq:semiperiodic-bc}.
	Then the natural domain of $D^2$ is the space of functions $u \in H^2([0,\pi])$
	satisfying both
	\begin{equation}\label{eq:semiperiodic-bc-both}
		u(\pi) = e^{i\pi\beta_0} u(0)
		\qquad \text{and} \qquad
		u'(\pi) = e^{i\pi\beta_0} u'(0).
	\end{equation}
\end{exc}

\section{The Lieb--Thirring inequality\lect{ [13,14]}}\label{sec:LT}

	In Section~\ref{sec:uncert} we found that the kinetic energy of an
	arbitrary $L^2$-normalized $N$-body state $\Psi \in \cQ(\hT) = H^1(\R^{dN})$ 
	is bounded from below by
	(as usual we assume $\hbar^2/(2m) = 1$)
	$$
		T[\Psi] := \inp{\Psi, \sum_{j=1}^N (-\Delta_j) \Psi}
		\ge G_d N^{-2/d} \int_{\R^d} \varrho_\Psi^{1+2/d},
	$$
	which encodes the uncertainty principle by yielding an increase in the 
	kinetic energy for localized densities, 
	but unfortunately becomes overall very weak with $N \to \infty$.
	It turns out however that if one restricts to fermionic,
	i.e. antisymmetric, states $\Psi \in \cH_\asym$,
	then this inequality can be improved to
	\begin{equation}\label{eq:LT}
		T[\Psi] 
		\ge K_d \int_{\R^d} \varrho_\Psi^{1+2/d},
	\end{equation}
	with a constant $K_d > 0$ that is independent of $N$.
	This \keyword{fermionic kinetic energy inequality}, 
	which encodes \emph{both} the uncertainty principle 
	\emph{and} the exclusion principle,
	is also known as a \keyword{Lieb--Thirring inequality} and it was introduced by 
	Lieb and Thirring in 1975 \cite{LieThi-75,LieThi-76} in order to give a 
	new and drastically simplified proof of stability of matter, 
	as compared to the original tour-de-force 
	proof due to Dyson and Lenard in 1967.

	We will apart from proving the celebrated inequality \eqref{eq:LT}
	for fermions
	also prove that the assumption on antisymmetry, i.e.~the Pauli principle,
	may be replaced by a strong enough repulsive interaction 
	which then effectively imposes an exclusion principle on the states $\Psi$ 
	as discussed in the previous chapter.
	In particular, 
	for an inverse-square pair interaction $W(\bx) = \beta|\bx|^{-2}$,
	and for any $\Psi \in H^1(\R^{dN})$ 
	(hence also for bosons $\Psi \in \cH_\sym$ or distinguishable particles),
	the following Lieb--Thirring-type inequality holds:
	\begin{equation}\label{eq:LT-interaction}
		T[\Psi] + W[\Psi]
		\ge K_d(\beta) \int_{\R^d} \varrho_\Psi^{1+2/d},
	\end{equation}
	with a constant $K_d(\beta) > 0$ for any $\beta > 0$.
	Also other forms of LT inequalities are valid if for instance there is 
	some local statistical repulsion, such as for anyons in two dimensions.

\subsection{One-body and Schr\"odinger formulations}\label{sec:LT-1p}

	Note that for a fermionic basis state
	$\Psi = u_1 \wedge u_2 \wedge \ldots \wedge u_N$,
	i.e. a Slater determinant,
	where $\{u_j\}_{j=1}^N \subset \gH = L^2(\R^d)$ denotes 
	an orthonormal set of one-body states,
	we have that (exercise)
	\begin{equation}\label{eq:Slater-T}
		T[\Psi] 
		= \sum_{j=1}^N \int_{\R^d} |\nabla u_j|^2,
	\end{equation}
	and furthermore (exercise)
	\begin{equation}\label{eq:Slater-rho}
		\varrho_\Psi(\bx) 
		= \sum_{j=1}^N |u_j(\bx)|^2.
	\end{equation}
	The inequality \eqref{eq:LT} on such a state then follows straightforwardly
	from the following simple generalization of the
	Gagliardo--Nirenberg--Sobolev inequality of Theorem~\ref{thm:GNS}.
	This very simple approach to proving Lieb--Thirring inequalities directly
	by means of the kinetic energy inequality is quite recent and
	due to Rumin \cite{Rumin-10,Rumin-11}
	(see also e.g. \cite{Solovej-11} and \cite{Frank-14} for generalizations).

\begin{theorem}[\keyword{Kinetic energy inequality}]\label{thm:LT-kinetic-1p}
	Given an $L^2$-orthonormal set $\{u_j\}_{j=1}^N \subset H^1(\R^d)$,
	we have that
	$$
		\sum_{j=1}^N \int_{\R^d} |\nabla u_j(\bx)|^2 \,d\bx
		\ge K_d \int_{\R^d} \left( \sum_{j=1}^N |u_j(\bx)|^2 \right)^{1+\frac{2}{d}} d\bx,
	$$
	with a constant satisfying $G_d' \le K_d \le \min\{G_d,K_d^\cl\}$.
\end{theorem}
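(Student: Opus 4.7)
The plan is to adapt Rumin's frequency-decomposition proof of the Gagliardo–Nirenberg–Sobolev inequality (Theorem~\ref{thm:GNS}) to the orthonormal case, with Bessel's inequality (Exercise~\ref{exc:Bessels-inequality}) playing the decisive role that encodes the Pauli-type exclusion hidden in orthonormality. Set $\varrho(\bx) := \sum_{j=1}^N |u_j(\bx)|^2$ and, for each cutoff $E > 0$, decompose every $u_j$ into its low- and high-frequency parts $u_j = u_j^{E,-} + u_j^{E,+}$ using the spectral projections of $-\Delta$ as in the proof of Theorem~\ref{thm:GNS}. Summing over $j$ and applying Fubini exactly as in \eqref{eq:GNS-high-energy} yields
\begin{equation*}
	\sum_{j=1}^N \int_{\R^d}|\nabla u_j|^2
	= \int_0^\infty \int_{\R^d} \sum_{j=1}^N |u_j^{E,+}(\bx)|^2 \, d\bx\, dE,
\end{equation*}
so that proving the theorem reduces to a pointwise estimate on $\sum_j |u_j^{E,+}(\bx)|^2$.

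The crucial new ingredient compared to Theorem~\ref{thm:GNS} is the control of the low-frequency part. For each fixed $\bx$ and $E$, I would write
\begin{equation*}
	u_j^{E,-}(\bx) = \langle h_{E,\bx}, u_j \rangle_{L^2(\R^d)},
	\qquad
	\widehat{h_{E,\bx}}(\bp) = (2\pi)^{-d/2}\1_{\{|\bp|^2 \le E\}} e^{-i\bp\cdot\bx},
\end{equation*}
so Plancherel gives $\|h_{E,\bx}\|_2^2 = (2\pi)^{-d}|B_{\sqrt{E}}(0)| = (2\pi)^{-d} d^{-1}|\S^{d-1}| E^{d/2}$. Since $\{u_j\}_{j=1}^N$ is orthonormal, Bessel's inequality~\eqref{eq:Bessel} then delivers the pointwise bound
\begin{equation*}
	\sum_{j=1}^N |u_j^{E,-}(\bx)|^2
	\le \|h_{E,\bx}\|_2^2
	= (2\pi)^{-d} d^{-1}|\S^{d-1}| E^{d/2},
\end{equation*}
uniformly in $\bx$. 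This is precisely the place where the exclusion principle enters: for a general (non-orthonormal) collection the analogous bound would carry a factor $N$, which would make the whole argument collapse.

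Next I would invoke the pointwise $\ell^2$-triangle inequality
\begin{equation*}
	\Bigl(\sum_{j=1}^N |u_j^{E,+}(\bx)|^2\Bigr)^{1/2}
	\ge \Bigl[\varrho(\bx)^{1/2}
		- \Bigl(\sum_{j=1}^N |u_j^{E,-}(\bx)|^2\Bigr)^{1/2}\Bigr]_+,
\end{equation*}
combine it with the Bessel bound, and carry out the $E$-integration using the identity~\eqref{eq:GNS-integral-identity}, exactly as at the end of the proof of Theorem~\ref{thm:GNS}. The constant that emerges is the same $G_d'$ appearing in~\eqref{eq:GNS-constant}, establishing $K_d \ge G_d'$. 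The upper bound $K_d \le G_d$ is immediate by specializing to $N=1$, where the theorem reduces to the GNS inequality~\eqref{eq:GNS} with $u = u_1$. The upper bound $K_d \le K_d^\cl$ is a saturation argument: take the first $N$ orthonormal Neumann eigenfunctions of $-\Delta$ on a cube $Q$, for which the left-hand side equals $\sum_{k=0}^{N-1}\lambda_k(-\Delta_Q^\eN)$ and the right-hand side can be estimated using $\int_Q \varrho^{1+2/d}$; sending $N,|Q|\to\infty$ at fixed density and invoking the Weyl asymptotics~\eqref{eq:Weyl-asymptotics} from Exercise~\ref{exc:Weyl-cube} yields the ratio $K_d^\cl$ in the limit. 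The only subtle step in the whole argument is the Bessel estimate above; once it is in place the rest is a mechanical repeat of Rumin's one-body computation.
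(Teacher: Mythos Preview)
Your argument is essentially identical to the paper's: the same Rumin decomposition, the same Bessel-inequality step for the low-frequency sum (the paper writes it as $\sum_j |\langle \hat u_j, \1_{\{|\bp|\le E^{1/2}\}}e^{i\bp\cdot\bx}\rangle|^2$ but this is your $h_{E,\bx}$), the same $\ell^2$-triangle inequality, and the same $E$-integration yielding $K_d \ge G_d'$. One small slip in the saturation step: Neumann eigenfunctions on a cube do not lie in $H^1(\R^d)$ when extended by zero (the constant eigenfunction jumps across $\partial Q$), so the inequality as stated does not apply to them; the paper takes Dirichlet eigenfunctions instead, which do extend to $H^1(\R^d)$, and the Weyl asymptotics are the same to leading order.
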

\begin{remark}\label{rem:LT-conjecture}
	See Remark~\ref{rem:GNS-constant} concerning the constants $G_d' \le G_d$.
	The presently known best bound for the optimal constant $K_d$ is 
	$K_d \ge (\pi/\sqrt{3})^{-2/d}K_d^\cl$ 
	\cite{DolLapLos-08},
	and it is conjectured that $K_d = G_d < K_d^\cl$ for $d \le 2$ 
	while $K_d = K_d^\cl < G_d$ for $d \ge 3$ \cite{LieThi-76}.
	See also \cite{Laptev-12}.
\end{remark}
\begin{proof}
	We follow the proof of Theorem~\ref{thm:GNS}, 
	with the crucial difference that we use the
	Bessel inequality in the bound corresponding to~\eqref{eq:GNS-low-energy}.
	Namely, we have for the orthonormal system of functions $\{u_j\}_j$ that
	\begin{equation}\label{eq:LT-high-energy}
		\sum_{j=1}^N \int_{\R^d} |\nabla u_j(\bx)|^2 \,d\bx
		= \sum_{j=1}^N \int_0^\infty \int_{\R^d} |u_j^{E,+}(\bx)|^2 \,d\bx \,dE,
	\end{equation}
	and by the triangle inequality on $\C^N$,
	\begin{equation}\label{eq:LT-triangle-ineq}
		\sum_{j=1}^N \bigl|u_j^{E,+}(\bx)\bigr|^2 
		\ge \left[ \left( \sum_{j=1}^N |u_j(\bx)|^2 \right)^{1/2} 
			- \left( \sum_{j=1}^N \bigl|u_j^{E,-}(\bx)\bigr|^2 \right)^{1/2} \right]_+^2.
	\end{equation}
	The bound on the low-energy part is then done using Fourier transform and
	Bessel's inequality~\eqref{eq:Bessel}
	(note that $\{\hu_j\}_j$ are also orthonormal by the unitarity of $\cF$)
	according to
	\begin{align}\label{eq:LT-low-energy}
		\sum_{j=1}^N \bigl|u_j^{E,-}(\bx)\bigr|^2
		&= \sum_{j=1}^N \left| (2\pi)^{-d/2} \int_{\R^d} 
			\1_{\{|\bp|^2 \le E\}} \hu_j(\bp) e^{i\bp\cdot\bx} \,d\bp \right|^2 \\
		&= (2\pi)^{-d} \sum_{j=1}^N 
			\Bigl|\inp{\hu_j, \1_{\{|\bp| \le E^{1/2}\}} e^{i\bp\cdot\bx}}\Bigr|^2 
		\le (2\pi)^{-d} |B_{E^{1/2}}(0)|,
	\end{align}
	after which the remainder of the previous proof goes through 
	with the replacement $\|u\| = 1$,
	and $K_d \ge G_d'$ of \eqref{eq:GNS-constant}.
	Also, since for $N=1$ the inequality is exactly GNS, 
	we cannot have $K_d > G_d$ for the optimal constants.
	Furthermore, taking as $u_j$ the eigenfunctions of the Dirichlet
	Laplacian on a cube, we may as $N \to \infty$ compare to the
	Weyl asymptotics \eqref{eq:Weyl-asymptotics}, see e.g.\ \cite{Kroeger-94},
	and in fact one may thus prove that also $K_d \le K_d^\cl$.
\end{proof}
	
\begin{theorem}[\keyword{Many-body kinetic energy inequality}]\label{thm:LT-kinetic}
	The inequality 
	\begin{equation}\label{eq:LT-thm}
		\int_{\R^{dN}} |\nabla \Psi|^2
		\ge K_d \int_{\R^d} \varrho_\Psi^{1+2/d}
	\end{equation}
	holds for any 
	$\Psi \in H^1_\asym((\R^d)^N)$.
\end{theorem}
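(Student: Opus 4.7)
My plan is to reduce Theorem~\ref{thm:LT-kinetic} to (a mild generalization of) the orthonormal kinetic-energy inequality of Theorem~\ref{thm:LT-kinetic-1p}, using the spectral decomposition of the one-body reduced density matrix of $\Psi$. The point is that a general antisymmetric $\Psi$ is not itself a Slater determinant, so \eqref{eq:Slater-T}--\eqref{eq:Slater-rho} do not apply directly, but the relevant one-body quantities ($T[\Psi]$ and $\varrho_\Psi$) can be written in the \emph{same} form as for Slater determinants provided one allows occupation numbers $\lambda_j\in[0,1]$, which is exactly what the Pauli principle forces in this setup.

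Concretely, I would first define the one-body density matrix $\gamma_\Psi\in\cB(L^2(\R^d))$ by its kernel
\begin{equation*}
    \gamma_\Psi(\bx,\by) := N \int_{\R^{d(N-1)}} \Psi(\bx,\sx')\,\overline{\Psi(\by,\sx')}\,d\sx',
\end{equation*}
so that $\gamma_\Psi$ is trace-class, non-negative, with $\Tr \gamma_\Psi = N$ and diagonal $\gamma_\Psi(\bx,\bx)=\varrho_\Psi(\bx)$. The key structural input from antisymmetry is the operator bound $0\le \gamma_\Psi \le \1$: for any $u\in L^2(\R^d)$ with $\|u\|=1$ one checks that $\inp{u,\gamma_\Psi u}=N\norm{P_u \Psi}^2$ where $P_u$ projects the first argument of $\Psi$ onto $u$, and antisymmetry of $\Psi$ combined with the $N$ copies of $u$ in a Slater expansion yields $\inp{u,\gamma_\Psi u}\le 1$ (this is the Pauli bound on the one-body density matrix). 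I would then diagonalize $\gamma_\Psi = \sum_{j} \lambda_j |u_j\rangle\langle u_j|$ with $\{u_j\}$ an $L^2$-orthonormal system, $\lambda_j\in[0,1]$, and $\sum_j\lambda_j=N$; standard formulas then give
\begin{equation*}
    \varrho_\Psi(\bx)=\sum_j \lambda_j |u_j(\bx)|^2,
    \qquad
    T[\Psi]=\Tr\bigl((-\Delta)\gamma_\Psi\bigr)=\sum_j \lambda_j \int_{\R^d}|\nabla u_j|^2.
\end{equation*}

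The remaining step is to establish the \emph{weighted} one-body inequality
\begin{equation*}
    \sum_j \lambda_j \int_{\R^d}|\nabla u_j|^2 \;\ge\; K_d \int_{\R^d}\Bigl(\sum_j \lambda_j |u_j|^2\Bigr)^{1+2/d}
\end{equation*}
for any orthonormal $\{u_j\}\subset H^1(\R^d)$ and any $\lambda_j\in[0,1]$. I would do this by revisiting Rumin's proof of Theorem~\ref{thm:LT-kinetic-1p} and checking that every step survives the introduction of the weights $\lambda_j$. The high-energy identity $\sum_j \lambda_j\|\nabla u_j\|_2^2=\int_0^\infty\sum_j \lambda_j\|u_j^{E,+}\|_2^2\,dE$ is immediate. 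The pointwise triangle inequality \eqref{eq:LT-triangle-ineq} generalizes to
\begin{equation*}
    \sum_j \lambda_j |u_j^{E,+}(\bx)|^2 \ge \left[\Bigl(\sum_j \lambda_j |u_j(\bx)|^2\Bigr)^{1/2}-\Bigl(\sum_j \lambda_j |u_j^{E,-}(\bx)|^2\Bigr)^{1/2}\right]_+^2
\end{equation*}
by viewing the sequences $(u_j(\bx))_j$ as elements of a weighted $\ell^2$-space with weights $\lambda_j$. The low-energy bound \eqref{eq:LT-low-energy} also survives: the inequality
\begin{equation*}
    \sum_j \lambda_j |u_j^{E,-}(\bx)|^2 = (2\pi)^{-d}\sum_j \lambda_j \bigl|\langle \hu_j,\1_{B_{E^{1/2}}(0)}e^{i\bp\cdot\bx}\rangle\bigr|^2 \le (2\pi)^{-d}|B_{E^{1/2}}(0)|
\end{equation*}
follows from Bessel~\eqref{eq:Bessel} applied to the orthonormal system $\{\hu_j\}$, since each $\lambda_j\le 1$. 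With these two ingredients in hand, the remainder of the proof of Theorem~\ref{thm:LT-kinetic-1p} (the use of \eqref{eq:GNS-integral-identity}) goes through verbatim and yields the weighted statement with the same constant $K_d$.

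The main obstacle will be the first part, namely establishing the operator bound $\gamma_\Psi \le \1$ cleanly from antisymmetry; once this is in place the rest is a bookkeeping generalization of Rumin's argument and the final inequality follows upon substituting the diagonalization of $\gamma_\Psi$. One conceptual point worth emphasizing in the writeup is that without the Pauli bound $\lambda_j\le 1$, the low-energy step of Bessel's inequality cannot control $\sum_j\lambda_j|u_j^{E,-}|^2$ uniformly in $N$, and this is exactly where the $N$-independence of the constant $K_d$ in \eqref{eq:LT-thm} (as opposed to the $N^{-2/d}$ factor in Theorem~\ref{thm:GNS-many-body}) is being paid for.
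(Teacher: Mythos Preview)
Your proposal is correct and follows essentially the same route as the paper: diagonalize the one-body density matrix $\gamma_\Psi$, invoke the Pauli bound $0\le\gamma_\Psi\le\1$ to get occupation numbers $\lambda_j\in[0,1]$, and rerun Rumin's argument with the weights $\lambda_j$ attached, using $\lambda_j\le 1$ in the Bessel step. The paper also notes (as you could) that an alternative is to bypass the density-matrix machinery entirely by going through the Schr\"odinger-form equivalence of Theorem~\ref{thm:LT-equivalence} and Corollary~\ref{thm:LT-eigenvalues}, which only requires the one-body Theorem~\ref{thm:LT-kinetic-1p}.
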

\begin{proof}
	The proof in the many-body case would again be a straightforward modification 
	of the above one-body case, along the lines of Exercise~\ref{exc:GNS-many-body},
	if we only knew that
	the corresponding partial traces of $\Psi$ 
	are orthonormal. 
	This is the case for pure product states, i.e. Slater determinants
	(see Exercise~\ref{exc:LT-Slater}), but not necessarily so for a general $\Psi$.
	What we may do instead is to use a diagonalization trick from the 
	abstract theory of density matrices 
	(which goes slightly outside the course, 
	cf. Remark~\ref{rem:density-matrix}, 
	but we nevertheless give here for the interested reader).
	Alternatively, we will find below that the theorem also follows 
	directly from the above one-body theorem 
	together with the equivalence between the kinetic energy
	and Schr\"odinger forms of the inequality; 
	Theorem~\ref{thm:LT-equivalence} and Corollary~\ref{thm:LT-eigenvalues}.

	Given $\Psi \in \cH_\asym$ we may form the corresponding one-body density matrix
	$\gamma_\Psi\colon \gH \to \gH$, $\gH = L^2(\R^d)$,
	defined via the integral kernel
	\begin{align*}
		\gamma_\Psi(\bx,\by) := \sum_{j=1}^N \int_{\R^{d(N-1)}} 
		&\Psi(\bx_1,\dots,\bx_{j-1},\bx,\bx_{j+1},\dots,\bx_N) \times \\
		&\times \overline{\Psi(\bx_1,\dots,\bx_{j-1},\by,\bx_{j+1},\dots,\bx_N)}
			\, \prod\limits_{k \neq j} d\bx_k,
	\end{align*}
	which turns out to be a bounded self-adjoint trace-class operator, with
	$0 \le \gamma_\Psi \le \1$ and $\Tr \gamma_\Psi = N$.
	It may thus be diagonalized, 
	$$
		\gamma_\Psi = \sum_{j=1}^\infty \lambda_j u_j \inp{u_j,\slot}
	$$
	with $\lambda_j \in [0,1]$ and $\{u_j\}_j \subset \gH$ orthonormal.
	Furthermore,
	$$
		\varrho_\Psi(\bx) = \gamma_\Psi(\bx,\bx) 
		= \sum_{j=1}^\infty \lambda_j |u_j(\bx)|^2.
	$$
	Then
	$$
		T[\Psi] = \sum_{j=1}^\infty \inp{\Psi, (-\Delta_j)\Psi}
		= \sum_{j=1}^\infty \lambda_j \inp{u_j, (-\Delta)u_j}
		= \sum_{j=1}^\infty \lambda_j \int_{\R^d} |\nabla u_j|^2
		\ge \sum_{j=1}^M \lambda_j \int_{\R^d} |\nabla u_j|^2,
	$$
	for any $M \in \N$.
	Now we may modify the proof of Theorem~\ref{thm:LT-kinetic-1p}
	by attaching $\sqrt{\lambda_j}$ to each $u_j$ 
	(or taking the triangle inequality on $\C^M$ weighted with $\lambda$),
	with 
	$$
		\sum_{j=1}^M \left| \inp{ \sqrt{\lambda_j}\hu_j , v} \right|^2 
		= \sum_{j=1}^M \lambda_j \bigl|\inp{\hu_j,v}\bigr|^2
		\le \sum_{j=1}^M \bigl|\inp{\hu_j,v}\bigr|^2
		\le \|v\|^2,
	$$
	again by Bessel's inequality for the orthonormal set $\{\hu_j\}$,
	and $\lambda_j \le 1$. Thus,
	$$
		T[\Psi] \ge \sum_{j=1}^M \int_{\R^d} \lambda_j |\nabla u_j(\bx)|^2 \,d\bx
		\ge K_d \int_{\R^d} \left( \sum_{j=1}^M \lambda_j|u_j(\bx)|^2 \right)^{1+\frac{2}{d}} d\bx,
	$$
	and taking $M \to \infty$ this proves the theorem.
\end{proof}

\begin{exc}\label{exc:LT-Slater}
	Show \eqref{eq:Slater-T}, \eqref{eq:Slater-rho} 
	and thus that \eqref{eq:LT-thm} holds for all such basis states $\Psi$
	immediately by Theorem~\ref{thm:LT-kinetic-1p}.
\end{exc}

\subsubsection{Equivalent Schr\"odinger operator formulation}

	A common and indeed very useful equivalent reformulation of the inequality
	\eqref{eq:LT-thm}
	is in the form of an operator inequality involving the negative eigenvalues
	of the one-body Schr\"odinger operator on $\R^d$
	$$
		\hat{h} = -\Delta + V.
	$$
	Both of these formulations are referred to as Lieb--Thirring inequalities.
	We will be a bit more general here, however, 
	allowing to replace the exclusion
	principle for fermions with a repulsive pair interaction $W$ 
	or some other statistical repulsion.
	We denote as usual
	\begin{align*}
		T[\Psi] &= \int_{\R^{dN}} |\nabla \Psi|^2 
		= \int_{\R^{dN}} \sum_{j=1}^N |\nabla_j \Psi|^2, \\
		V[\Psi] &= \int_{\R^{dN}} V|\Psi|^2 
		= \int_{\R^{dN}} \sum_{j=1}^N V(\bx_j)|\Psi(\sx)|^2 \,d\sx, \quad \text{and}\\
		W[\Psi] &= \int_{\R^{dN}} W|\Psi|^2 
		= \int_{\R^{dN}} \sum_{1 \le j<k \le N} W(\bx_j-\bx_k)|\Psi(\sx)|^2 \,d\sx,
	\end{align*}
	so that
	$$
		T[\Psi] + V[\Psi] =
		\sum_{j=1}^N \int_{\R^{dN}} \left( |\nabla_j \Psi|^2 + V(\bx_j)|\Psi|^2 \right)
		= \sum_{j=1}^N \inp{\Psi, (-\Delta_j + V(\bx_j)) \Psi}
		= \sum_{j=1}^N \inp{\Psi, \hh_j \Psi}.
	$$
		
\begin{theorem}[{\keyword[Lieb--Thirring inequality]{Lieb--Thirring inequalities}}]
	\label{thm:LT-equivalence}
	There is an equivalence between 
	\keyword[exclusion-kinetic energy inequality]{exclusion-kinetic energy inequalities}
	of the form
	\begin{equation}\label{eq:LT-general-K}
		T[\Psi] + W[\Psi]
		\ge K_d(W) \int_{\R^d} \varrho_\Psi^{1+2/d},
	\end{equation}
	and inequalities for Schr\"odinger operators of the form
	\begin{equation}\label{eq:LT-general-V}
		T[\Psi] + V[\Psi] + W[\Psi]
		\ge -L_d(W) \int_{\R^d} |V_-|^{1+d/2},
	\end{equation}
	with 
	the relationship between the constants
	\begin{equation}\label{eq:L-K-relation}
		L_d(W) = \frac{2}{d+2} \left(\frac{d}{d+2}\right)^{d/2} K_d(W)^{-d/2}.
	\end{equation}
	In the above, the interaction $W$ may be replaced by a restriction 
	of the domain of $\hT$ such as to $H^1_{\asym(q)}$,
	or by a different many-body operator such as $\hT_\alpha$ for anyons.
\end{theorem}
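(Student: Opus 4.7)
The plan is to prove the two implications by Legendre-type duality, essentially mimicking the one-body argument already given in Theorem~\ref{thm:stability-by-GNS}. The key observation is that the one-body potential $V$ couples to $\Psi$ only through the density $\varrho_\Psi$, via $V[\Psi] = \int_{\R^d} V \varrho_\Psi$, so adding or subtracting such a term in the presence of the density-power bound is a pointwise (in $\bx$) optimization problem.

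For the forward direction \eqref{eq:LT-general-K} $\Rightarrow$ \eqref{eq:LT-general-V}, I would start from
\[
	T[\Psi] + V[\Psi] + W[\Psi]
	\ge K_d(W)\int_{\R^d} \varrho_\Psi^{1+2/d} - \int_{\R^d} |V_-|\,\varrho_\Psi,
\]
using the assumed inequality and dropping the non-negative contribution $V_+ \varrho_\Psi$. The integrand on the right has the form $A\rho^p - B\rho$ with $A = K_d(W)$, $B=|V_-(\bx)|$, $p=1+2/d$, and is minimized pointwise in $\rho \ge 0$ at $\rho_* = (B/(Ap))^{1/(p-1)}$ with minimum value $-\tfrac{p-1}{p}(Ap)^{-1/(p-1)} B^{p/(p-1)}$. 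Substituting $p-1 = 2/d$ gives precisely the claimed exponent $1+d/2$ on $B = |V_-|$ and the constant $L_d(W) = \frac{2}{d+2}\bigl(\frac{d}{d+2}\bigr)^{d/2} K_d(W)^{-d/2}$ in \eqref{eq:L-K-relation}. Integrating the pointwise bound in $\bx$ then yields \eqref{eq:LT-general-V}.

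For the reverse direction \eqref{eq:LT-general-V} $\Rightarrow$ \eqref{eq:LT-general-K}, I would make the tautological choice of potential $V(\bx) := -c\,\varrho_\Psi(\bx)^{2/d}$ for a constant $c>0$ to be optimized. Then
\[
	V[\Psi] = -c\int_{\R^d} \varrho_\Psi^{1+2/d},
	\qquad
	\int_{\R^d} |V_-|^{1+d/2} = c^{1+d/2}\int_{\R^d} \varrho_\Psi^{1+2/d},
\]
so \eqref{eq:LT-general-V} rearranges to
\[
	T[\Psi] + W[\Psi] \ge \bigl(c - L_d(W)\,c^{1+d/2}\bigr) \int_{\R^d} \varrho_\Psi^{1+2/d}.
\]
Maximizing the scalar function $c \mapsto c - L_d(W) c^{1+d/2}$ in $c>0$ (a one-variable calculus problem solved by $c_* = (L_d(W)(1+d/2))^{-2/d}$) produces the maximum value $\frac{d}{d+2}\bigl(\frac{2}{d+2}\bigr)^{2/d} L_d(W)^{-2/d}$, which is precisely $K_d(W)$ under the relation \eqref{eq:L-K-relation}. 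This gives \eqref{eq:LT-general-K}.

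The argument is entirely structural and needs essentially no hypotheses on $W$ beyond the fact that $W[\Psi]$ does not involve $V$; in particular it goes through identically when $W$ is replaced by a domain restriction (e.g.\ $H^1_{\asym(q)}$) or by a different kinetic term $\hT_\alpha$, since neither alters the bilinear pairing $V[\Psi]=\int V\varrho_\Psi$. The only modest subtlety I would need to address is a domain/admissibility check in the reverse direction: the choice $V = -c\varrho_\Psi^{2/d}$ must be a legitimate potential for which \eqref{eq:LT-general-V} holds. For general $\Psi$ with finite r.h.s.\ in \eqref{eq:LT-general-K}, $\varrho_\Psi \in L^{1+2/d}(\R^d)$ implies $V \in L^{1+d/2}(\R^d)$, which is the natural class in which \eqref{eq:LT-general-V} is meant to hold (cf.\ Theorem~\ref{thm:stability-by-Sobolev}); if \eqref{eq:LT-general-K} fails on $\Psi$ the inequality is trivial. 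A short density/truncation argument with $\varrho_\Psi \wedge M$ would handle any remaining integrability issue, but I do not expect this to be the hard part — the entire content of the theorem is the Legendre computation above.
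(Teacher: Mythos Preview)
Your proposal is correct and follows essentially the same approach as the paper: both directions are the Legendre-duality argument of Theorem~\ref{thm:stability-by-GNS}, with the reverse direction (choosing $V=-c\varrho_\Psi^{2/d}$ and optimizing in $c$) identical to the paper's. The only cosmetic difference is in the forward direction, where you optimize the integrand $K_d\rho^{1+2/d}-|V_-|\rho$ pointwise while the paper first applies H\"older to $\int |V_-|\varrho_\Psi$ and then optimizes the resulting scalar expression; these are equivalent standard variants and yield the same constant \eqref{eq:L-K-relation}.
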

\begin{remark}
	For $N=1$ (for which $W=0$ and $\cH = \cH_\sym = \cH_\asym$)
	the equivalence is Theorem~\ref{thm:stability-by-GNS} 
	concerning only the uncertainty principle.
\end{remark}
\begin{proof}
	The proof is completely analogous to that of Theorem~\ref{thm:stability-by-GNS}, 
	namely, assuming that \eqref{eq:LT-general-K} holds, we obtain
	by H\"older and optimization
	\begin{align*}
		T[\Psi] + W[\Psi] + V[\Psi] 
		&= T[\Psi] + W[\Psi] + \int_{\R^d} |V_+|\varrho_\Psi - \int_{\R^d} |V_-|\varrho_\Psi \\
		&\ge K_d \int_{\R^d} \varrho_\Psi^{1+2/d}
			- \left( \int_{\R^d} |V_-|^{1+d/2} \right)^{2/(d+2)}
				\left( \int_{\R^d} \varrho_\Psi^{1+2/d} \right)^{d/(d+2)} \\
		&\ge - L_d \int_{\R^d} |V_-|^{1+d/2}.
	\end{align*}
	On the other hand,
	if \eqref{eq:LT-general-V} holds, then by taking the one-body potential
	$V(\bx) := -c\varrho_\Psi^{2/d}$ we obtain that
	\begin{align*}
		T[\Psi] + W[\Psi] &= T[\Psi] + W[\Psi] + V[\Psi] - V[\Psi]
		\ge -L_d \int_{\R^d} |V_-|^{1+d/2} - \int_{\R^d} V\varrho_\Psi \\
		&= \left( c - c^{1+d/2} L_d \right) \int_{\R^d} \varrho_\Psi^{1+2/d},
	\end{align*}
	which after optimization in $c > 0$ again yields \eqref{eq:LT-general-K}.
\end{proof}

	Since for fermions $\Psi \in H^1_{\asym}(\R^{dN})$ (with $W=0$) one has
	$$
		\inf_{\|\Psi\| = 1} \left( T[\Psi] + V[\Psi] \right) 
		= \sum_{k=0}^{N-1} \lambda_k(\hh),
	$$
	the above equivalence then implies 
	the following inequality for
	the negative eigenvalues $\lambda_k^-$ of the one-body 
	Schr\"odinger operator $\hh = -\Delta + V$ on $\R^d$:

\begin{corollary}[Inequality for the sum of Schr\"odinger eigenvalues]\label{thm:LT-eigenvalues}
	Let $\{\mu_k\}_{k=0}^\infty$ denote 
	the min-max values
	of the Schr\"odinger operator $\hh = -\Delta + V$ on $\R^d$.
	Then
	\begin{equation}\label{eq:LT-eigenvalues}
		\sum_{k=0}^{\infty} |[\mu_k]_-| 
		\le L_d \int_{\R^d} |V_-(\bx)|^{1+d/2} \,d\bx.
	\end{equation}
\end{corollary}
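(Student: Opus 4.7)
The plan is to deduce \eqref{eq:LT-eigenvalues} from the many-body Schrödinger form of the fermionic Lieb--Thirring inequality by testing on Slater determinants built from approximate minimizers of the one-body min-max variational problem, and then optimizing over the particle number $N$.

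Fix $N \in \N$. Combining Theorem~\ref{thm:LT-kinetic} with the equivalence Theorem~\ref{thm:LT-equivalence} (in the case $W=0$) gives, for every $L^2$-normalized $\Psi \in H^1_\asym(\R^{dN})$,
\begin{equation*}
T[\Psi] + V[\Psi] \ \ge \ -L_d \int_{\R^d} |V_-|^{1+d/2}.
\end{equation*}
Given any $L^2$-orthonormal system $\{u_j\}_{j=1}^N \subset \cQ(\hh)$, the Slater determinant $\Psi = u_1 \wedge \cdots \wedge u_N$ is a unit vector in $H^1_\asym(\R^{dN})$ and a direct computation yields $T[\Psi] + V[\Psi] = \sum_{j=1}^N \inp{u_j, \hh u_j}$. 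Substituting and then taking the infimum over orthonormal $N$-tuples in $\cQ(\hh)$, the Ky Fan iteration of the min-max principle \eqref{eq:min-max} identifies this infimum with $\sum_{k=0}^{N-1} \mu_k$, so that
\begin{equation*}
\sum_{k=0}^{N-1} \mu_k \ \ge \ -L_d \int_{\R^d} |V_-|^{1+d/2}
\end{equation*}
for every $N$. Since the right-hand side is independent of $N$, I would optimize the left-hand side by choosing $N=N_-$ to equal the number of non-positive min-max values: the partial sums $S_N := \sum_{k=0}^{N-1} \mu_k$ are non-increasing exactly as long as $\mu_{N-1} \le 0$, so their infimum equals $\sum_{k:\mu_k \le 0} \mu_k = -\sum_{k \ge 0} |[\mu_k]_-|$. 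When $N_-=\infty$, one passes to the limit along the monotone sequence $S_N$ by monotone convergence. Rearranging yields \eqref{eq:LT-eigenvalues}.

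The main obstacle is the Ky Fan identification step in the general setting where $\hh$ is only the Friedrichs extension of a semi-bounded form and may have nontrivial essential spectrum, in which case only finitely many $\mu_k$ are genuine isolated eigenvalues while the rest stabilize at $\inf\sigma_\textup{ess}(\hh)$. One must verify that orthonormal $N$-tuples in $\cQ(\hh)$ realizing $\sum_{j} \inp{u_j,\hh u_j}$ arbitrarily close to $\sum_{k=0}^{N-1}\mu_k$ still exist; this is a standard consequence of the min-max characterization recalled in Section~\ref{sec:prelims-ops-specthm} together with spectral projection onto $P^{\hh}_{(-\infty,\mu_{N-1}+\eps)}\cH$, but is the one place where genuine operator-theoretic care is needed. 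A robust alternative, if this becomes delicate, is to first establish the inequality for $V \in C_c^\infty(\R^d)$ (where $\sigma(\hh) \cap (-\infty,0)$ is purely discrete and finite, so the identification is immediate from Theorem~\ref{thm:LT-kinetic} applied to a Slater determinant of negative-energy eigenfunctions) and then pass to a general $V_- \in L^{1+d/2}$ by density, using lower semicontinuity of the min-max values under $L^{1+d/2}$-convergence and Fatou's lemma on $\sum_k |[\mu_k]_-|$.
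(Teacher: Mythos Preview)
Your argument is correct, and the overall shape---test the fermionic Lieb--Thirring bound on orthonormal systems that approximately realize the lowest min-max values, then let $N$ exhaust the negative spectrum---is exactly what the paper does. The paper's execution is, however, more direct in one respect: rather than going through the many-body inequality Theorem~\ref{thm:LT-kinetic} and the equivalence Theorem~\ref{thm:LT-equivalence} and then specializing to Slater determinants, it applies the \emph{one-body} kinetic energy inequality Theorem~\ref{thm:LT-kinetic-1p} directly to the orthonormal system $\{u_k\}$, followed by the same H\"older-and-optimize step that underlies Theorem~\ref{thm:LT-equivalence}. This buys two things. First, it is logically cleaner: the paper actually uses Corollary~\ref{thm:LT-eigenvalues} as one of the two routes to \emph{prove} Theorem~\ref{thm:LT-kinetic} (the other route being via density matrices), so invoking Theorem~\ref{thm:LT-kinetic} here risks circularity unless you commit to the density-matrix proof. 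Second, it sidesteps the many-body formalism entirely, since on a Slater determinant the quantities $T[\Psi]$ and $\varrho_\Psi$ reduce to the one-body sums anyway.

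Regarding your Ky Fan concern: the paper handles this simply by working with $C^2_c$ functions $u_k$ whose Rayleigh quotients $\tilde\mu_k = \langle u_k, \hh u_k\rangle$ approximate the true min-max values $\mu_k$, obtaining the bound uniformly in the approximation and then passing to the limit. No exact minimizers or spectral projections are needed, and your density-in-$V$ fallback is unnecessary.
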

\begin{remark}
	Note that if the r.h.s. of \eqref{eq:LT-eigenvalues} is finite
	then the bottom of the essential spectrum of $\hh$ must satisfy
	$\inf \sigma_\textup{ess}(\hh) \ge 0$, because otherwise all 
	$\mu_k(\hh) \le \inf \sigma_\textup{ess}(\hh) < 0$
	and thus $\sum_{k=0}^\infty |[\mu_k]_-| = \infty$.
	It follows that the negative min-max values are actually eigenvalues and
	hence that \eqref{eq:LT-eigenvalues} is an inequality for the sum of
	negative eigenvalues of $\hh$.
	
	In the spectral theory literature it is usually this inequality 
	that one refers to as the Lieb--Thirring inequality\index{Lieb--Thirring inequality}.
	Also note that it can be generalized to other powers of the eigenvalues
	as well as to other powers of the Laplacian, including fractional
	\cite{LieSei-09,Laptev-12}.
\end{remark}
\begin{proof}
	Let $n \le D := \dim P_{(-\infty,0)}^{\hh} \gH$ be finite, and
	denote by $\{u_k\}_{k=0}^{n-1}$ an orthonormal sequence of $C^2_c(\R^d)$ functions corresponding
	to $n$ lowest negative min-max values 
	$\tilde\mu_k := \langle u_k,\hh u_k \rangle$ (approximating $\mu_k$),
	and $\varrho(\bx) := \sum_{k=0}^{n-1} |u_k(\bx)|^2$.
	The one-body kinetic energy inequality of Theorem~\ref{thm:LT-kinetic-1p}
	then yields
	\begin{align*}
		-\sum_{k=0}^{n-1} |\tilde\mu_k|
		&= \sum_{k=0}^{n-1} \inp{u_k, \hh u_k}
		= \sum_{k=0}^{n-1} \int_{\R^d} \left( |\nabla u_k|^2 + V \varrho \right) \\
		&\ge K_d \int_{\R^d} \varrho^{1+2/d} - \int_{\R^d} |V_-|\varrho
		\ \ge - L_d \int_{\R^d} |V_-|^{1+d/2},
	\end{align*}
	where we again estimated by H\"older and optimized as in Theorem~\ref{thm:LT-equivalence}.
	Taking $n \to D$ and using that the bound is uniform in the approximation
	$\tilde\mu_k \to \mu_k$,
	this proves \eqref{eq:LT-eigenvalues}.
\end{proof}

	Given the inequality \eqref{eq:LT-eigenvalues},
	one obtains the lower bound
	$$
		\inf_{\substack{\Psi \in \cH_{\asym} \\ \|\Psi\| = 1}} 
			\left( T[\Psi] + V[\Psi] \right) 
		= \sum_{k=0}^{N-1} \lambda_k(\hh) 
		\ge -\sum_{k=0}^{D-1} |\lambda_k^-(\hh)|
		\ge -L_d \int_{\R^d} |V_-|^{1+d/2},
	$$
	which by the equivalence of Theorem~\ref{thm:LT-equivalence}
	proves the many-body kinetic energy inequalty
	of Theorem~\ref{thm:LT-kinetic}.
	
\begin{corollary}[LT with weaker exclusion]\label{thm:LT-q}
	For any $\Psi \in H^1_{\asym(q)}((\R^d)^N)$, 
	the exclusion-kinetic energy inequality 
	\begin{equation}\label{eq:LT-kinetic-q}
		\int_{\R^{dN}} |\nabla \Psi|^2
		\ge q^{-2/d} K_d \int_{\R^d} \varrho_\Psi^{1+2/d}
	\end{equation}
	holds, and is equivalent to the uniform bound
	\begin{equation}\label{eq:LT-thm-q}
		T[\Psi] + V[\Psi]
		\ge -q L_d \int_{\R^d} |V_-|^{1+d/2}.
	\end{equation}
\end{corollary}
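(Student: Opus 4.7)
The plan is to first establish the Schrödinger-operator form \eqref{eq:LT-thm-q} and then obtain the kinetic energy form \eqref{eq:LT-kinetic-q} as an immediate consequence of the equivalence already proved in Theorem~\ref{thm:LT-equivalence}. The key input is that the $q$-species antisymmetry is a strict weakening of the Pauli principle that multiplies each available one-body energy level by $q$ in the ground-state filling.

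Concretely, for $\Psi \in H^1_{\asym(q)}((\R^d)^N)$ with $\|\Psi\|_2=1$, I would write
\[
T[\Psi] + V[\Psi] = \inp{\Psi, \sum_{j=1}^N \hh_j \,\Psi},
\qquad \hh = -\Delta + V,
\]
and then argue, by the variational/min-max characterization used at the start of this chapter in \eqref{eq:fermionic-energy-bound}, that
\[
\inf_{\Psi \in H^1_{\asym(q)},\ \|\Psi\|=1}\ \inp{\Psi, \sum_{j=1}^N \hh_j \,\Psi}
\ \ge\ q \sum_{k=0}^{\lceil N/q\rceil-1} \mu_k(\hh),
\]
since antisymmetry \emph{within} each of the $q$ flavors allows each one-body min-max value $\mu_k(\hh)$ to be used by at most $q$ particles in total. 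Dropping any non-negative contributions on the right and applying Corollary~\ref{thm:LT-eigenvalues} then yields
\[
T[\Psi] + V[\Psi]
\ \ge\ -q \sum_{k=0}^{\infty} \bigl|[\mu_k(\hh)]_-\bigr|
\ \ge\ -q L_d \int_{\R^d} |V_-|^{1+d/2},
\]
which is exactly \eqref{eq:LT-thm-q}.

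For the kinetic energy form, I would invoke Theorem~\ref{thm:LT-equivalence} with $W=0$ but with the domain restricted to $H^1_{\asym(q)}$ (this is precisely the kind of generalization mentioned at the end of that theorem). The equivalence converts the Schrödinger bound with constant $qL_d$ into a kinetic energy bound with constant $K_d(q)$ satisfying, via \eqref{eq:L-K-relation},
\[
qL_d \ =\ \frac{2}{d+2}\left(\frac{d}{d+2}\right)^{d/2} K_d(q)^{-d/2},
\qquad\text{i.e.}\qquad
K_d(q) \ =\ q^{-2/d} K_d,
\]
which matches \eqref{eq:LT-kinetic-q} exactly.

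The only nontrivial point, and the one I would want to write out carefully, is the multiplicity-$q$ eigenvalue lower bound in the first display. This is morally just the statement that $\cH_{\asym(q)} \cong \bigotimes^q \bigwedge^K \gH$ (up to treatment of $N$ not divisible by $q$, which one handles by allowing at most $q$ occupations of each level and filling the highest level only partially); on such a tensor-product space a non-interacting Hamiltonian $\sum_j \hh_j$ is diagonalized by products of Slater determinants, and its ground-state energy is obtained by filling each of the $q$ flavors with the lowest $\lceil N/q\rceil$ (or fewer) one-body eigenstates of $\hh$. Everything else is a direct application of results already proved in the excerpt, and no new inequality needs to be established.
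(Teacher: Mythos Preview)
Your proposal is correct and follows essentially the same route as the paper's proof: both first observe that on $\cH_{\asym(q)}$ the ground-state energy of $\sum_j \hh_j$ is (up to the $N\!\bmod q$ technicality you mention) $q$ times the fermionic sum $\sum_{k=0}^{N/q-1}\lambda_k(\hh)$, giving $L_d(\asym(q))=qL_d$, and then invoke the equivalence of Theorem~\ref{thm:LT-equivalence} together with \eqref{eq:L-K-relation} to obtain $K_d(\asym(q))=q^{-2/d}K_d$. Your write-up is slightly more explicit about the appeal to Corollary~\ref{thm:LT-eigenvalues} and the handling of $N$ not divisible by $q$, but the argument is the same.
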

\begin{proof}
	One may use the relationship $L_d(\asym(q)) = qL_d(\asym(1)) = qL_d$, 
	following from
	$$
		\inf_{\substack{\Psi \in \cH_{\asym(q)} \\ \|\Psi\| = 1}} 
			\left( T[\Psi] + V[\Psi] \right) 
		= q\sum_{k=0}^{N/q-1} \lambda_k(\hh),
	$$
	and hence $K_d(\asym(q)) = q^{-2/d}K_d(\asym(1)) = q^{-2/d}K_d$ 
	by the correspondence \eqref{eq:L-K-relation}.
\end{proof}

\subsection{Local approach to Lieb--Thirring inequalities}\label{sec:LT-local}

The above formulations of LT were global in the sense that they always involved the full
one-body configuration space $\R^d$. We shall now consider a local approach
to proving LT inequalities, which was first developed in \cite{LunSol-13a}
for anyons, and has since been generalized in various directions,
including point-interacting fermions \cite{FraSei-12}, 
other types of generalized statistics \cite{LunSol-13b,LunSol-14},
inhomogeneously scaling repulsive interactions \cite{LunPorSol-15},
operators involving fractional powers and critical Hardy terms \cite{LunNamPor-16},
and most recently gradient corrections to TF \cite{Nam-18}.

\subsubsection{Covering lemma}\label{sec:LT-covering}

For the local approach it is convenient to 
use the following lemma which originates in the construction used
in \cite{LunSol-13a}, and was generalized in \cite[Lemmas~9 and 12]{LunNamPor-16}
and in \cite{Nam-18}. The following version includes all those as special cases,
though not with the optimal constant.

\begin{lemma}[\keyword{Covering lemma}] \label{lem:covering}
	Let $Q_0$ be a $d$-cube in $\R^d$, $d \ge 1$, and let $\Lambda>0$. 
	Let $0\le f\in L^1(Q_0)$ satisfy $\int_{Q_0} f \ge \Lambda>0$. 
	Then $Q_0$ can be partitioned into 
	a collection $\cQ$ of disjoint sub-cubes $Q \in \cQ$, 
	i.e. $Q_0 = \overline{\bigsqcup \cQ}$,
	such that:
	\begin{itemize}
	\item For all $Q \in \cQ$,
	$$
		\int_{Q} f \le \Lambda.
	$$
	\item For all $\alpha,\beta>0$ and $\gamma \ge 0$, 
	\begin{equation}\label{eq:covering-bound}
		\sum_{Q \in \cQ} \frac{1}{|Q|^{\alpha}} \left[ 
			\left(\int_{Q} f \right)^\beta 
			- \frac{\Lambda^{\beta-\gamma}}{C_{d,\alpha,\beta}} 
				\left(\int_{Q} f \right)^\gamma
			\right] \ge 0,
	\end{equation}
	where 
	\begin{equation}\label{eq:covering-const}
		C_{d,\alpha,\beta} := \frac{ 2^{d(\alpha+\beta+1)} }{ 2^{d\alpha} - 1 }.
	\end{equation}
	\end{itemize}
\end{lemma}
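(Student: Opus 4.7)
The plan is to construct $\cQ$ by the classical Calder\'on--Zygmund dyadic stopping-time argument applied to $f$ at level $\Lambda$. Starting from $Q_0$, I would inspect whether $\int_{Q_0} f > \Lambda$; if so, partition $Q_0$ into its $2^d$ congruent dyadic sub-cubes and recurse on each; otherwise freeze $Q_0$ (so $\cQ=\{Q_0\}$ in the edge case $\int_{Q_0}f = \Lambda$). At every later stage, a sub-cube $Q$ is added to $\cQ$ precisely when $\int_Q f \le \Lambda$ and otherwise is subdivided further. Since $f \in L^1(Q_0)$, almost every point of $Q_0$ is a Lebesgue point and thus lies in arbitrarily small dyadic cubes whose integral tends to zero, so the (possibly infinite) collection $\cQ$ of frozen cubes partitions $Q_0$ up to a null set. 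The first bulleted property is built into this definition.

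The crucial structural fact produced by the construction is that for every $Q \in \cQ$ with $Q \neq Q_0$, its dyadic parent $\tilde Q$ satisfies $\int_{\tilde Q} f > \Lambda$ and $|\tilde Q| = 2^d |Q|$. Note that the interesting case of \eqref{eq:covering-bound} is $\beta > \gamma$: if $\beta \le \gamma$, then since $\int_Q f \le \Lambda$ and $C_{d,\alpha,\beta} \ge 1$, each summand is already pointwise non-negative and the inequality is trivial. When $\beta > \gamma$, the bracket is positive on ``heavy'' leaves (those with $\int_Q f \ge C_{d,\alpha,\beta}^{-1/(\beta-\gamma)}\Lambda$) and can be negative on ``light'' leaves, so the core of the proof is to show that the heavy leaves dominate.

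For this I would regroup the sum by generation, writing $\cQ = \bigsqcup_{k\ge 0}\cL_k$ with $\cL_k$ the cubes frozen at depth $k$, and compare $\sum_{Q\in\cL_k}|Q|^{-\alpha}(\int_Q f)^\beta$ against the would-be contribution of the frozen-parents at level $k-1$. Each dyadic step loses a factor $2^{d\alpha}$ from the weight ratio $|Q|^{-\alpha}=2^{d\alpha}|\tilde Q|^{-\alpha}$, a factor $2^d$ from branching, and a factor $2^{d\beta}$ from comparing $(\int_Q f)^\beta$ to $(\int_{\tilde Q}f)^\beta$ in the worst case of an even mass-split among the $2^d$ siblings; these three losses account exactly for the numerator $2^{d(\alpha+\beta+1)}$ of $C_{d,\alpha,\beta}$. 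Telescoping the inequalities up the dyadic tree then produces a geometric series in $2^{-d\alpha}$ summing to $(2^{d\alpha}-1)^{-1}$, matching the denominator.

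The hard part will be that a parent's mass is typically \emph{not} split evenly among its children: a single further-subdivided child may carry nearly all of $\int_{\tilde Q} f$, while the other $2^d-1$ children become almost-empty leaves contributing negatively to \eqref{eq:covering-bound}. A one-step charging to the immediate parent is therefore too crude, and each light leaf must instead be charged up to the nearest ancestor whose mass distribution generates a genuine positive surplus. I expect to implement this by downward induction on the dyadic tree, establishing a stronger per-subtree inequality that simultaneously controls $\sum_{Q\in\cQ\cap R}|Q|^{-\alpha}(\int_Q f)^\beta$ from below both by a geometric-series remainder and by $|R|^{-\alpha}(\int_R f)^\beta$ for every subtree root $R$; specializing to $R = Q_0$ and invoking $\int_{Q_0}f\ge\Lambda$ should then reproduce \eqref{eq:covering-bound} with the stated constant $C_{d,\alpha,\beta}$.
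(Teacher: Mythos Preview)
Your dyadic stopping-time construction is exactly the one used in the paper, and the first bulleted property is indeed immediate. (A minor point: the procedure actually terminates after \emph{finitely} many steps, by absolute continuity of the Lebesgue integral --- once cubes are small enough their mass is automatically below $\Lambda$ --- so you need not worry about infinite partitions.)

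For the second bullet, however, your proposed resolution via a subtree induction with an unspecified ``stronger per-subtree inequality'' is not yet a proof: you have correctly identified the obstruction (uneven mass splits can make most leaves light), but the inductive hypothesis you gesture at is not formulated, and it is not clear it closes with the stated constant. The paper sidesteps the induction entirely with a direct combinatorial grouping. Observe that every maximal subdivided node --- one all of whose $2^d$ children are leaves --- has total child mass exceeding $\Lambda$, so at least one of those $2^d$ smallest leaves carries mass $\ge \Lambda/2^d$. Starting from each such terminal split, walk up the tree toward $Q_0$ and collect at every ancestor its (at most $2^d-1$) sibling leaves; this partitions $\cQ$ into disjoint groups $\cG$, each containing at most $2^d$ leaves of every fixed size and at least one ``champion'' leaf of the smallest size $m$ in $\cG$ with mass $\ge \Lambda/2^d$. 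Within a single group the positive sum is then at least $m^{-\alpha}(\Lambda/2^d)^\beta$, while the $\gamma$-sum is at most $\sum_{k\ge 0} 2^d (2^{dk}m)^{-\alpha}\Lambda^\gamma = 2^d m^{-\alpha}\Lambda^\gamma/(1-2^{-d\alpha})$; the ratio is exactly $C_{d,\alpha,\beta}\Lambda^{\gamma-\beta}$, which gives \eqref{eq:covering-bound} groupwise. This ``champion plus geometric tail'' argument is what replaces your induction; if you want to salvage the inductive route you will need to make the invariant precise and check it survives arbitrarily lopsided splits.
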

\begin{proof}
	Note that if $\int_{Q_0} f = \Lambda$ then there is nothing to prove since
	$C_{d,\alpha,\beta} > 1$, hence we may assume $\int_{Q_0} f > \Lambda$.
	We then start by dividing $Q_0$ into $2^d$ subcubes $Q$ of equal size,
	$|Q| = 2^{-d}|Q_0|$,
	and consider the mass $\int_Q f$ on each such subcube.
	If $\int_Q f \le \Lambda$ we do nothing, while
	if $\int_Q f > \Lambda$ then we may iterate the procedure on $Q$ and
	divide that cube into $2^d$ smaller cubes, and so on.
	This procedure stops after finitely many iterations since $f$ is integrable.
	We may organize the resulting divisions of cubes into a $2^d$-ary tree rooted at 
	$Q_0$ and with the leaves of the tree representing the resulting 
	disjoint cubes $Q$ that form the sought collection $\cQ$, 
	with $\cup_{Q \in \cQ} \bar{Q} = Q_0$ and $\int_Q f \le \Lambda$.
	See Figure~\ref{fig:splitting} for an example.
	
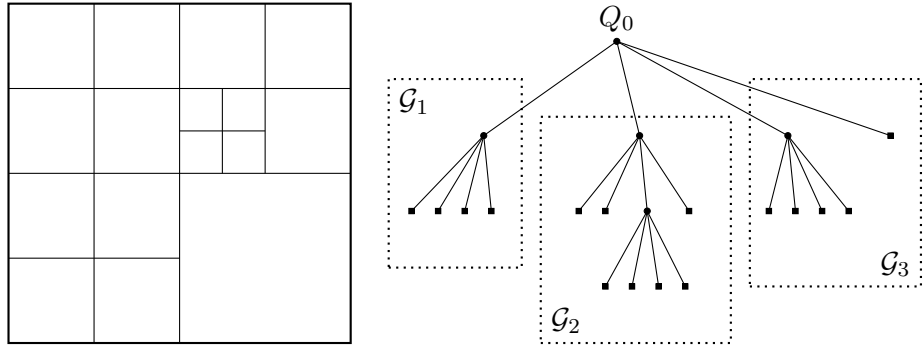
\begin{figure}
	\begin{tikzpicture}
		\def\L{4.5}
		\draw[thick] (0,0) rectangle (\L, \L);
		
		\draw (\L/2,0) -- (\L/2,\L);
		\draw (0,\L/2) -- (\L,\L/2);
		
		\draw (0,\L/4) -- (\L/2,\L/4);
		\draw (0,3*\L/4) -- (\L,3*\L/4);
		\draw (\L/4,0) -- (\L/4,\L);
		\draw (3*\L/4,\L/2) -- (3*\L/4,\L);
		
		\draw (5*\L/8,\L/2) -- (5*\L/8,3*\L/4);
		\draw (\L/2,5*\L/8) -- (3*\L/4,5*\L/8);
		
		\draw [fill] (8,4) circle [radius = 0.04];
		\node [above] at (8,4) {$Q_0$};
		\draw (8,4) -- (6.25,2.75);
		\draw (8,4) -- (8.3,2.75);
		
		\draw [dotted, thick] (5,1) rectangle (6.75,3.5);
		\node [below right] at (5,3.5) {$\cG_1$};
		\draw [fill] (6.25,2.75) circle [radius = 0.04];
		\foreach \x in {0,...,3}
			\filldraw ([xshift=-1pt,yshift=-1pt]5.3+0.35*\x,1.75) rectangle ++(2pt,2pt);
		\foreach \x in {0,...,3}
			\draw (6.25,2.75) -- (5.3+0.35*\x,1.75);
		
		\draw [dotted, thick] (7,0) rectangle (9.5,3);
		\node [above right] at (7,0) {$\cG_2$};
		\draw [fill] (8.3,2.75) circle [radius = 0.04];
		
		\filldraw ([xshift=-1pt,yshift=-1pt]7.5,1.75) rectangle ++(2pt,2pt);
		\filldraw ([xshift=-1pt,yshift=-1pt]7.85,1.75) rectangle ++(2pt,2pt);
		\draw [fill] (8.4,1.75) circle [radius = 0.04];
		\filldraw ([xshift=-1pt,yshift=-1pt]8.95,1.75) rectangle ++(2pt,2pt);
		
		\draw (8.3,2.75) -- (7.5,1.75);
		\draw (8.3,2.75) -- (7.85,1.75);
		\draw (8.3,2.75) -- (8.4,1.75);
		\draw (8.3,2.75) -- (8.95,1.75);
		
		\foreach \x in {0,...,3}
			\filldraw ([xshift=-1pt,yshift=-1pt]7.85+0.35*\x,0.75) rectangle ++(2pt,2pt);
		\foreach \x in {0,...,3}
			\draw (8.4,1.75) -- (7.85+0.35*\x,0.75);
		
		\draw [dotted, thick] (9.75,0.75) rectangle (12,3.5);
		\node [above left] at (12,0.75) {$\cG_3$};
		\draw [fill] (10.25,2.75) circle [radius = 0.04];
		\draw (8,4) -- (10.25,2.75);
		\foreach \x in {1,...,1}
			\filldraw ([xshift=-1pt,yshift=-1pt]10.7+0.9*\x,2.75) rectangle ++(2pt,2pt);
		\foreach \x in {1,...,1}
			\draw (8,4) -- (10.7+0.9*\x,2.75);
		\foreach \x in {0,...,3}
			\filldraw ([xshift=-1pt,yshift=-1pt]10+0.35*\x,1.75) rectangle ++(2pt,2pt);
		\foreach \x in {0,...,3}
			\draw (10.25,2.75) -- (10+0.35*\x,1.75);			
	\end{tikzpicture}
	\caption{Example in $d=2$ of a division of a square $Q_0$  
		and a corresponding tree with three disjoint groups $\cG_j$,
		together covering all subsquares $Q \in \cQ$.}
	\label{fig:splitting}
\end{figure}

	Moreover, we note that the cubes $Q \in \cQ$ may be distributed into 
	a finite number of disjoint groups $\cG \subseteq \cQ$, 
	such that in each group:
	\begin{itemize}
	\item There is a smallest size of cubes in $\cG$, 
		denoted $m := \min_{Q \in \cG} |Q|$, 
		and the total mass of such cubes is 
		\begin{equation}\label{eq:covering-smallcube-bound}
			\sum_{Q \in \cG : |Q|=m}\int_Q f \ge \Lambda.
		\end{equation}
	\item There are at most $2^d$ cubes in $\cG$ of every given size.
	\end{itemize}
	Such a grouping may be constructed for example by starting with each
	collection of $2^d$ leaves which stem from a final split of a cube
	with $\int_Q f > \Lambda$, 
	and then add leaves to the group by going back in the tree 
	(arbitrarily many times, possibly all the way to the root $Q_0$)
	and then one step forward.
	Note that all leaves will then be covered by at least one such group,
	and if several, we may choose one arbitrarily.
	
	Now, consider an arbitrary group $\cG$.
	Because of \eqref{eq:covering-smallcube-bound}, 
	there must be at least one cube $Q \in \cG$ with $|Q|=m$ and
	$\int_Q f \ge \Lambda/2^d$. Hence,
	$$
		\max_{\substack{Q \in \cG \\ |Q|=m}} \frac{1}{|Q|^\alpha} 
			\left( \int_Q f \right)^\beta
		\ge \frac{1}{m^\alpha} \left( \frac{\Lambda}{2^d} \right)^\beta 
		= \frac{\Lambda^\beta}{2^{d\beta} m^\alpha}.
	$$
	Furthermore,
	\begin{align*}
		\sum_{Q \in \cG} \frac{1}{|Q|^\alpha} \left( \int_Q f \right)^\gamma
		\le \sum_{k=0}^\infty \sum_{\substack{Q \in \cG \\ |Q|= m2^{dk}}} 
			\frac{1}{|Q|^\alpha} \Lambda^\gamma
		\le \sum_{k=0}^\infty \frac{2^d \Lambda^\gamma}{m^\alpha 2^{dk\alpha}}
		= \frac{2^d \Lambda^\gamma}{m^\alpha} \frac{1}{1-2^{-d\alpha}}
		= \frac{C_{d,\alpha,\beta}}{\Lambda^{\beta-\gamma}} 
			\frac{\Lambda^\beta}{2^{d\beta} m^\alpha},
	\end{align*}
	which proves \eqref{eq:covering-bound} on the group $\cG$, 
	and since every cube $Q \in \cQ$ belongs to some group, 
	therefore also the lemma.
\end{proof}

One obtains from this also the following version, which was proven directly
and with a slightly different constant in \cite[Lemma~12]{LunNamPor-16}:

\begin{corollary}[Weaker exclusion version]\label{thm:covering-q}
	Under the same conditions as in Lemma~\ref{lem:covering},
	the cube $Q_0$ may be partitioned into a finite collection $\cQ$ of 
	disjoint sub-cubes $Q \in \cQ$, such that:
	\begin{itemize}
	\item For all $Q \in \cQ$,
	$$
		\int_{Q} f \le \Lambda.
	$$
	\item For any $q \ge 0$, 
	\begin{equation}\label{eq:covering-bound-q}
		\sum_{Q \in \cQ} \frac{1}{|Q|^{\alpha}} \left( 
			\left[\int_{Q} f \ - q\right]_+ - b \int_{Q} f 
			\right) \ge 0,
	\end{equation}
	where 
	\begin{equation}\label{eq:covering-const-q}
		b := 1 - \frac{q}{\Lambda} \frac{2^{d(\alpha+2)}}{2^{d\alpha}-1}.
	\end{equation}
	\end{itemize}
\end{corollary}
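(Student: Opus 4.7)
The plan is to use exactly the partition $\cQ$ produced by Lemma~\ref{lem:covering} (so that the mass condition $\int_Q f \le \Lambda$ on each $Q \in \cQ$ comes for free) and then derive the weighted inequality \eqref{eq:covering-bound-q} from the general bound \eqref{eq:covering-bound} together with the trivial pointwise estimate $[x]_+ \ge x$. No new geometric construction is needed; the work is entirely bookkeeping, matching the constant $b$ to the constant $C_{d,\alpha,\beta}$ of Lemma~\ref{lem:covering}.

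Concretely, I would first apply $[\int_Q f - q]_+ \ge \int_Q f - q$ in each term of the sum, which yields
\begin{equation*}
\sum_{Q \in \cQ} \frac{1}{|Q|^{\alpha}} \left( \left[{\textstyle\int_Q} f - q \right]_+ - b {\textstyle\int_Q} f \right)
\ \ge\ (1-b)\sum_{Q \in \cQ} \frac{1}{|Q|^{\alpha}} {\textstyle\int_Q} f \ - \ q \sum_{Q \in \cQ} \frac{1}{|Q|^{\alpha}}.
\end{equation*}
Since $1-b = \frac{q}{\Lambda} \cdot \frac{2^{d(\alpha+2)}}{2^{d\alpha}-1}$ by definition of $b$, the task reduces to showing
\begin{equation*}
\sum_{Q \in \cQ} \frac{1}{|Q|^{\alpha}} {\textstyle\int_Q} f
\ \ge\ \frac{\Lambda\,(2^{d\alpha}-1)}{2^{d(\alpha+2)}} \sum_{Q \in \cQ} \frac{1}{|Q|^{\alpha}}.
\end{equation*}

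At this point I would invoke Lemma~\ref{lem:covering} with the parameter choice $\beta=1$ and $\gamma=0$. Then $C_{d,\alpha,1} = \frac{2^{d(\alpha+2)}}{2^{d\alpha}-1}$, so \eqref{eq:covering-bound} reads exactly as the displayed inequality above. Combining the two steps gives \eqref{eq:covering-bound-q}, and finiteness of the collection $\cQ$ follows from the construction in Lemma~\ref{lem:covering} (integrability of $f$ forces the recursive splitting to terminate).

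There is no real obstacle: the only subtle point is matching constants, in particular recognizing that the prefactor $\frac{2^{d(\alpha+2)}}{2^{d\alpha}-1}$ appearing in the definition of $b$ is precisely $C_{d,\alpha,1}$ with $\beta=1$. The statement is genuinely a corollary, obtained from Lemma~\ref{lem:covering} by the cheap estimate $[x]_+ \ge x$ and a single calibration of parameters.
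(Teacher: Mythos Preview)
Your proposal is correct and matches the paper's proof essentially step for step: both use the partition from Lemma~\ref{lem:covering}, apply the trivial bound $[\int_Q f - q]_+ \ge \int_Q f - q$, and then invoke \eqref{eq:covering-bound} with $\beta=1$, $\gamma=0$ so that $C_{d,\alpha,1} = \frac{2^{d(\alpha+2)}}{2^{d\alpha}-1}$ calibrates the constant $b$. The paper organizes the algebra slightly differently (factoring out $1-b$ before invoking the lemma, and noting separately that the case $b\le 0$ is trivial), but there is no substantive difference.
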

\begin{proof}
	Note that for any collection $\cQ$ of cubes $Q$
	\begin{align}
		\sum_{Q \in \cQ} \frac{1}{|Q|^{\alpha}} \left( 
			\left[\int_{Q} f \ - q\right]_+ - b \int_{Q} f \right)
		&\ge \sum_{Q \in \cQ} \frac{1}{|Q|^{\alpha}} \left( 
			(1-b) \int_{Q} f \ - q \right) \nonumber\\
		&= (1-b)\sum_{Q \in \cQ} \frac{1}{|Q|^{\alpha}} \left( 
			\int_{Q} f \ - \frac{q/\Lambda}{1-b} \Lambda \right).
		\label{eq:covering-bound-q-comp}
	\end{align}
	By choosing $\cQ$ as in Lemma~\ref{lem:covering}, 
	with $\beta=1$ and $\gamma=0$, 
	the r.h.s. of \eqref{eq:covering-bound-q-comp} 
	is non-negative if $b \le 1$ and if 
	$$
		\frac{q/\Lambda}{1-b} = C_{d,\alpha,1}^{-1} 
		= \frac{2^{d\alpha}-1}{2^{d(\alpha+2)}},
	$$
	that is \eqref{eq:covering-const-q}.
	This proves \eqref{eq:covering-bound-q}
	(the inequality is trivially true for $b \le 0$).
\end{proof}

\begin{exc}
	Extend the covering lemma to a split into $k^d$ subcubes, with $k \ge 2$.
	(One practical usefulness of this version for $k=3$ (or any $k$ odd) 
	is that there is
	always a cube in $\cQ$ whose centerpoint coincides with the centerpoint
	of $Q_0$.)
\end{exc}

\subsubsection{Local proof of LT for fermions}\label{sec:LT-fermions}

Let us now use the above covering lemma to prove the fermionic LT inequality of 
Theorems~\ref{thm:LT-kinetic} and \ref{thm:LT-q}
directly by means of the local formulations of the uncertainty and the exclusion
principle, although with a weaker constant.

	Namely,
	take $\Psi \in H^1_{\asym(q)}(\R^{dN})$, and recall that for any partition
	$\eP$ of $\R^d$ into cubes $Q$ we have by the local uncertainty principle
	\eqref{eq:local-uncertainty-partition} that
	\begin{equation}\label{eq:LT-local-uncertainty}
		T[\Psi] = \sum_{Q \in \eP} T^Q[\Psi]
		\ge \sum_{Q \in \eP} \left( 
			C_1 \frac{\int_Q \varrho_\Psi^{1 + 2/d}}{(\int_Q \varrho_\Psi)^{2/d}}
			- C_2 \frac{\int_Q \varrho_\Psi}{|Q|^{2/d}}
			\right),
	\end{equation}
	for some positive constants $C_1,C_2 > 0$,
	and that the local exclusion principle of Lemma~\ref{lem:local-exclusion} 
	yields
	\begin{equation}\label{eq:LT-local-exclusion}
		T^Q[\Psi] \ge 
		\frac{\pi^2}{|Q|^{2/d}} \left[ \int_Q \varrho_\Psi \ - q \right]_+.
	\end{equation}
	By the GNS inequality on the full space,
	$$
		T[\Psi] \ge G_d N^{-2/d} \int_{\R^d} \varrho_\Psi^{1+2/d},
	$$
	we see that $\int_{\R^d \setminus Q_0} \varrho_\Psi^{1+2/d}$
	can be made arbitrarily small by taking some large enough cube $Q_0$,
	and thus in a standard approximation argument we may in fact assume
	$\supp \Psi \subseteq Q_0^N$.
	
	We then take a partition $\eP = \cQ$ of $Q_0$ according to Corollary~\ref{thm:covering-q},
	and combine the energies \eqref{eq:LT-local-uncertainty} and \eqref{eq:LT-local-exclusion}
	with $T = \eps T + (1-\eps)T$ and an arbitrary $\eps \in [0,1]$,
	to obtain
	\begin{align*}
		T[\Psi] 
		&\ge \sum_{Q \in \cQ} \left( 
			\eps C_1 \frac{\int_Q \varrho_\Psi^{1 + 2/d}}{(\int_Q \varrho_\Psi)^{2/d}}
			- \eps C_2 \frac{\int_Q \varrho_\Psi}{|Q|^{2/d}}
			+ (1-\eps) \frac{\pi^2}{|Q|^{2/d}} \left[ \int_Q \varrho_\Psi \ - q \right]_+
			\right) \\
		&\ge \sum_{Q \in \cQ} \left( 
			\eps C_1 \frac{\int_Q \varrho_\Psi^{1 + 2/d}}{\Lambda^{2/d}}
			+ \left( (1-\eps) \pi^2 b - \eps C_2 \right)
				\frac{\int_Q \varrho_\Psi}{|Q|^{2/d}}
			\right),
	\end{align*}
	with $b = 1 - \frac{4}{3}4^d q/\Lambda$ (here $\alpha = 2/d$).
	Taking $\Lambda = \frac{8}{3}4^d q$ so that $b=1/2$,
	and $\eps = \pi^2/(4C_2)$, the last term becomes nonnegative and
	$$
		T[\Psi] \ge C q^{-2/d} \sum_Q \int_Q \varrho_\Psi^{1 + 2/d}
		= C q^{-2/d} \int_{\R^d} \varrho_\Psi^{1 + 2/d},
	$$
	which proves the theorem.
\qed

\subsubsection{Local proof of LT for inverse-square repulsion}\label{sec:LT-bosons}

Using the above local approach we may also prove the Lieb--Thirring inequality~\eqref{eq:LT-interaction}
for the repulsive pair potential $W_\beta(\bx) = \beta|\bx|^{-2}$.
In this case the natural form domain to be considered is
\begin{equation}\label{eq:BLT-domain}
	\cH^1_{d,N} := \left\{ \Psi \in H^1(\R^{dN}) : 
		\sum_{1 \le j<k \le N} \int_{\R^{dN}} \frac{|\Psi(\sx)|^2}{|\bx_j-\bx_k|^2} \,d\sx < \infty
		\right\},
\end{equation}
and $0 \le T[\Psi] + W_\beta[\Psi] < \infty$ for such $\Psi \in \cH^1_{d,N}$.
The following theorem was proved 
in \cite{LunPorSol-15} 
(see also \cite{LunSol-14} concerning $d=1$ 
and the Calogero--Sutherland model\index{Calogero--Sutherland model};
cf.\ Section~\ref{sec:uncert-Hardy-many})
and generalized to fractional kinetic energy operators in \cite{LunNamPor-16}:

\begin{theorem}[BLT with inverse-square repulsion]\label{thm:BLT}
	For any $d\ge1$, $\beta>0$ there exists a constant $K_d(\beta)>0$
	such that for any $N \ge 1$ and $\Psi \in \cH^1_{d,N}$, 
	the exclusion-kinetic energy inequality\index{exclusion-kinetic energy inequality}
	\begin{equation}\label{eq:BLT-kinetic}
		T[\Psi] + W_\beta[\Psi]
		\ge K_d(\beta) \int_{\R^d} \varrho_\Psi^{1+2/d}
	\end{equation}
	holds. Furthermore, for any external one-body potential $V\colon \R^d \to \R$,
	\begin{equation}\label{eq:BLT-Schroedinger}
		T[\Psi] + V[\Psi] + W_\beta[\Psi]
		\ge -L_d(\beta) \int_{\R^d} |V_-|^{1+d/2},
	\end{equation}
	with the correspondence \eqref{eq:L-K-relation} between the constants.
\end{theorem}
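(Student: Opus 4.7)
The plan is to combine the local uncertainty principle of Lemma~\ref{lem:local-uncertainty} with the local exclusion principle for repulsive bosons of Lemma~\ref{lem:local-exclusion-bosons}, glued together by the covering lemma (Corollary~\ref{thm:covering-q}), in exactly the spirit of the local proof of the fermionic Lieb--Thirring inequality sketched in Section~\ref{sec:LT-fermions}. The Schr\"odinger form \eqref{eq:BLT-Schroedinger} is then an automatic consequence of the exclusion-kinetic energy inequality \eqref{eq:BLT-kinetic} via Theorem~\ref{thm:LT-equivalence}, with the constants related by \eqref{eq:L-K-relation}, so the work is entirely in \eqref{eq:BLT-kinetic}.

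Fix $\Psi \in \cH^1_{d,N}$; by Theorem~\ref{thm:GNS-many-body} we know a priori that $\varrho_\Psi \in L^{1+2/d}(\R^d)$, so the right-hand side of \eqref{eq:BLT-kinetic} is finite, and a standard truncation in the form norm allows us to assume $\supp \varrho_\Psi \subseteq Q_0$ for a large cube $Q_0$. For an arbitrary $\eps \in (0,1)$ and any partition $\cQ$ of $Q_0$ into subcubes, I would write
\begin{equation*}
	T[\Psi] + W_\beta[\Psi]
	= \sum_{Q \in \cQ} \Bigl( \eps \,T^Q[\Psi] + \bigl((1-\eps)T + W_\beta\bigr)^Q[\Psi] \Bigr)
\end{equation*}
and estimate the two kinds of contribution separately. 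The first is directly controlled by Lemma~\ref{lem:local-uncertainty} with an auxiliary parameter $\eps' \in (0,1)$, producing a positive $\int_Q \varrho_\Psi^{1+2/d}/(\int_Q \varrho_\Psi)^{2/d}$ term and a negative $|Q|^{-2/d}\int_Q\varrho_\Psi$ term. For the second, I would use the factorization $(1-\eps)T + W_\beta = (1-\eps)(T + W_{\beta/(1-\eps)})$, observe that the proof of Lemma~\ref{lem:local-exclusion-bosons} goes through with this rescaled kinetic coefficient, and combine it with the ``stupid bound'' \eqref{eq:stupid-bound} to get
\begin{equation*}
	\bigl((1-\eps)T + W_\beta\bigr)^Q[\Psi]
	\ \ge\ \frac{\beta}{2d\,|Q|^{2/d}} \Bigl(\int_Q \varrho_\Psi - 1\Bigr)_+.
\end{equation*}

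Next I would invoke Corollary~\ref{thm:covering-q} with $\alpha = 2/d$, $q=1$, $f = \varrho_\Psi$, and a threshold $\Lambda$ chosen so that the constant $b$ in \eqref{eq:covering-const-q} satisfies $b \ge 1/2$ (which holds as soon as $\Lambda \ge \tfrac{8}{3}\cdot 4^d$); this yields a partition of $Q_0$ with $\int_Q \varrho_\Psi \le \Lambda$ on every $Q$ and $\sum_Q |Q|^{-2/d}(\int_Q \varrho_\Psi - 1)_+ \ge \tfrac{1}{2}\sum_Q |Q|^{-2/d}\int_Q \varrho_\Psi$. Inserting the bound $\int_Q \varrho_\Psi \le \Lambda$ into the denominator of the uncertainty contribution and summing over $Q$ gives
\begin{equation*}
	T[\Psi]+W_\beta[\Psi]
	\ \ge\ \frac{\eps\, C_1(\eps')}{\Lambda^{2/d}} \int_{\R^d}\!\varrho_\Psi^{1+2/d}
	+ \Bigl(\tfrac{\beta}{4d} - \eps\, C_2(\eps')\Bigr) \sum_{Q\in\cQ} \frac{\int_Q \varrho_\Psi}{|Q|^{2/d}},
\end{equation*}
and choosing $\eps$ small enough (depending on $\beta$) to kill the bracket yields \eqref{eq:BLT-kinetic} with an explicit positive $K_d(\beta)$.

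The main obstacle I anticipate is the quantitative $\beta$-dependence of $K_d(\beta)$: while the scheme above delivers a positive constant for every $\beta > 0$, the balancing condition $\eps C_2(\eps') \lesssim \beta$ forces $K_d(\beta) \sim \beta$ as $\beta \to 0^+$, and it is not obvious from this proof that the bound behaves correctly as $\beta \to \infty$ or whether $K_d(\beta)$ is monotone in $\beta$; getting the optimal order would require either a more careful two-parameter optimization of $(\eps,\eps')$ or a sharper two-body energy than the naive pointwise estimate \eqref{eq:stupid-bound}. A secondary technical point is the initial reduction to compactly supported density, which should be handled by truncating $\Psi$ by a smooth cut-off and taking limits in the form norm, using the finiteness assumption built into \eqref{eq:BLT-domain} together with dominated convergence to control the (positive) interaction term.
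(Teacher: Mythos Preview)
Your proposal is correct and follows essentially the same route as the paper: split off an $\eps$-fraction of the kinetic energy for the local uncertainty principle (Lemma~\ref{lem:local-uncertainty}), feed the remainder into the bosonic local exclusion principle (Lemma~\ref{lem:local-exclusion-bosons}) combined with the stupid bound \eqref{eq:stupid-bound}, and balance via the covering lemma (Corollary~\ref{thm:covering-q}) with $q=1$ and $\Lambda$ a fixed dimensional constant; the Schr\"odinger form then follows from Theorem~\ref{thm:LT-equivalence}. The paper adds one refinement you anticipated but did not carry out: to improve the small-$\beta$ behavior from $K_d(\beta)\sim\beta$ to $K_d(\beta)\sim\beta^{2/d}$, it replaces the linear bound $e_n \ge \tfrac{\beta}{2d|Q|^{2/d}}(n-1)_+$ by the quadratic (convex) bound $e_n \ge \tfrac{\beta}{2d|Q|^{2/d}}n(n-1)_+$ and invokes the general covering Lemma~\ref{lem:covering} with exponents $(\alpha,\beta,\gamma)=(2/d,2,1)$, letting $\Lambda\sim\beta^{-1}$ absorb the weakness of the coupling rather than $\eps$.
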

\begin{remark}
	The domain $\cH^1_{d,N}$ imposes no symmetry on the wave function $\Psi$,
	but in fact it may be shown \cite[Corollary~3.1]{LieSei-09}
	that the minimizers of the l.h.s.\ of
	\eqref{eq:BLT-kinetic} respectively \eqref{eq:BLT-Schroedinger} 
	must nevertheless be positive and symmetric, i.e.\ \emph{bosonic}.
	We will therefore refer to such an inequality as an (interacting)
	\keyword[bosonic Lieb--Thirring inequality]{bosonic Lieb--Thirring (BLT) inequality}.
\end{remark}
\begin{remark}\label{rem:BLT-constant}
	The behavior of the optimal constant $K_d(\beta)$ in \eqref{eq:BLT-kinetic}
	as a function of $\beta$ was discussed in \cite[Section~3.5]{LunNamPor-16}.
	One has for all $d \ge 1$ that $K_d(\beta)$ is monotone increasing and concave,
	and
	$$
		C_d \min\{1,\beta^{2/d}\} \le K_d(\beta) \le G_d,
	$$
	for some constant $C_d > 0$, while for
	$$
		\begin{array}{ll}
			d=1: & \lim_{\beta \to 0} K_1(\beta) = K_1 > 0, \\
			d=2: & \lim_{\beta \to 0} K_2(\beta) = 0, \\
			d\ge3: & K_d(\beta) \sim \beta^{2/d} \ \text{as} \ \beta \to 0,
		\end{array}
	$$
	where $K_1$ is the usual \emph{fermionic} LT constant for $d=1$.
	It is furthermore conjectured that
	$\lim_{\beta \to \infty} K_d(\beta) = G_d$ 
	for all $d \ge 1$, and hence that
	the conjecture on the optimal constant in the fermionic LT 
	(Remark~\ref{rem:LT-conjecture}) for $d=1$
	is equivalent to proving that $K_1(\beta)$ is constant in $\beta$.
\end{remark}
\begin{proof}
	We proceed similarly to the previous proof,
	starting from the same formulation of the local uncertainty principle
	\eqref{eq:LT-local-uncertainty}.
	However, \eqref{eq:LT-local-exclusion} is now replaced by
	Lemma~\ref{lem:local-exclusion-bosons}:
	\begin{equation}\label{eq:BLT-local-exclusion}
		(T+W_\beta)^Q[\Psi] \ \ge \ 
		\frac{1}{2} e_2(|Q|;W_\beta) \left( \int_Q \varrho_\Psi \ - 1 \right)_+,
	\end{equation}
	with $e_2(|Q|;W_\beta) \ge \beta d^{-1} |Q|^{-2/d}$ by \eqref{eq:stupid-bound}.
	Thus, we may just take $q=1$ and replace $\pi^2$ with $\beta/(2d)$
	in the previous proof, to obtain
	\begin{align*}
		T[\Psi] + W_\beta[\Psi]
		&\ge \sum_{Q \in \cQ} \left( 
			\eps C_1 \frac{\int_Q \varrho_\Psi^{1 + 2/d}}{\Lambda^{2/d}}
			+ \left( (1-\eps) \frac{\beta b}{2d} - \eps C_2 \right)
				\frac{\int_Q \varrho_\Psi}{|Q|^{2/d}}
			\right)
		\ge \eps C \int_{\R^d} \varrho_\Psi^{1 + 2/d},
	\end{align*}
	with $\Lambda$ fixed and $\eps \sim \beta$ 
	as $\beta \to 0$.
	
	In order to get an improved dependence for small $\beta$,
	one may instead use the convexity in $n$ of the bound
	$e_n(|Q|;W_\beta) \ge \frac{\beta n(n-1)_+}{2d|Q|^{2/d}}$
	to replace \eqref{eq:BLT-local-exclusion} by the improved local exclusion
	principle
	\begin{equation}\label{eq:BLT-local-exclusion-improved}
		(T+W_\beta)^Q[\Psi] \ \ge \ 
		\frac{\beta}{2d|Q|^{2/d}} \left( 
			\left( \int_Q \varrho_\Psi \right)^2
			- \left( \int_Q \varrho_\Psi \right) \right)_+.
	\end{equation}
	Then, by Lemma~\ref{lem:covering} with 
	$\alpha=2/d$, $\beta=2$, and $\gamma=1$,
	\begin{align*}
		T[\Psi] + W_\beta[\Psi]
		&\ge \sum_{Q \in \cQ} \left( 
			\eps C_1 \frac{\int_Q \varrho_\Psi^{1 + 2/d}}{\Lambda^{2/d}}
			+ \left( (1-\eps) \frac{\beta\Lambda}{2dC_{d,2/d,2}} 
				- (1-\eps)\frac{\beta}{2d} - \eps C_2 \right)
				\frac{\int_Q \varrho_\Psi}{|Q|^{2/d}}
			\right) \\
		&\ge \eps C_1 \Lambda^{-2/d} \int_{\R^d} \varrho_\Psi^{1 + 2/d},
	\end{align*}
	with suitable fixed $\eps>0$ and $\Lambda \sim \beta^{-1}$ as $\beta \to 0$.
\end{proof}

\subsubsection{Local proof of LT for anyons}\label{sec:LT-anyons}

We may furthermore extend the Lieb--Thirring inequality straightforwardly to anyons 
in two dimensions using their local exclusion principle given in 
Lemma~\ref{lem:local-exclusion-anyons} \cite{LunSei-17}.
Recall the periodization of the statistics parameter 
$\alpha_2 = \min_{q \in \Z}|\alpha-2q| \in [0,1]$.

\begin{theorem}[LT for anyons]\label{thm:LT-anyons}
	There exists a constant $K_2>0$ such that for any $\alpha \in \R$,
	any $N \ge 1$ and $\Psi \in \cQ(\hT_\alpha)$, 
	the exclusion-kinetic energy inequality\index{exclusion-kinetic energy inequality}
	\begin{equation}\label{eq:LT-anyons-kinetic}
		T_\alpha[\Psi]
		\ge K_2 \alpha_2 \int_{\R^2} \varrho_\Psi^{2}
	\end{equation}
	holds. Furthermore, for any external one-body potential $V\colon \R^2 \to \R$,
	\begin{equation}\label{eq:LT-anyons-Schroedinger}
		T_\alpha[\Psi] + V[\Psi]
		\ge -L_2 \alpha_2^{-1} \int_{\R^d} |V_-|^{2},
	\end{equation}
	with the correspondence \eqref{eq:L-K-relation} between the constants
	$K_2$ and $L_2$.
\end{theorem}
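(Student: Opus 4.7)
The plan is to adapt the local-to-global strategy used for fermions in Section~\ref{sec:LT-fermions} and for inverse-square bosons in Section~\ref{sec:LT-bosons}, by pairing the local uncertainty principle with the local exclusion principle for anyons from Lemma~\ref{lem:local-exclusion-anyons} and then glueing the local bounds via the covering lemma (Lemma~\ref{lem:covering}). By Theorem~\ref{thm:LT-equivalence}, the Schr\"odinger form \eqref{eq:LT-anyons-Schroedinger} follows from the kinetic form \eqref{eq:LT-anyons-kinetic} with the stated correspondence between $K_2$ and $L_2$, so I only need to prove \eqref{eq:LT-anyons-kinetic}. By a standard approximation argument using the global GNS inequality of Theorem~\ref{thm:GNS-many-body} (applied to $|\Psi|$) to control the tails of $\varrho_\Psi^2$, I may assume that $\varrho_\Psi$ is supported in some large square $Q_0$.

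The first real step is to check that the local uncertainty principle of Lemma~\ref{lem:local-uncertainty} applies to the anyonic kinetic energy $T_\alpha^{Q}[\Psi]$ defined analogously to \eqref{eq:local-exp-kin-en-excl}. In the magnetic gauge picture one has $\hT_\alpha = \sum_j (-i\nabla_j^{\bA_\alpha})^2$ acting on bosonic (or fermionic) $\Psi$, so the diamagnetic inequality $|\nabla_j^{\bA_\alpha} \Psi|^2 \ge |\nabla_j |\Psi||^2$ gives, for every square $Q$,
\begin{equation}
T_\alpha^{Q}[\Psi] \ge T^{Q}[|\Psi|],
\end{equation}
while $\varrho_\Psi = \varrho_{|\Psi|}$. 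Consequently Lemma~\ref{lem:local-uncertainty} yields, for any $\eps' \in (0,1)$ and any square $Q$,
\begin{equation}
T_\alpha^{Q}[\Psi] \ge C_2 \eps'^{3} \frac{\int_Q \varrho_\Psi^{2}}{\int_Q \varrho_\Psi}
 - C_2\bigl(1 + (\eps'/(1-\eps'))^{3}\bigr) \frac{\int_Q \varrho_\Psi}{|Q|}.
\end{equation}
The local exclusion principle \eqref{eq:local-exclusion-anyons} from Lemma~\ref{lem:local-exclusion-anyons} supplies the complementary bound
\begin{equation}
T_\alpha^{Q}[\Psi] \ge \frac{c(\alpha)}{|Q|}\left(\int_Q \varrho_\Psi - 1\right)_+ .
\end{equation}

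Next I would split $T_\alpha = \eps\, T_\alpha + (1-\eps)\, T_\alpha$ with a parameter $\eps \in (0,1)$ to be chosen, apply the two displays above to the respective pieces, and sum over the squares of a partition $\eP$ of $Q_0$ supplied by Lemma~\ref{lem:covering}. With $\alpha_{\text{cov}} = 2/d = 1$, $\beta=1$, $\gamma=0$ (i.e.\ Corollary~\ref{thm:covering-q} with $q=1$) the partition satisfies $\int_Q \varrho_\Psi \le \Lambda$ for all $Q \in \eP$, and for any threshold $\Lambda$ one has
\begin{equation}
\sum_{Q\in\eP}\frac{1}{|Q|}\left(\Bigl[\int_Q \varrho_\Psi - 1\Bigr]_+ - b \int_Q \varrho_\Psi\right) \ge 0,
\qquad b = 1 - \frac{64}{3\Lambda}.
\end{equation}
Using $\int_Q \varrho_\Psi \le \Lambda$ to bound $\int_Q \varrho_\Psi^{2}/\!\int_Q \varrho_\Psi \ge \int_Q \varrho_\Psi^{2}/\Lambda$, the combined estimate becomes
\begin{equation}
T_\alpha[\Psi] \ge \frac{\eps\, C_2 \eps'^{3}}{\Lambda}\int_{\R^2}\!\varrho_\Psi^{2}
 + \Bigl((1-\eps)\,c(\alpha)\,b - \eps\, C_2(1+(\eps'/(1-\eps'))^{3})\Bigr)\!\sum_{Q\in\eP}\frac{\int_Q\varrho_\Psi}{|Q|}.
\end{equation}
I would then choose $\Lambda$ large enough that $b \ge 1/2$ and $\eps$ (and $\eps'$) small enough, proportional to $c(\alpha)$, so that the bracket in front of the residual sum is non-negative; this is possible because $c(\alpha) > 0$ whenever $\alpha_2 > 0$. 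Since Lemma~\ref{lem:local-exclusion-anyons} together with the bound $e_2(\alpha) \ge \tfrac13 \alpha_2$ (stated after the lemma) gives $c(\alpha) \gtrsim \alpha_2$ uniformly in $\alpha$, the choice $\eps \sim \alpha_2$ produces a constant of the form $K_2 \alpha_2$ in front of $\int \varrho_\Psi^{2}$, which is exactly \eqref{eq:LT-anyons-kinetic}.

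The main obstacle is a conceptual one at the level of the local uncertainty principle: Lemma~\ref{lem:local-uncertainty} was stated for the ordinary gradient on bosonic/distinguishable states, and one has to make sure that the diamagnetic reduction $T_\alpha^Q[\Psi] \ge T^Q[|\Psi|]$ is valid in the form sense on the proper anyonic form domain $\cQ(\hT_\alpha)$, independently of whether the magnetic-gauge or anyon-gauge picture is used. Once this step is in place, the rest of the argument is a direct transcription of the local proofs in Sections~\ref{sec:LT-fermions}--\ref{sec:LT-bosons}, with $\pi^2$ (fermions) or $\beta/(2d)$ (inverse-square bosons) replaced by the anyonic exclusion constant $c(\alpha)$, and with the $N$-independence of all constants guaranteed because the local exclusion bound is $N$-independent.
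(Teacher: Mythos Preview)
Your proposal is correct and follows essentially the same approach as the paper, which simply says the proof ``goes through exactly as in Section~\ref{sec:LT-fermions}, replacing $\pi^2$ in \eqref{eq:LT-local-exclusion} with $c(\alpha) \ge \alpha_2/12$ from Lemma~\ref{lem:local-exclusion-anyons}.'' You are in fact more explicit than the paper in isolating the one nontrivial ingredient---the diamagnetic reduction $T_\alpha^Q[\Psi] \ge T^Q[|\Psi|]$ needed to apply the local uncertainty principle to the anyonic kinetic energy---which the paper leaves implicit.
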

\begin{proof}
	The proof goes through exactly as in Section~\ref{sec:LT-fermions},
	replacing $\pi^2$ in \eqref{eq:LT-local-exclusion} with 
	$c(\alpha) \ge \alpha_2/12$ from Lemma~\ref{lem:local-exclusion-anyons}.
\end{proof}

\subsubsection{Local LT with semiclassical constant}\label{sec:LT-semiclassical}

Very recently, Nam has considered improvements w.r.t. the constant in the local 
approach, getting arbitrarily close to the semiclassical one,
to the cost of a gradient error term \cite{Nam-18}.
Namely, using the Weyl asymptotics on cubes one may first prove the following:

\begin{lemma}[\keyword{Local density approximation}]\label{lem:LT-LDA}
	Given any $\Psi \in H^1_{\asym}((\R^d)^N)$
	and $d$-cube $Q_0$ such that $\int_{Q_0} \varrho_\Psi \ge \Lambda > 0$,
	there exists a partition $\cQ$ of $Q_0$ into sub-cubes $Q$
	such that $\int_Q \varrho_\Psi \le \Lambda$
	and
	\begin{equation}\label{eq:LT-LDA}
		\int_{\R^{dN}} |\nabla \Psi|^2
		\ge 
			K_d^\cl \bigl(1-C_d\Lambda^{-1/d}\bigr) 
			\sum_{Q \in \cQ} |Q| \left( \frac{\int_Q \varrho_\Psi}{|Q|} \right)^{1+2/d},
	\end{equation}
	where $C_d$ is a constant depending only on $d$.
\end{lemma}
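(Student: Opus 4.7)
The plan is to combine the covering lemma with a local Berezin--Li--Yau-type bound for the Neumann Laplacian on a cube, together with Jensen's inequality for the convex map $x \mapsto x^{1+2/d}$, and finally use a Young-type interpolation to absorb the boundary error term at a cost of $\Lambda^{-1/d}$.

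First, I would apply Lemma~\ref{lem:covering} to $f = \varrho_\Psi$ on $Q_0$ with the given $\Lambda$, producing a partition $\cQ$ of $Q_0$ into sub-cubes such that $n_Q := \int_Q \varrho_\Psi \le \Lambda$ for every $Q \in \cQ$. I then split $\int_{\R^{dN}}|\nabla\Psi|^2 \ge \sum_{Q\in\cQ} T^Q[\Psi]$ using the local kinetic energy $T^Q[\Psi]$ of Section~\ref{sec:exclusion-local}. Writing the standard decomposition $T^Q[\Psi] \ge \sum_n p_{n,Q}[\Psi]\, e_n(|Q|;\asym)$ in terms of probabilities of finding $n$ particles on $Q$ (as in the proof of Lemma~\ref{lem:local-exclusion}), and using $e_n(|Q|;\asym) \ge \sum_{k=0}^{n-1}\lambda_k(-\Delta_Q^\eN)$ from \eqref{eq:DL-bound-proof}, the task reduces to a sufficiently sharp lower bound on the sum of Neumann eigenvalues.

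The key technical step, and the main obstacle, is the Berezin--Li--Yau-type inequality for the Neumann Laplacian on a cube with an explicit boundary remainder,
\begin{equation}\label{eq:Neumann-BLY}
	\sum_{k=0}^{n-1}\lambda_k(-\Delta_Q^\eN) \ \ge \ K_d^\cl \frac{n^{1+2/d}}{|Q|^{2/d}} - C_d' \,\frac{n^{(d+1)/d}}{|Q|^{2/d}},
\end{equation}
which I would establish by the explicit lattice-sum description of Section~\ref{sec:uncert-local-cube}, comparing $\sum_{\bk\in\Z_{\ge 0}^d, |\bk|\le R} |\bk|^2$ to its Riemann approximation and tracking the $O(R^{d+1})$ boundary term. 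Integrating \eqref{eq:Neumann-BLY} against $p_{n,Q}[\Psi]$ and applying Jensen's inequality to $x \mapsto x^{1+2/d}$ (using $\sum_n p_{n,Q} = 1$ and $\sum_n n\, p_{n,Q} = n_Q$ from \eqref{eq:particle-prob-normalized}--\eqref{eq:particle-prob-expectation}) gives
\[
	T^Q[\Psi] \ \ge\ K_d^\cl\, \frac{n_Q^{1+2/d}}{|Q|^{2/d}} - C_d'\, \frac{\sum_n p_{n,Q}\, n^{(d+1)/d}}{|Q|^{2/d}}.
\]
The remaining sum is the wrong direction for Jensen, but since $(1+1/d) = \tfrac12(1+2/d) + \tfrac12$ is a log-convex interpolation, Young's inequality $n^{(d+1)/d} \le \tfrac{\eps}{2} n^{1+2/d} + \tfrac{1}{2\eps}\, n$ (valid for every $\eps>0$) yields, after summing over $n$ and using $\sum_n p_{n,Q} n = n_Q$,
\[
	T^Q[\Psi] \ \ge\ \bigl(K_d^\cl - C_d' \eps/2\bigr)\, \frac{n_Q^{1+2/d}}{|Q|^{2/d}} - \frac{C_d'}{2\eps}\, \frac{n_Q}{|Q|^{2/d}}.
\]

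Summing over $Q \in \cQ$ and invoking Lemma~\ref{lem:covering} once more with $\alpha = 2/d$, $\beta = 1+2/d$, $\gamma = 1$, which yields
\[
	\sum_{Q\in\cQ}\frac{n_Q}{|Q|^{2/d}} \ \le\ \frac{C''_d}{\Lambda^{2/d}}\sum_{Q\in\cQ}\frac{n_Q^{1+2/d}}{|Q|^{2/d}},
\]
one obtains
\[
	\int_{\R^{dN}}|\nabla\Psi|^2 \ \ge\ \left(K_d^\cl - \tfrac12 C_d' \eps - \tfrac{C_d' C''_d}{2\eps\,\Lambda^{2/d}}\right) \sum_{Q\in\cQ}\frac{n_Q^{1+2/d}}{|Q|^{2/d}}.
\]
The optimal choice $\eps = \Lambda^{-1/d}$ collapses both error contributions to the same order and produces the factor $K_d^\cl(1 - C_d \Lambda^{-1/d})$. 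Finally, $n_Q^{1+2/d}/|Q|^{2/d} = |Q|\,(n_Q/|Q|)^{1+2/d}$ rewrites the right-hand side in the claimed form. The hard part of the proof is thus \eqref{eq:Neumann-BLY}; everything else is a direct combination of the covering lemma, Jensen, and Young's inequality.
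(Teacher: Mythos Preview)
Your argument is correct and arrives at the stated bound, but the paper takes a slightly more streamlined path. The paper also starts from the Neumann Weyl lower bound $e_n(|Q|;\asym) \ge K_d^\cl |Q|^{-2/d}\bigl(n^{1+2/d} - C n^{1+1/d}\bigr)_+$ (citing Kr\"oger rather than deriving it from the lattice sum), but then observes that the right-hand side, taken as a function of $n \ge 0$, is \emph{convex as a whole} once the positive part is taken. Jensen therefore applies directly to the combined expression against the probabilities $p_{n,Q}[\Psi]$, yielding
\[
	T^Q[\Psi] \ \ge\ \frac{K_d^\cl}{|Q|^{2/d}}\left(\Bigl(\int_Q \varrho_\Psi\Bigr)^{1+2/d} - C\Bigl(\int_Q \varrho_\Psi\Bigr)^{1+1/d}\right)_+
\]
in one step. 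The covering lemma is then invoked with $\alpha = 2/d$, $\beta = 1+2/d$, $\gamma = 1+1/d$, so that $\Lambda^{\gamma-\beta} = \Lambda^{-1/d}$ appears immediately with no auxiliary parameter to optimize. Your route---splitting $n^{1+1/d}$ via Young into $n^{1+2/d}$ and $n$, applying Jensen to each piece, using the covering lemma with $\gamma = 1$, and then optimizing $\eps$---is a perfectly valid alternative that recovers the same $\Lambda^{-1/d}$ loss after optimization. The paper's version avoids the detour through Young and the $\eps$-optimization at the cost of checking the convexity of $n \mapsto (n^{1+2/d} - Cn^{1+1/d})_+$, which holds because the second derivative of $n^{1+2/d} - Cn^{1+1/d}$ is nonnegative precisely on the set where the expression itself is nonnegative.
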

\begin{proof}
	The Weyl asymptotics \eqref{eq:Weyl-asymptotics} for the sum of Laplacian 
	eigenvalues on a cube $Q$ can be rigorously justified with a uniform 
	lower bound (see e.g. \cite{Kroeger-94}) 
	$$
		e_N(|Q|;\asym) = \sum_{k=0}^{N-1} \lambda_k(-\Delta_Q^\eN)
		\ge \frac{K_d^\cl}{|Q|^{2/d}} \left( N^{1+2/d} - C N^{1+1/d} \right)_+
	$$
	for some constant $C>0$ and all $N \in \N$. 
	Using that the r.h.s. is convex in $N > 0$, one may then also prove the
	following improvement of the local exclusion principle of 
	Lemma~\ref{lem:local-exclusion}:
	\begin{equation}\label{eq:local-exclusion-Weyl}
		T^Q[\Psi] \ \ge \ 
		\frac{K_d^\cl}{|Q|^{2/d}} \left( \left(\int_Q \varrho_\Psi\right)^{1+2/d} 
			- C\left(\int_Q \varrho_\Psi\right)^{1+1/d} \right)_+.
	\end{equation}
	Now, applying Lemma~\ref{lem:covering} in the second term
	with $\alpha=2/d$, $\beta=1+2/d$ and $\gamma=1+1/d$, one obtains
	$$
		T[\Psi] \ge \sum_{Q \in \cQ} T^Q[\Psi] \ge 
		\sum_{Q \in \cQ} \frac{K_d^\cl}{|Q|^{2/d}} \left( \left(\int_Q \varrho_\Psi\right)^{1+2/d} 
			- C \frac{C_{d,\alpha,\beta}}{\Lambda^{1/d}} 
			\left(\int_Q \varrho_\Psi\right)^{1+2/d} \right),
	$$
	for the corresponding partition $\cQ$ of $Q_0$, which proves \eqref{eq:LT-LDA}.
\end{proof}

Lemma~\ref{lem:LT-LDA} 
is a local density approximation for the Thomas-Fermi energy.
In fact, Nam proved the following consequence, which shows that if
the density is sufficiently constant then one indeed obtains the Thomas-Fermi 
energy as a lower bound, with the semiclassical constant:

\begin{theorem}[\keyword{LT with gradient correction}]\label{thm:LT-grad}
	There exists a constant $C_d>0$ s.t. the inequality 
	\begin{equation}\label{eq:LT-grad}
		\int_{\R^{dN}} |\nabla \Psi|^2
		\ge (1-\eps) K_d^\cl \int_{\R^d} \varrho_\Psi^{1+2/d}
			- \frac{C_d}{\eps^{3+4/d}} \int_{\R^d} \left| \nabla \sqrt{\varrho_\Psi} \right|^2
	\end{equation}
	holds for any $\eps > 0$ and
	$\Psi \in H^1_\asym((\R^d)^N)$.
\end{theorem}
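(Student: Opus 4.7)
The plan is to combine Nam's local density approximation (Lemma~\ref{lem:LT-LDA}) with a cube-wise Poincar\'e--Sobolev-type inequality that upgrades the sum of cube-averages $\sum_Q |Q|\bar{\varrho}_Q^{1+2/d}$ to a genuine integral $\int\varrho_\Psi^{1+2/d}$, at the cost of a term involving $\int|\nabla\sqrt{\varrho_\Psi}|^2$ controlled via Theorem~\ref{thm:GNS-local}.

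First I would reduce to the case where $\int|\nabla\sqrt{\varrho_\Psi}|^2$ and $\int\varrho_\Psi^{1+2/d}$ are finite and $\varrho_\Psi$ is compactly supported in a large cube $Q_0$ (by truncation and monotone convergence; the Hoffmann--Ostenhof inequality \eqref{eq:HO} ensures $\sqrt{\varrho_\Psi}\in H^1$). For a parameter $\Lambda>0$ to be chosen later, Lemma~\ref{lem:LT-LDA} applied on $Q_0$ produces a partition $\cQ$ of $Q_0$ into sub-cubes $Q$ with $\int_Q\varrho_\Psi\le\Lambda$ and
\begin{equation*}
T[\Psi]\ \ge\ K_d^\cl\bigl(1-C_d\Lambda^{-1/d}\bigr)\sum_{Q\in\cQ}|Q|\,\bar{\varrho}_Q^{1+2/d},\qquad \bar{\varrho}_Q:=|Q|^{-1}\!\int_Q\varrho_\Psi.
\end{equation*}

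The core step is the cube-wise bound, for each $Q\in\cQ$ and an auxiliary $\eta\in(0,1)$,
\begin{equation*}
\int_Q\varrho_\Psi^{1+2/d}\ \le\ (1+\eta)\,|Q|\,\bar{\varrho}_Q^{1+2/d}\ +\ C\,\eta^{-(1+4/d)}\,\Lambda^{2/d}\int_Q|\nabla\sqrt{\varrho_\Psi}|^2.
\end{equation*}
Setting $u:=\sqrt{\varrho_\Psi}$, I would use the pointwise estimate $u\le\sqrt{\bar{\varrho}_Q}+[u-\sqrt{\bar{\varrho}_Q}]_+$ together with the scalar Young inequality $(a+b)^{2p}\le(1+\eta)a^{2p}+C\eta^{-(2p-1)}b^{2p}$ for $p=1+2/d$, and then apply Theorem~\ref{thm:GNS-local} to $u$ on $Q$ in the form $\int_Q[u-\sqrt{\bar{\varrho}_Q}]_+^{2p}\le C_d^{-1}(\int_Q u^2)^{2/d}\int_Q|\nabla u|^2\le C_d^{-1}\Lambda^{2/d}\int_Q|\nabla\sqrt{\varrho_\Psi}|^2$. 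Summing over $Q\in\cQ$ and inserting into the LDA bound gives
\begin{equation*}
T[\Psi]\ \ge\ \frac{K_d^\cl(1-C_d\Lambda^{-1/d})}{1+\eta}\int_{\R^d}\varrho_\Psi^{1+2/d}\ -\ K_d^\cl\,C\,\eta^{-(1+4/d)}\Lambda^{2/d}\int_{\R^d}|\nabla\sqrt{\varrho_\Psi}|^2.
\end{equation*}
Choosing $\eta$ and $C_d\Lambda^{-1/d}$ both of size $\sim\eps$, so $\Lambda\sim\eps^{-d}$, makes the leading coefficient at least $(1-\eps)K_d^\cl$, and the gradient error coefficient becomes of order $\eps^{-(1+4/d)}\cdot\eps^{-2}=\eps^{-(3+4/d)}$, matching the claim.

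The main obstacle is securing the sharp dependence $C_\eta\sim\eta^{-(2p-1)}=\eta^{-(1+4/d)}$ in the Young-type inequality used on each cube; a cruder split (e.g. into $\{u\le (1+\eta)\sqrt{\bar\varrho_Q}\}$ and its complement, with $u\le(1+\eta^{-1})[u-\sqrt{\bar\varrho_Q}]_+$ on the latter) produces the weaker $\eta^{-2p}=\eta^{-(2+4/d)}$ and would yield the (worse) exponent $4+4/d$ in the gradient term. The careful bookkeeping needed to trace the $(1+\eta)$ factor on the average term all the way through, together with balancing it against the LDA error $\Lambda^{-1/d}$, is what ultimately fixes the precise power $3+4/d$.
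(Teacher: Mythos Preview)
The paper does not actually give a proof of this theorem; it is stated as a consequence due to Nam~\cite{Nam-18}, with Lemma~\ref{lem:LT-LDA} (the local density approximation) provided as the key preparatory step. Your proposal supplies the missing argument in a correct and natural way, using exactly the tools the paper has set up: Lemma~\ref{lem:LT-LDA} for the semiclassical main term on the partition $\cQ$, and the local GNS inequality of Theorem~\ref{thm:GNS-local} (applied to $u=\sqrt{\varrho_\Psi}$ on each cube) to upgrade the Riemann sum $\sum_Q |Q|\,\bar\varrho_Q^{1+2/d}$ to $\int\varrho_\Psi^{1+2/d}$ at the cost of the gradient term.

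Your tracking of the Young constant is correct: from convexity one obtains $(a+b)^q \le \lambda^{1-q}a^q + (1-\lambda)^{1-q}b^q$ for $q=2+4/d$, and choosing $\lambda^{1-q}=1+\eta$ yields $(1-\lambda)^{1-q}\sim\eta^{-(q-1)}=\eta^{-(1+4/d)}$, exactly as you claim. Combined with $\int_Q\varrho_\Psi\le\Lambda$ in the local GNS bound, the balancing $\eta\sim\eps$ and $\Lambda\sim\eps^{-d}$ then produces the gradient coefficient $\eta^{-(1+4/d)}\Lambda^{2/d}\sim\eps^{-(3+4/d)}$, matching the stated exponent. Your remark that the cruder level-set split would give the worse power $\eta^{-(2+4/d)}$ and hence $\eps^{-(4+4/d)}$ is also accurate and shows you have identified the step where care is needed.
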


\subsection{Some direct applications of LT}\label{sec:LT-apps}

	As a direct application of the above kinetic energy inequalities
	one may consider the ground-state energy of a system 
	of $N$ particles with a given external potential $V$.
	Namely, given a bound of the form \eqref{eq:LT-general-K},
	one may estimate the $N$-body energy in terms of the density,
	\begin{equation}\label{eq:LT-density-functional}
		T[\Psi] + V[\Psi] + W[\Psi] 
		\ge \int_{\R^d} \left( K_d(W) \varrho_\Psi^{1+2/d}(\bx) + V \varrho_\Psi(\bx) \right) d\bx
		=: \tilde\cE[\varrho_\Psi],
	\end{equation}
	and therefore
	$$
		E_0(N) \ge \inf \left\{ \tilde\cE[\varrho] : \varrho \in L^{1+2/d}(\R^d;\R_+),
		\ \int_{\R^d} \varrho = N \right\}.
	$$
	Another bound for $E_0$ is given directly in terms of an integral of $V_-$ 
	by \eqref{eq:LT-general-V},
	but note that the above bound is also valid for arbitrary, 
	even non-negative, one-body potentials $V$.

\begin{example}[{\keyword[homogeneous gas]{The homogeneous gas}}]
	We model the case of a homogeneous gas on a finite domain 
	$\Omega \subseteq \R^d$ by formally taking 
	for the one-body potential a flat external trap
	with infinite walls,
	$$
		V(\bx)
		= \left\{ \begin{array}{ll}
			0, 		& \text{on $\Omega$,} \\
			+\infty,	& \text{on $\Omega^c$.} 
			\end{array}\right.
	$$
	More precisely, we take Dirichlet boundary conditions,
	$\Psi \in H^1_0(\Omega^N)$.
	Then also $\supp \varrho_\Psi \subseteq \Omega$, 
	$\int_\Omega \varrho_\Psi = N$,
	and since we have by H\"older that
	$$
		\tilde\cE[\varrho] 
		= K_d(W) \int_{\Omega} \varrho^{1+2/d}
		\ge K_d(W) |\Omega|^{-2/d} \left( \int_\Omega \varrho \right)^{1+2/d}
	$$
	for any $\varrho \in L^1(\Omega;\R_+)$,
	we then obtain by \eqref{eq:LT-density-functional},
	for the ground state energy per particle of the homogeneous gas
	with density $\rho = N/|\Omega|$,
	$$
		\frac{E_0(N)}{N} 
		= \inf_{\substack{\Psi \in H^1_0(\Omega^N) \\ \|\Psi\| = 1}} 
			\frac{1}{N}\left( T[\Psi] + V[\Psi] + W[\Psi] \right) 
		\ge K_d(W) \rho^{2/d}.
	$$
	\index{free Fermi gas}%
	In the case of fermions (compare Example~\ref{exmp:exclusion-gas})
	this lower bound will exactly match the correct energy
	given by the Weyl asymptotics \eqref{eq:Weyl-asymptotics},
	even including the conjectured optimal constant $K_d(\asym) \stackrel{\text{conj.}}{=} K_d^\cl$ 
	in higher dimensions $d \ge 3$ (see Remark~\ref{rem:LT-conjecture}).
\end{example}

\begin{exc}[Harmonic traps]
	Apply the bound \eqref{eq:LT-density-functional} to a harmonic oscillator 
	potential\index{harmonic oscillator potential} 
	$V_\textup{osc}(\bx) = \frac{1}{4}\omega^2 |\bx|^2$.
	Show that 
	$\varrho_{\mathrm{min}}(\bx) = K_d(W)^{-d/2} \left(\frac{d}{d+2}\right)^{d/2} \left( \lambda - \frac{\omega^2}{4}|\bx|^2 \right)_+^{d/2}$ 
	for some constant $\lambda = \lambda(d,N) \sim N^{1/d}$.
	Compute a lower bound $\tilde\cE[\varrho_{\min}] \sim N^{1+1/d}$
	to the energy,
	and compare to Exercise~\ref{exc:fermions-harm-osc}.
\end{exc}

\section{The stability of matter\lect{ [15,16]}}\label{sec:stability}
	\index{stability}

	Recall from Section~\ref{sec:mech-QM-matter} that we consider the following
	many-body Hamiltonian for matter 
	\index{model of matter}
	consisting of $N$ particles of one type (`electrons') 
	and $M$ particles of another type (`nuclei') moving in $\R^3$:
	\begin{equation}\label{eq:matter-Hamiltonian}
		\hH^{N,M} := \sum_{j=1}^N 
			\frac{\hbar^2}{2m_e} (-\Delta_{\bx_j})
			+ \sum_{k=1}^M 
			\frac{\hbar^2}{2m_n} (-\Delta_{\bR_k})
			+ W_\sC(\sx,\sR),
	\end{equation}
	with the full Coulomb interaction 
	\begin{equation}\label{eq:matter-Coulomb}
		W_\sC(\sx,\sR) := \sum_{1 \le i<j \le N} \frac{1}{|\bx_i-\bx_j|}
			- \sum_{j=1}^N \sum_{k=1}^M \frac{Z}{|\bx_j-\bR_k|}
			+ \sum_{1 \le k<l \le M}  \frac{Z^2}{|\bR_k-\bR_l|}.
	\end{equation}
	As usual we assume that the electrons have charge $-1$ 
	and the nuclei charge $Z > 0$, and put $\hbar = 1$ by a change of mass scale.
	One may also have different masses and charges for the nuclei
	but we will stick to this technically simplifying assumption here.
	Furthermore, we could also consider the nuclei to be fixed 
	at the positions $\sR = (\bR_k)_{k=1}^M$
	by formally taking 
	$m_n = +\infty$, or just remove their kinetic energy terms in 
	\eqref{eq:matter-Hamiltonian}:
	\begin{equation}\label{eq:matter-Hamiltonian-fix}
		\hH^N(\sR) := \sum_{j=1}^N \frac{1}{2m_e} 
			(-\Delta_{\bx_j}) + W_\sC(\sx,\sR).
	\end{equation}
	Since $\hH^{N,M} \ge \hH^N(\sR)$ as quadratic forms on $H^1(\R^{d(N+M)})$, 
	obtaining a lower bound for \eqref{eq:matter-Hamiltonian-fix}
	which is uniform in $\sR$ will then also be valid as a lower bound for
	$\hH^{N,M}$.

\subsection{Stability of the first kind}\label{sec:stability-first}

	That the Hamiltonian operators
	\eqref{eq:matter-Hamiltonian} and \eqref{eq:matter-Hamiltonian-fix} 
	are bounded from below, i.e. stability of the \emph{first} kind in the 
	terminology of Definition~\ref{def:stability},
	is a simple consequence of the uncertainty principle.
	Namely, recall the stability of the single hydrogenic atom,
	\begin{equation}\label{eq:first-stability-hydrogenic}
		-\frac{1}{2m}\Delta_{\R^3} - \frac{Z}{|\bx|} 
		\ \ge \ -\frac{1}{2}mZ^2,
	\end{equation}
	which is the sharp lower bound from the ground-state solution \eqref{eq:hydrogenic-groundstate}
	or, with just a slightly worse constant,
	from Hardy \eqref{eq:hydrogenic-bound-Hardy}
	or Sobolev \eqref{eq:GNS-hydrogenic-minimization}.

\begin{theorem}[{\keyword[stable of the first kind]{Stability of the first kind}}]
	For any number of particles $N,M \ge 1$, 
	for any positions of the nuclei $\sR \in \R^{3M}$, 
	and for any mass $m > 0$ and charge $Z > 0$, we have
	\begin{equation}\label{eq:first-stability-bound-fix}
		\hH^N(\sR) 
		\ \ge \ -\frac{1}{2} m_e Z^2 N M^2,
	\end{equation}
	with respect to the form domain $H^1(\R^{3N})$ (with unrestricted symmetry),
	from which $\hH^{N}(\sR)$ extends to a bounded-from-below 
	(uniformly in $\sR$)
	self-adjoint operator on $\cH^N = L^2(\R^{3N})$.
	Hence,
	\begin{equation}\label{eq:first-stability-bound}
		\hH^{N,M} \ge \inf_{\sR \in \R^{3M}} \hH^N(\sR) 
		\ge \ -\frac{1}{2} m_e Z^2 N M^2,
	\end{equation}
	in the sense of forms on $H^1(\R^{3(N+M)})$,
	from which $\hH^{N,M}$ extends to a bounded-from-below 
	self-adjoint operator on $\cH^{N,M} = L^2(\R^{3(N+M)})$.
\end{theorem}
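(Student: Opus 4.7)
The plan is to reduce the many-body problem to a sum of hydrogenic-type one-body problems via a simple energy partitioning, and then invoke the hydrogenic bound \eqref{eq:first-stability-hydrogenic} pair by pair.

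First, I would work with the fixed-nuclei operator $\hH^N(\sR)$ on the unsymmetrized form domain $H^1(\R^{3N})$. Observe that the repulsive terms in $W_\sC$, namely the electron-electron term $\sum_{i<j} |\bx_i - \bx_j|^{-1}$ and the nucleus-nucleus term $\sum_{k<l} Z^2 |\bR_k - \bR_l|^{-1}$, are non-negative, and the latter is just a constant (as a quadratic form in $\sx$) while the former defines a non-negative form on $H^1(\R^{3N})$. Hence I may discard both and obtain
\begin{equation}\label{eq:stab-drop}
	\hH^N(\sR) \ \ge \ \sum_{j=1}^N \frac{1}{2m_e}(-\Delta_{\bx_j})
		- \sum_{j=1}^N \sum_{k=1}^M \frac{Z}{|\bx_j-\bR_k|}.
\end{equation}

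The decisive step is to split each electron's kinetic energy equally among the $M$ nuclei, regrouping \eqref{eq:stab-drop} into $NM$ hydrogenic pieces:
\begin{equation*}
	\hH^N(\sR) \ \ge \ \sum_{j=1}^N \sum_{k=1}^M \left[ \frac{1}{2 m_e M}(-\Delta_{\bx_j}) - \frac{Z}{|\bx_j - \bR_k|} \right].
\end{equation*}
Each bracketed term acts as a one-body hydrogenic Hamiltonian in the variable $\bx_j$ with effective mass $m_e M$ (while the other electron coordinates are frozen as parameters). Applying the hydrogenic lower bound \eqref{eq:first-stability-hydrogenic} with $m := m_e M$ yields
\begin{equation*}
	\frac{1}{2 m_e M}(-\Delta_{\bx_j}) - \frac{Z}{|\bx_j - \bR_k|} \ \ge \ -\tfrac{1}{2} m_e M Z^2
\end{equation*}
uniformly in $\sR$, and summing over the $NM$ pairs $(j,k)$ gives the desired bound \eqref{eq:first-stability-bound-fix}. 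Since the resulting form is semi-bounded from below on the dense domain $H^1(\R^{3N})$, Friedrichs extension (Theorem~\ref{thm:Friedrichs} together with Theorem~\ref{thm:form-operator}) produces the associated self-adjoint realization on $\cH^N = L^2(\R^{3N})$.

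For the full operator $\hH^{N,M}$ on $H^1(\R^{3(N+M)})$, I would simply note that the nuclear kinetic term $\sum_k (2m_n)^{-1}(-\Delta_{\bR_k}) \ge 0$ as a form, so writing
\begin{equation*}
	\langle \Psi, \hH^{N,M} \Psi \rangle
	= \sum_{k=1}^M \frac{1}{2m_n}\|\nabla_{\bR_k}\Psi\|^2
		+ \int_{\R^{3M}} \langle \Psi(\cdot,\sR), \hH^N(\sR)\Psi(\cdot,\sR)\rangle_{L^2(d\sx)} \, d\sR
\end{equation*}
and using the uniform-in-$\sR$ bound \eqref{eq:first-stability-bound-fix} on the integrand gives \eqref{eq:first-stability-bound}, after which Friedrichs extension again yields a self-adjoint operator on $\cH^{N,M}$. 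There is essentially no analytic obstacle here --- the content is purely that the uncertainty principle, already strong enough to stabilize a single atom, can be shared pair-by-pair thanks to the positivity of the repulsive Coulomb terms; the only price for this generous sharing is the factor $M^2$ on the right, which of course is catastrophically non-linear in the total particle number $N+M$ and thus fails stability of the second kind, motivating the work of the next section.
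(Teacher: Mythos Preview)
Your proof is correct and follows essentially the same approach as the paper: discard the repulsive Coulomb terms, split each electron's kinetic energy into $M$ equal pieces to form $NM$ hydrogenic Hamiltonians with effective mass $m_eM$, apply \eqref{eq:first-stability-hydrogenic} to each, and sum. Your treatment is slightly more explicit about invoking Friedrichs extension for the self-adjoint realization, which the paper leaves implicit in the theorem statement.
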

\begin{proof}
	Let us simply throw away the positive terms in $W_\sC$ and write
	in terms of forms
	\begin{align*}
		\hH^N(\sR) \ge \sum_{j=1}^N \sum_{k=1}^M \left[
			\frac{1}{2m_e M}(-\Delta_{\bx_j}) - \frac{Z}{|\bx_j-\bR_k|} \right]
			\ge -\sum_{j=1}^N \sum_{k=1}^M \frac{1}{2} m_e MZ^2
			= -\frac{1}{2} m_e Z^2 NM^2,
	\end{align*}
	by \eqref{eq:first-stability-hydrogenic}.
	This proves \eqref{eq:first-stability-bound-fix}, and by the inequality
	$$
		\inp{\Psi,\hH^{N,M}\Psi}_{L^2(\R^{3(N+M)})} \ge 
			\int_{\R^{3M}} \inp{\Psi(\slot,\sR),\hH^N(\sR)\Psi(\slot,\sR)}_{L^2(\R^{3N)}} d\sR
	$$
	applied on arbitrary $\Psi \in H^1(\R^{3(N+M)})$,
	also \eqref{eq:first-stability-bound}.
\end{proof}

\subsection{Some electrostatics}\label{sec:stability-electro}

	In order to improve the above bound to a linear one in $N+M$,
	i.e.\ prove stability of the second kind, we also need some important results
	on electrostatics.
	See \cite{Seiringer-10} or \cite{LieSei-09} for now...

\subsubsection{Coulomb interactions and Newton's theorem}\label{sec:stability-electro-Newton}

\subsubsection{Baxter's inequality}\label{sec:stability-electro-Baxter}

	The main result that we need for the proof of stability is the following
	inequality \cite{Baxter-80}, which replaces the $N^2 + NM + M^2$ terms of the full Coulomb 
	interaction \eqref{eq:matter-Coulomb} by only $N+M$ nearest-neighbor terms
	\cite[Theorem~5.4]{LieSei-09}:

\begin{theorem}[\keyword{Baxter's inequality}]\label{thm:Baxter}
	For any $\sx \in \R^{3N}$, $\sR \in \R^{3M}$, and $Z \ge 0$ we have
	\begin{equation}\label{eq:Baxter}
		W_\sC(\sx,\sR) \ge -(2Z+1) \sum_{j=1}^N \frac{1}{\dist(\bx_j,\sR)}
			+ \frac{Z^2}{4} \sum_{k=1}^M \frac{1}{\dist(\bR_k,\sR_{\setminus k})}.
	\end{equation}
\end{theorem}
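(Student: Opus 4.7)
The plan is to replace each point nucleus by a uniformly charged sphere and then exploit the positivity of the Coulomb self-energy, which is the standard strategy going back to Onsager. Specifically, for each $k$ set $D_k := \tfrac{1}{2}\dist(\bR_k,\sR_{\setminus k})$ and let $\nu_k$ denote the uniform surface measure of total mass $Z$ on $\partial B_{D_k}(\bR_k)$. By construction the balls $B_k := B_{D_k}(\bR_k)$ are pairwise disjoint (they sit inside disjoint Voronoi cells), so Newton's theorem will give closed-form expressions for all cross interactions: the potential generated by $\nu := \sum_k \nu_k$ is
\[
    \Phi_\nu(\bx) \;=\; \sum_k \frac{Z}{\max(|\bx-\bR_k|,\,D_k)},
\]
and the mutual interaction energy of $\nu_k$ with $\nu_l$ for $k \neq l$ is exactly $Z^2/|\bR_k-\bR_l|$, while the self-energy of $\nu_k$ is $Z^2/(2D_k)$.

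Next I would insert a telescoping identity based on the positivity $D(\sigma,\sigma) \ge 0$ applied (formally) to the signed measure $\sigma = \sum_j \delta_{\bx_j} - \nu$. To avoid the divergent electronic self-energies I would smear each electron onto an auxiliary sphere of vanishing radius $\eps_j$ centered at $\bx_j$ (shrunk enough to stay inside the Voronoi cell of the nearest nucleus and not to meet any other electron or shell), apply positivity, and then take $\eps_j \downarrow 0$; the point--point electron interactions $1/|\bx_i-\bx_j|$ survive via Newton, while the divergent self-energies simply cancel. The outcome of this manipulation is an identity of the form
\[
    W_\sC(\sx,\sR) \;=\; -\sum_{j=1}^N \Phi_\nu(\bx_j) \;+\; D(\nu,\nu) \;-\; \sum_k \frac{Z^2}{2D_k} \;+\; D(\textstyle\sum_j \delta_{\bx_j}-\nu,\,\sum_j \delta_{\bx_j}-\nu)_{\mathrm{reg}} ,
\]
where the last term is non-negative after regularization. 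Using Newton's computation of $D(\nu,\nu) - \sum_k Z^2/(2D_k) = \sum_{k<l} Z^2/|\bR_k-\bR_l|$ and separating out from $\sum_{k<l} Z^2/|\bR_k-\bR_l|$ the nearest-neighbor contribution $\tfrac{Z^2}{4}\sum_k 1/\dist(\bR_k,\sR_{\setminus k})$ (with room to spare, since each pair appears with coefficient $1$ and $\tfrac{1}{\dist(\bR_k,\sR_{\setminus k})} = \tfrac{1}{2D_k}$), one reduces the inequality to the pointwise one-body bound
\[
    \Phi_\nu(\bx) \;\le\; \frac{2Z+1}{\dist(\bx,\sR)}\qquad \forall\, \bx \in \R^3.
\]

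The main obstacle is this last geometric estimate, and getting the sharp constant $2Z+1$ is the delicate part. The nearest-nucleus term already contributes $Z/\dist(\bx,\sR)$, so one has to control the remaining sum $\sum_{k \neq k(\bx)} Z/\max(|\bx-\bR_k|,D_k)$, where $k(\bx)$ is the index of the nucleus closest to $\bx$. The key geometric input is that the disjoint balls $\{B_{D_k}(\bR_k)\}_k$ admit a packing estimate: distance from $\bx$ to $\bR_k$ is at least half the distance between $\bR_{k(\bx)}$ and $\bR_k$ (from the Voronoi property), and combined with disjointness of the $B_k$ this yields a summable geometric series whose total is bounded by $(Z+1)/\dist(\bx,\sR)$; the extra ``$+1$'' absorbs the residual electron self-interaction contribution from the positivity step. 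Putting the one-body bound together with the identity above yields \eqref{eq:Baxter}.
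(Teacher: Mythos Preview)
The paper does not actually prove this theorem; Section~6.2 just states it and refers to \cite{Baxter-80} and \cite[Theorem~5.4]{LieSei-09}. So there is no ``paper's own proof'' to compare against, but your attempt still needs assessment on its merits, and it has two genuine gaps.

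First, your displayed identity for $W_\sC$ is not correct. If $\sigma = \sum_j \delta_{\bx_j} - \nu$ and you regularize the electrons on $\eps$-spheres, expanding $D(\sigma^\eps,\sigma^\eps)$ gives
\[
D(\sigma^\eps,\sigma^\eps) \;=\; \sum_j \frac{1}{2\eps_j} + \sum_{i<j}\frac{1}{|\bx_i-\bx_j|} - \sum_j \Phi_\nu(\bx_j) + D(\nu,\nu).
\]
The divergent self-energies $1/(2\eps_j)$ do \emph{not} ``simply cancel''; there is nothing for them to cancel with. If you subtract them by hand and call the remainder $D(\sigma,\sigma)_{\mathrm{reg}}$, that quantity is in general \emph{not} non-negative (take $N=1$, $M=2$, $\bx_1=\bR_1$, $Z$ small). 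Moreover, even granting your identity, it replaces the true electron--nucleus attraction $-\sum_{j,k} Z/|\bx_j-\bR_k|$ by $-\sum_j\Phi_\nu(\bx_j)$, which is a different (strictly larger) quantity whenever an electron sits inside a nuclear ball.

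Second, and more seriously, the pointwise bound $\Phi_\nu(\bx)\le(2Z+1)/\dist(\bx,\sR)$ on which you ultimately rely is \emph{false}. Place $M$ nuclei on a finite cubic lattice of spacing $1$ and take $\bx$ near the center: then $\dist(\bx,\sR)\approx 1/2$, but $\Phi_\nu(\bx)=\sum_k Z/\max(|\bx-\bR_k|,1/2)$ grows like $Z\cdot M^{2/3}$ as $M\to\infty$. The nucleus--nucleus repulsion (which you discarded after extracting the nearest-neighbor piece) is precisely what compensates for this divergence; you cannot throw it away and then hope to bound $\Phi_\nu$ pointwise.

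The correct Onsager-type argument (as in \cite[Theorem~5.3 and Lemma~5.4]{LieSei-09}) smears each electron on a sphere of radius comparable to its distance to the nearest nucleus, so the electron self-energies become finite and of order $1/\dist(\bx_j,\sR)$ --- this is the origin of the ``$+1$'' in $2Z+1$. One then does not bound $\Phi_\nu$ but rather $\Phi_\nu(\bx)-\Phi_{k(\bx)}(\bx)$, the smeared potential with the nearest-nucleus contribution removed; this \emph{is} controlled by $Z/D_{k(\bx)}$ via Voronoi geometry (using $|\bx-\bR_l|\ge\tfrac12|\bR_{k(\bx)}-\bR_l|$ and harmonicity), and the nuclear repulsion kept in the positivity step absorbs the rest.
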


\subsection{Proof of stability of the second kind}\label{sec:stability-proof}

	We will now apply the Lieb--Thirring inequality 
	in the simplest available proof of stability for fermions,
	due to Solovej \cite{Solovej-06b} (see also \cite[Section~7.2]{LieSei-09}),
	as well as in a straightforward generalization
	for inverse-square repulsive bosons \cite{LunPorSol-15}
	in 3D.

\subsubsection{Fermions}\label{sec:stability-proof-fermions}

	We consider first fermions, possibly with $q$ different species 
	or spin states.

\begin{theorem}[\keyword{Stability for fermionic matter}]
	\label{thm:fermionic-stability}
	For any number of particles $N,M \ge 1$, 
	for any positions of the nuclei $\sR \in \R^{3M}$, 
	and for any mass $m > 0$ and charge $Z > 0$, we have
	\begin{equation}\label{eq:stability-bound-fermions}
		\hH^N(\sR) := \sum_{j=1}^N \frac{1}{2m} 
			(-\Delta_{\bx_j}) + W_\sC(\sx,\sR)
		\ \ge \ -1.073 \,q^{2/3} m(2Z+1)^2 (N+M),
	\end{equation}
	where the domain of the operator $\hH^N(\sR)$ is defined w.r.t. 
	$q$-\emph{antisymmetric}
	functions with the form domain $H^1_{\asym(q)}(\R^{3N})$.
\end{theorem}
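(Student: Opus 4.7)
The plan is to combine Baxter's electrostatic inequality (Theorem~\ref{thm:Baxter}) with the fermionic Lieb--Thirring inequality (Corollary~\ref{thm:LT-q}), reducing the many-body stability problem to a one-body estimate that can be balanced against the positive nuclear repulsion.

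First I would apply Baxter's inequality as a form lower bound on $H^1_{\asym(q)}(\R^{3N})$ to obtain
\begin{equation*}
\hH^N(\sR) \ \ge\ \sum_{j=1}^N \hh_j \,+\, \frac{Z^2}{4}\sum_{k=1}^M D_k^{-1},
\qquad \hh := -\frac{1}{2m}\Delta - \frac{2Z+1}{\dist(\cdot,\sR)},
\end{equation*}
with $D_k := \dist(\bR_k,\sR_{\setminus k})$. This trades the quadratic many-body Coulomb interaction for $N$ copies of a single one-body Schr\"odinger operator plus a manifestly non-negative nuclear repulsion, and is exactly the reduction needed to bring Corollary~\ref{thm:LT-q} to bear.

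Next, I would apply LT to $\sum_j \hh_j$. Since $\int_{\R^3}\dist(\bx,\sR)^{-5/2}d\bx$ diverges at infinity, a direct application fails, so the attractive potential must first be localized. I would introduce the pairwise disjoint balls $B_{D_k/2}(\bR_k) \subseteq \Gamma_k$ (disjointness following from the triangle inequality applied to the definition of $D_k$) and replace $V = -(2Z+1)/\dist(\cdot,\sR)$ by a pointwise smaller potential $\tilde V \le V$ which coincides with $V$ inside each ball and equals the bounded constant $-2(2Z+1)/D_k$ on $\Gamma_k\setminus B_{D_k/2}(\bR_k)$. Applying Corollary~\ref{thm:LT-q} to $-\Delta/(2m) + \tilde V$ cell by cell, combined with an IMS localization of the kinetic energy subordinate to the Voronoi partition---which contributes an additional term linear in $N$ from the localization error and the bounded tail---and using the elementary identity $\int_{B_R(0)}|\bx|^{-5/2}d\bx = 8\pi R^{1/2}$, yields a bound of the schematic form
\begin{equation*}
\sum_{j=1}^N\langle\Psi,\hh_j\Psi\rangle \ \ge \ -C_1\, q\,m^{3/2}(2Z+1)^{5/2}\sum_k D_k^{1/2} \,-\, C_2\, q^{2/3} m(2Z+1)^2\, N.
\end{equation*}

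The final step is to balance the per-nucleus LT contribution against the Baxter repulsion: the expression $Z^2/(4D_k) - C_1 q m^{3/2}(2Z+1)^{5/2}D_k^{1/2}$ is \emph{not} bounded below as $D_k\to\infty$, so no naive Young inequality closes the estimate---this is the main obstacle. The resolution is to withhold a small fraction $\eps T$ of the kinetic energy before applying LT and use it, through an additional localization argument at the intrinsic atomic length scale $\sim 1/(mZ)$, to enforce confinement of the electrons to a bounded neighborhood of the nuclei and thereby produce a further positive contribution that closes the balance for large $D_k$. A joint optimization over $\eps$, the ball radii, and the confinement scale then yields a universal per-nucleus bound $\ge -C\, q^{2/3} m(2Z+1)^2$. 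Summing over $k=1,\ldots,M$ gives the linear-in-$M$ contribution while the IMS error and bounded-tail term provide the linear-in-$N$ contribution, completing the estimate. Tracking the explicit constant $1.073\, q^{2/3}$ requires the best rigorously known value of the LT constant $L_3$ (equivalently $K_3$; cf.\ Remark~\ref{rem:LT-conjecture}) together with the careful numerical optimization of the above free parameters.
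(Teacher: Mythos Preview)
Your approach is considerably more complicated than the paper's, and the final step---your proposed resolution of the $D_k\to\infty$ obstacle---is too vague to constitute a proof.

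The paper's key trick is much simpler: rather than retaining the positive nuclear repulsion term from Baxter's inequality and trying to balance it against the LT contribution, one simply \emph{discards} that positive term and instead subtracts a free constant $b>0$ from the attractive potential. That is,
\[
\hH^N(\sR) \ \ge\ \frac{1}{2m}\sum_{j=1}^N\Bigl[-\Delta_{\bx_j} - 2m(2Z+1)\Bigl(\frac{1}{\dist(\bx_j,\sR)} - b\Bigr)\Bigr] - (2Z+1)Nb.
\]
Now the effective negative part $\bigl[\frac{1}{\dist(\bx,\sR)}-b\bigr]_+$ is supported in $\bigcup_k B_{1/b}(\bR_k)$, so the LT integral converges and (bounding the max over $k$ by the sum) is at most $M\cdot\frac{5\pi^2}{4}b^{-1/2}$. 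Corollary~\ref{thm:LT-q} then gives a bound of the form $-C_1 q\, m^{3/2}(2Z+1)^{5/2}Mb^{-1/2} - (2Z+1)Nb$, and optimizing in $b$ balances these two terms to yield $-\frac{3}{2}(5\pi^2 L_3)^{2/3}q^{2/3}m(2Z+1)^2 M^{2/3}N^{1/3}$, from which Young's inequality and the best known $L_3$ give the constant $1.073$. No IMS localization, no Voronoi partition, and no $D_k$-dependence enter.

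Your difficulty is self-imposed: by tying the localization radius to the nearest-neighbor distances $D_k$, you force the LT integral to scale like $D_k^{1/2}$ in the far-separated regime, and then the only positive term you have left, $Z^2/(4D_k)$, is too weak to compensate. Your proposed fix---withholding $\eps T$ to ``enforce confinement'' at the atomic scale---is at best a roundabout route to discovering that a \emph{fixed} cutoff scale independent of $D_k$ is what is needed; but as written you do not say what positive quantity the withheld kinetic energy actually produces or how it dominates $D_k^{1/2}$. The clean realization of that idea is precisely the paper's subtraction of a uniform constant $b$.
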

\begin{remark}
	Note that the result concerns only the symmetry type of one of the 
	species of particles involved (here the electrons), 
	and that the other particles at $\bR_k$ may thus 
	be of any type: quantum mechanical bosons, distinguishable,
	or even fixed classical particles.
\end{remark}
\begin{proof}
	We only need the simpler lower bound given by Baxter's inequality,
	Theorem~\ref{thm:Baxter},
	$$
		W_\sC(\sx,\sR) \ge -(2Z+1) \sum_{j=1}^N \frac{1}{\dist(\bx_j,\sR)}.
	$$
	The trick is then just to add and subtract a constant $b>0$ to the 
	Hamiltonian, writing for simplicity
	\begin{align*}
		\hH^N(\sR) &\ge \frac{1}{2m}\sum_{j=1}^N \left[
			-\Delta_{\bx_j} - 2m(2Z+1) \left( \frac{1}{\dist(\bx_j,\sR)} - b \right) \right]
			- (2Z+1)Nb \\
		&\ge -\frac{1}{2m} qL_3 \bigl( 2m(2Z+1) \bigr)^{5/2} 
			\int_{\R^3} 	\left[ \frac{1}{\dist(\bx,\sR)} - b \right]_+^{5/2} d\bx
			- (2Z+1)Nb,
	\end{align*}
	where in the second step we crucially used the domain of the operator 
	and the fermionic Lieb--Thirring inequality of Corollary~\ref{thm:LT-q}.
	Furthermore, using that
	$$
		\left[ \frac{1}{\dist(\bx,\sR)} - b \right]_+^{5/2}
		= \max_{1 \le k \le M} \left[ \frac{1}{|\bx - \bR_k|} - b \right]_+^{5/2}
		\le \sum_{k=1}^M \left[ \frac{1}{|\bx - \bR_k|} - b \right]_+^{5/2},
	$$
	one may bound
	$$
		\int_{\R^3} \left[ \frac{1}{\dist(\bx,\sR)} - b \right]_+^{5/2} d\bx
		\le M \int_{|\bx| \le 1/b} \left( \frac{1}{|\bx|} - b \right)^{5/2} d\bx
		= \frac{5\pi^2}{4} Mb^{-1/2}.
	$$
	Hence,
	$$
		\hH^N(\sR) \ge -qL_3 \frac{5\pi^2}{4} (2m)^{3/2} (2Z+1)^{5/2} M b^{-1/2} 
			- (2Z+1) Nb,
	$$
	and after optimizing in $b>0$,
	$$
		\hH^N(\sR) \ge -\frac{3}{2} (5\pi^2 L_3)^{2/3} q^{2/3} m(2Z+1)^2 M^{2/3} N^{1/3}
		\ge -1.073 \,q^{2/3} m(2Z+1)^2 (N+M),
	$$
	where in the final step we used the best presently known value for $L_3$
	(see Remark~\ref{rem:LT-conjecture}) and Young's inequality~\eqref{eq:Young}.
\end{proof}

\subsubsection{Inverse-square repulsive bosons}\label{sec:stability-proof-bosons}

	We now consider a system of $N+M$ charged particles, 
	subject to the usual Coulomb interaction $W_\sC$,
	but $N$ of which also experience an additional inverse-square 
	repulsive interaction with coupling parameter $\beta$.
	If we do not impose any symmetry conditions on the particles,
	the ground state is known to be bosonic. 
	However, thanks to the bosonic Lieb--Thirring inequality of Theorem~\ref{thm:BLT},
	we nevertheless do have stability for $\beta > 0$:

\begin{theorem}[\keyword{Stability for inverse-square repulsive bosons}]
	\label{thm:invsquare-stability}
	For any coupling strength $\beta > 0$ 
	there exists a positive constant $C(\beta) = CL_3(\beta)^{2/3} > 0$,
	such that for any number of particles $N,M \ge 1$, 
	for any positions $\sR \in \R^{3M}$, 
	and for any mass $m > 0$ and charge $Z > 0$, 
	we have
	\begin{align}\label{eq:stability-bound-invsquare}
		\hH^N_\beta(\sR) &:= \frac{1}{2m} \left[ \sum_{j=1}^N (-\Delta_{\bx_j}) 
			+ \sum_{1 \le j<k \le N} \frac{\beta}{|\bx_j - \bx_k|^2} \right]
			+ W_\sC(\sx,\sR) \nonumber\\
		&\ge \ -C(\beta) \,m(2Z+1)^2 (N+M),
	\end{align}
	where the operator $\hH^N_\beta(\sR)$ is defined as the Friedrichs extension 
	w.r.t.\ the form domain 
	$\cH^1_{d,N}$ in \eqref{eq:BLT-domain}
	(with no symmetry assumptions on 
	the wave functions).
\end{theorem}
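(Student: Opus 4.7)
The plan is to mirror the proof of the fermionic Theorem~\ref{thm:fermionic-stability} essentially verbatim, with the bosonic Lieb--Thirring inequality (Theorem~\ref{thm:BLT}) playing the role that the fermionic LT (Corollary~\ref{thm:LT-q}) played there. The whole point of having proved BLT in its Schr\"odinger form \eqref{eq:BLT-Schroedinger} is that the $\beta$-coupled inverse-square repulsion already sits on the left-hand side, so it can absorbed jointly with the kinetic term against a negative one-body potential, without any symmetry assumption on $\Psi$.

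First I would apply Baxter's inequality (Theorem~\ref{thm:Baxter}) and throw away the non-negative nucleus--nucleus term to obtain
\begin{equation*}
	\hH^N_\beta(\sR) \ \ge\ \frac{1}{2m}\Bigl[\hT + \beta \hW\Bigr]
		-(2Z+1)\sum_{j=1}^N \frac{1}{\dist(\bx_j,\sR)},
\end{equation*}
where $\hT$ and $\hW$ denote the kinetic and inverse-square interaction forms as in BLT. Then, following Solovej's trick, I would introduce a parameter $b>0$ and rewrite the right-hand side as
\begin{equation*}
	\frac{1}{2m}\Bigl[\hT + \beta \hW + \hat{V}_b\Bigr] - (2Z+1)Nb,
	\qquad V_b(\bx) := -2m(2Z+1)\Bigl(\tfrac{1}{\dist(\bx,\sR)}-b\Bigr).
\end{equation*}
Since $\Psi$ lies in the form domain $\cH^1_{d,N}$ of \eqref{eq:BLT-domain}, the Schr\"odinger form of BLT applies and gives
\begin{equation*}
	\tfrac{1}{2m}\bigl(\inp{\hT}_\Psi + \inp{\hat V_b}_\Psi + \beta\inp{\hW}_\Psi\bigr)
	\ \ge\ -\tfrac{1}{2m}\,L_3(\beta)(2m)^{5/2}(2Z+1)^{5/2}
		\int_{\R^3} \Bigl[\tfrac{1}{\dist(\bx,\sR)}-b\Bigr]_+^{5/2} d\bx.
\end{equation*}

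Next I would bound the singular integral exactly as in the proof of Theorem~\ref{thm:fermionic-stability}, using
\begin{equation*}
	\Bigl[\tfrac{1}{\dist(\bx,\sR)}-b\Bigr]_+^{5/2}
	\le \sum_{k=1}^M \Bigl[\tfrac{1}{|\bx-\bR_k|}-b\Bigr]_+^{5/2}
\end{equation*}
and the explicit radial integral $\int_{|\by|\le 1/b}(|\by|^{-1}-b)^{5/2}\,d\by = \tfrac{5\pi^2}{4}b^{-1/2}$, yielding a lower bound of the form $-A(\beta,Z,m)\,M b^{-1/2} - (2Z+1)Nb$ with $A(\beta,Z,m) = \tfrac{5\pi^2}{4}L_3(\beta)(2m)^{3/2}(2Z+1)^{5/2}$. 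Optimizing over $b>0$ (elementary minimization of $Ab^{-1/2}+Cb$ gives $\tfrac{3}{2^{2/3}}A^{2/3}C^{1/3}$) then produces
\begin{equation*}
	\hH^N_\beta(\sR)\ \ge\ -C\,L_3(\beta)^{2/3}\,m(2Z+1)^{2}\,M^{2/3}N^{1/3},
\end{equation*}
and a final application of Young's inequality $M^{2/3}N^{1/3}\le N+M$ closes the argument.

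There is no genuine obstacle beyond careful bookkeeping of the constants: the non-trivial ingredient (BLT in Schr\"odinger form, with its $\beta$-dependent constant $L_3(\beta)$) has already been proved, and Baxter's inequality is symmetry-free, so the entire Solovej argument transplants. The only mildly delicate point is that the Friedrichs-extension definition of $\hH^N_\beta(\sR)$ on $\cH^1_{d,N}$ must be invoked to make sense of writing $\hat V_b$ together with $\hW$; but since BLT is stated directly as a form inequality on exactly this domain, no additional self-adjointness considerations arise. The qualitative conclusion is that the repulsive interaction, however weak, provides just enough effective exclusion through $L_3(\beta)<\infty$ to restore linear stability, with the blow-up of $C(\beta)\sim L_3(\beta)^{2/3}$ as $\beta\to 0$ reflecting (cf.\ Remark~\ref{rem:BLT-constant}) the true bosonic instability at $\beta=0$.
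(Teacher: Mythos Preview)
Your proposal is correct and follows essentially the same approach as the paper's proof: both mimic the fermionic stability argument of Theorem~\ref{thm:fermionic-stability} step for step, replacing the fermionic Lieb--Thirring inequality by the bosonic one \eqref{eq:BLT-Schroedinger} so that the net effect is the substitution $qL_3 \to L_3(\beta)$ in all constants. The paper's version is in fact more terse than yours, simply noting this substitution and deferring the integral bound and optimization in $b$ to the earlier proof.
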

\begin{remark}\label{rem:BLT-instability}
	We note that since $K_d(\beta) \sim \beta^{2/d}$
	or equivalently $L_d(\beta) \sim \beta^{-1}$ as $\beta \to 0$ for $d \ge 3$
	(see Remark~\ref{rem:BLT-constant}), one has
	$C(\beta) \sim \beta^{-2/3} \to \infty$ in the limit of weak interactions.
	Taking $\beta \sim N^{-1}$ for example corresponds to a 
	\keyword{mean-field scaling} 
	of the interaction, and evidently leads to bound 
	similar to $q \sim N$ for fermions, which is too weak for stability.
	Also note that it may be more natural to 
	incorporate the mass factor $2m$ by rescaling the parameter $\beta$.
\end{remark}
\begin{proof}
	We mimic the proof of Theorem~\ref{thm:fermionic-stability},
	replacing the fermionic LT with the BLT \eqref{eq:BLT-Schroedinger} 
	of Theorem~\ref{thm:BLT}. Thus,
	\begin{align*}
		\hH^N_\beta(\sR) &\ge \frac{1}{2m} \left[ \hT + \hW_\beta + \hV \right]
			- (2Z+1)Nb \\
		&\ge -\frac{1}{2m} L_3(\beta) \int_{\R^3} |V_-|^{5/2}
			- (2Z+1)Nb,
	\end{align*}
	where
	$$
		V(\bx) := - 2m(2Z+1) \left( \frac{1}{\dist(\bx,\sR)} - b \right).
	$$
	The remainder of the proof then goes through exactly as before, 
	with the replacement
	$qL_3 \to L_3(\beta)$.
\end{proof}

\subsubsection{Two dimensions}\label{sec:stability-proof-2D}

	Consider particles that have been confined to a thin layer, 
	for example by means of a strong transverse external potential, such as
	$$
		V(x,y,z) = V_2(x,y) + C_3|z|^2, \qquad C_3 \gg 1.
	$$
	Hence they may effectively only move in the two-dimensional plane $z=0$,
	but their interactions could still be the usual Coulomb interactions 
	due to electromagnetism that propagates freely in three dimensions.
	(If electromagnetism would instead be propagating in only two dimensions, one
	could argue to instead use the logarithmic 2D analog of the Coulomb potential;
	see \cite{ManSir-06} for this case.)
	We may thus keep the interaction term $W_\sC$ as it is 
	and consider stability under the assumption that all particles are situated
	in the plane $\R^2$:
	\begin{equation}\label{eq:matter-Hamiltonian-2D}
		\hH^{N,M} := \sum_{j=1}^N 
			\frac{\hbar^2}{2m_e} (-\Delta_{\bx_j})
			+ \sum_{k=1}^M 
			\frac{\hbar^2}{2m_n} (-\Delta_{\bR_k})
			+ W_\sC(\sx,\sR),
	\end{equation}
	or, in the case of fixed `nuclei' (or impurities in the layer),
	\begin{equation}\label{eq:matter-Hamiltonian-2D-Rfix}
		\hH^N(\sR) := \sum_{j=1}^N \frac{\hbar^2}{2m_e} (-\Delta_{\bx_j})
			+ W_\sC(\sx,\sR),
	\end{equation}
	with $\sx = (\bx_1,\ldots,\bx_N) \in \R^{2N}$
	and $\sR = (\bR_1,\ldots,\bR_M) \in \R^{2M}$.

	In this case there are some additional technical complications due to
	the dimensionality, and
	one needs to use the positive terms in Baxter's inequality
	\eqref{eq:Baxter} as well;
	see e.g. \cite[Section~6.2]{LunSol-14}
	for details, leading to the following theorem:

\begin{theorem}[\keyword{Stability for fermions in 2D}]
	There exists a constant $C>0$ such that
	for any number of particles $N,M \ge 1$, 
	for any positions of the nuclei $\sR \in \R^{2M}$, 
	and for any mass $m > 0$ and charge $Z > 0$, we have
	\begin{equation}\label{eq:stability-bound-fermions-2D}
		\hH^N(\sR) := \sum_{j=1}^N \frac{1}{2m} 
			(-\Delta_{\bx_j}) + W_\sC(\sx,\sR)
		\ \ge \ -C \,q m(2Z+1)^2 (N+M),
	\end{equation}
	where the domain of the operator $\hH^N(\sR)$ is defined w.r.t. 
	$q$-\emph{antisymmetric}
	functions with the form domain $H^1_{\asym(q)}(\R^{2N})$.
\end{theorem}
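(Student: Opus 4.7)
The plan is to mimic the proof of Theorem~\ref{thm:fermionic-stability} but accommodate two fundamental complications that appear in two dimensions. First, I will apply Baxter's inequality (Theorem~\ref{thm:Baxter}) exactly as in 3D to reduce
\[
	W_\sC(\sx,\sR) \geq -(2Z+1)\sum_{j=1}^N \frac{1}{\dist(\bx_j,\sR)} + \frac{Z^2}{4}\sum_{k=1}^M \frac{1}{\delta_k},
\]
with $\delta_k := \dist(\bR_k, \sR_{\setminus k})$. The crucial difference from the 3D proof is that I must now \emph{retain} the positive nuclear-repulsion term, because the 2D Lieb--Thirring integral $\int |V_-|^{1+d/2} = \int V_-^2$ is logarithmically divergent at each Coulomb singularity, so the naive 3D-style shift-and-bound strategy fails.

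To tame the singularities, introduce Voronoi cells $D_k$ about each $\bR_k$ and balls $B_{r_k}(\bR_k) \subseteq D_k$ with radii $r_k := c\,\delta_k$ for some small constant $c \in (0,1/2)$ (so the balls are disjoint). Split the kinetic energy as $T = \epsilon T + (1-\epsilon)T$ for some $\epsilon \in (0,1)$. On each ball $B_{r_k}(\bR_k)$ the portion $\epsilon T$ controls the Coulomb singularity via the one-body hydrogenic-type operator inequality
\[
	\frac{\epsilon}{2m}(-\Delta) - \frac{2Z+1}{|\bx-\bR_k|} \ \geq \ -C\,m(2Z+1)^2/\epsilon,
\]
valid on $H^1(\R^2)$ (a direct consequence of the GNS-based proof of Theorem~\ref{thm:stability-by-Sobolev} or a 2D Hardy-type argument). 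Summed over electrons this yields a contribution of order $-CN m(2Z+1)^2/\epsilon$, already linear in $N$.

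For the remaining part $(1-\epsilon)T + V_{\mathrm{out}}$, where $V_{\mathrm{out}}$ denotes the attractive potential restricted to $\R^2 \setminus \bigcup_k B_{r_k}$, one has the uniform pointwise bound $|V_{\mathrm{out}}(\bx)| \leq (2Z+1)/r_k$ on $D_k \setminus B_{r_k}$. Applying the 2D fermionic Lieb--Thirring inequality (Corollary~\ref{thm:LT-q} with $d=2$), together with a shift by a constant $b>0$ as in the 3D proof, gives a lower bound of the form
\[
	-Cq m L_2 (2Z+1)^2 \sum_k \frac{|D_k \setminus B_{r_k}|}{r_k^2} \ - \ (2Z+1)Nb \ + \ \text{(logarithmic tail)}.
\]
Choosing the Voronoi radii $r_k$ proportional to $\delta_k$ makes $|D_k|/r_k^2$ dimensionally a pure number, but $\sum_k |D_k|/r_k^2$ is still not a priori bounded by $N+M$; this is where the positive Baxter term $\tfrac{Z^2}{4}\sum_k 1/\delta_k$ is essential, being used to absorb the leftover pieces after an appropriate optimization in $b$, $c$ and $\epsilon$, analogous to how the 3D proof optimized $b$ against $Nb$.

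The principal obstacle is therefore the interplay between three quantities that are absent or tamer in 3D: the logarithmic divergence of the 2D Lieb--Thirring potential integral at Coulomb singularities, the need to localize the electron-nucleus attraction before applying LT, and the compensation of the resulting long-distance error by the positive Baxter repulsion. Once these three ingredients are balanced (hydrogenic control inside small balls, truncated LT outside, Baxter repulsion absorbing the residual shifts), collecting the contributions and optimizing over the free parameters produces the claimed linear bound $-Cq m(2Z+1)^2(N+M)$; the resulting constant $C$ depends on $L_2$ and on geometric constants from the Voronoi decomposition, but not on $N$, $M$, $m$, $Z$, or the positions $\sR$.
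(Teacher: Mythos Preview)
Your strategy is essentially the one the paper points to: retain the positive nuclear repulsion term in Baxter's inequality \eqref{eq:Baxter}, treat the Coulomb singularities locally (e.g.\ by a hydrogenic bound on balls of radius proportional to the nearest-neighbor distances $\delta_k$), and apply the two-dimensional Lieb--Thirring inequality to the truncated remainder. The paper itself does not carry out the argument but refers to \cite[Section~6.2]{LunSol-14} and leaves the details as an exercise, so in spirit you are on the right track.

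There is, however, a genuine gap in your balancing step. After the inner cutoff at $r_k = c\delta_k$ you bound the outer potential on $D_k \setminus B_{r_k}$ by the constant $(2Z+1)/r_k$ and write the resulting LT contribution as proportional to $|D_k \setminus B_{r_k}|/r_k^2$. This is too crude: Voronoi cells may be unbounded, so that ratio can be infinite, and the argument collapses. What actually happens is that you must keep the decay of $1/\dist(\bx,\sR)$ on $D_k$ and combine it with the global shift $b$; the 2D LT integral $\int (1/|\bx-\bR_k| - b)_+^2\, d\bx$ over the annulus $r_k \le |\bx-\bR_k| \le 1/b$ then produces a term of order $\ln\bigl(1/(b r_k)\bigr)$ rather than an area ratio. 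The positive Baxter term $(Z^2/4)\sum_k 1/\delta_k$ dominates $\sum_k \ln\bigl(1/(bc\delta_k)\bigr)_+$ termwise (power beats logarithm as $\delta_k \to 0$), up to a remainder that is uniformly bounded per nucleus once $b$ is chosen on the scale $qm(2Z+1)^2/Z^2$; this is the actual mechanism, and the optimization in $b$ against the shift cost $(2Z+1)Nb$ then closes as in three dimensions. Your sketch gestures at this but does not identify the logarithm, and the crude constant bound you wrote would not allow the argument to go through.
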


	Stability has also been extended to anyons in \cite[Theorem~21]{LunSol-14}
	and \cite{LunSei-17}, i.e. replacing the fermionic kenetic energy with $\hT_\alpha$,
	where the resulting difference is just to replace $q$ in the 
	r.h.s.\ of \eqref{eq:stability-bound-fermions-2D} with $1/\alpha_2$
	as of Theorem~\ref{thm:LT-anyons}.
	In other words, one has stability for any type of anyons except for bosons
	for which $\alpha_2 = \alpha = 0$.

\begin{exc}
	Fill in the details above to prove stability for fermions in 2D. 
\end{exc}

\subsection{Instability for bosons}\label{sec:stability-instability}

	We have seen that bulk matter consisting to at least one part 
	(a nonvanishing fraction) of fermions
	is stable, and that stability also holds if the particles are bosons with 
	an additional inverse-square repulsion. However, 
	switching off this repulsion with $\beta \to 0$ according to Remark~\ref{rem:BLT-instability},
	or relaxing the Pauli principle to allow for arbitrarily many particles 
	$q = N \to \infty$ in each one-body state,
	seems to lead to \keyword{instability},
	and indeed this may be shown to be the correct conclusion. 
	
	We define two purely bosonic ground-state energies 
	for $N$ electrons and $M$ nuclei; 
	one with the nuclei being fixed, classical particles 
	but at their worst possible positions,
	$$
		E_0(N,M) := \inf_{\sR \in \R^{3M}} \inf \sigma \bigl(\hH^N(\sR)|_{\cH^{N}}\bigr),
	$$
	and one where they are true quantum particles with a finite mass $m_n > 0$,
	$$
		\tilde{E}_0(N,M) := \inf \sigma\bigl(\hH^{N,M}|_{\cH^{N,M}}\bigr).
	$$
	In the first case, 
	which was settled early on by \cite{DysLen-68,Lieb-79},
	one has the following instability result, matching the lower bound
	of Theorem~\ref{thm:fermionic-stability} with $q=N$:

\begin{theorem}[\keyword{Instability for bosons with fixed nuclei}]
	There exist constants $C_- > C_+ > 0$ (depending on $m_e$, $m_n$ and $Z$) 
	such that for any $N=M \in \N$ particles,
	$$
		-C_- N^{5/3} < E_0(N,N) < -C_+ N^{5/3}.
	$$
\end{theorem}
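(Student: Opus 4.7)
The statement decomposes cleanly into a lower bound (stability of the first kind at the rate $N^{5/3}$) and a matching upper bound (the actual instability). The two halves are essentially independent.

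\emph{Lower bound $E_0(N,N) \ge -C_- N^{5/3}$.} The key observation is that the unrestricted $N$-body space $\cH^N = L^2(\R^{3N})$ is trivially contained in $H^1_{\mathrm{asym}(q)}(\R^{3N})$ for the choice $q = N$: partitioning the $N$ coordinates into $N$ singleton groups imposes no antisymmetry constraint at all, so every state (bosonic, distinguishable, or otherwise) is $q$-antisymmetric in that trivial sense. Hence Theorem~\ref{thm:fermionic-stability} applies directly with $q = N$ to $\hH^N(\sR)$ and yields
$$\langle\Psi,\hH^N(\sR)\Psi\rangle \ge -1.073\cdot N^{2/3}\cdot m_e(2Z+1)^2 \cdot (N+N)$$
uniformly in $\sR \in \R^{3N}$ and $\Psi \in \cH^N$. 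Taking infima gives $E_0(N,N) \ge -C_- N^{5/3}$ with $C_- = 2.146\,m_e(2Z+1)^2$.

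\emph{Upper bound $E_0(N,N) \le -C_+ N^{5/3}$.} Here I exhibit an explicit bosonic trial state together with an explicit nuclear configuration and optimize over a scale parameter. Fix $Z=1$ (the general case only rescales constants). For a parameter $L>0$, let $Q_L = [-L/2,L/2]^3$, place the $N$ nuclei on a regular cubic sub-lattice of $Q_L$ of spacing $a = L/N^{1/3}$, and pick $\phi \in H^1(\R^3;\R_+)$ supported in a slight enlargement of $Q_L$ with $|\phi|^2$ a smoothed version of $L^{-3}\mathbf{1}_{Q_L}$. The Hartree product $\Psi = \phi^{\otimes N}$ lies in $\cH^N$ and
$$\langle\Psi,\hH^N(\sR)\Psi\rangle = \frac{\hbar^2 N}{2m_e}\int_{\R^3}|\nabla\phi|^2 + \frac{N(N-1)}{2} D(|\phi|^2,|\phi|^2) - N\sum_k \int \frac{|\phi(\bx)|^2}{|\bx-\bR_k|} d\bx + \sum_{k<l}\frac{1}{|\bR_k-\bR_l|},$$
with $D(f,g) := \iint f(\bx)g(\by)|\bx-\by|^{-1}d\bx d\by$. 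The kinetic term is $O(NL^{-2})$ by direct evaluation on the smoothed characteristic profile. For the remaining three terms, let $\varrho_\Psi = N|\phi|^2$ and $\mu_L = (N/L^3)\mathbf{1}_{Q_L}$, and use the exact identity $\frac{1}{2}D(\varrho_\Psi,\varrho_\Psi) - D(\varrho_\Psi,\mu_L) + \frac{1}{2}D(\mu_L,\mu_L) = \frac{1}{2}D(\varrho_\Psi-\mu_L,\varrho_\Psi-\mu_L) \ge 0$ together with $\varrho_\Psi \approx \mu_L$ (small in Coulomb norm once the smoothing scale is of order $a$) to show that the leading $O(N^2/L)$ contributions of the $ee$, $en$, and $nn$ terms cancel. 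What remains is the discrepancy between the discrete Madelung sum $\sum_{k<l}|\bR_k-\bR_l|^{-1}$ and its continuum surrogate $\frac{1}{2}D(\mu_L,\mu_L)$, which for a cubic lattice of $N$ unit charges in $Q_L$ is $-c\,N^{4/3}/L + o(N^{4/3}/L)$ with an explicit positive $c$ (a Lieb-Oxford-type correction). Collecting,
$$\langle\Psi,\hH^N(\sR)\Psi\rangle \le C_K\frac{N}{L^2} - c\frac{N^{4/3}}{L},$$
and optimizing at $L_\star = 2C_K/(cN^{1/3})$ produces $E_0(N,N) \le -C_+ N^{5/3}$ with $C_+ = c^2/(4C_K)$.

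\emph{Main obstacle.} The lower bound is essentially a one-line consequence of the fermionic LT inequality at $q = N$, so all the work lies in the upper bound and specifically in the electrostatic estimate: one must show that, after the leading $N^2/L$ contributions have cancelled between electrons and nuclei, a residual \emph{negative} contribution of size $N^{4/3}/L$ survives with a sign and constant one can control. This requires a careful treatment of the lattice sum, for which I would either invoke the Lieb-Oxford inequality in reverse (as an upper bound on the Madelung sum in terms of $\frac{1}{2}D(\mu_L,\mu_L)$ minus $c\int \mu_L^{4/3}$) or compute the lattice-versus-continuum discrepancy by Ewald/Newton-theorem arguments on each Voronoi cell. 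Getting exactly the scaling $N^{4/3}$ and ensuring the optimization in $L$ lands at $L_\star \sim N^{-1/3}$ (which is automatic once the electrostatic correction has the right exponent) is what converts the ``atom-like'' heuristic energy $-N$ into the true bosonic collapse energy $-N^{5/3}$.
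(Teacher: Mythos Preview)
The paper does not prove this theorem; it states the result, remarks that the lower bound ``matches'' Theorem~\ref{thm:fermionic-stability} with $q=N$, and defers to \cite{DysLen-68,Lieb-79} and \cite{LieSei-09,LieSeiSolYng-05} for complete arguments. Your lower bound is exactly this reduction and is correct as written.

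Your upper-bound strategy --- Hartree product, lattice of nuclei, jellium correlation energy $-cN^{4/3}/L$ balanced against kinetic energy and optimized in $L$ --- is the right mechanism and is the one used in the cited references. However, your construction contains an internal inconsistency. You claim the kinetic term is $O(NL^{-2})$ \emph{and} that $\varrho_\Psi \approx \mu_L$ in Coulomb norm ``once the smoothing scale is of order $a$''. These cannot both hold for the same $\phi$: if $|\phi|^2$ transitions from $L^{-3}$ to $0$ over a boundary layer of width $a = LN^{-1/3}$, then $|\nabla\phi|\sim L^{-3/2}/a$ on a region of volume $\sim L^2 a$, giving $\int|\nabla\phi|^2 \sim 1/(La) = N^{1/3}/L^2$, so the kinetic energy is $N^{4/3}/L^2$, not $N/L^2$. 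Optimizing $N^{4/3}/L^2 - cN^{4/3}/L$ then yields only $-CN^{4/3}$, missing the target by a factor $N^{1/3}$.

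The standard remedy (as carried out in \cite{LieSei-09}) is to fix a smooth, $N$-independent profile for $\phi$ on the unit cube and scale by $\lambda$; then kinetic $=N\lambda^2 T_0$ exactly, and the task reduces to showing the \emph{unscaled} classical Coulomb energy with density $\rho=N|\phi|^2$ and suitably placed nuclei is $\le -cN^{4/3}$. This is done not by comparison with a uniform background but by a cell decomposition: partition $\supp\rho$ into $N$ cells with $\int_{\Omega_k}\rho=1$, place $\bR_k\in\Omega_k$, and use Newton's theorem cell by cell. Your alternative of ``Lieb--Oxford in reverse'' is not available, since Lieb--Oxford is a \emph{lower} bound on the indirect energy and runs in the wrong direction; the Ewald/Newton route you also mention is the one that actually works.
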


	However, in the case of the second definition of energy $\tilde{E}_0$, 
	it is potentially increased and thus less divergent due to
	the uncertainty principle for the quantum nuclei,
	and indeed this was shown to be the case,
	although not sufficiently so to make the system thermodynamically stable.
	The result is given
	in the following theorem which was worked out 
	in several steps and over many years 
	\cite{Dyson-67,ConLieYau-88,LieSol-04,Solovej-06}:

\begin{theorem}[\keyword{Instability for bosons with moving nuclei}]
	Let $Z=1$ and $m_e = m_n = 1$.
	There exists a constant $C > 0$ (explicitly defined) such that 
	$$
		\min_{N+M=K} \tilde{E}_0(N,M) = -CK^{7/5} + o(K^{7/5}), \quad \text{as} \ K \to \infty.
	$$
\end{theorem}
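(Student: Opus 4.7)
\bigskip

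\noindent\textbf{Proof plan.} The strategy is to match an explicit upper bound with a matching lower bound, with the same asymptotic constant $C$. Throughout, we may restrict attention to $N=M=K/2$ (or a comparable near-balanced split), since charge neutrality will be favored energetically and since the minimum over $N+M=K$ can only be improved by this choice; in particular for the upper bound one directly chooses this split.

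For the upper bound $\min_{N+M=K}\tilde E_0(N,M) \le -CK^{7/5}+o(K^{7/5})$, I would follow Dyson's original 1967 construction using a Hartree-type bosonic trial state $\Psi_{\rm trial} = \psi^{\otimes N}\otimes \phi^{\otimes M}$ with $\psi,\phi\in L^2(\R^3)$ normalized. Evaluating $\langle \Psi_{\rm trial},\hH^{N,M}\Psi_{\rm trial}\rangle$ reduces the problem to minimizing a two-species Hartree functional of the form
\begin{equation*}
  \mathcal{E}_H[\psi,\phi] = \tfrac{N}{2}\|\nabla\psi\|_2^2 + \tfrac{M}{2}\|\nabla\phi\|_2^2 + D(N|\psi|^2-M|\phi|^2,\,N|\psi|^2-M|\phi|^2) - \tfrac12 D(N|\psi|^2,N|\psi|^2) - \tfrac12 D(M|\phi|^2,M|\phi|^2),
\end{equation*}
where $D(f,g)=\tfrac12\int\!\!\int f(\bx)g(\by)|\bx-\by|^{-1}d\bx d\by$. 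Choosing $\psi$ and $\phi$ slightly different (so that $N|\psi|^2$ does not exactly neutralize $M|\phi|^2$ pointwise), a scaling ansatz $\psi(\bx)=L^{-3/2}\psi_1(\bx/L)$, $\phi(\bx)=L^{-3/2}\phi_1(\bx/L)$ produces kinetic contributions of order $KL^{-2}$ and a net (negative) Coulomb self-energy of order $K^{3/2}L^{-1}$ after using subtle near-cancellations between attraction and repulsion. Optimizing in $L$ yields $L\sim K^{-1/5}$ and energy $\sim -K^{7/5}$, with the explicit constant $C$ given by the infimum of a dimensionless variational problem (the Dyson--Hartree functional).

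The lower bound $\min_{N+M=K}\tilde E_0(N,M)\ge -CK^{7/5}+o(K^{7/5})$ is the serious part and where I expect the main obstacle. Here one cannot simply use the fermionic stability machinery of Theorem~\ref{thm:fermionic-stability}, since without antisymmetry we only have access to a weaker Lieb--Thirring bound (see Remark~\ref{rem:BLT-instability}). The approach I would take follows Conlon--Lieb--Yau and its refinement by Lieb--Solovej: first reduce the interaction $W_\sC$, via a coherent-state decomposition of the electromagnetic field (or equivalently Fefferman--de la Llave/Graf-type representations of $|\bx|^{-1}$ as an integral over characteristic functions of balls), to an effective one-body problem in which each particle moves in a mean field generated by the others. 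One then controls correlation errors using the positivity of the Coulomb kernel and the classical electrostatics results of Section~\ref{sec:stability-electro}, in particular Baxter's inequality. The resulting effective functional is bounded below by the bosonic Hartree functional up to lower-order corrections, and one shows that the minimum of that Hartree functional is exactly $-CK^{7/5}(1+o(1))$ with the same constant as in the upper bound.

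The principal obstacle is the sharp matching of constants: obtaining $-C'K^{7/5}$ for some $C'>C$ would follow more easily (for instance by a crude coherent-state bound combined with BLT-style estimates on a scaled cube partition à la Theorem~\ref{thm:LT-grad}, provided one can insert just enough effective repulsion), but getting the correct asymptotic constant requires showing that the Hartree approximation is \emph{exact} to leading order for ground states of the full many-body problem. This is done by quantitative Bogoliubov-type analysis around the Hartree minimizer, and controlling the second-order fluctuation Hamiltonian; the relevant semiclassical/mean-field limits and the $7/5$-law scaling identify $C$ as the infimum of the explicit Dyson functional. The remaining routine step is to verify that deviations from the balanced case $N=M=K/2$ cost energy of higher order, so that the minimum over $N+M=K$ is attained (asymptotically) at the balanced split.
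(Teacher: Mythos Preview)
The paper does not give its own proof of this theorem: it simply records the result and refers the reader to \cite{LieSei-09,LieSeiSolYng-05} and the original works \cite{Dyson-67,ConLieYau-88,LieSol-04,Solovej-06}. So there is no ``paper's proof'' to compare against, only the literature you are summarizing.

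That said, your upper-bound argument contains a genuine gap. A pure product state $\Psi_{\rm trial}=\psi^{\otimes N}\otimes\phi^{\otimes M}$ does \emph{not} yield energy of order $-K^{7/5}$. With $N=M$ and $\psi=\phi$, the three Coulomb terms combine to $-N\,D(|\psi|^2,|\psi|^2)\sim -K L^{-1}$, and with $\psi\neq\phi$ the direct term $D(N|\psi|^2-M|\phi|^2,\cdot)$ is \emph{nonnegative} (by positivity of the Coulomb kernel), so the best you can do in Hartree is still $\sim -K L^{-1}$ against kinetic $\sim K L^{-2}$, giving only $E\sim -K$ after optimization. Your claimed ``net negative Coulomb self-energy of order $K^{3/2}L^{-1}$'' does not arise from a product state; the cancellations you allude to are not there. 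The $7/5$ exponent comes from \emph{correlation} energy of Foldy type: for a charged Bose gas at density $\rho$ the Bogoliubov correlation energy per volume is $\sim -\rho^{5/4}$, which at $\rho\sim K/L^3$ gives $-K^{5/4}L^{-3/4}$; balancing against $K L^{-2}$ yields $L\sim K^{-1/5}$ and energy $\sim -K^{7/5}$. Dyson's original trial state \cite{Dyson-67} already builds in such pair correlations (it is not Hartree), and the sharp constant was obtained by Solovej \cite{Solovej-06} with a Bogoliubov trial state.

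The same misidentification affects your lower-bound sketch: you write that the effective problem is ``bounded below by the bosonic Hartree functional'', but the Hartree minimum is only $\sim -K$ and would give a far-too-strong lower bound, contradicting the upper bound. What Conlon--Lieb--Yau \cite{ConLieYau-88} and then Lieb--Solovej \cite{LieSol-04} actually do is reduce to (and bound below by) the Bogoliubov/Foldy functional, controlling the deviation of the full many-body state from a quasi-free (Bogoliubov) state; the explicit constant $C$ is the Foldy constant, not a Hartree variational value. Your outline of coherent-state/localization machinery is in the right spirit, but the target functional should be the Bogoliubov one, and the hard part is exactly that reduction (not a mean-field/Hartree reduction).
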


	We refer to \cite{LieSei-09,LieSeiSolYng-05} for pedagogically outlined 
	proofs of these results and for further discussions on instability.

\subsection{Extensivity of matter}\label{sec:stability-extensivity}

	Finally, we have the following formulation of the extensivity of matter,
	taken from \cite[Theorem~4.3.3]{Thirring2}, 
	and extended here also to inverse-square repulsive bosons:

\begin{theorem}[{\keyword[extensivity of the volume]{Extensivity of the volume}}]
	Consider the $N$-body operators
	$$
		\hH := \hT|_{\cH_{\asym(q)}} + \hV
		\qquad \text{or} \qquad
		\hH := \hT + \hW_\beta + \hV,
	$$
	where $\hV$ is an operator assumed to scale with the inverse length, 
	such as the Coulomb interaction $\hV = W_\sC$ with some fixed set of nuclei.
	If we have stability of the second kind, $\hH > -cN$, 
	and if $\Psi \in \cQ(\hH)$ 
	is a state with nonpositive expectation value, 
	$\langle \hH \rangle_\Psi \le 0$,
	then in this state no volume $|\Omega| \le \eps N$ contains more than
	$(4c/K_3)^{3/5} \eps^{2/5} N$ particles.
\end{theorem}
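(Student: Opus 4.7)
The plan is to combine a virial-type scaling argument for the energy with the Lieb--Thirring inequality and Hölder's inequality. Let $n_\Omega := \int_\Omega \varrho_\Psi$ denote the expected number of particles in $\Omega$; the goal is to show $n_\Omega \le (4c/K_3)^{3/5}\eps^{2/5}N$ whenever $|\Omega| \le \eps N$. The heart of the argument will be the intermediate bound
\[
T[\Psi] + W_\beta[\Psi] \le 4cN
\]
(with $W_\beta \equiv 0$ understood in the fermionic case), from which the conclusion follows cleanly.

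First I would exploit the scaling hypothesis on $\hV$. Define the dilated state $\Psi_\lambda(\sy) := \lambda^{-3N/2}\Psi(\sy/\lambda)$ for $\lambda > 0$; this is $L^2$-normalized and remains in the form domain (antisymmetry or inverse-square integrability are preserved by uniform dilation, and any fixed nuclear positions entering $\hV$ must be simultaneously rescaled so that $\hV$ really is $-1$-homogeneous, as allowed by stability holding for every configuration). A change of variables gives
\[
\langle \hH \rangle_{\Psi_\lambda} = \lambda^{-2}\bigl(T[\Psi]+W_\beta[\Psi]\bigr) + \lambda^{-1} V[\Psi],
\]
and stability of the second kind applied to $\Psi_\lambda$ forces this to be $\ge -cN$ for every $\lambda > 0$. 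The hypothesis $\langle\hH\rangle_\Psi \le 0$ at $\lambda = 1$ yields $V[\Psi] \le -(T[\Psi]+W_\beta[\Psi]) \le 0$. Since $T[\Psi]+W_\beta[\Psi]>0$ for any $L^2$-normalized $\Psi \in H^1$, one may minimize over $\lambda$ (the minimum occurs at $\lambda_* = -2(T+W_\beta)/V$), giving
\[
-\frac{V[\Psi]^2}{4\bigl(T[\Psi]+W_\beta[\Psi]\bigr)} \ge -cN,
\]
and combining this with $V[\Psi]^2 \ge \bigl(T[\Psi]+W_\beta[\Psi]\bigr)^2$ produces the desired bound $T[\Psi]+W_\beta[\Psi] \le 4cN$.

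Next I would invoke the appropriate Lieb--Thirring inequality --- Theorem~\ref{thm:LT-kinetic} on $\cH_{\asym(q)}$ (absorbing the spin factor $q^{-2/3}$ of Corollary~\ref{thm:LT-q} into the constant) in the fermionic case, or Theorem~\ref{thm:BLT} in the inverse-square repulsive bosonic case --- yielding
\[
T[\Psi]+W_\beta[\Psi] \ge K_3 \int_{\R^3} \varrho_\Psi^{5/3}.
\]
Restricting the integral to $\Omega$ and applying Hölder with conjugate exponents $5/3$ and $5/2$,
\[
n_\Omega = \int_\Omega \varrho_\Psi \cdot 1 \le |\Omega|^{2/5} \Bigl( \int_\Omega \varrho_\Psi^{5/3} \Bigr)^{3/5},
\]
so that $\int_\Omega \varrho_\Psi^{5/3} \ge n_\Omega^{5/3}/|\Omega|^{2/3}$. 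Chaining the three inequalities gives
\[
K_3 \,\frac{n_\Omega^{5/3}}{|\Omega|^{2/3}} \ \le \ T[\Psi]+W_\beta[\Psi] \ \le \ 4cN,
\]
i.e.\ $n_\Omega \le (4c/K_3)^{3/5} N^{3/5} |\Omega|^{2/5}$. Substituting $|\Omega| \le \eps N$ yields exactly $(4c/K_3)^{3/5}\eps^{2/5}N$.

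The main obstacle is really the scaling step: care is needed to ensure that the dilation is compatible with the form domain at every scale $\lambda > 0$ (in particular that any fixed external parameters of $\hV$ are dilated appropriately so that $-1$-homogeneity holds), and that the optimization in $\lambda$ is justified --- the borderline case $T+W_\beta = 0$ does not arise for $L^2$-normalized $H^1$ states and hence is harmless. Once the bound $T[\Psi]+W_\beta[\Psi] \le 4cN$ is in hand, the remaining Lieb--Thirring plus Hölder step is routine.
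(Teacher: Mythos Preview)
Your proof is correct and follows essentially the same route as the paper: a scaling/virial argument to obtain $T[\Psi]+W_\beta[\Psi]\le 4cN$, then the (B)LT inequality followed by H\"older on $\Omega$. The only cosmetic difference is that you optimize over the dilation parameter $\lambda$ explicitly, whereas the paper simply notes that stability applied at scale $\lambda=2$ gives $\tfrac{1}{2}(\hT+\hW_\beta)+\hV\ge -2cN$, which together with $\langle\hH\rangle_\Psi\le 0$ yields $\tfrac{1}{2}\langle\hT+\hW_\beta\rangle_\Psi\le -\langle\tfrac{1}{2}(\hT+\hW_\beta)+\hV\rangle_\Psi\le 2cN$ directly.
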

\begin{proof}
	Use that
	$$
		\frac{1}{2}\inp{\hT + \hW_\beta}_\Psi
		\le -\inp{ \frac{1}{2}(\hT+\hW_\beta) + \hV}_\Psi \le 2cN,
	$$
	by the scaling property of $\hV$.
	Then, by the (B)LT
	$$
		\inp{\hT + \hW_\beta}_\Psi \ge K_3(\beta) \int_{\R^3} \varrho_\Psi^{5/3},
	$$
	we obtain using H\"older
	$$
		\int_\Omega \varrho_\Psi
		\le \left( \int_\Omega \varrho_\Psi^{5/3} \right)^{3/5} \left( \int_\Omega 1 \right)^{2/5}
		\le \left( K_3(\beta)^{-1} 4cN \right)^{3/5} (\eps N)^{2/5}
	$$
	for the number of particles on $\Omega$.
\end{proof}
	
	This then explains why fermionic matter occupies a volume that
	grows at least linearly with the number of particles.
	See also \cite[Theorem~7.2]{LieSei-09} for more general formulations 
	of extensivity for fermions.



\newpage

\def\MR#1{} 


\providecommand{\bysame}{\leavevmode\hbox to3em{\hrulefill}\thinspace}
\providecommand{\MR}{\relax\ifhmode\unskip\space\fi MR }
\providecommand{\MRhref}[2]{%
  \href{http://www.ams.org/mathscinet-getitem?mr=#1}{#2}
}
\providecommand{\href}[2]{#2}


	\printindex

\end{document}